%% file: distributed-clustering.tex
\documentclass[11pt]{article}
\input{preamble}
\begin{document}

\allowdisplaybreaks

\title{Distributed Algorithms for Euclidean Clustering}
\author{
Vincent Cohen-Addad\thanks{Google Research. 
\email{cohenaddad@google.com}}
\and
Liudeng Wang\thanks{Texas A\&M University. 
\email{eureka@tamu.edu}}
\and
David P. Woodruff\thanks{Carnegie Mellon University. 
\email{dwoodruf@andrew.cmu.edu}} 
\and 
Samson Zhou\thanks{Texas A\&M University. 
\email{samsonzhou@gmail.com}}
}

\maketitle

\begin{abstract}
We study the problem of constructing $(1+\varepsilon)$-coresets for Euclidean $(k,z)$-clustering in the distributed setting, where $n$ data points are partitioned across $s$ sites. We focus on two prominent communication models: the coordinator model and the blackboard model. In the coordinator model, we design a protocol that achieves a $(1+\varepsilon)$-strong coreset with total communication complexity $\tilde{O}\left(sk + \frac{dk}{\min(\varepsilon^4,\varepsilon^{2+z})} + dk\log(n\Delta)\right)$ bits, improving upon prior work (Chen et al., NeurIPS 2016) by eliminating the need to communicate explicit point coordinates in-the-clear across all servers. In the blackboard model, we further reduce the communication complexity to $\tilde{O}\left(s\log(n\Delta) + dk\log(n\Delta) + \frac{dk}{\min(\varepsilon^4,\varepsilon^{2+z})}\right)$ bits, achieving better bounds than previous approaches while upgrading from constant-factor to $(1+\varepsilon)$-approximation guarantees. Our techniques combine new strategies for constant-factor approximation with efficient coreset constructions and compact encoding schemes, leading to optimal protocols that match both the communication costs of the best-known offline coreset constructions and existing lower bounds (Chen et al., NeurIPS 2016, Huang et. al., STOC 2024), up to polylogarithmic factors.  
\end{abstract}

\section{Introduction}
Clustering is the process of partitioning a dataset by grouping points with similar properties and separating those with differing properties. 
The study of clustering dates back to the 1950s~\cite{steinhaus1956division,macqueen1967classification}, and its many variants find applications in fields such as bioinformatics, combinatorial optimization, computational geometry, computer graphics, data science, and machine learning. 
In the Euclidean $(k,z)$-clustering problem, the input consists of a set $X$ of $n$ points $x_1, \ldots, x_n \in \mathbb{R}^d$, along with a cluster count $k > 0$ and an exponent $z > 0$. 
The objective is to find a set $\calC$ of at most $k$ centers that minimizes the clustering cost:
\[
\min_{\calC \subset \mathbb{R}^d,\, |\calC| \leq k} \Cost(X, \calC) := \min_{\calC \subset \mathbb{R}^d,\, |\calC| \leq k} \sum_{i=1}^n \min_{c \in \calC} \|x_i - c\|_2^z.
\]
When $z=1$ and $z=2$, the $(k,z)$-clustering problem reduces to the classical $k$-median and $k$-means clustering problems, respectively.
These models remain the most widely utilized formulations in practice, with significant research dedicated to their algorithmic foundations~\cite{CharikarCFM97,CharikarGTS99,CharikarG99,Meyerson01,CharikarOP03,FeldmanMS07,FeldmanMSW10,FeldmanL11,FeldmanSS13,Cohen-AddadKM16,Cohen-AddadS17,Cohen-AddadLNSS20,BravermanFLSZ21,Cohen-AddadSS21,BravermanCJKST022,Cohen-AddadLSS22,ErgunFSWZ22,Tukan0ZBF22,BucarelliLST23,Cohen-AddadSS23,WoodruffZZ23,BravermanDJN0ZZ25,Cohen-AddadJYZZ25,Cohen-AddadLS25,0001LRSZ25}.

Due to the substantial growth in modern datasets, focus has shifted toward large-scale computational models that can process data across multiple machines without requiring centralized access to the full dataset. 
The distributed model of computation has become a popular framework for handling such large-scale data. 
In the distributed setting, the points $x_1,\ldots,x_n$ of $X$ are partitioned across $s$ different machines, and given an input accuracy parameter $\eps>0$, the goal is for the machines to collectively find a clustering $\calC$ of $X$ with cost that is a $(1+\eps)$-multiplicative approximation of the optimal clustering of $X$, while minimizing the total communication between machines. 
As in other models, a set $\calC$ of $k$ centers implicitly defines the clustering, since each point is assigned to its nearest center. 
Since transmitting the entire dataset or an explicit cluster label for each point would incur communication linear in $n=|X|$, it is more feasible to design protocols that exchange only succinct summaries, such as small sets of representative points or coresets. 
We also note that because of finite precision, input points are assumed to lie within the grid $\{1,\ldots,\Delta\}^d$, which can be communicated efficiently using a small number of bits per coordinate.

A standard strategy for efficient distributed clustering is to have each machine construct a small weighted subset of its local points, i.e., a coreset, that preserves the clustering cost for any choice of $k$ centers. 
These local coresets can then be merged at a coordinator or hierarchically aggregated to approximate the clustering objective over the full dataset. 
Naturally, smaller coresets correspond to lower communication costs and faster centralized processing. 
In the offline setting, where the full dataset $X$ is available on a single machine without resource constraints, coreset constructions are known that select $\tO{\min\left(\frac{1}{\eps^2}\cdot k^{2-\frac{z}{z+2}},\frac{1}{\min(\eps^4,\eps^{2+z})}\cdot k\right)}$ weighted points of $X$~\cite{Cohen-AddadSS21,Cohen-AddadLSS22,nips/Cohen-AddadLSSS22,Huang0024}\footnote{We write $\tO{f\left(n,d,k,\Delta,\frac{1}{\eps}\right)}$ to denote $\O{f\left(n,d,k,\Delta,\frac{1}{\eps}\right)}\cdot\polylog\left(f\left(n,d,k,\Delta,\frac{1}{\eps}\right)\right)$.}\footnote{The bound further refines to $\tO{\min\left(\frac{k^{4/3}}{\eps^2},\frac{k}{\eps^3}\right)}$ for $k$-median.}. 
As is common, we assume $\Delta=\poly(n)$ so that each coordinate can be stored in $\O{\log(nd\Delta)}$ bits, allowing efficient communication and storage. 
Thus, offline coreset constructions have size independent of $n$, an essential feature given that modern datasets often contain hundreds of millions of points.

\subsection{Our Contributions}
We study the construction of $(1+\eps)$-coresets for Euclidean $(k,z)$-clustering in the distributed setting, focusing on the coordinator and blackboard communication models.

\paragraph{The coordinator model.}
In the coordinator (message passing) model, the $s$ sites, i.e., servers, communicate only with the coordinator over private channels, using private randomness. 
Protocols are assumed without loss of generality to be sequential, round-based, and self-delimiting, i.e., in each round the coordinator speaks to some number of players and awaits their responses before initiating the next round and all parties know when each message has been completely sent. 
Given $\eps>0$, the goal is a protocol $\Pi$ where the coordinator outputs a $(1+\eps)$-coreset for Euclidean $(k,z)$-clustering that minimizes the communication cost, i.e., the total number of bits exchanged in the worst case. 
We achieve optimal communication protocols for clustering in this model, matching known lower bounds from \cite{ChenSWZ16,Huang0024}: 

\begin{theorem}[Communication-optimal clustering in the coordinator model, informal]
\thmlab{thm:coordinator:inf}
Given accuracy parameter $\eps\in(0,1)$, there exists a protocol on $n$ points distributed across $s$ sites that produces a $(1+\eps)$-strong coreset for $(k,z)$-clustering that uses $\tO{sk + \frac{dk}{\min(\eps^4, \eps^{2+z} )} + dk \log (n \Delta)}$ total bits of communication in the coordinator model. 
\end{theorem}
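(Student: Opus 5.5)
The plan has three stages: compute a constant-factor (bicriteria) solution, use it for sensitivity sampling, and send the sampled points to the coordinator in a compact encoding. Each stage has to be paid for carefully in communication.

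\textbf{Stage 1: bicriteria approximation.} I would first get an $O(1)$-approximate solution $\calC'$ with $\O{k}$ centers (allowing polylogarithmic blowup), using $\tO{sk + dk\log(n\Delta)}$ bits. The standard approach has each site compute a local constant-factor solution and send it, but that costs $skd\log(n\Delta)$, which is too much. So I would instead run a distributed $k$-means$++$ / adaptive-sampling style procedure, spread over $\O{\log n}$ rounds:
\begin{itemize}
\item In each round, every site reports to the coordinator only its local cost with respect to the current centers, to within a constant factor. This is $\O{\log\log(n\Delta)}$ bits per site, or $\O{k}$ words per site across the relevant rounds.
\item The coordinator then allocates sample counts to the sites in proportion to these costs.
\item The sites send back only the $\tO{k}$ sampled points in the clear, for $\tO{dk\log(n\Delta)}$ bits in total.
\end{itemize}
Since the per-site term stays at $\tO{k}$ bits, the overall total is $\tO{sk}$ plus a site-independent $dk\log(n\Delta)$. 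The coordinator then broadcasts $\calC'$. Done naively this costs $sdk\log(n\Delta)$, so I would avoid broadcasting the centers. Instead, each site learns only the few sampled centers incrementally, or the coordinator sends the centers once to a single representative, and the remaining sites receive hashed or rounded descriptors. \textbf{I expect this broadcast step to be the main obstacle.} Evaluating distances to $\calC'$ at every site seems to require knowing $\calC'$. My first attempt would be to charge only $\tO{k}$ bits per site by communicating, for each site, just the indices and the constant-factor rounded distance scales it needs.

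\textbf{Stage 2: sensitivity sampling.} Given $\calC'$, I would apply the offline coreset construction cited above: sensitivity/group sampling relative to $\calC'$, with target size $m=\tO{\frac{k}{\min(\eps^4,\eps^{2+z})}}$. The steps are:
\begin{itemize}
\item Each site computes the cluster and ring membership of its points with respect to $\calC'$.
\item Each site reports aggregate ring masses and costs, $\tO{k}$ numbers per site.
\item The coordinator decides how many samples to draw from each site, ring, and cluster.
\item Each site samples locally with the correct weights.
\end{itemize}
The correctness argument is that this is distributionally equivalent to the centralized sampling distribution, up to constant-factor errors in the reported aggregates. Those errors are absorbed by oversampling by a constant factor, after which the offline coreset guarantee applies.

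\textbf{Stage 3: compact encoding.} Sending the $m$ samples in the clear would cost $md\log(n\Delta)$ bits. To achieve $\frac{dk}{\min(\eps^4,\eps^{2+z})}$ up to polylogarithmic factors, each sampled point $x$ is sent as two pieces:
\begin{itemize}
\item the index of its nearest center $c\in\calC'$;
\item the offset $x-c$, rounded to relative precision $\poly(\eps)$ at the scale $\|x-c\|$.
\end{itemize}
Encoding a rounded offset needs only $\O{d\log(1/\eps)}$ bits plus $\O{\log\log(n\Delta)}$ bits for the scale. Moving each point by at most $\poly(\eps)\|x-c\|$ perturbs its cost to any center set by at most $\eps\,\Cost(x,\calC')+\eps\Cost(x,\calC)$, using the generalized triangle inequality for $z$-th powers. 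Summed over the coreset, the total distortion is $\O{\eps}\cdot(\Cost(X,\calC')+\Cost(X,\calC))=\O{\eps}\Cost(X,\calC)$, because $\calC'$ is an $O(1)$-approximation.

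\textbf{Putting it together.} The weights need only $\O{\log(n/\eps)}$ bits each. The total communication is $\tO{sk}$ for the per-site aggregates, plus $\tO{dk\log(n\Delta)}$ for $\calC'$ sent once to the coordinator, plus $\tO{\frac{dk}{\min(\eps^4,\eps^{2+z})}}$ for the encoded coreset. This matches the bound in \thmref{thm:coordinator:inf}, after rescaling $\eps$ by a constant.
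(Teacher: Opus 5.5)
There is a genuine gap, and it is exactly the step you flag as ``the main obstacle.'' It is not a local issue: every stage of your plan assumes each site can evaluate $\dist(x,\cdot)$ to the current centers. In Stage 1 a site must recompute its local cost after every new center, so ``every site reports its local cost with respect to the current centers'' presupposes it already knows them. In Stage 2 a site must know the cluster/ring membership of its points. In Stage 3 it must know its nearest center $c$ exactly to form the offset $x-c$, and the coordinator must use the same $c$ to decode.

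Sending centers to the sites in the clear costs $\O{sdk\log(n\Delta)}$, and none of your fixes work. The coordinator does not know a site's points, so it cannot compute ``the distance scales it needs.'' Hashed descriptors do not let a site evaluate distances to constant accuracy. A single representative site does not help the others. Even JL-projecting to $\O{\log(sk)}$ dimensions and then sending projected centers in the clear costs on the order of $sk\log(sk)\log(n\Delta)$ bits. That puts a $\log(n\Delta)$ factor on the $sk$ term and exceeds the claimed bound. Your final accounting pays only for the cheap direction ($\calC'$ sent to the coordinator), never for the coordinator-to-site information about $\calC'$.

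The missing idea is the paper's primitive $\EfficientCommunication$. To give a party holding a set $X$ an approximation of a point $y$, one does two searches per coordinate. First, binary-search $y^{(i)}$ among the receiver's \emph{own} sorted coordinates, using a boosted $\GreaterThan$ protocol at $\tO{\log\log n}$ bits per comparison. Second, binary-search a $(1+\eps)$-power offset from the closest such coordinate. This yields $\widetilde y$ with $\|y-\widetilde y\|_2\le\eps\min_{x\in X}\|x-y\|_2$ using $d\log|X|\cdot\polylog(\log n,\log|X|,1/\eps,1/\delta)$ bits. The precision adapts automatically to the receiver's data, so no $\log(n\Delta)$ factor appears.

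The paper makes this cost only $\polylog(\log n,s,k,1/\eps)$ bits per site per center, in two steps. It first replaces each $X_i$ by a local $(1+\eps/2)$-coreset $P_i$, so $\log|P_i|$ is small. It then applies a shared-seed JL map to $d'=\O{\log(sk)}$ dimensions. Each site's approximate copy of each projected center then preserves all its point-to-center distances up to constants. That suffices for lazy adaptive sampling. It also suffices for the approximate cluster sizes and costs, because the sensitivity analysis of \cite{BansalCPSS24} only needs each point assigned to a center within a constant factor of its nearest. These per-site, per-center costs make up the $\tO{sk}$ term.

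Finally, the same primitive runs in the reverse direction, as $\EfficientCommunication(S,x,\eps',\delta)$ with the coordinator holding $S$. This lets a site transmit a sampled point to accuracy $\eps'\,\dist(x,S)$ using $d\cdot\polylog$ bits without ever knowing $S$, and it replaces your offset encoding. A secondary issue: an $\O{\log n}$-round k-means$\|$-style procedure yields $\O{k\log n}$ centers, inflating the $dk\log(n\Delta)$ term by a $\log n$ factor. The paper instead uses $\O{k}$ rounds of lazy adaptive sampling, so that $|S|=\O{k}$.
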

We note that our results also yield the optimal bounds for $k$-median, matching~\cite{Huang0024,BansalCPSS24}; for clarity, we omit this special case in the remainder of the paper. 
By comparison, the protocol of~\cite{BalcanEL13}, combined with more recent coreset constructions such as~\cite{Cohen-AddadSS21,Cohen-AddadLSS22,BansalCPSS24}, achieves a $(1+\eps)$-strong coreset using $\O{\tfrac{skd}{\min(\eps^4,\eps^{2+z})}\log(n\Delta)}$ bits of communication. 
At first glance, one might expect our bounds to follow by straightforward generalizations of these approaches; however, as we explain in \appref{app:overview}, this is not the case. 
For example, our result shows that no points need to be communicated ``in the clear'' across sites—there is no $\O{sd\log(n\Delta)}$ dependence in the communication. 
This is surprising, since one might expect the coordinates of a constant-factor approximation to be broadcast to all users. 
Moreover, the $\frac{1}{\eps}$ factors in our bounds do not multiply either the number of sites $s$ or the $\log(n\Delta)$ cost of transmitting a single coordinate. 
Finally, our results extend to arbitrary connected communication topologies, even when certain pairs of sites are not allowed to interact. 
The formal statement is deferred to \thmref{thm:coreset:coordinator} in \appref{app:dis:message}. 

\begin{figure}[!htb]
\centering
\resizebox{\linewidth}{!}{%
\begin{tabu}{|c|c|}\hline
Coordinator model & Communication cost (bits) \\\hline\hline
Merge-and-reduce, with \cite{BansalCPSS24} & $\tO{\frac{skd}{\min(\eps^4,\eps^{2+z})}\log(n\Delta)}$ \\
\cite{BalcanEL13}, $z = 2$& $\O{\frac{d^2k}{\eps^4} \log (n \Delta) + sdk \log (sk) \log (n \Delta)}$ \\
\cite{BalcanEL13} in conjunction with \cite{BansalCPSS24} & $\O{\frac{dk}{\min(\eps^4,\eps^{2+z})} \log (n \Delta) + sdk \log (sk) \log (n \Delta)}$ \\
\thmref{thm:coordinator:inf} (this work) & $\tO{sk + \frac{dk}{\min ( \eps^4, \eps^{2+z} )} + dk \log (n \Delta)}$ \\\hline\hline

Blackboard model & Communication cost (bits) \\\hline\hline
\cite{ChenSWZ16} & $\tO{(s+dk)\log^2(n\Delta)}$  \\
\thmref{thm:blackboard:inf} (this work) & $\tO{s\log(n\Delta)+dk\log(n\Delta)+\frac{dk}{\min(\eps^4,\eps^{2+z})}}$ \\\hline
\end{tabu}
}
\caption{Table of $(k,z)$-clustering algorithms in the distributed setting. We remark that \cite{ChenSWZ16} only achieves a constant-factor-approximation, whereas we achieve a $(1+\eps)$-approximation.}
\figlab{fig:distributed:summary}
\end{figure}

\paragraph{The blackboard model.}
Next, we consider the blackboard model of communication, where each of the $s$ sites can broadcast messages that are visible to all of the other sites. 
Formally in the blackboard model, communication occurs through a shared public blackboard that is visible to all servers. 
Each server again has access to private sources of randomness. 
Unlike the coordinator model, there is no designated coordinator to relay messages; instead, any server may write messages directly onto the blackboard. 
All servers can immediately observe the entire contents of the blackboard at any point in time. 
As before, we assume without loss of generality that the protocol is sequential and round-based, meaning that in each round, one or more servers write to the blackboard, and all servers can then read the newly posted messages before the next round begins. 
Accordingly, messages must be self-delimiting so that all servers can correctly parse when a message has been fully posted.

Given an accuracy parameter $\eps$, the objective is to perform a protocol $\Pi$ such that a $(1+\eps)$-coreset for Euclidean $(k,z)$-clustering is explicitly written onto the blackboard. 
The communication cost of $\Pi$ is defined as the total number of bits written to the blackboard over the course of the protocol, measured in the worst case. 
We note that the blackboard model can sometimes yield lower communication costs compared to the coordinator model, as messages need only be written once to be accessible to all servers simultaneously. 
Indeed we achieve a distributed protocol for the blackboard model with substantially less communication than our protocol for the coordinator model, tight with existing lower bounds~\cite{ChenSWZ16,Huang0024}. 

\begin{theorem}[Communication-optimal clustering in the blackboard model, informal]
\thmlab{thm:blackboard:inf}
There exists a protocol on $n$ points distributed across $s$ sites that produces a $(1+\eps)$-strong coreset for $(k,z)$-clustering that uses $\tO{s\log(n\Delta)+dk\log(n\Delta)+\frac{dk}{\min(\eps^4,\eps^{2+z})}}$ total bits of communication in the blackboard model. 
\end{theorem}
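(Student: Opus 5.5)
We will build the protocol in three phases: a cheap constant-factor approximation written on the blackboard, a sensitivity-sampling coreset drawn across sites using that approximation, and a compact encoding of the sampled points relative to the approximate centers.

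\textbf{Phase 1: constant-factor approximation.} We first run the blackboard protocol of \cite{ChenSWZ16}, which uses $\tO{(s+dk)\log^2(n\Delta)}$ bits. We will save one $\log(n\Delta)$ factor as follows. For each of the $O(\log(n\Delta))$ distance scales, the sites post $O(1)$-bit or $O(\log\log)$-bit summaries of their local cost mass rather than full $\log(n\Delta)$-bit counts. A site posts only when its local contribution is within a constant factor of the current aggregate, and the aggregate is maintained by geometric rounding, so the $s$ sites together post $\tO{s\log(n\Delta)}$ bits. We then sample $\tO{k}$ points with probability proportional to their (rounded) $D^z$-weight, in the style of distributed $k$-means++ / adaptive sampling, and write them in the clear. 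This costs $\tO{dk\log(n\Delta)}$ bits. Running a bicriteria step followed by local reduction gives a set $A$ of $\tO{k}$ centers with $\Cost(X,A)\le O(1)\cdot\mathrm{OPT}$.

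\textbf{Phase 2: coreset via sensitivity sampling.} With $A$ on the board, every site can compute, for each of its own points $x$, the quantities $\|x-a(x)\|^z$ and its cluster $a(x)$. The sites then post, rounded to powers of $2$, the per-cluster sizes $|C_a\cap X_i|$ and costs $\Cost(C_a\cap X_i,A)$. We suppress zero entries and use the same "post only if significant" trick, keeping this part within $\tO{s\log(n\Delta)+k\log(n\Delta)}$ bits.

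From these totals everyone knows the sensitivity upper bounds
\[
\sigma(x)\approx\frac{\Cost(x,A)}{\Cost(X,A)}+\frac{1}{|C_{a(x)}|},
\]
which is the standard group-sampling / sensitivity framework of \cite{Cohen-AddadSS21,Cohen-AddadLSS22}. Each site then samples its points independently with the appropriate probabilities, so the total number of samples is $m=\tO{k/\min(\eps^4,\eps^{2+z})}$ with high probability. Independent sampling across sites reproduces the offline i.i.d.\ construction up to Bernoulli versus multinomial sampling, which these analyses tolerate. Approximately known totals only perturb the weights by constant factors, and the reweighting $1/(m p_x)$ uses the rounded totals consistently. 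We will argue this still yields a $(1+\eps)$-strong coreset by invoking the robustness of the offline analysis to constant-factor-approximate sensitivities.

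\textbf{Phase 3: compact encoding.} Writing each sample in full would cost $m\cdot d\log(n\Delta)$ bits. Instead we encode each sampled $x$ relative to its center $a(x)$, as the rounded offset $x-a(x)$. The offset is discretized on a grid of side $\eps\cdot\|x-a(x)\|/\sqrt d$ scaled in $O(\log)$ magnitude classes, so each coordinate needs only $O(\log(d/\eps))$ bits, plus the index of $a(x)$ and the magnitude class. The rounding moves each point by at most $\eps\|x-a(x)\|$. By the generalized triangle inequality, for any $\calC$,
\[
|\Cost(x,\calC)-\Cost(x',\calC)|\le \eps\,\Cost(x,\calC)+O_z(\eps)\,\Cost(x,A),
\]
and after weighting this sums to $O(\eps)(\Cost(X,\calC)+\Cost(X,A))\le O(\eps)\Cost(X,\calC)$. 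Rescaling $\eps$ handles the constant, and the encoded weights take $O(\log)$ bits each. The total is $\tO{dk/\min(\eps^4,\eps^{2+z})}$ bits. Combined with the previous phases, this gives the stated bound of \thmref{thm:blackboard:inf}.

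\textbf{Main obstacle.} The hardest step is Phase 1: obtaining the constant-factor approximation with only $\tO{s\log(n\Delta)+dk\log(n\Delta)}$ bits. Every site must learn enough to sample from the global $D^z$ distribution across $\tO{1}$ adaptive rounds without each of the $s$ sites paying $\log^2(n\Delta)$. My first attempt would be to write the aggregate $D^z$ mass once per round in $O(\log(n\Delta))$ bits. Each site would then sample locally its share via independent Bernoulli trials (no per-site totals needed). Sites with no samples stay silent except for an $O(1)$-bit acknowledgement, which charges the $s$ term only $\tO{\log(n\Delta)}$ per site overall.
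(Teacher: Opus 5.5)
\textbf{Phase 1 is where the proposal breaks.} Your Phase 3 is essentially the paper's compact encoding and is fine up to routine constants. Phase 1, which you rightly call the crux, contains no mechanism that meets the budget.

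The rule that a site posts only when its local mass is within a constant factor of the aggregate is unsound. Many sites with individually negligible mass can together carry most of $\sum_j D_j$. Even granting the rule, the $dk\log^2(n\Delta)$ term of \cite{ChenSWZ16} comes from writing $\Theta(k)$ centers at each of $\Theta(\log(n\Delta))$ scales, and the rule does not touch that.

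Your fallback in the last paragraph hits a dilemma:
\begin{itemize}
\item If samples are drawn one at a time, as adaptive sampling requires, there are $\Theta(k)$ rounds. Forming the aggregate each round (which no party knows without hearing from every site), plus the per-site acknowledgements, costs $\Omega(sk)$ bits. That is out of budget when, say, $s$ and $k$ are both polynomial in $n$ and $d$ is small.
\item If each round instead draws many independent Bernoulli samples from a fixed distribution, as in $k$-means$\|$, you need $\Theta(\log(n\Delta))$ rounds of $\Theta(k)$ samples. That gives $\Theta(k\log(n\Delta))$ centers, whose clear-text cost is again $dk\log^2(n\Delta)$. Note that $\tO{\cdot}$ does not absorb a $\log(n\Delta)$ factor, so calling this ``$\tO{k}$ points'' does not help.
\end{itemize}

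\textbf{The missing idea is a batch schedule with a validity argument.} Draw batches of $2^i$ samples from the current, possibly stale, distribution, doubling $i$ until the total cost $D=\sum_j D_j$ has dropped by a constant factor relative to the stale total; then refresh and reset $i$.
\begin{itemize}
\item Samples drawn while $D$ is still within a constant factor of the stale total remain valid adaptive samples. The clusters that are currently bad still hold a constant fraction of the stale mass, and hitting the core of a bad cluster fixes it regardless of staleness.
\item The samples wasted in the batch where the drop occurs number at most about twice the counted ones, so $|S|=\O{k}$.
\item Since $D$ is nonincreasing over a $\poly(n\Delta)$ range, there are only $\O{\log(n\Delta)}$ refreshes and $\O{\log(n\Delta)\log k}$ rounds.
\end{itemize}
The paper implements this with stale upper estimates $\widetilde{D}_j\ge D_j$ posted via \PowerApprox, and with rejection sampling: \LazySampling accepts $y$ with probability $d_y/\widetilde{D}_j$, so accepted samples have exactly the adaptive distribution and the acceptance rate is $D/\widetilde{D}$. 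It detects the drop with an $L_1$-sampling test rather than polling every site. Once the doubling is in place, even polling all sites with $\O{\log\log(n\Delta)}$-bit rounded values every round would stay within $\tO{s\log(n\Delta)}$. So the schedule and its validity argument are the ingredients you cannot do without.

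\textbf{Phase 2 has a similar gap.} Posting rounded per-site, per-cluster sizes and costs gives $\Theta(sk)$ nonzero entries whenever sites' points spread over most clusters. That is $\Theta(sk\log\log(n\Delta))$ bits, again out of budget. A deterministic ``post only if significant'' rule cannot fix it: $s$ contributions that are each a $1/s$ fraction of a cluster are individually insignificant yet make up the whole cluster. The paper uses distributed Morris counters instead. The sites sequentially continue $\O{\log k}$ shared randomized counters per quantity, each site simulating its own increments and posting only the counters that changed (one bit if none did). Each counter changes only $\tO{\log(n\Delta)}$ times, for $\tO{s+k\log(n\Delta)}$ bits in total.

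\textbf{Your sensitivity proxy is also not the one the size bound needs.} Your $\sigma(x)$ is not the distribution behind the $\tO{k/\min(\eps^4,\eps^{2+z})}$ bound. The analysis of \cite{BansalCPSS24} needs sampling probabilities pointwise at least a constant times its $\mu(x)$, which includes the term $\Cost(x,A)/(k\Cost(C_{a(x)},A))$. A point carrying all the cost of a cheap but large cluster is undersampled by your $\sigma$ by an unbounded factor. You already collect per-cluster costs, so switching to $\mu$ fixes this.
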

By comparison, the state-of-the-art protocol achieves a constant factor approximation using $\tO{(s+dk)\log^2(n\Delta)}$ total communication~\cite{ChenSWZ16}. 
Thus compared to the work of \cite{ChenSWZ16}, not only do we achieve a $(1+\eps)$-coreset construction in the blackboard setting, but also we remove extraneous $\log(n\Delta)$ factors. 
Conceptually, the message of \thmref{thm:blackboard:inf} is similar to that of \thmref{thm:coordinator:inf}: some amount of exact coordinates need to be communicated (perhaps only to a small number of sites in the coordinator setting) for a constant-factor communication, but no further overhead is necessary for improving to a $(1+\eps)$-approximation. 
Finally, we remark that for both the blackboard model and the coordinator model, our coresets have size $\tO{\frac{k}{\eps^4}}$ for the important cases of $k$-median and $k$-median due to sensitivity sampling~\cite{BansalCPSS24} and in general not only match known lower bounds~\cite{Huang0024} but also can be further optimized in certain regimes of $n$ and $k$~\cite{Cohen-AddadSS21,Cohen-AddadLSS22}.

\paragraph{Technical and algorithmic novelties.}  
We remark that our work introduces several algorithmic and technical innovations that may be of independent interest. 
For instance, we describe a number of existing approaches and why they do not work in \appref{app:overview}. 

Traditional sensitivity sampling selects entire data points with probability proportional to their importance. 
While effective in centralized settings, this approach does not translate efficiently to distributed environments: transmitting full points or high-dimensional centers incurs prohibitive communication costs, and naive adaptive sampling requires frequent updates from all sites.  

To address these challenges, we introduce several novel techniques. 
First, in the blackboard model, we show that adaptive sampling is robust to outdated information, leading to a ``lazy'' adaptive sampling protocol where sites only update the blackboard when their local weight estimates change significantly. 
Sampling from this outdated distribution still guarantees a constant-factor approximation, reducing both the number of transmitting sites per round and the total number of rounds. 
An additional $L_1$ sampling subroutine further detects significant changes in global weight without querying all sites, improving communication efficiency.  

In the coordinator model, we introduce a communication-efficient subroutine based on coordinate-wise sampling. 
Rather than sending full high-dimensional centers, the coordinator and a site perform a distributed binary search on the site's sorted coordinates to find the closest match. 
Only a small offset is transmitted, decoupling communication from the dimension $d$.  

Overall, we combine these techniques with coordinate-wise sensitivity sampling: each point is decomposed along its coordinates, and dimensions are sampled based on their significance. 
This allows the coordinator to send compact summaries to each site, with servers requesting additional information only when necessary. 
However, the reconstructed samples may not correspond to any actual point in the dataset, requiring careful analysis to show that the overall clustering costs are not significantly distorted. 
We believe these techniques could also benefit other distributed settings, such as regression and low-rank approximation. 
This fits into a general body of work on methods for quantizing data for better memory and communication efficiency, which is often a bottleneck for large language and other models.


\section{Preliminaries}
\seclab{sec:prelims}
\applab{app:prelims}
For a positive integer $n$, we denote by $[n]$ the set $\{1,\ldots,n\}$. 
We use $\poly(n)$ to represent an arbitrary polynomial function in $n$, and $\polylog(n)$ to denote a polynomial function in $\log n$. 
An event is said to occur with high probability if it holds with probability at least $1-1/\poly(n)$. 

Throughout this paper, we focus on Euclidean $(k,z)$-clustering. 
Given vectors $x,y\in\mathbb{R}^d$, we let $\dist(x,y)$ denote their Euclidean distance, defined as $\|x-y\|_2$, where $\|x-y\|_2^2=\sum_{i=1}^d(x_i-y_i)^2$. 
For a point $x$ and a set $S\subset\mathbb{R}^d$, we extend the notation $\dist(x,S):=\min_{s\in S}\dist(x,s)$. 
We also use $\|x\|_z$ to denote the $L_z$ norm of $x$, given by $\|x\|_z^z=\sum_{i=1}^d x_i^z$. 
Given a fixed exponent $z\geq 1$ and finite sets $X,C\subset\mathbb{R}^d$ with $X=\{x_1,\ldots,x_n\}$, we define the clustering cost $\Cost(X,C)$ as
\[
\Cost(X,C) := \sum_{i=1}^n \dist(x_i,C)^z.
\]

We now recall generalized versions of the triangle inequality. 
\begin{fact}[Generalized triangle inequality]
\factlab{fact:triangle}
For any $z\geq 1$ and any points $x,w,y\in\mathbb{R}^d$, it holds that
\[
\dist(x,y)^z\leq 2^{z-1}(\dist(x,w)^z+\dist(w,y)^z).
\]
\end{fact}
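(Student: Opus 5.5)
The plan is to combine the ordinary triangle inequality with convexity of $t\mapsto t^z$ on $[0,\infty)$, which holds for $z\geq 1$. First, since $\dist$ is the Euclidean metric, the triangle inequality gives $\dist(x,y)\leq \dist(x,w)+\dist(w,y)$. Because $t\mapsto t^z$ is nondecreasing on $[0,\infty)$, raising both sides to the $z$-th power preserves the inequality:
\[
\dist(x,y)^z\leq \left(\dist(x,w)+\dist(w,y)\right)^z .
\]

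Next I will bound $(a+b)^z$ for $a,b\geq 0$, where $a=\dist(x,w)$ and $b=\dist(w,y)$. Since $z\geq 1$, the map $\phi(t)=t^z$ is convex on $[0,\infty)$. Jensen's inequality with equal weights $\tfrac12,\tfrac12$ then gives
\[
\left(\frac{a+b}{2}\right)^z\leq \frac{a^z+b^z}{2}.
\]
Multiplying both sides by $2^z$ yields $(a+b)^z\leq 2^{z-1}(a^z+b^z)$. Chaining this with the first display gives the claim.

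No step is a real obstacle. The only point needing care is that $z\geq 1$ is exactly what makes $\phi$ convex: $\phi''(t)=z(z-1)t^{z-2}\geq 0$ for $t>0$, and $\phi$ is continuous at $0$. For $z<1$ the function is concave and the inequality fails with this constant. Convexity can be cited directly, or verified through the second-derivative computation above. Alternatively, one can apply the power-mean inequality $\left(\tfrac{a+b}{2}\right)\leq\left(\tfrac{a^z+b^z}{2}\right)^{1/z}$, which is the same fact.
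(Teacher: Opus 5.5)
Your proof is correct: the triangle inequality, monotonicity of $t\mapsto t^z$, and Jensen's inequality for the convex map $t\mapsto t^z$ with weights $\tfrac12,\tfrac12$ give exactly $(a+b)^z\le 2^{z-1}(a^z+b^z)$. The paper does not prove this; it recalls it as a standard fact, and your argument is the standard justification behind it.
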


\begin{fact}[Claim 5 in \cite{SohlerW18}]
\factlab{fact:gen:tri}
Suppose $z\ge 1$, $x,y\ge 0$, and $\eps\in(0,1]$.  
Then
\[(x+y)^z\le(1+\eps)\cdot x^z+\left(1+\frac{2z}{\eps}\right)^z\cdot y^z.\]
\end{fact}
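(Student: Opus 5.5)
Since the statement is scale-invariant in $(x,y)$, I will treat it as a one-variable inequality. If $y=0$ it is trivial because $x^z\le(1+\eps)x^z$. If $x=0$ it reads $y^z\le(1+2z/\eps)^z y^z$, which holds because $1+2z/\eps\ge 1$. For $x,y>0$ I divide by $y^z$ and set $t=x/y>0$. The goal becomes
\[
(t+1)^z\le(1+\eps)t^z+\left(1+\frac{2z}{\eps}\right)^z .
\]
The proof then splits according to whether $y$ is small or large relative to $x$, with threshold $y\le \frac{\eps}{2z}x$, i.e.\ $t\ge 2z/\eps$.

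\textbf{Large-$t$ case.} When $t\ge 2z/\eps$, I will write $(t+1)^z=t^z(1+1/t)^z\le t^z(1+\eps/(2z))^z$. The key estimate is $(1+\frac{\eps}{2z})^z\le 1+\eps$ for $z\ge1$ and $\eps\in(0,1]$. To prove it I bound $(1+\frac{\eps}{2z})^z\le e^{\eps/2}$ and then use $e^{\eps/2}\le 1+\eps$ on $[0,1]$. The last inequality holds because $e^{u/2}-1-u$ is convex, equals $0$ at $u=0$, and equals $e^{1/2}-2<0$ at $u=1$, so it is nonpositive on the interval. This gives $(t+1)^z\le(1+\eps)t^z$, which is stronger than required.

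\textbf{Small-$t$ case.} When $t<2z/\eps$, I use monotonicity directly: $(t+1)^z<(1+2z/\eps)^z$. That is already bounded by the second term on the right, and the first term $(1+\eps)t^z$ is nonnegative, so the inequality holds.

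The argument is elementary and has no real obstacle. The only step needing care is the calibration inequality $(1+\eps/(2z))^z\le1+\eps$. It is exactly why the threshold is taken as $2z/\eps$ rather than $z/\eps$, and why the constant $2z$ appears in the second coefficient.
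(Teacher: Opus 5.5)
Your proof is correct and is essentially the standard argument behind the cited claim: split on whether $y \le \frac{\eps}{2z}x$, then use $(1+\frac{\eps}{2z})^z \le e^{\eps/2} \le 1+\eps$ in one case and monotonicity of $u \mapsto u^z$ in the other. The paper gives no proof of its own and simply imports the statement from \cite{SohlerW18}, so there is nothing further to compare.
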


Next, we define the notion of a strong coreset for $(k,z)$-clustering.
\begin{definition}[Coreset]
\deflab{def:coreset}
Let $\eps>0$ be an approximation parameter, and let $X=\{x_1,\ldots,x_n\}\subset\mathbb{R}^d$ be a set of points. 
A \emph{coreset} for $(k,z)$-clustering consists of a weighted set $(S,w)$ such that for every set $C\subset\mathbb{R}^d$ of $k$ centers,
\[
(1-\eps)\sum_{t=1}^n \dist(x_t,C)^z \leq \sum_{q\in S} w(q)\dist(q,C)^z \leq (1+\eps)\sum_{t=1}^n \dist(x_t,C)^z.
\]
\end{definition}

We will use the following known coreset construction for $(k,z)$-clustering:
\begin{theorem}
\cite{Cohen-AddadLSS22,Huang0024,nips/Cohen-AddadLSSS22}
\thmlab{thm:offline:coreset:size}
For any $\eps\in(0,1)$, there exists a coreset construction for Euclidean $(k,z)$-clustering that samples $\tO{\min\left(\frac{1}{\eps^2}k^{2-\frac{z}{z+2}},\frac{k}{\min(\eps^4,\eps^{2+z})}\right)}$ weighted points from the input dataset.
\end{theorem}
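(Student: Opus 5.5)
Since \thmref{thm:offline:coreset:size} is imported from \cite{Cohen-AddadLSS22,Huang0024,nips/Cohen-AddadLSSS22}, the plan is to reconstruct the standard argument. We prove each of the two size bounds by its own construction and output whichever coreset is smaller, which gives the $\min$. Both constructions follow the same template.

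\textbf{Step 1: approximate solution.} Compute a constant-factor (or bicriteria, $O(k)$-center) solution $\mathcal{A}$ for $X$, for instance by $k$-means$^{++}$-style $D^z$ sampling. Let $\Delta_p=\dist(p,\mathcal{A})^z$.

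\textbf{Step 2: rings and groups.} Partition each cluster of $\mathcal{A}$ into \emph{rings} of geometrically increasing distance to its center. Points that are very close contribute negligibly after moving them onto the center, using \factref{fact:gen:tri}. Points that are very far can be charged to $\mathcal{A}$. Only $O(\log(1/\eps))$ rings per cluster remain relevant. Then merge rings across clusters into \emph{groups} whose total costs $\Cost(\text{ring},\mathcal{A})$ are within a constant factor of each other. This leaves $\polylog(1/\eps)$ groups.

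\textbf{Step 3: sampling.} From each group $G$, sample $\delta$ points with probability proportional to $\Delta_p$, and give each sample the weight $\Cost(G,\mathcal{A})/(\delta\Delta_p)$. The estimator is unbiased for every fixed center set $C$.

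\textbf{Step 4: uniform concentration.} Write $\dist(p,C)^z-\dist(p,\mathcal{A}_p)^z$ relative to the scale of the ring. Apply a chaining argument over nets of candidate solutions at scales $2^{-h}$. Dimension reduction (terminal embeddings or an approximate centroid set) lets each net be taken in dimension $O(\eps^{-2}\polylog)$, so it has size $\exp(\tO{k\,\eps^{-2}})$ independent of $d$ and $n$. Control the variance of each chaining increment group by group:
\begin{itemize}
\item Within a ring, the cost change is bounded by roughly $2^{-h}$ times the ring cost, using \factref{fact:triangle}.
\item A Bernstein bound together with a union bound over nets then gives error $\eps\,\Cost(X,C)+\eps\,\Cost(X,\mathcal{A})$.
\item This holds once $\delta=\tO{k/\min(\eps^4,\eps^{2+z})}$. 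The $\eps^{-2-z}$ term comes from the far-ring or huge-cost regime; the $\eps^{-4}$ term comes from combining the net size $\eps^{-2}$ with the variance term $\eps^{-2}$.
\end{itemize}
Rescaling $\eps$ and using $\Cost(X,\mathcal{A})=O(\Cost(X,C^*))\le O(\Cost(X,C))$ turns this into the coreset guarantee.

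\textbf{Step 5: the $k^{2-z/(z+2)}/\eps^2$ bound.} Follow \cite{Huang0024}. Split the centers of $C$ by their distance to each cluster into ``close'' and ``far'' centers, with the threshold chosen as a power of $k$ and $\eps$. Far centers interact with a cluster essentially as a single point, so their contribution can be handled with net size linear in $k$. Close centers admit a sharper variance bound in which the variance decays with $\dist$. Optimizing the threshold balances the two terms and yields the exponent $2-\frac{z}{z+2}$ with only $\eps^{-2}$.

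\textbf{Main obstacle.} The hard step is Step 4's variance and chaining bound: showing that a union bound over $\exp(k\cdot\polylog)$ net points at each scale costs only $\eps^{-2}$ per level rather than $\eps^{-2}$ times the dimension. I would first try the terminal-embedding reduction so that the net cardinality depends only on $k$ and $\eps$. I would then do a careful per-group Bernstein bound in which the weighting by ring cost makes the sampling variance proportional to the group's share of $\Cost(X,\mathcal{A})$.
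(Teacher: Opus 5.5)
\textbf{Steps 1--4.} These correctly locate the source of the second term. The group-sampling framework gives $\tO{k\eps^{-2}(\eps^{-2}+\eps^{-z})}=\tO{k/\min(\eps^4,\eps^{2+z})}$. One factor $\eps^{-2}$ comes from the log-size $k\eps^{-2}\polylog$ of a net built after terminal embedding, and the factor $\eps^{-2}+\eps^{-z}$ comes from the concentration/variance bound. However, the two estimates that carry that proof, the net bound and the per-group Bernstein bound producing $\eps^{-z}$, are only named, so this half is an outline rather than a proof. The paper gives no proof at all here and simply imports the theorem from the cited works; citing is the honest way to close this half.

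\textbf{The genuine gap is Step 5.} You assert that balancing the close/far threshold ``yields the exponent $2-\frac{z}{z+2}$.'' The bound that the analysis you are following actually obtains in \cite{nips/Cohen-AddadLSSS22,Huang0024} is $\tO{k^{\frac{2z+2}{z+2}}\eps^{-2}}$, i.e., exponent $2-\frac{2}{z+2}$. This gives $\frac{4}{3}$ for $k$-median and $\frac{3}{2}$ for $k$-means, and the paper's own footnote on $k$-median already reflects the $\frac{4}{3}$. The two exponents coincide only at $z=2$.
\begin{itemize}
\item For $1\le z<2$ the printed exponent is the larger one, so the printed bound follows from the cited one a fortiori. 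That is how you should obtain it, rather than claiming the optimization lands on it.
\item For $z>2$, however, $2-\frac{z}{z+2}<2-\frac{2}{z+2}$. For example, at $z=3$ the printed exponent is $\frac{7}{5}$ versus $\frac{8}{5}$, and as $z\to\infty$ the printed exponent tends to $1$, i.e., a $k\eps^{-2}$-size coreset. For $1<k<\eps^{-z(z+2)/2}$ this term attains the minimum, and the printed size lies strictly below anything the cited works establish.
\end{itemize}
So Step 5 asserts exactly the claim that needed proof, and the balancing you describe would not produce it. The exponent in the statement looks like a transcription of $2-\frac{2}{z+2}$. The fix is to replace it by $\frac{2z+2}{z+2}$, or to restrict the first term to $z\le 2$. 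Nothing downstream is affected, since the protocols only ever use coresets of size $\tO{k/\min(\eps^4,\eps^{2+z})}$.
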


We also recall the classical Johnson-Lindenstrauss lemma, which enables dimensionality reduction while approximately preserving pairwise distances:
\begin{theorem}[Johnson-Lindenstrauss lemma]
\cite{johnson1984extensions}
\thmlab{thm:jl}
Let $\eps\in(0,1/2)$ and $m=\O{\frac{1}{\eps^2}\log n}$. 
Given a set $X\subset\mathbb{R}^d$ of $n$ points, there exists a family of random linear maps $\Pi:\mathbb{R}^d\to\mathbb{R}^m$ such that with high probability over the choice of $\pi\sim\Pi$, for all $x,y\in X$,
\[
(1-\eps)\|x-y\|_2\leq\|\pi x-\pi y\|_2\leq(1+\eps)\|x-y\|_2.
\]
\end{theorem}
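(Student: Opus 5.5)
The plan is the standard Gaussian random projection argument. Take $\Pi$ to be the distribution of $\pi = \frac{1}{\sqrt m} G$, where $G\in\mathbb{R}^{m\times d}$ has i.i.d.\ standard Gaussian entries. Since $\pi$ is linear, $\pi x - \pi y = \pi(x-y)$. So it suffices to show that for each of the at most $\binom{n}{2}$ difference vectors $v = x-y$, the norm $\|\pi v\|_2$ lies in $[(1-\eps)\|v\|_2,(1+\eps)\|v\|_2]$ with failure probability at most $1/\poly(n)$, and then take a union bound over all pairs.

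Fix $v\neq 0$ and normalize so that $\|v\|_2=1$. By rotational invariance of the Gaussian, each coordinate $(Gv)_i$ is an independent $N(0,1)$ variable. Hence $m\|\pi v\|_2^2=\sum_{i=1}^m g_i^2$ is a $\chi^2$ random variable with $m$ degrees of freedom and mean $m$. I will prove the two-sided tail bound
\[
\Pr\left[\left|\tfrac{1}{m}\textstyle\sum_{i=1}^m g_i^2 - 1\right| > \eps\right] \le 2\exp(-c\eps^2 m)
\]
for an absolute constant $c>0$ and all $\eps\in(0,1/2)$. The method is the Chernoff/moment generating function argument. For $\lambda<1/2$ we have $\mathbb{E}[e^{\lambda g^2}]=(1-2\lambda)^{-1/2}$. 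Markov's inequality then gives
\[
\Pr\left[\textstyle\sum_i g_i^2 \ge (1+\eps)m\right]\le \left((1-2\lambda)^{-1/2}e^{-\lambda(1+\eps)}\right)^m .
\]
Optimizing at $\lambda=\frac{\eps}{2(1+\eps)}$ yields the bound $\left((1+\eps)e^{-\eps}\right)^{m/2}$. Using $\ln(1+\eps)\le \eps-\eps^2/2+\eps^3/3$, this is at most $e^{-m(\eps^2-\eps^3)/4}$. The lower tail is handled symmetrically with $\lambda<0$, giving $\left((1-\eps)e^{\eps}\right)^{m/2}\le e^{-m\eps^2/4}$. I expect this calculation, specifically getting a clean $\eps^2$ in the exponent for the upper tail, to be the only real work. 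The restriction $\eps<1/2$ is what keeps $\eps^2-\eps^3\ge\eps^2/2$.

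Next I convert squared norms to norms. If $\|\pi v\|_2^2\in[1-\eps,1+\eps]$, then $\|\pi v\|_2\in[\sqrt{1-\eps},\sqrt{1+\eps}]\subseteq[1-\eps,1+\eps]$, because $\sqrt{1+\eps}\le 1+\eps$ and $\sqrt{1-\eps}\ge 1-\eps$ for $\eps\in[0,1]$.

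Finally, choose $m = \frac{C}{\eps^2}\log n$ with $C$ a large constant. Each pair then fails with probability at most $2n^{-cC}$, and a union bound over fewer than $n^2$ pairs gives total failure probability at most $2n^{2-cC}=1/\poly(n)$ once $C$ is large enough. This is exactly the high-probability guarantee of \thmref{thm:jl}.
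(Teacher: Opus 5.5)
Your proof is correct. The paper gives no proof of its own for this statement; it only cites \cite{johnson1984extensions}, so your argument is a genuinely different, self-contained route.

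A few confirmations and cosmetic points. The Gaussian family is oblivious to $X$, as required. The moment generating function computation and the optimizer $\lambda=\frac{\eps}{2(1+\eps)}$ are right. For $\eps<1/2$ both tails are at most $e^{-m\eps^2/8}$, so the union bound over fewer than $n^2$ difference vectors gives failure probability $1/\poly(n)$ once $m=\frac{C}{\eps^2}\log n$ with $C$ large. Your upper-tail exponent is slightly loose: $\ln(1+\eps)\le\eps-\eps^2/2+\eps^3/3$ actually gives $m(\eps^2/4-\eps^3/6)$, though $(\eps^2-\eps^3)/4$ is a valid weakening. Rotational invariance is not needed, since $(Gv)_i=\sum_j G_{ij}v_j$ is directly $N(0,\|v\|_2^2)$ and the rows are independent.

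The original argument of \cite{johnson1984extensions} projects orthogonally onto a uniformly random $m$-dimensional subspace and rescales by $\sqrt{d/m}$. It then obtains concentration of the projected length of a fixed unit vector from measure concentration on the sphere. Your i.i.d.\ Gaussian route, in the spirit of Indyk--Motwani and Dasgupta--Gupta, is more elementary. Because $m\|\pi v\|_2^2$ is exactly $\chi^2_m$, everything reduces to a one-variable Chernoff bound with explicit constants. The cost is that $\pi$ is no longer a scaled orthogonal projection, which is irrelevant for this statement.

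Be aware of how the paper actually uses dimension reduction, in \algref{alg:coreset:coordinator}. There the coordinator must ship the map to every site as a short random seed, and a dense Gaussian matrix does not directly allow that. This is why the paper invokes the derandomized sparse family of \cite{kane2010derandomized} (\thmref{thm:johnson:lindenstrauss}) at that step rather than this statement. Your proof establishes the theorem as stated, but not that seed-length property.
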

\noindent
We remark that exist more efficient dimensionality reduction techniques for $(k,z)$-clustering~\cite{MakarychevMR19,IzzoSZ21} though for the purposes of our guarantees, Johnson-Lindenstrauss suffices. 

Finally, we recall Hoeffding's inequality, a standard concentration bound:
\begin{theorem}[Hoeffding's inequality]
\thmlab{thm:hoeffding's:inequality}
Let $X_1,\ldots,X_n$ be independent random variables with $a_i\leq X_i\leq b_i$ for each $i$. 
Let $S_n=\sum_{i=1}^n X_i$. 
Then for any $t>0$,
\[
\mathbb{P}\left(|S_n-\mathbb{E}[S_n]|>t\right) \leq 2\exp\left(-\frac{t^2}{\sum_{i=1}^n(b_i-a_i)^2}\right).
\]
\end{theorem}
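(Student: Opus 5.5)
This is Hoeffding's inequality as stated in \thmref{thm:hoeffding's:inequality}. The plan is the standard Chernoff--Cramér argument: bound the moment generating function of each centered summand and optimize the exponential tilt.

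\textbf{Step 1 (upper tail).} Set $Y_i = X_i - \mathbb{E}[X_i]$, so that $\mathbb{E}[Y_i]=0$ and $Y_i$ lies in an interval of length $b_i-a_i$. For any $\lambda>0$, Markov's inequality applied to $e^{\lambda(S_n-\mathbb{E}[S_n])}$, together with independence, gives
\[
\mathbb{P}\left(S_n-\mathbb{E}[S_n]>t\right)\leq e^{-\lambda t}\prod_{i=1}^n\mathbb{E}\left[e^{\lambda Y_i}\right].
\]

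\textbf{Step 2 (Hoeffding's lemma).} This is the main technical step. I will show that $\mathbb{E}[e^{\lambda Y}]\leq e^{\lambda^2(b-a)^2/8}$ for any mean-zero $Y\in[a,b]$. By convexity of the exponential,
\[
e^{\lambda y}\leq \frac{b-y}{b-a}e^{\lambda a}+\frac{y-a}{b-a}e^{\lambda b}.
\]
Taking expectations and using $\mathbb{E}[Y]=0$ gives $\mathbb{E}[e^{\lambda Y}] \leq e^{\phi(u)}$, where $u=\lambda(b-a)$, $p=-a/(b-a)$, and
\[
\phi(u)=-pu+\log\left(1-p+pe^{u}\right).
\]
One checks that $\phi(0)=\phi'(0)=0$. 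Moreover, $\phi''(u)=q(1-q)\leq 1/4$, where $q=pe^u/(1-p+pe^u)\in[0,1]$. Taylor's theorem with remainder then gives $\phi(u)\leq u^2/8$.

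\textbf{Step 3 (optimize $\lambda$).} Combining Steps 1 and 2,
\[
\mathbb{P}\left(S_n-\mathbb{E}[S_n]>t\right)\leq \exp\left(-\lambda t+\frac{\lambda^2}{8}\sum_{i=1}^n(b_i-a_i)^2\right).
\]
Choose $\lambda = 4t/\sum_{i=1}^n(b_i-a_i)^2$. This yields the bound
\[
\exp\left(-\frac{2t^2}{\sum_{i=1}^n(b_i-a_i)^2}\right)\leq \exp\left(-\frac{t^2}{\sum_{i=1}^n(b_i-a_i)^2}\right),
\]
which is at least as strong as the exponent in the stated bound.

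\textbf{Step 4 (lower tail and union bound).} Apply the same argument to $-X_i\in[-b_i,-a_i]$, whose intervals have the same lengths, to get the identical bound for the lower tail. A union bound over the two tails supplies the factor $2$ and completes the proof.

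The only nontrivial point is the curvature bound $\phi''\leq 1/4$ in Step 2; everything else is routine.
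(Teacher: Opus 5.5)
Your proof is correct. It is the standard Chernoff--Cram\'er argument via Hoeffding's lemma. The paper gives no proof of this statement: it recalls it as a standard concentration bound and uses it as a black box, in the analyses of the $L_1$-sampling test and the boosted greater-than protocol.

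Your argument actually gives the sharper bound $2\exp\bigl(-2t^2/\sum_{i=1}^n(b_i-a_i)^2\bigr)$, and the stated inequality is a weakening of it, so the comparison at the end of Step 3 is all you need.

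Two degenerate cases are left implicit, and both are trivial:
\begin{itemize}
\item If $b_i=a_i$, then $Y_i\equiv 0$ and the convex interpolation is not needed.
\item If $\sum_{i}(b_i-a_i)^2=0$, your choice of $\lambda$ is undefined, but the probability is $0$.
\end{itemize}
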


We next recall the following formulation of sensitivity sampling paradigm, given in \algref{alg:sensitivity:sampling}. 

\begin{algorithm}[!htb]
\caption{$\SensitivitySampling$}
\alglab{alg:sensitivity:sampling}
\begin{algorithmic}[1]
\REQUIRE{Dataset $X=\{x_1,\ldots,x_n\}\subset[\Delta]^d$ and integer $k$}
\ENSURE{Weighted set $A = \{ (a_i, w_{a_i}) \}$}
\STATE{Compute an $\O{1}$-approximation $S = \{ s_1, s_2, \cdots, s_k \}$. Let $C_j \subset X$ be the cluster centered at $s_j$. For a point $x \in C_j$, let $\Delta_p := \Cost (C_j, S) / |C_j|$ denote the average cost of $C_j$}
\STATE{For $x \in C_j$, let
\[ \mu(x) :=  \frac{1}{4} \cdot \left ( \frac{1}{k |C_j|}  + \frac{\Cost(x, S)}{k \Cost(C_j, S)} + \frac{\Cost(x, S)}{\Cost(X, S)} + \frac{\Delta_x}{\Cost(X,S)} \right ). \]
}
\STATE{$m \gets \tO{k/\eps^2 \cdot \min (\eps^{-2}, \eps^{-z})}$, $A \gets \emptyset$}
\FOR{$i \gets 1$ to $m$}
\STATE{Sample point $a_i$ independently from the distribution $\mu$, $A \gets A \cup \{ (a_i, \frac{1}{m \cdot \mu (a_i)} ) \}$}
\ENDFOR
\STATE{\textbf{return}  $A$}
\end{algorithmic}
\end{algorithm}

\begin{theorem}
\cite{BansalCPSS24}
\thmlab{thm:sens:sampling:bounds}
Sensitivity sampling, c.f., \algref{alg:sensitivity:sampling}, outputs a $(1+\eps)$-coreset of size $\tO{\frac{k}{\min (\eps^4, \eps^{2+z})}}$ for Euclidean $(k,z)$-clustering with probability at least $0.99$. 
\end{theorem}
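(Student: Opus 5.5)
The plan is to follow the now-standard ``group sampling plus chaining'' analysis of \cite{Cohen-AddadSS21,Cohen-AddadLSS22}, adapted to the single sensitivity distribution $\mu$ of \algref{alg:sensitivity:sampling} as in \cite{BansalCPSS24}. First, a sanity step. Each of the four terms of $\mu$ sums to $1$ over $X$, so $\mu$ is a probability distribution. The weights $\frac{1}{m\mu(a_i)}$ make $\sum_{q\in A} w(q)\dist(q,C)^z$ an unbiased estimator of $\Cost(X,C)$ for every fixed $C$. The work is therefore to bound the deviation uniformly over all $k$-center sets $C$.

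\textbf{Sensitivity bound.} For $x\in C_j$ and any $C$, I will use \factref{fact:triangle} and \factref{fact:gen:tri} to write
\[
\dist(x,C)^z \le (1+\eps)\dist(s_j,C)^z + (1+2z/\eps)^z\dist(x,s_j)^z .
\]
Averaging $\dist(s_j,C)^z$ over $C_j$ bounds the first part by roughly $\frac{\Cost(C_j,C)+\Cost(C_j,S)}{|C_j|}$. The terms $\frac{1}{k|C_j|}$ and $\frac{\Delta_x}{\Cost(X,S)}$ of $\mu$ pay for this part. The second part is paid by $\frac{\Cost(x,S)}{\Cost(X,S)}$ and $\frac{\Cost(x,S)}{k\Cost(C_j,S)}$. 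Using $\Cost(X,S)=\O{\Cost(X,C)}$ gives
\[
\frac{\dist(x,C)^z}{\mu(x)\Cost(X,C)} \le \O{k}\cdot\poly(z/\eps).
\]
The loss $(1+2z/\eps)^z$ is only incurred on the second part; this is exactly where the $\min(\eps^{-2},\eps^{-z})$ factor will come from.

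\textbf{Decomposition.} Following \cite{Cohen-AddadLSS22}, I partition each cluster $C_j$ into $\O{\log(1/\eps)}$ rings according to $\Cost(x,S)/\Delta_{x}$. Points with tiny cost are discarded and charged $\eps\Cost(X,S)$. Points with huge cost are few, since they form outer rings controlled by the $\frac{\Cost(x,S)}{\Cost(X,S)}$ term. Within one ring I split the centers $C$ relative to each cluster into two kinds.
\begin{itemize}
\item[(a)] \emph{Far} centers have $\dist(s_j,C)\ge (z/\eps)\cdot$ the ring radius. Then $\dist(x,C)^z=(1\pm\eps)\dist(s_j,C)^z$ for every $x$ in the ring. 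It thus suffices that the sampled weight of each ring is within $(1\pm\eps)$ of its true size, which follows from Hoeffding (\thmref{thm:hoeffding's:inequality}) plus a union bound over the $\O{k\log(1/\eps)}$ rings.
\item[(b)] \emph{Close} centers are the remaining ones. For these, $\dist(x,C)^z$ is within a $\poly(z/\eps)$ factor of the ring's cost scale.
\end{itemize}

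\textbf{Chaining.} For case (b), I apply a terminal embedding / Johnson--Lindenstrauss (\thmref{thm:jl}) to reduce to dimension $\O{\eps^{-2}\log(k/\eps)}$. I then build nested nets of ``approximate cost vectors'' $(\dist(x,C)^z)_{x}$ at scales $2^{-h}$, of size $\exp(\tO{k\,\eps^{-2}\,2^{2h}})$ roughly. The sup of the error process is bounded by chaining. At scale $h$ the increment is a sum of $m$ i.i.d.\ terms whose variance is controlled by the sensitivity bound above. This gives an error of
\[
\eps\Cost(X,C)\quad\text{once}\quad m=\tO{k\,\eps^{-2}\min(\eps^{-2},\eps^{-z})}.
\]
The scales telescope with only polylog loss. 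A final Markov/union step over the $\O{\log(1/\eps)}$ ring classes gives success probability $0.99$.

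\textbf{Main obstacle.} I expect the hardest step to be the variance bound inside the chaining for close centers. Naively, the $(1+2z/\eps)^z$ blow-up together with the net size would give $k\eps^{-2-z}\cdot\eps^{-2}$. Obtaining the $\min(\eps^{-4},\eps^{-2-z})$ requires using the ring structure to show that the relevant second moment scales like $\eps^{-\min(2,z)}$ relative to the first, not $\eps^{-z}$. I would first try bounding the second moment via the ``close'' condition, $\dist(x,C)\le \O{z/\eps}\cdot$ radius, combined with the $\frac{1}{k|C_j|}$ term, as in \cite{Cohen-AddadLSS22}.
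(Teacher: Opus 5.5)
Your quantitative target is wrong, and the step you flag as the main obstacle cannot be carried out. The statement's bound is $k/\min(\eps^4,\eps^{2+z})=k\eps^{-2}\max(\eps^{-2},\eps^{-z})$. You aim instead for $m=\tO{k\eps^{-2}\min(\eps^{-2},\eps^{-z})}$, i.e.\ $k\min(\eps^{-4},\eps^{-2-z})$. That is the sample size written on line~3 of \algref{alg:sensitivity:sampling}, which does not match the theorem. For $z<2$ your target is merely stronger than required. For $z>2$, however, it amounts to showing that $\tO{k\eps^{-4}}$ samples suffice, via a second moment of order $\eps^{-\min(2,z)}$ rather than $\eps^{-z}$. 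As far as I know this is not attainable: for $k$ sufficiently large as a function of $1/\eps$, an $\eps^{-2-z}$ dependence is necessary for Euclidean $(k,z)$-coresets (the lower bound of \cite{Huang0024} that the paper itself appeals to). So no variance argument removes the $\eps^{-z}$ when $z>2$. What the statement actually requires is $\tO{k\eps^{-2}(\eps^{-2}+\eps^{-z})}$. Compared with your naive count $k\eps^{-4-z}$, the extra $\eps^{-2}$ and the $\eps^{-z}$ must combine additively rather than multiplicatively. That is exactly what your chaining step has to deliver, and the sketch leaves it unspecified. The paper gives no argument of its own to compare against: it imports the theorem from \cite{BansalCPSS24}, later noting only that that proof tolerates $|S|=\O{k}$ and sampling probabilities within constant factors of $\mu$.

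Two further steps fail as written. In case (a), relative $(1\pm\eps)$ accuracy of every ring's sampled mass does not follow from Hoeffding. For a ring $R\subseteq C_j$ with $\Cost(x,S)\approx 2^i\Delta_x$, you are only guaranteed $\mu(x)\gtrsim 2^i/(k|C_j|)$. So the variance of the mass estimate can be of order $k|C_j||R|/(2^i m)$. For a sparse ring (say $|R|=1$ while $|C_j|/2^i$ grows with $n$) this dwarfs $\eps^2|R|^2$ unless $m$ depends on $n$. What is achievable, and sufficient, is additive error of order $\eps|C_j|$ per cluster, split over its rings. The $\frac{1}{k|C_j|}$ term provides this, and the error is charged via $|C_j|\dist(s_j,C)^z\le 2^{z-1}(\Cost(C_j,C)+\Cost(C_j,S))$. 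Likewise, points with tiny $\Cost(x,S)$ cannot simply be discarded. Their $\dist(x,C)^z$ can be almost all of $\Cost(X,C)$, e.g.\ if the data sit on the centers of $S$. Only $|\dist(x,C)^z-\dist(s_j,C)^z|$ is small for them, so they too must be handled through accurate cluster-mass estimates.
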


We next recall the following distributed protocol for determining the larger of two integers. 
\begin{theorem}
\thmlab{thm:greater:than:cc}
\cite{nisan1993communication}
Given two $\O{\log n}$ bit integers $X$ and $Y$, there exists a protocol $\GreaterThan$ that uses $\tO{\log\log n}$ total bits of communication and determines whether $X=Y$, $X<Y$ or $X>Y$. 
\end{theorem}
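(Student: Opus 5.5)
The plan is the classical randomized protocol of \cite{nisan1993communication}: a binary search for the length of the longest common prefix of $X$ and $Y$, where each comparison of prefixes is an equality test done by short random fingerprints. Write $L=\O{\log n}$ for the common bit length, padding with leading zeros. Then $X$ and $Y$ are ordered by the first bit position where their binary expansions differ. So it suffices to find the smallest index $i^*\in[L]$ with $X_{i^*}\neq Y_{i^*}$, or to certify that none exists. Once $i^*$ is known, one party sends the single bit $X_{i^*}$, which settles $X<Y$ versus $X>Y$. If no such index exists, then $X=Y$.

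\textbf{Step 1 (binary search).} Maintain an interval $[\ell,r]$ with the invariant that the length-$\ell$ prefixes of $X$ and $Y$ agree and the length-$r$ prefixes disagree. Initially $\ell=0$ and $r=L$, after a preliminary equality test on the full strings; if that test passes, output $X=Y$. At each step take $m=\lfloor(\ell+r)/2\rfloor$ and test whether $X_{1..m}=Y_{1..m}$. Move $\ell\gets m$ if the prefixes agree and $r\gets m$ otherwise. After $\O{\log L}=\O{\log\log n}$ steps we have $r=\ell+1$, and then $i^*=r$.

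\textbf{Step 2 (equality tests).} Each prefix test is a public-coin equality test: the parties compare $h(X_{1..m})$ and $h(Y_{1..m})$ for a random hash $h$ with $t$ output bits, for instance inner products with random strings over $\mathbb{F}_2$. This gives one-sided error $2^{-t}$ and costs $\O{t}$ bits. In the coordinator or private-coin setting, Newman's theorem converts this to private coins at an additive cost of $\O{\log L}$ bits, which is within budget. Alternatively, one can use a private-coin fingerprint modulo a random prime of $\O{\log L}$ bits.

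\textbf{Step 3 (error and cost accounting).} This is the main point of care. With $T=\O{\log\log n}$ tests, we need per-test error $\O{1/T}$ so that a union bound gives constant overall failure probability. That requires $t=\O{\log T}=\O{\log\log\log n}$ bits per test, for a total of $\O{\log\log n\cdot\log\log\log n}=\tO{\log\log n}$ bits. This matches the claimed bound, since the $\tO{\cdot}$ notation absorbs polylogarithmic factors of the main term. If a higher success probability $1-\delta$ is required, set $t=\O{\log(T/\delta)}$; the bound stays $\tO{\log\log n}$ for $\delta=1/\polylog(n)$.

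I would expect the only real obstacle to be keeping the cost of the equality tests from multiplying the $\log\log n$ rounds by a $\log\log n$ factor. Choosing per-test error $1/T$ rather than $1/\poly(n)$ resolves this. A slightly tighter alternative is Feige et al.'s noisy binary search with constant-error tests and backtracking, which achieves $\O{\log\log n}$ exactly; this is not needed here.
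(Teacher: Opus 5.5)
Your main argument is correct. It is the standard protocol behind the cited bound: the paper gives no proof of its own here. It imports the result from \cite{nisan1993communication} and later uses it only as a constant-error primitive (``wrong answer with probability at most $p<\frac{1}{2}$''). That is exactly what your main line delivers: $\O{\log L}$ one-sided prefix-equality tests of $t=\O{\log\log L}$ bits each, plus an additive $\O{\log L}$ from Newman's theorem, for a total of $\O{\log\log n\cdot\log\log\log n}=\tO{\log\log n}$ bits.

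Some side remarks do not survive the paper's definition $\tO{f}=\O{f}\cdot\polylog(f)$ with $f=\log\log n$; under it, $(\log\log n)^2$ is \emph{not} $\tO{\log\log n}$.

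First, say explicitly that Newman's theorem is applied once to the whole public-coin protocol, whose input length is $L$. Applying it test by test would cost $\Omega(\log L)$ bits per test, because private-coin equality on $\Theta(L)$-bit strings already needs that much. That gives $\Theta(\log^2 L)$ in the worst case.

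For the same reason, the random-prime fingerprint alternative fails. Even for constant per-test error the prime must come from a range of size $\Omega(L)$. So every residue has $\Theta(\log L)=\Theta(\log\log n)$ bits, and the $\Theta(\log L)$ tests cost $\Theta((\log\log n)^2)$ in total.

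Second, the boosting claim is off. With $\delta=1/\polylog(n)$ you get $t=\O{\log(T/\delta)}=\Theta(\log\log n)$, and again $\Theta((\log\log n)^2)$ in total. Your scheme stays within $\tO{\log\log n}$ only when $\log(1/\delta)=\polylog(\log\log n)$, e.g.\ $\delta=1/\polylog(\log n)$. For smaller $\delta$ you would need something like the noisy binary search you mention at the end.

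Third, your lexicographic comparison assumes nonnegative inputs, while the paper applies the primitive to values in $[-\poly(n),\poly(n)]$. Shift both inputs by a fixed public offset before padding.

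None of this affects the statement itself, which only needs constant error.
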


We recall the following protocol for applying the Johnson-Lindenstrauss transformation using low communication cost.
\begin{theorem}
\thmlab{thm:johnson:lindenstrauss}
\cite{kane2010derandomized}
For any integer $d > 0$, and any $0 < \eps, \delta < \frac{1}{2}$, there exists a family $\calA$ of matrices in $\mathbb{R}^{\ell \times d}$ with $\ell = \Theta \left (\eps^{-2} \log\frac{1}{\delta} \right )$ such that for any $x \in \mathbb{R}^d$,
\[ \PPPr{A \in \calA}{ \| Ax \|_2 \notin [ (1-\eps) \| x \|_2, (1+\eps) \| x \|_2 ]} < \delta. \]
Moreover, $A \in \calA$ can be sampled using $\O{\log \left ( \frac{l}{\delta} \right ) \log d}$ random bits and every matrix $A\in\calA$ has at most $\alpha = \Theta \left (\eps^{-1} \log \left ( \frac{1}{\delta} \right ) \log \left ( \frac{l}{\delta} \right ) \right )$ non-zero entries per column, and thus $Ax$ can be evaluated in $\O{\alpha \cdot \| x \|_0}$ time if $A$ is written explicitly in memory. 
\end{theorem}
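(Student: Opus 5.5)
The approach is to build $\calA$ explicitly as a sparse Johnson--Lindenstrauss family with limited-independence randomness, then prove the norm-preservation bound by a high-moment argument. Concretely, split the $\ell$ rows into $\alpha$ blocks of $\ell/\alpha$ rows each. For every column $i\in[d]$ and block $r\in[\alpha]$, choose one row $h_r(i)$ inside block $r$ and a sign $\sigma_r(i)\in\{\pm1\}$. Set $A_{(r,h_r(i)),i}=\sigma_r(i)/\sqrt{\alpha}$ and make all other entries zero. Each column then has exactly $\alpha$ nonzeros. Given $x$, computing $Ax$ means adding $\pm x_i/\sqrt\alpha$ into $\alpha$ positions for each $i$ in the support of $x$, which gives the $\O{\alpha\|x\|_0}$ time bound.

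For the seed length, draw the hash functions $h_r$ and the signs $\sigma_r$ from $p$-wise independent families with $p=\Theta(\log\frac1\delta)$. A $p$-wise independent family on domain $[d]$ with range of size $\poly(\ell)$ costs $\O{p\log(d\ell)}$ bits. Reusing one seed across blocks where possible (or charging $\log(\ell/\delta)$ per hash evaluation) should give the claimed $\O{\log(\ell/\delta)\log d}$ random bits.

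For correctness, fix a unit vector $x$ and write
\[
Z:=\|Ax\|_2^2-1=\frac{1}{\alpha}\sum_{r=1}^{\alpha}\sum_{i\neq j}\eta_{r,i,j}\,\sigma_r(i)\sigma_r(j)\,x_ix_j,
\]
where $\eta_{r,i,j}=\mathbf{1}[h_r(i)=h_r(j)]$. The diagonal terms sum to exactly $\|x\|_2^2$ because each column has $\alpha$ entries of size $1/\sqrt\alpha$. It suffices to show $\mathbb{E}[Z^p]\le(\eps/2)^p\delta$ for even $p=\Theta(\log\frac1\delta)$; Markov's inequality then gives $\Pr[|Z|>\eps]<\delta$, and this implies the stated bound on $\|Ax\|_2$. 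Expanding $Z^p$ yields monomials involving at most $2p$ coordinates, so $\mathbb{E}[Z^p]$ is the same under $2p$-wise independence as under full independence. This reduces the problem to the fully random case.

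For the fully random case, condition on the hashes. Then $Z=\sigma^\top B\sigma$ with $B$ zero on the diagonal, and Hanson--Wright gives
\[
\mathbb{E}_\sigma[Z^p]^{1/p}\lesssim \sqrt p\,\|B\|_F+p\,\|B\|_{op}.
\]
Each block acts as a contraction, so $\|B\|_{op}\le 1/\alpha$, and the term $p/\alpha$ is at most $\eps$ once $\alpha\gtrsim\eps^{-1}\log\frac1\delta$. The Frobenius norm satisfies $\|B\|_F^2=\frac{1}{\alpha^2}\sum_{r}\sum_{i\ne j}\eta_{r,i,j}x_i^2x_j^2$, whose expectation is $\O{\alpha/\ell}\cdot\frac1\alpha=\O{1/\ell}$.

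The main obstacle is controlling the $p$-th moment of $\|B\|_F^2$ rather than just its mean, because heavy coordinates of $x$ can collide. I would first try to bound $\mathbb{E}\big[(\sum_{r,i\ne j}\eta_{r,i,j}x_i^2x_j^2)^{p/2}\big]$ by expanding it and counting collision patterns as multigraphs on the coordinates. Independence across the $\alpha$ blocks supplies a $(p/\ell)^{\cdot}$ decay per edge. With $\ell=\Theta(\eps^{-2}\log\frac1\delta)$ this should give $\mathbb{E}[\|B\|_F^p]^{1/p}\lesssim\eps/\sqrt p$, which closes the argument. The extra $\log(\ell/\delta)$ factor in $\alpha$ provides slack for this counting. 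If it is not enough, I would switch to Kane--Nelson's direct graph-moment bound on $\mathbb{E}[Z^p]$.
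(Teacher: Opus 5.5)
Your overall route (block construction, reducing $\mathbb{E}[Z^p]$ to the fully random case, Hanson--Wright conditioned on the hashes, $\|B\|_{op}\le 1/\alpha$, Markov) is sound. The decisive estimate $\big\|\,\|B\|_F\big\|_p\lesssim\eps/\sqrt{p}$, however, is only asserted (``should give''), with a fallback to a different proof. As written, the argument is therefore incomplete exactly at its crux. No multigraph counting is needed to close it. In the fully random model, $\|B\|_F^2=\alpha^{-2}\sum_{i\neq j}x_i^2x_j^2Q_{ij}$ with $Q_{ij}=\sum_r\eta_{r,i,j}$. For fixed $i\neq j$ the indicators $\eta_{r,i,j}$ are independent across blocks, so $Q_{ij}\sim\mathrm{Bin}(\alpha,\alpha/\ell)$, and the standard binomial moment bound gives $\|Q_{ij}\|_{p/2}\lesssim\alpha^2/\ell+p$. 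Minkowski's inequality in $L_{p/2}$ together with $\sum_{i\neq j}x_i^2x_j^2\le 1$ then gives $\big\|\,\|B\|_F^2\big\|_{p/2}\lesssim 1/\ell+p/\alpha^2$. Hence $\sqrt{p}\,\big\|\,\|B\|_F\big\|_p\lesssim\sqrt{p/\ell}+p/\alpha\lesssim\eps$ as soon as $\ell\gtrsim\eps^{-2}p$ and $\alpha\gtrsim\eps^{-1}p$. Heavy coordinates cause no trouble: the binomial moment already accounts for their repeated collisions.

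Two further points need repair.

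First, the construction is not well defined on the whole parameter range. Here $\ell/\alpha=\Theta\big(1/(\eps\log(\ell/\delta))\big)$, which is below $1$ whenever $\log(\ell/\delta)\gtrsim 1/\eps$ (say constant $\eps$ and tiny $\delta$). Use $s=\Theta(\eps^{-1}\log\frac1\delta)\le\min(\alpha,\ell)$ blocks instead; the statement only asks for \emph{at most} $\alpha$ nonzeros per column. Then run the argument above with $s$ in place of $\alpha$. The extra $\log(\ell/\delta)$ gives you no slack; it is an artifact of the source. The paper does not prove this theorem but imports it from \cite{kane2010derandomized}, which analyzes the Dasgupta--Kumar--Sarl\'os construction, where all $d\alpha$ coordinate copies are hashed into the $\ell$ rows by a single function. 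There $\|B\|_{op}$ is bounded only by the largest row load. That needs a limited-independence Chernoff bound plus a union bound over the $\ell$ rows, and this is where $\log(\ell/\delta)$ enters the sparsity and the seed length.

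Second, the seed accounting is unresolved. Independent seeds per block cost a factor $s$ too much. Literally reusing one hash function in every block makes the $\eta_{r,i,j}$ identical across $r$ and destroys the binomial bound. What works is a single $2p$-wise independent $H:[d]\times[s]\to[\ell/s]\times\{\pm1\}$ with $H(i,r)=(h_r(i),\sigma_r(i))$. Every monomial of $Z^p$ touches at most $2p$ points of this product domain, so your moment-matching step applies verbatim. The seed has $\O{\log\frac1\delta\log(d\ell)}$ bits, which is $\O{\log\frac{\ell}{\delta}\log d}$ when $d\ge\ell$; for $d<\ell$ a zero-padded identity works with no randomness.

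With these repairs your proof is complete. It in fact gives the sharper sparsity $\O{\eps^{-1}\log\frac1\delta}$ of the later Kane--Nelson block construction.
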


\paragraph{Adaptive sampling.}
We now introduce the adaptive sampling algorithm from \cite{aggarwal2009adaptive} and generalize the analysis to $(k,z)$-clustering. 

\begin{algorithm}[!htb]
\caption{$\AdaptiveSampling$}
\alglab{alg:adaptive:(k, z):sampling}
\begin{algorithmic}[1]
\REQUIRE{Dataset $X=\{x_1,\ldots,x_n\}\subset[\Delta]^d$; approximation parameter $\gamma \ge 1$}
\ENSURE{Bicriteria approximation $S$ for $(k,z)$-clustering on $X$}
\STATE{$S \gets \emptyset, N \gets\O{k}$, $\gamma\gets\Theta(1)$}
\FOR{$t \gets 1$ to $N$}
\STATE{Choose $s_t$ to be $x_i\in X$ with probability $\frac{d_i}{\sum d_i}$, where $\frac{1}{\gamma}\cdot d_i\le\Cost(x_i,S)\le \gamma\cdot d_i$}
\STATE{$S \gets S \cup \{ s_t\}$}
\ENDFOR
\STATE{\textbf{return}  $S$}
\end{algorithmic}
\end{algorithm}

\cite{aggarwal2009adaptive} showed that adaptive sampling achieves a bicriteria approximation for $k$-means, i.e., $z=2$. 
For completeness, we shall extend the proof to show that it works for weighted case and any $z\ge 1$, though we remark that the techniques are standard. 

\begin{restatable}{theorem}{thmadaptivesamplingkmeans}
\thmlab{thm:adaptive:sampling:k-means}
There exists an algorithm, c.f., \algref{alg:adaptive:(k, z):sampling} that outputs a set $S$ of $\O{k}$ points such that with probability $0.99$, $\Cost(S,X)\le\O{1}\cdot\Cost(C_{\OPT},X)$, where $C_{\OPT}$ is an optimal $(k,z)$-clustering of $X$. 
\end{restatable}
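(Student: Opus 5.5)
We follow the Aggarwal--Deshpande--Kannan analysis of adaptive sampling, adapted to general $z\ge 1$, weights, and $\gamma$-approximate sampling probabilities. Fix an optimal clustering $C_{\OPT}=\{c_1,\ldots,c_k\}$ with clusters $A_1,\ldots,A_k$ and costs $\OPT_j=\Cost(A_j,\{c_j\})$. At step $t$, with current set $S$, call a cluster $A_j$ \emph{good} if $\Cost(A_j,S)\le \beta\cdot \OPT_j$ for a large constant $\beta$, and \emph{bad} otherwise. The goal is to show that every step in which $\Cost(X,S)>\alpha\cdot\OPT$ (for a large constant $\alpha$) converts a bad cluster into a good one with constant probability. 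Since the cost is monotone nonincreasing in $S$, good clusters stay good, so with $N=\O{k}$ steps a Chernoff/martingale argument gives either $k$ conversions or an earlier drop below $\alpha\cdot\OPT$, with probability $0.99$. In both cases $\Cost(X,S)\le\O{\alpha+\beta}\cdot\OPT$.

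\textbf{Step 1: the sample lands in a bad cluster.} If $\Cost(X,S)>\alpha\OPT$, then good clusters contribute at most $\beta\OPT$. So the bad clusters carry at least a $(1-\beta/\alpha)$ fraction of the true cost. Since $d_i$ is within a factor $\gamma$ of $\Cost(x_i,S)$, the sampled point lies in a bad cluster with probability at least $\gamma^{-2}(1-\beta/\alpha)$, a constant.

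\textbf{Step 2: within a bad cluster, the sample is a good center with constant probability.} Fix a bad cluster $A=A_j$ with center $c$. Call $x\in A$ \emph{close} if $\dist(x,c)^z\le 2\OPT_j/|A|$. By Markov, at least half the points are close. If we add a close point $p$, \factref{fact:triangle} gives
\[
\Cost(A,\{p\})\le 2^{z-1}\left(\OPT_j+|A|\dist(p,c)^z\right)\le 3\cdot 2^{z-1}\OPT_j,
\]
so $A$ becomes good once $\beta\ge 3\cdot2^{z-1}$. It remains to show that the close points carry a constant fraction of $\Cost(A,S)$. For a close $x$, we have $\dist(x,S)\ge \dist(c,S)-\dist(x,c)$, where $\dist(x,c)\le(2\OPT_j/|A|)^{1/z}$.

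The bad condition combined with \factref{fact:gen:tri} forces $|A|\dist(c,S)^z\gtrsim \Cost(A,S)-\O{1}\OPT_j$, and hence $\dist(c,S)^z\ge \Omega(\beta)\OPT_j/|A|$. Choosing $\beta$ large relative to $2^z$ then gives $\dist(x,S)^z\ge\Omega(\dist(c,S)^z)$ for each close $x$. Summing over the at least $|A|/2$ close points yields $\Omega(|A|\dist(c,S)^z)=\Omega(\Cost(A,S))$, again via \factref{fact:gen:tri} applied to $\Cost(A,S)\le(1+\eps)|A|\dist(c,S)^z+\O{1}\OPT_j$. Together with the $\gamma^2$ distortion, the conditional probability of sampling a close point is a constant.

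\textbf{Weights.} In the weighted case we replace counts by total weights and apply Markov with respect to the weight measure; nothing else changes.

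The main obstacle is Step 2: balancing the constants $\alpha,\beta$ against the $2^{z}$ and $(1+2z/\eps)^z$ factors so that close points provably carry a constant fraction of the bad cluster's current cost. We will handle this with \factref{fact:gen:tri} at $\eps=1/2$ and then pick $\beta=C\cdot 2^{O(z)}$ and $\alpha=C\beta$. The final approximation constant depends on $z$, which is acceptable for fixed $z$.
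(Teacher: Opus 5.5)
Your proposal is correct and follows essentially the same route as the paper: the good/bad cluster dichotomy, a constant probability of landing in a bad cluster once $\Cost(X,S)$ exceeds a constant times $\Cost(X,C_{\OPT})$, a Markov bound showing the ball $B(\alpha)$ around the optimal center carries a constant fraction of the cluster's weight (your ``close'' points are exactly the paper's $B(2^{1/z})$), the observation that a bad cluster's nearest sampled center is far from its optimal center so this ball carries a constant fraction of $\Cost(A,S)$ and any point in it makes the cluster good, and a Markov/concentration argument over $\O{k}$ rounds. One small slip in your constants: \factref{fact:gen:tri} with $\eps=1/2$ costs a factor $(1+4z)^z$, which is not of the form $2^{\O{z}}$, so you need $\beta\gtrsim(1+4z)^z$ (the paper's threshold $\gamma_z=2+(3+6z)^z$ has the same character) unless you use \factref{fact:triangle} throughout, in which case a threshold $\beta$ of order $4^z$ suffices.
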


As stated, the version of adaptive sampling in \algref{alg:adaptive:(k, z):sampling} requires updating $\Cost (x, S)$ after each update, leading to a high cost of communication. 
We will introduce the lazy sampling algorithm. 
The combination of our updated adaptive sampling algorithm and the lazy sampling algorithm can provide a lower number of updates required, and thus a lower overall communication cost, while still guaranteeing the result of the $(\O{1}, \O{1})$-bicriteria approximation.

Suppose that we have $s$ sites, every site has a local dataset $X_i$ and every data point $x$ has a sampling weight $d_x$. 
Let $D_i = \sum_{x \in X_i} d_x$ and $D = \sum_{i=1}^s D_i$. Our goal is to sample a point $x$ with probability $p_x = \frac{d_x}{D}$. In the case of adaptive sampling, $d_x$ is just $\Cost (x, S)$.
To avoid a high cost of communication, our strategy is that the coordinator only holds $\widetilde{D_i}$, an approximation of real $D_i$ that $D_i \le \widetilde{D_i} < \lambda D_i$, and only updates $\widetilde{D_i}$ when it deviates far from $D_i$.
Then we can use $\frac{d_x}{\widetilde{D}}$ to sample the point, where $\widetilde{D} = \sum_{i=1}^s \widetilde{D_i}$.

\begin{algorithm}[!htb]
\caption{$\LazySampling$}
\alglab{alg:lazy:sampling}
\begin{algorithmic}[1]
\REQUIRE{$\{ d_x \}_{x \in X}$, the sampling weight; $\{\widetilde{D_i}\}_{i\in[s]}$, a constant approximation to $\{D_i\}_{i\in[s]}$ that $D_i \le \widetilde{D_i} \le \lambda D_i$, where $D_i = \sum_{x \in X_i} d_x$ for every site $i$}
\ENSURE{Either $\bot$ or a point $x$ sampled under distribution $\frac{d_x}{D}$, where $D = \sum_{x \in X} d_x$}
\STATE{$\widetilde{D} \gets \sum_{i = 1}^s \widetilde{D_i}, x \gets \bot$}
\STATE{The coordinator generates an index $i$ with probability $= \frac{\widetilde{D_i}}{\widetilde{D}}$}
\STATE{The coordinator sends sampling request to site $i$}
\STATE{$x \gets y$ with probability $\frac{d_y}{\widetilde{D_i}}$}
\STATE{\textbf{return}  $x$}
\end{algorithmic}
\end{algorithm}

Since $\widetilde{D} $ may be larger than $D$, the site $i$ will send $\bot$ with probability $\frac{\widetilde{D_i} - D_i}{\widetilde{D_i}}$. Fortunately, the probability of returning $\bot$ is constant, and the probability that we sample $x$ conditioned on not returning $\bot$ is just $\frac{d_x}{D}$.
Therefore, by Markov's inequality, we can sample at least $N$ points under the distribution $\frac{d_x}{D}$  with probability at least $0.99$ after repeating $\LazySampling$ a total of $\O{N}$ times.

\begin{lemma}
    \lemlab{lem:lazy:sampling}
    Let $x$ be the result returned by \algref{alg:lazy:sampling} and $\calE$ be the event that \algref{alg:lazy:sampling} does not return $\bot$. 
    If $\sum D_j \le \sum\widetilde{D_j} < \lambda\sum D_j$, $\PPr{\calE} \ge \frac{1}{\lambda}$ and $\PPr{ x = y | \calE} = \frac{d_y}{D}$. 
    Furthermore, there exists some constant $\gamma \ge 1$ such that after running \algref{alg:lazy:sampling} a total of $\gamma N$ times, it will return at least $N$ points sampled by the distribution $\frac{d_x}{D}$ with probability at least $0.99$.
\end{lemma}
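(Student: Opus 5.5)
The plan is to compute the output distribution of \algref{alg:lazy:sampling} directly, then get the repetition bound from a concentration argument.

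\textbf{Step 1: one invocation.} Fix a point $y$ held by site $i$. The coordinator picks index $i$ with probability $\widetilde{D_i}/\widetilde{D}$. Given that choice, site $i$ returns $y$ with probability $d_y/\widetilde{D_i}$. Since $\sum_{y\in X_i} d_y = D_i\le \widetilde{D_i}$, these probabilities are well defined, and the leftover mass $(\widetilde{D_i}-D_i)/\widetilde{D_i}$ is assigned to $\bot$. Multiplying gives
\[
\PPr{x=y}=\frac{\widetilde{D_i}}{\widetilde{D}}\cdot\frac{d_y}{\widetilde{D_i}}=\frac{d_y}{\widetilde{D}},
\]
so the $\widetilde{D_i}$ cancels.

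\textbf{Step 2: success probability and conditional law.} Summing over all $y\in X$ gives $\PPr{\calE}=D/\widetilde{D}$. The hypothesis $\widetilde{D}<\lambda D$ then gives $\PPr{\calE}>1/\lambda$. Dividing, $\PPr{x=y\mid\calE}=(d_y/\widetilde{D})/(D/\widetilde{D})=d_y/D$, which is exactly the target distribution.

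\textbf{Step 3: repetitions.} The runs use fresh randomness, so they are independent trials. Each run succeeds with probability $p\ge 1/\lambda$ and, on success, outputs an independent sample from $d_x/D$. Let $Z$ be the number of successes in $\gamma N$ runs; then $\Ex{Z}\ge \gamma N/\lambda$. Take $\gamma=2\lambda\cdot c$ for a suitable constant $c$, and write $Z$ as a sum of $\gamma N$ indicators in $[0,1]$. \thmref{thm:hoeffding's:inequality} with $t=\gamma N/(2\lambda)$ then gives
\[
\PPr{Z<N}\le 2\exp\left(-\frac{\gamma N}{4\lambda^2}\right),
\]
which is at most $0.01$ once $\gamma\ge 4\lambda^2\ln 200$, for every $N\ge 1$. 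A Chebyshev bound on the variance would also work.

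\textbf{Main subtlety.} Conditioned on which runs succeed, the non-$\bot$ outputs are i.i.d.\ draws from $d_x/D$. This holds because each run's output given success has the same law and the runs are independent. The only other care point is that $\lambda$ is a constant, so $\gamma$ is a constant as well.
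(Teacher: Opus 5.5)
Your proof is correct. Steps 1 and 2 are the same direct computation as the paper's. The paper conditions on the chosen site and uses the law of total probability to get $\PPr{\calE}=D/\widetilde{D}$, then divides $\PPr{x=y}=d_y/\widetilde{D}$ by it; your pointwise summation is equivalent. Like the paper, you rely on the per-site input condition $D_i\le\widetilde{D_i}$ from the algorithm's requirements, not only the aggregate hypothesis, to make the site-level sampling well defined.

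The genuine difference is the repetition claim. The paper computes $\Ex{\sum_r Z_r}\ge\gamma N/\lambda$ and then appeals to Markov's inequality. Read literally, that step does not close: Markov applied to $\gamma N-\sum_r Z_r$ only gives $\PPr{\sum_r Z_r<N}\le\frac{\gamma}{\gamma-1}\bigl(1-\frac{1}{\lambda}\bigr)$, which never drops below $1-1/\lambda$. To make the paper's argument work, apply Markov to the waiting time $T$ until the $N$-th success instead. Then $\Ex{T}\le\lambda N$ gives $\PPr{T>\gamma N}\le\lambda/\gamma$, so $\gamma=100\lambda$ suffices.

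The two routes trade off as follows. The waiting-time Markov argument needs only $\gamma=\Theta(\lambda)$, but its failure probability is just a constant. Your Hoeffding argument needs $\gamma=\Theta(\lambda^2)$, but its failure bound $2\exp(-\gamma N/(4\lambda^2))$ decays exponentially in $N$, and it is self-contained as written. Since $\lambda$ is a constant, either suffices for the lemma.
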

\begin{proof}
    Let $\calF_i$ be the event that the result is sampled by the site $i$.
    Since $\PPr{ \calF_i } = \frac{\widetilde{D_i}}{\widetilde{D}}$ and $\PPr{ \calE | \calF_i } = \sum_{x \in X_i} \frac{d_x}{\widetilde{D_i}} = \frac{D_i}{\widetilde{D_i}}$, then by the law of total probability,
    \begin{equation*}
        \PPr{ \calE } = \sum_{i=1}^s  \PPr{ \calE | \calF_i } \cdot \PPr{ \calF_i } = \sum_{i=1}^s \frac{D_i}{\widetilde{D_i}} \cdot \frac{\widetilde{D_i}}{\widetilde{D}} = \frac{D}{\widetilde{D}} \ge \frac{D}{\lambda D} = \frac{1}{\lambda}.
    \end{equation*}
\noindent
    For any $y \in X_i$, we have
    \[ \PPr{ x = y | \calE} = \frac{\PPr{ x= y}}{\PPr{\calE}} = \frac{1}{\PPr{\calE}} \cdot \frac{\widetilde{D_i}}{\widetilde{D}} \cdot \frac{d_y}{\widetilde{D_i}} = \frac{\widetilde{D}}{D} \cdot \frac{\widetilde{D_i}}{\widetilde{D}} \cdot \frac{d_y}{\widetilde{D_i}} = \frac{d_y}{D}. \]
\noindent
    Furthermore, let
    \begin{equation*}
        Z_i = \begin{cases}
            0, \text{if \LazySampling returns } \bot \\
            1, \text{otherwise}.
        \end{cases}
    \end{equation*}
After running $\LazySampling$ for $\gamma N$ times, the number of times $\LazySampling$ does not return $\bot$ is just $\sum_{i=1}^{\gamma N} Z_i$. Since $\Ex{Z_i} = \PPr{ \calE } \ge \frac{1}{\gamma}$, we have $\Ex{\sum_{i=1}^{\gamma N} Z_i} \ge \frac{\gamma}{\lambda} N$. 
By Markov's inequality, there exists some $\gamma \ge 1$ such that $\PPr{\sum_{i=1}^{\gamma N} Z_i < N} \le 0.01$. 
Hence, $\LazySampling$ will sample at least $N$ points with probability at least $0.99$.
\end{proof}

Finally, in \algref{alg:lambda:approx} we recall the standard algorithm that provides an efficient approximate encoding of a number by storing the exponent of the number after rounding to a power of a fixed base $\lambda$. 
\begin{algorithm}[!htb]
\caption{$\PowerApprox$}
\alglab{alg:lambda:approx}
\begin{algorithmic}[1]
\REQUIRE{Number to be encoded $m$, base $ \lambda > 1$}
\ENSURE{An integer $i$ that $\lambda^i$ is a $\lambda$ approximation of $m$}
\STATE{Let $i$ be the integer such that $m \le \lambda^{i} < \lambda m$}
\STATE{\textbf{return}  $i$}
\end{algorithmic}
\end{algorithm}

\begin{theorem}
\thmlab{thm:power:approx}
The output from \algref{alg:lambda:approx} can be used to compute a number $\widehat{m}$ such that $m\le\widehat{m}<\lambda m$. 
\end{theorem}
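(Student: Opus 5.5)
The plan is to take $\widehat{m}:=\lambda^{i}$, where $i$ is the integer returned by \algref{alg:lambda:approx}. The inequality $m\le\widehat{m}<\lambda m$ is then exactly the defining property of $i$ in the first line of the algorithm. So the real content of \thmref{thm:power:approx} is that such an integer $i$ exists for every $m>0$ and base $\lambda>1$. I assume $m>0$, as it is in every use in the paper: $m$ is a positive cost or weight sum such as $D_i$.

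To produce $i$, I would set $i:=\lceil \log_\lambda m\rceil$. This is well defined because $\lambda>1$ makes $\log_\lambda$ a strictly increasing bijection from $(0,\infty)$ onto $\mathbb{R}$. By definition of the ceiling,
\[
\log_\lambda m\le i<\log_\lambda m+1.
\]
Applying the strictly increasing map $t\mapsto\lambda^t$ to these inequalities gives $m\le\lambda^{i}<\lambda\cdot m$, which is the claim. Uniqueness of $i$ follows the same way: the half-open interval $[\log_\lambda m,\log_\lambda m+1)$ has length one and so contains exactly one integer. Hence the algorithm's output is unambiguous, and any party that knows $\lambda$ can recover $\widehat{m}$ from $i$ alone.

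None of the steps is a genuine obstacle. The only point needing care is the strict upper inequality, which relies on the ceiling satisfying $\lceil t\rceil<t+1$, and on $\lambda>1$, without which $t\mapsto \lambda^t$ would not be strictly increasing. I would also add a remark, which is what the later communication bounds use: if $m\in[1,\poly(n\Delta)]$ and $\lambda$ is a constant, then $i\in[0,\O{\log(n\Delta)}]$. The integer $i$ can therefore be sent with $\O{\log\log(n\Delta)}$ bits.
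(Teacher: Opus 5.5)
Your proposal is correct and matches the paper. The paper gives no separate proof: it takes $\widehat{m}=\lambda^{i}$, as it later does with $\widehat{D_j}=\lambda^{r_j}$ in the analysis of the $L_1$ sampling subroutine, and treats the bound as immediate from the defining property of $i$. Your choice $i=\lceil\log_\lambda m\rceil$ simply makes explicit the existence and uniqueness of that integer for $m>0$, which the paper leaves implicit.
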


\paragraph{Efficient encoding for coreset construction for \texorpdfstring{$(k,z)$}{(k,z)}-clustering.}
We recall the following efficient encoding for a given coreset for $(k,z)$-clustering given by \cite{Cohen-AddadWWZ25}. 
Given a dataset $X$, which may represent either the original inputs or a weighted coreset derived from some larger dataset, we begin by computing a constant-factor approximation $C'$ for the $(k, z)$-clustering problem on $X$. 
For every $x \in X$, let $\pi_{C'}(x)$ denote the nearest center in $C'$ to $x$. 
Each point $x$ can then be decomposed as $x = \pi_{C'}(x) + (x - \pi_{C'}(x))$, separating it into its closest center and a residual vector. 

To obtain a $(1+\eps)$-approximation for $(k, z)$-clustering, it suffices to round each coordinate of the offset vector $x - \pi_{C'}(x)$ to the nearest power of $(1+\eps')$, where $\eps' = \poly\left(\eps, \frac{1}{d}, \frac{1}{\log(n\Delta)}\right)$. 
Let $y'$ denote this rounded vector. 

We then encode each point as $x' = \pi_{C'}(x) + y'$, storing both the index of the center $\pi_{C'}(x)$ and the exponent values of the rounded offset. T
his representation uses $\O{\log k + d \log\left(\frac{1}{\eps}, d, \log(n\Delta)\right)}$ bits per point. 
The full algorithm is detailed in \algref{alg:coreset:cluster}.

\begin{algorithm}[!htb]
\caption{Compact Encoding for Coreset Generation in $(k,z)$-Clustering}
\alglab{alg:coreset:cluster}
\begin{algorithmic}[1]
\REQUIRE{Dataset $X \subset [\Delta]^d$ with weights $w(\cdot)$, accuracy parameter $\eps \in (0,1)$, number of clusters $k$, parameter $z \ge 1$, failure probability $\delta \in (0,1)$}
\ENSURE{$(1+\eps)$-approximate coreset for $(k,z)$-clustering}
\STATE{$\eps' \gets \frac{\poly(\eps^z)}{\poly(k, \log(nd\Delta))}$}
\STATE{Compute a constant-factor $(k,z)$-clustering solution $C'$ on $X$}
\FOR{each $x \in X$}
\STATE{Set $c'(x)$ as the nearest center to $x$ in $C'$}
\STATE{Compute the residual vector $y' = x - c'(x)$}
\STATE{Round each coordinate of $y'$ to the nearest power of $(1+\eps')$ to obtain $y$}
\STATE{Define $x' = (c'(x), y)$, where $y$ stores the exponent for each coordinate}
\STATE{Add $x'$ to the new set: $X' \gets X' \cup \{x'\}$}
\ENDFOR
\STATE{\textbf{return}  $(C', X')$}
\end{algorithmic}
\end{algorithm}

We have the following guarantees for the efficient encoding from \cite{Cohen-AddadWWZ25}:
\begin{lemma}
\lemlab{lem:x:to:xprime:cluster}
\cite{Cohen-AddadWWZ25}
Let $\eps \in \left(0, \frac{1}{2}\right)$, and let $X'$ be the weighted dataset $S$, constructed using the offsets from the center set $C'$ as defined in \algref{alg:coreset:cluster}. 
Then, for any set of centers $C \subset [\Delta]^d$ with $|C| \le k$, the following holds:
\[
(1 - \eps) \cdot \Cost(C, X) \le \Cost(C, X') \le (1 + \eps) \cdot \Cost(C, X).
\]
\end{lemma}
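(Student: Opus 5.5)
We prove the statement pointwise, with the per-point distortion absorbed using the constant-factor solution $C'$. Fix any $C$ with $|C|\le k$. For each $x\in X$ write $x=c'(x)+y'$, where $y'=x-c'(x)$, and let $x'=c'(x)+y$, where $y$ is the coordinatewise rounding of $y'$ to powers of $(1+\eps')$, keeping signs and leaving zero coordinates at zero. Since $X\subset[\Delta]^d$ and $C'$ may be taken on a grid, each nonzero coordinate of $y'$ has magnitude at least $1$, or at least $1/\poly(n\Delta)$. Hence the rounding is well defined with $\O{\log(n\Delta)/\eps'}$ possible exponents, and each coordinate satisfies $|y_j-y'_j|\le \eps'|y'_j|$. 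Summing over coordinates gives
\[
\|x-x'\|_2=\|y-y'\|_2\le \eps'\|y'\|_2=\eps'\,\dist(x,C').
\]
The key point is that the displacement of each point is controlled by its own distance to the approximate solution $C'$, not by $\Delta$.

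Next we compare $\dist(x',C)^z$ and $\dist(x,C)^z$. By the triangle inequality, $\dist(x',C)\le \dist(x,C)+\|x-x'\|_2$. Applying \factref{fact:gen:tri} with a parameter $\eps/2$ gives
\[
\dist(x',C)^z\le (1+\eps/2)\dist(x,C)^z+\left(1+\tfrac{4z}{\eps}\right)^z (\eps')^z\dist(x,C')^z .
\]
The symmetric inequality, obtained by swapping the roles of $x$ and $x'$, bounds $\dist(x,C)^z$ in terms of $\dist(x',C)^z$ with the same additive error, which gives the lower bound. Multiplying by the weights $w(x)$ and summing over $X$ yields
\[
|\Cost(C,X')-\Cost(C,X)|\le \tfrac{\eps}{2}\Cost(C,X)+\left(1+\tfrac{4z}{\eps}\right)^z(\eps')^z\,\Cost(C',X).
\]

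The remaining step, which is the main one, is to charge the additive error term to $\Cost(C,X)$. Since $C'$ is an $\alpha$-approximation with $\alpha=\O{1}$, we have $\Cost(C',X)\le \alpha\,\OPT\le \alpha\,\Cost(C,X)$ for every $C$ with $|C|\le k$. It therefore suffices to choose $\eps'$ so that $\left(1+\frac{4z}{\eps}\right)^z(\eps')^z\alpha\le \eps/2$, i.e., $\eps'\le \eps^{1+1/z}/(c z\,\alpha^{1/z})$ for a suitable constant $c$. The choice $\eps'=\poly(\eps^z)/\poly(k,\log(nd\Delta))$ in \algref{alg:coreset:cluster} is at least this small for an appropriate polynomial, and the extra $d$ and $\log$ factors only make it smaller, which is harmless.

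Two technicalities need care. First, the lower-bound direction requires $\dist(x',C')$ to be comparable to $\dist(x,C')$. This follows from $\|x-x'\|\le\eps'\dist(x,C')$, which gives $\dist(x',C')\le(1+\eps')\dist(x,C')$, so the swapped error term is bounded the same way. Second, if $C'$ is only a constant-factor approximation with high probability, the statement should be conditioned on that event. With these two points handled, the inequality holds simultaneously for all $C$, since no union bound over $C$ is needed.
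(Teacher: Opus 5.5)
Your proof is correct. The paper gives no proof of its own here (it cites \cite{Cohen-AddadWWZ25}), but when it reuses the lemma in the coordinator-model analysis it relies on exactly your argument: the per-point displacement bound $\|x-x'\|_2\le\eps'\dist(x,C')$, then the generalized triangle inequality, then charging $\Cost(C',X)\le\O{1}\cdot\Cost(C,X)$. Your first ``technicality'' is unnecessary, since in the swapped inequality the error term is $\|x-x'\|_2^z$ itself, which is already at most $(\eps')^z\dist(x,C')^z$ without comparing $\dist(x',C')$ to $\dist(x,C')$.
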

\begin{lemma}
\lemlab{lem:efficient:coreset:cluster}
\cite{Cohen-AddadWWZ25}
Let $X$ be a coreset where the point weights lie within the range $[1, \poly(nd\Delta)]$. Then the transformed set $X'$ forms a $(1+\eps)$-strong coreset for $X$, and its total space requirement is 
$\O{dk\log(n\Delta)} + |X| \cdot \polylog\left(k, \frac{1}{\eps}, \log(nd\Delta), \log\frac{1}{\delta}\right)$ bits.
\end{lemma}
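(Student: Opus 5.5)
The plan is to reduce \lemref{lem:efficient:coreset:cluster} to \lemref{lem:x:to:xprime:cluster} and then count bits. The argument has two parts: correctness of the transformed set, followed by an accounting of the encoding length.

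\textbf{Correctness.} $X$ is a weighted coreset with weights in $[1,\poly(nd\Delta)]$. We apply \algref{alg:coreset:cluster} to $X$ with these weights, keeping each point's weight. By \lemref{lem:x:to:xprime:cluster}, applied in its weighted form, every center set $C$ with $|C|\le k$ satisfies $(1-\eps)\Cost(C,X)\le\Cost(C,X')\le(1+\eps)\Cost(C,X)$. Since $X'$ keeps the same weights, this is exactly \defref{def:coreset} with $X$ playing the role of the input, so $X'$ is a $(1+\eps)$-strong coreset for $X$.

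The one subtlety is that \lemref{lem:x:to:xprime:cluster} is stated for $C\subset[\Delta]^d$, whereas a strong coreset must handle arbitrary $C\subset\mathbb{R}^d$. I would handle this by one of two routes:
\begin{itemize}
\item note that the rounding argument never uses integrality of $C$, since the per-point error $|\dist(x,C)^z-\dist(x',C)^z|$ is controlled via \factref{fact:gen:tri} by $\|x-x'\|_2$;
\item or snap $C$ to a fine grid, losing only a negligible additive term absorbed into $\eps$.
\end{itemize}

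\textbf{Error control (the main obstacle).} Each coordinate offset is rounded by a relative factor $(1+\eps')$, so $\|x-x'\|_2\le\eps'\|x-c'(x)\|_2$. Applying \factref{fact:gen:tri} then gives
\[
\dist(x',C)^z\le(1+\eps)\dist(x,C)^z+(1+2z/\eps)^z(\eps')^z\dist(x,C')^z,
\]
together with the symmetric lower bound. Summing with weights, the extra term is at most $\poly(\eps)\cdot\Cost(X,C')$. Because $C'$ is a constant-factor solution, this is $\le\O{1}\cdot\OPT\le\O{1}\cdot\Cost(X,C)$. Choosing $\eps'=\poly(\eps^z)/\poly(k,\log(nd\Delta))$ makes this term at most $\eps\Cost(X,C)$, and rescaling $\eps$ finishes correctness. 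Zero coordinates are encoded by a flag.

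\textbf{Space accounting.} The center set $C'$ has $\O{k}$ points in $[\Delta]^d$, costing $\O{dk\log(n\Delta)}$ bits. Each point $x'$ stores three things:
\begin{itemize}
\item the index of $c'(x)$, using $\O{\log k}$ bits;
\item a weight rounded to a power of $(1+\eps)$ via \algref{alg:lambda:approx}, using $\O{\log\log(nd\Delta)+\log(1/\eps)}$ bits, with the rounding error again absorbed into $\eps$;
\item the per-coordinate exponents with signs.
\end{itemize}
Offsets have magnitude at most $\poly(\Delta d)$, so each exponent lies in a range of size $\O{\log(d\Delta)/\eps'}$ and takes $\O{\log\log(nd\Delta)+\log(1/\eps')}$ bits. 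This is $\polylog(k,1/\eps,\log(nd\Delta))$ per coordinate. Summing over points gives the stated total, with the $d$ factor treated as in \cite{Cohen-AddadWWZ25}, i.e., after the Johnson--Lindenstrauss reduction of \thmref{thm:jl} when needed. The $\log(1/\delta)$ term enters only through the failure probability of computing $C'$.
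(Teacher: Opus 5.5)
The paper gives no proof of this lemma. It imports it from \cite{Cohen-AddadWWZ25} and only sketches the encoding of \algref{alg:coreset:cluster}. Your correctness half is the intended argument and is sound. The bound $\|x-x'\|_2\le\eps'\dist(x,C')$, together with \factref{fact:gen:tri} and $\Cost(X,C')\le\O{1}\cdot\Cost(X,C)$ for every $|C|\le k$, covers arbitrary $C\subset\mathbb{R}^d$. So the $[\Delta]^d$ restriction in \lemref{lem:x:to:xprime:cluster} is harmless, and in fact $\eps'=\poly(\eps)$ already suffices. Rounding the weights via \algref{alg:lambda:approx} is exactly where the assumption that weights lie in $[1,\poly(nd\Delta)]$ is needed.

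The step that fails is your treatment of the factor $d$. Your own count gives $\O{\log k}+\O{\log\log(nd\Delta)+\log\frac{1}{\eps}}+d\cdot\O{\log\log(nd\Delta)+\log\frac{1}{\eps'}}$ bits per point. In total that is $\O{dk\log(n\Delta)}+|X|\cdot d\cdot\polylog(k,\frac{1}{\eps},\log(nd\Delta),\log\frac{1}{\delta})$, and \thmref{thm:jl} cannot remove the $d$, for three reasons:
\begin{itemize}
\item A strong coreset for $X$ must be a weighted set in $\mathbb{R}^d$ valid for every $C\subset\mathbb{R}^d$; projected points $\pi x\in\mathbb{R}^m$ are not such a set.
\item A JL map preserves distances only within a fixed finite set, not to arbitrary future centers.
\item $m=\O{\eps^{-2}\log n}$ is not polylogarithmic in the listed parameters anyway.
\end{itemize}
Because $X'$ stores one exponent per coordinate, its encoding has length $\Omega(d)$ per point in general. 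So the $d$-free second term as printed cannot hold for this $X'$; the statement has evidently lost a factor of $d$. This is consistent with the paper's own description ($\O{\log k+d\log(\cdot)}$ bits per point) and with how the lemma is used, since it is the source of the $\frac{dk}{\min(\eps^4,\eps^{2+z})}$ terms. You should prove the bound with the factor $d$ and flag the discrepancy, rather than force the $d$-free form.

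A smaller point in the same accounting: bounding the exponent range also requires nonzero offset coordinates to be bounded away from $0$. Say explicitly that $X$ and $C'$ lie in $[\Delta]^d$ (e.g.\ take $C'\subseteq X$, losing only a factor $2^z$). Then every nonzero offset coordinate is an integer of magnitude in $[1,\Delta]$.
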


\section{Technical Overview}
\applab{app:overview}
In this section, we provide a technical overview for our distributed protocols, for both the blackboard model and the coordinator/message-passing model. 

\paragraph{Previous approaches and why they do not work.}
A natural question is whether simple approaches could achieve similar communication bounds. 
One might hope, for instance, that combining the space upper bounds of~\cite{ZhuTHH25} with existing distributed clustering algorithms would suffice. 
However, it is not immediately clear how to translate space bounds into better distributed clustering bounds. 
For instance, as an application of their space bounds, \cite{ZhuTHH25} achieved a distributed protocol with total communication $\tO{\frac{sdk}{\eps^4}}$, for appropriate ranges of $k$ and $\eps$, which is actually worse than the third row of \figref{fig:distributed:summary} and thus worse than our bounds of $\tO{sk+\frac{dk}{\eps^4}+dk\log n}$ for the same regime. 

Another approach might be that each server computes a local coreset, e.g., \cite{HuangV20,Cohen-AddadSS21,Cohen-AddadLSS22,nips/Cohen-AddadLSSS22,BansalCPSS24}, possibly with the incorporation of dimensionality reduction, e.g., \cite{SohlerW18,MakarychevMR19,IzzoSZ21}, and then either broadcasting these coresets in the blackboard model or sending these coresets to the coordinator in the message-passing model. 
However, this uses communication $\O{\frac{sdk}{\eps^4}\log n}$ bits, compared to our bounds of $\tO{sk+\frac{dk}{\eps^4}+dk\log n}$. 
Similarly, any approach based on locality sensitivity hashing, possibly with the aid of dimensionality reduction or other compressions, e.g., Kashin representations~\cite{LyubarskiiV10,AsiFNNTZ24}, would require at least each server computing a local coreset and thus succumbing to the same pitfalls. 
These limitations highlight that overcoming the inherent overheads in na\"{i}ve strategies is far from straightforward, and motivate the need for new techniques. 
Our framework is designed precisely to address these challenges by simulating centralized algorithms through communication-optimal primitives, thereby breaking the bottlenecks that defeat such strawman approaches.

\paragraph{$(1+\eps$)-coreset for the blackboard model.}
Our starting point is the distributed protocol of \cite{ChenSWZ16}, which first adapts the central RAM algorithm of \cite{MettuP04} to publish a weighted set $S$ of $\O{k\log n}$ centers that is a constant-factor bicriteria approximation for $(k,z)$-clustering. 
It can be shown that a constant-factor approximation to $(k,z)$-clustering on $S$ achieves a constant-factor approximation $C$ to the optimal clustering on the input set $X$. 
Since $S$ is already available on the blackboard, this step is immediate. 
The total communication cost of their algorithm is $\O{(s+kd)\log^2 n}$.

We observe that the sites can easily calculate the number of points in $X$ served by each center $c\in C$. 
We recall that using this information as well as $\Cost(X,C)$ can be used to find constant-factor approximations of the sensitivity $s(x)$ of each point $x$, c.f. \cite{Cohen-AddadWZ23}. 
Thus, the sites can subsequently sample $\tO{\min\left(\frac{k^2}{\eps^2},\frac{k}{\min(\eps^4,\eps^{2+z})}\right)}$ points of $X$ using sensitivity sampling to achieve a $(1+\eps)$-coreset. 
We can then use an efficient encoding scheme to express each sampled point in $\O{\log k+d\log\left(\frac{1}{\eps},d,\log(n\Delta)\right)}$ bits. 
This achieves the desired dependency for the points acquired through sensitivity sampling, but the communication from the constant-factor approximation is sub-optimal. 
Thus, it remains to find a more communication-efficient protocol for the constant-factor approximation. 

\paragraph{Constant-factor approximation in the blackboard model.}
Although there are many options for the constant-factor approximation in the blackboard model, they all seem to have their various shortcomings. 
For example, the aforementioned Mettu-Plaxton protocol~\cite{MettuP04} requires $\O{\log n}$ rounds of sampling, which results in $\O{k\log n}$ points. 
A natural approach would be to form coresets locally at each site so that the total number of weighted points is $\O{sk}$ rather than $n$. 
Unfortunately, when generalized to weighted points, Mettu-Plaxton requires $\O{\log W}$ rounds, where $W$ is total weight of the points, which is $\poly(n)$ for our purposes. 

Another well-known prototype is the adaptive sampling framework~\cite{aggarwal2009adaptive,BalcanEL13}, which can be used to implement the well-known \texttt{kmeans++} algorithm~\cite{ArthurV07} in the distributed setting~\cite{BahmaniMVKV12}. 
This approach iteratively samples a fixed number of points with probability proportional to the $z$-th power of the distances from the previously sampled points; adaptive sampling samples $\O{k}$ points for $\O{1}$-approximation while \texttt{kmeans++} samples $k$ points for $\O{\log k}$-approximation. 
However, since all sites $i\in[s]$ must report the sum $D_i$ of the $z$-th power of the distances of their points to the sampled points after each iteration, these approaches na\"{i}vely require $\O{sk\log n}$ bits of communication, which is prohibitive for our goal. 

Instead, we perform lazy updating of the sum $D_i$ of the $z$-th power of the distances of their points to the sampled points. 
The blackboard only holds estimates $\widehat{D_i}$ for $D_i$, based on the last time the site $i$ reported its value. 
The sites then attempt adaptive sampling, where the site $i$ is sampled with a probability proportional to $\widehat{D_i}$. 
When sampled, site $i$ uses $\widehat{D_i}$ to attempt to sample a point locally, but this can fail because $\widehat{D_i} \geq D_i$, which means that a randomly chosen integer in $\widehat{D_i}$ may not correspond to an integer in $D_i$.
Fortunately, the failure probability would be constant if $\widehat{D_i}$ is a constant approximation of $D_i$, so the total number of rounds of sampling is still $\O{k}$ by the Markov inequality.
If the site $i$ fails to sample a point, we expect $D_i$ to be a constant fraction smaller than $\widehat{D_i}$, prompting site $i$ to update its value $D_i$ on the blackboard. This can happen at most $\O{s\log n}$ times. If each site $i$ rounds $D_i$ to a power of $2$, then $D_i$ can be approximated within a factor of $2$ by transmitting only the exponent of the rounding, using $\O{\log\log n}$ bits. Thus, the total communication for the constant-factor approximation is $\tO{s\log n + kd\log n}$.

The primary downside is that the total number of communication rounds could reach $\O{s\log n}$, since each of the $s$ sites may update its $D_i$ value up to $\O{\log n}$ times.

\paragraph{Communication round reduction in the blackboard model.}
A natural approach to round reduction would be to have all $s$ sites report their updated $D_i$ values when we fail to sample a point. 
However, because we sample $\O{k}$ points, this results in communication complexity $\O{sk}$. 

Instead, we check whether the global weight $\sum_{i\in[s]} D_i$ has decreased by $\frac{1}{64}$ compared to the amount $\sum_{i\in[s]}\widehat{D_i}$ on the blackboard. 
To do this, we first perform the $L_1$ sampling in each round. 
That is, the blackboard picks the site $i$ with probability $p_i=\frac{\widehat{D_i}}{\sum_{i\in[s]}\widehat{D_i}}$and computes $\frac{D_i}{p_i}$. 
Observe that in expectation this quantity is $\sum_{i\in[s]}D_i$ and its variance is at most $(\sum_{i\in[s]} D_i)^2$, up to a constant factor. 
Thus, repeating $\O{1}$ times and taking the average, we get a $\sum_{i\in[s]}D_i$ additive approximation to $\sum_{i\in[s]} D_i$, which is enough to identify whether the sum has decreased by a factor of $\frac{1}{64}$ from $\sum_{i\in[s]}\widehat{D_i}$. 
We can then take $\O{\log \log n}$ instances to union bound over $\O{\log n \log k}$ iterations. 

Now the other observation is that as long as our sampling probabilities have not decreased by $\frac{1}{64}$, then we can simultaneously take many samples at the same time. 
Thus, we start at $i=0$ and collect $2^i$ samples from the $s$ sites using the adaptive sampling distribution. 
We then check if the total weight $\sum_{i\in[s]}D_i$ is less than $\frac{1}{64} \sum_{i\in[s]}\widehat{D_i}$, and if not, we increment $i$ and sample $2^i$ samples, and repeat until either $2^i > k$ or the total weight $\sum_{i\in[s]}D_i$ is less than $\frac{1}{64} \sum_{i\in[s]}\widehat{D_i}$. 

Note that this takes $\O{\log n \log k}$ rounds of sampling, which is enough for the failure probability of the $L_1$ sampling procedure. 
Moreover, since we can use $\O{\log\log n}$ bits of communication to approximate $L_1$ sampling by rounding the values of $D_i$ to a power of $2$ and returning the exponent. 
Since each time the total weight $\sum_{i\in[s]}D_i$ is less than $\frac{1}{64} \sum_{i\in[s]}\widehat{D_i}$, we require all $s$ sites to update their weights, then the communication is $\tO{s\log n + kd\log n}$ across the $\O{\log n \log k}$ rounds of communication.

\paragraph{Applying sensitivity sampling in the blackboard model.}
To apply the sensitivity sampling framework of \cite{BansalCPSS24}, a constant approximation of the number of each cluster $|C_j|$ and the cost of each cluster $\Cost(C_j, S)$ is necessary, where $S$ is a $(\O{1}, \O{1})$-bicriteria approximation of the optimal solution. Each site needs $\O{\log \log n}$ bits to upload such a constant approximation for a cluster. Since there are $s$ sites and $\O{k}$ clusters, it would lead to $\O{sk \log \log n}$ bits of communication, which is prohibitive for our goal. 
Instead, we adapt Morris counters, which are used for approximate counting in the streaming model, to the distributed setting. 
We use a counter $r$ to store the logarithm of the number/cost of the cluster, and increase the counter by $1$ with probability $\frac{1}{2^r}$ for every time we count the number/cost of the cluster. 
Each site applies the Morris counters sequentially and only uploads the increment on the blackboard. 
Similar to Morris counters, such a subroutine can return a constant approximation of the number/cost of the cluster. 
If all counters remain the same after counted by a site, that site needs to use $\O{1}$ bits to tell the next site that nothing needs to be updated, which would use at most $\O{1}$. Otherwise, $\O{\log \log n}$ bits are needed to update a changed counter. 
Since each counter can increase at most $\O{\log n}$ times and there are $\O{k}$ numbers of number/cost of the cluster to be counted, it would use $\O{k \log n}$ bits to upload all the updates. 
Therefore, at most $\tO{s + k \log n}$ bits are needed to upload a constant approximation of the number of each cluster $|C_j|$ and the cost of each cluster $\Cost(C_j, S)$.

Each site can apply the sensitivity sampling locally after knowing a constant approximation of the number of each cluster $|C_j|$ and the cost of each cluster $\Cost(C_j, S)$. 
Combined with the efficient encoding for the $(1+\eps)$-coreset, we can upload the coreset using $\O{\frac{(d \log \log n + \log k) k}{\min ( \eps^4, \eps^{2+z} )}}$ bits, which achieves our desired bounds for the blackboard model.  

\paragraph{Coordinator/message-passing model.}
For the message-passing/coordinator model, our starting point is once again adaptive sampling, which can be performed over $\O{k}$ rounds to achieve a constant-factor approximation $C$ using $\O{skd\log n}$ total bits of communication~\cite{BalcanEL13}. 
We can again use sensitivity sampling on $C$ along with our efficient encoding scheme to achieve the desired communication for the subsequent $(1+\eps)$-coreset. 
Thus, the main question again is how to improve the constant-factor protocol.

To that end, we first introduce a subroutine $\EfficientCommunication$ that can send the location of the point using low communication cost. 
The intuition behind $\EfficientCommunication$ is similar to the efficient encoding, which uses the points a site owns to encode the point we want to send. 
The only difference is that we use different points to efficiently encode the coordinates for each dimension, rather than just using a single point to encode the point we want to send.

Recall that there exists $\GreaterThan$ protocol, which given two $\log n$ bit integers, figures out which one is greater with constant probability, using $\tO{\log\log n}$ bits of communication. 
Moreover, this procedure can be repeated $\O{\log(sk)}$ times to achieve a failure probability $\frac{1}{\poly(sk)}$ to union bound over $\poly(sk)$ possible steps. With the help of $\GreaterThan$, we can compare the coordinates of the points in the $i$-th dimension $x_j^{(i)}$ a site owns with the coordinate of the point to receive $y^{(i)}$. 
By a binary search, we can find the closest coordinate $x_j^{(i)}$ to $y^{(i)}$, which requires $\O{\log n}$ search times, since the site owns at most $n$ points. 
By another binary search, we can further find $\Delta y^{(i)}$ that is a $(1+\eps)$-approximation of $|x_j^{(i)} - y^{(i)}|$. Since $|x_j^{(i)} - y^{(i)}| = \poly(n)$, we need $\O{\log \log n}$ searches to find $\Delta y^{(i)}$. 
Since the coordinator needs to send the approximate coordinates of the points sampled by adaptive sampling to each site, the total communication cost for adaptive sampling is $\tO{dsk \log n}$ bits, which is still prohibitive for our goal.

To address this, we first generate a $(1+\O{\eps})$-coreset for each site, so that the size of the dataset of each site is $\O{k}$, which leads $\O{\log n}$ searches to find the closest coordinate $x_j^{(i)}$ to $y^{(i)}$. 
Furthermore, we observe that we only need the approximate cost for each point for adaptive sampling. 
Hence, we can apply Johnson-Lindenstrauss to project the dataset down to $\O{\log (sk)}$ dimensions. 
Therefore, the total communication cost for adaptive sampling is $sk \polylog (\log n, s, k)$ bits.

After achieving an $(\O{1}, \O{1})$-bicriteria approximation for the optimal solution, we can communicate a constant approximation of the number and cost of each cluster using a total $\O{sk \log \log n}$ bits of communication. 
However, each site may assign the points to an incorrect cluster since it only has an approximate location of $S$. 
Fortunately, the sensitivity sampling procedure of \cite{BansalCPSS24} still returns a $(1+\eps)$-coreset if we assign each point to a center so that the distance between them is close to the distance between the point and the center closest to it. 
Therefore, combined with $\EfficientCommunication$, we can upload the $(1+\eps)$-coreset using $\frac{dk}{\min ( \eps^4, \eps^{2+z} )} \polylog (\log n, k, \frac{1}{\eps})$ bits, which achieves our desired bounds for the blackboard model.

\section{Distributed Clustering Protocols in the Blackboard Model}
\seclab{sec:dis:blackboard:short}
Recall that in the blackboard model of communication, each of the $s$ sites has access to private randomness and can directly broadcast to a public platform in sequential, round-based steps, with self-delimiting messages immediately visible to all. 
Throughout this section, we assume without loss of generality that there is a central coordinator managing the process. 
Our goal is to design efficient protocols for $(k,z)$-clustering in this setting.

\subsection{Constant-Factor Bicriteria Algorithm}
In this section, we present a new algorithm that achieves an $(\O{1},\O{1})$-bicriteria approximation; we will use it to construct a $(1+\eps)$-coreset in \secref{sec:blackboard:eps}. 
The resulting scheme yields a $(1+\eps)$-approximation with $\tO{s\log n + dk\log n + \tfrac{dk}{\min(\eps^4,\eps^{2+z})}}$ bits of communication and $\O{\log n \log k}$ rounds, with additional optimizations for the case $k=\O{\log n}$ given in the appendix. 
Existing bicriteria algorithms in the blackboard model suffer from communication bottlenecks, e.g., the classical Mettu–Plaxton protocol~\cite{MettuP04} and the subsequent adaptations~\cite{ChenSWZ16} sample $\O{k}$ points in each of $\O{\log n}$ rounds, incurring $\O{dk\log^2 n}$ bits of communication. 
These costs are prohibitive for our target guarantees.

To overcome these barriers, we adapt the adaptive sampling framework originally developed for $k$-median in the centralized setting~\cite{aggarwal2009adaptive,BalcanEL13}. 
The procedure repeatedly samples points in a manner reminiscent of \texttt{kmeans++}~\cite{ArthurV07,BahmaniMVKV12}, but now distributed across $s$ servers. 
In each iteration, we first sample a server $j$ with probability proportional to $D_j$, the sum of the $z$-th power of the distances from its points to the current sample, and then sample points within server $j$ according to the adaptive distribution. 
A na\"{i}ve implementation requires reporting each $D_j$ after every iteration, leading to $\O{sk\log n}$ communication, which is too expensive. 
To implement this step efficiently, we design the subroutine $\LazySampling$, which draws from the adaptive distribution using only approximate values $\widetilde{D_j}$ maintained on the blackboard. 
Our key innovation is to update these estimates lazily: each site reports a new value only when its true $D_j$ changes by more than a constant factor, which we show suffices for the purposes of adaptive sampling. 
\begin{lemma}
There exists an algorithm $\LazySampling$ that samples from the adaptive sampling distribution with probability at least $0.99$ provided that $\sum D_j \le \sum\widetilde{D_j} < \lambda\sum D_j$ for a fixed constant $\lambda>1$. 
The algorithm uses $\tO{\log s+d\log n}$ bits of communication. 
\end{lemma}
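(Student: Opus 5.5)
The plan is to combine the already-proved \lemref{lem:lazy:sampling} with a direct count of the bits that one invocation of \algref{alg:lazy:sampling} writes to the blackboard. Correctness and communication are handled separately. The probability guarantee is inherited almost verbatim, so the real content is showing that a single call costs only $\tO{\log s+d\log n}$ bits.

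\textbf{Correctness.} Under the hypothesis $\sum D_j \le \sum \widetilde{D_j} < \lambda \sum D_j$, \lemref{lem:lazy:sampling} gives two facts:
\begin{itemize}
\item a single call succeeds (returns a point rather than $\bot$) with probability at least $\frac1\lambda$;
\item conditioned on success, the returned point $y$ has distribution exactly $\frac{d_y}{D}$, which is the adaptive sampling distribution of \algref{alg:adaptive:(k, z):sampling} with $d_y=\Cost(y,S)$.
\end{itemize}
Repeating the call $\gamma=\O{\lambda}$ times and keeping the first non-$\bot$ output therefore yields a sample from the adaptive distribution with probability at least $0.99$. Since $\lambda$ is a fixed constant, this changes the communication only by a constant factor. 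This is exactly the Markov/independence argument already given in \lemref{lem:lazy:sampling}, specialized to $N=1$; more simply, the failure probability is at most $(1-1/\lambda)^{\gamma}\le 0.01$.

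\textbf{Communication.} The estimates $\widetilde{D_j}$ already reside on the blackboard, stored via \algref{alg:lambda:approx} as exponents. No site needs to resend them, because every site (and the coordinator) can compute $\widetilde{D}=\sum_j \widetilde{D_j}$ locally. One call then consists of three steps.
\begin{itemize}
\item The coordinator draws an index $i$ with probability $\widetilde{D_i}/\widetilde{D}$ using private randomness and writes $i$. This costs $\O{\log s}$ bits.
\item Site $i$ samples $y$ with probability $d_y/\widetilde{D_i}$, or $\bot$ otherwise. It can do this entirely locally because it knows its own $d_y$ values and $\widetilde{D_i}$.
\item Site $i$ writes either a single bit for $\bot$, or the point $y$. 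Since $y\in[\Delta]^d$ with $\Delta=\poly(n)$, the point takes $\O{d\log(n\Delta)}=\O{d\log n}$ bits.
\end{itemize}
Messages are self-delimiting, which adds at most a constant or logarithmic overhead, absorbed in the $\tO{\cdot}$. Summed over the $\O{1}$ repetitions, the total is $\O{\log s+d\log n}$ bits.

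\textbf{Main subtlety.} The step needing care is making sure nothing in a call implicitly requires fresh global information, such as the exact value of $D$ or re-broadcasting the current $\widetilde{D_j}$'s. The point to verify is that the sampling probability $d_y/\widetilde{D_i}$ uses only site-local data together with the blackboard estimate. I would also check that the rounding of $\widetilde{D_i}$ to a power of $\lambda$ is consistent with the hypothesis of \lemref{lem:lazy:sampling}. By \thmref{thm:power:approx}, each rounded value satisfies $D_i\le\widetilde{D_i}<\lambda D_i$. Summing over sites preserves the required sandwich $\sum D_j \le \sum\widetilde{D_j} < \lambda\sum D_j$, so the lemma applies.
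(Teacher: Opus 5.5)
Your proposal is correct and follows the paper's own argument. Correctness is inherited from \lemref{lem:lazy:sampling}, and your direct bound $(1-1/\lambda)^{\gamma}\le 0.01$ is a cleaner finish than the Markov step written there. The cost is the same tally: $\O{\log s}$ bits for the site index, plus $\O{1}$ bits for $\bot$ or $\O{d\log n}$ bits for the point.

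One small correction to your last paragraph: under lazy updates the per-site bound $\widetilde{D_i}<\lambda D_i$ need not persist, which is why the lemma assumes only the summed sandwich. What the local sampling step actually requires is $D_i\le\widetilde{D_i}$, and that does persist because costs only decrease as $S$ grows.
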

The $\LazySampling$ algorithm is a communication-efficient subroutine for adaptive sampling in the blackboard model, formally described in \algref{alg:lazy:sampling} in \appref{app:prelims}.  
Each server $j$ maintains an approximate weight $\widetilde{D_j}$ for its local dataset, satisfying $D_j \le \widetilde{D_j} \le \lambda D_j$, where $D_j$ is the sum of the $z$-th powers of distances from its points to the current sample. 
The coordinator first selects a server $j$ with probability proportional to $\widetilde{D_j} / \sum_i \widetilde{D_i}$, then requests a point $y \in X_j$, sampled with probability $d_y / \widetilde{D_j}$. 
By updating the $\widetilde{D_j}$ lazily, i.e., only when $D_j$ changes significantly, $\LazySampling$ ensures points are drawn close to the true adaptive distribution while drastically reducing communication. 
The algorithm either returns a sampled point or $\bot$, and uses only $\tO{\log s + d \log n}$ bits per round, enabling scalable execution of the bicriteria algorithm across $s$ servers.
This ensures that each site communicates only $\O{\log n}$ updates, so the total communication for the constant-factor approximation is reduced to $\tO{s\log n + kd\log n}$, since each server can only update $\widetilde{D_j}$ a total of $\O{\log n}$ times assuming all points lie in a grid with side length $\poly(n)$. 
Because lazy updates rely on approximate values, the protocol must occasionally verify whether the aggregate estimate $\sum_j \widetilde{D_j}$ is still close to the true sum $\sum_j D_j$. 
For this purpose, we use the subroutine $\LOneSampling$, which tests whether the two sums differ by more than a constant factor. 
\begin{lemma}
There exists an algorithm $\LOneSampling$ that takes input $\{\widetilde{D_j}\}_{j\in[s]}$ so that $\widetilde{D_j} \ge D_j$ for all $j \in [s]$. 
Let $\widetilde{D} = \sum_{j \in [s]} \widetilde{D_j}$ and $D = \sum_{j \in [s]} D_j$ and let $\mu>1$ be a parameter. 
If $\mu^2 D \leq \widetilde{D}$, then $\LOneSampling$ returns \texttt{True} with probability at least $1 - \delta$. 
If $\mu D > \widetilde{D}$, it returns \texttt{False} with probability at least $1 - \delta$. 
The algorithm uses $\tO{(\log s + \log \log n)\log \tfrac{1}{\delta}}$ bits of communication. 
\end{lemma}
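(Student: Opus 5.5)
The plan is to build $\LOneSampling$ as an importance-sampling estimator of $D/\widetilde{D}$, boost it by repetition and a median, and read off the two cases from the estimate.

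\textbf{Single trial.} The coordinator samples a site $j$ with probability $p_j=\widetilde{D_j}/\widetilde{D}$. Since all $\widetilde{D_j}$ are on the blackboard, the only cost is announcing $j$, which takes $\O{\log s}$ bits. Site $j$ then reports $D_j$ rounded by $\PowerApprox$ with base $2$. This gives $\widehat{D_j}$ with $D_j\le \widehat{D_j}<2D_j$, and costs $\O{\log\log n}$ bits because $D_j\le\poly(n)$, assuming $\Delta=\poly(n)$.

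Define the random variable $Z=\widehat{D_j}/\widetilde{D_j}$. Because $\widetilde{D_j}\ge D_j$, we have $Z\in[0,2]$. Its expectation satisfies
\[\Ex{Z}=\sum_j \frac{\widetilde{D_j}}{\widetilde{D}}\cdot\frac{\widehat{D_j}}{\widetilde{D_j}}\in\left[\frac{D}{\widetilde{D}},\frac{2D}{\widetilde{D}}\right].\]
The point of sampling proportionally to $\widetilde{D_j}$ is exactly this boundedness of $Z$. It is what lets a constant number of samples suffice, and it is why the hypothesis $\widetilde{D_j}\ge D_j$ is needed.

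\textbf{Averaging.} We average $Z$ over $r=\O{1}$ independent trials to get $\bar Z$, and output \texttt{True} iff $\bar Z<\tau$ for a threshold $\tau$.

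In the case $\mu^2D\le\widetilde{D}$, we have $\Ex{\bar Z}\le 2/\mu^2$. In the case $\mu D>\widetilde{D}$, we have $\Ex{\bar Z}\ge 1/\mu$. These two regimes are separated only if $2/\mu^2<1/\mu$, that is, $\mu>2$.

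This gap is the main obstacle: a factor-$2$ rounding loss eats into the gap between $\mu^2$ and $\mu$. I would handle it by rounding $D_j$ to powers of $1+\eta$ for a small constant $\eta$, with $\eta$ chosen depending on $\mu$. This keeps the cost at $\O{\log\log n+\log(1/\eta)}$ bits and gives $\Ex{\bar Z}\in[D/\widetilde D,(1+\eta)D/\widetilde D]$. The separation $(1+\eta)/\mu^2<1/\mu$ then holds whenever $\eta<\mu-1$.

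Take $\tau$ to be the geometric midpoint between $(1+\eta)/\mu^2$ and $1/\mu$. By Hoeffding (\thmref{thm:hoeffding's:inequality}) applied to variables in $[0,1+\eta]$, $r=\O{1}$ trials make the deviation of $\bar Z$ smaller than the gap with probability at least $2/3$. The gap, and hence the hidden constant in $r$, depends only on $\mu$, which is a fixed constant.

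\textbf{Boosting.} We repeat the whole test $\O{\log\frac1\delta}$ times and output the majority answer. A Chernoff bound then gives error at most $\delta$ in each case.

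\textbf{Communication.} The total is
\[\O{\log\tfrac1\delta}\cdot\O{1}\cdot\O{\log s+\log\log n}=\tO{(\log s+\log\log n)\log\tfrac1\delta}\]
bits, as claimed. Announcing the chosen indices $j$ dominates the $\log s$ term. The coordinator's sampling uses only private randomness and values already on the blackboard, so it incurs no further communication.
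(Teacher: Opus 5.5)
Your proposal is correct and is essentially the paper's proof. It uses the same importance-sampling estimator (your $Z$ is the paper's $Z_i/\widetilde{D}$), \PowerApprox rounding of the sampled $D_j$, a threshold test on the average justified by Hoeffding, and $\O{\log s+\log\log n}$ bits per sample.

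Your version is in fact slightly more careful than the paper's on two points. Your base $1+\eta$ with $\eta<\mu-1$ covers every $\mu>1$, whereas the paper's base $\lambda=\mu/4$ exceeds $1$ only when $\mu>4$. Your deviations are measured against a constant gap in units of $\widetilde{D}$, whereas the paper's Case~1 Hoeffding step treats the samples as lying in $[0,\lambda D]$ when they are only bounded by $\lambda\widetilde{D}$. Your outer majority vote is harmless but unnecessary, since Hoeffding over $\O{\log\frac{1}{\delta}}$ samples already gives error $\delta$.
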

$\LOneSampling$ checks whether the sum of approximate site costs $\widetilde{D}$ is within a constant factor of the true total $D$ by sampling a few sites and aggregating the rescaled sampled values. 
If the aggregate is sufficiently close to $\widetilde{D}$, the algorithm returns \texttt{True}; otherwise, it returns \texttt{False}, providing a high-probability guarantee that the lazy estimates remain accurate; the algorithm is presented as \algref{alg:l1:sampling} in \appref{app:lone}. 
While this lazy strategy could require up to $\O{s\log n}$ rounds, we further reduce the round complexity by delaying all updates until the global sum $\sum_j D_j$ decreases by a constant factor. 
This event is naturally detected when no new point is sampled in a round, and synchronizing updates in this way brings the total number of rounds down to $\O{\log n \log k}$. 
Finally, to reduce communication further, we use the subroutine $\PowerApprox$, which encodes each $D_j$ to within a constant factor using only $\O{\log\log n}$ bits:
\begin{theorem}
Given $m=\poly(n)$ and a constant $\lambda>1$, there exists an algorithm $\PowerApprox(m,\lambda)$ that outputs $\widetilde{m}$ encoded in $\O{\log\log n}$ bits, such that $m\le\widetilde{m}<\lambda m$. 
\end{theorem}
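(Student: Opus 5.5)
The plan is to use $\PowerApprox$ (\algref{alg:lambda:approx}) directly and count the bits needed to write down its output. Write $m\le n^c$ for a fixed constant $c$; we may assume $m\ge 1$, which holds for integer costs on the grid $[\Delta]^d$ with $\Delta=\poly(n)$. The case $m=0$ gets a single reserved flag symbol meaning $\widetilde m=0$, which satisfies the inequalities $m\le\widetilde m<\lambda m$ only if we allow $\widetilde m=0$ there; this is how the protocol uses it, since a site with $D_j=0$ is never sampled. We also use that $m$ takes values in a set bounded below away from zero, namely $m\ge 1$, and we handle a general lower bound $1/\poly(n)$ in the same way.

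Existence of the exponent comes first. Since $\lambda>1$, the intervals $[\lambda^{i-1},\lambda^i)$ for $i\in\mathbb{Z}$ partition $(0,\infty)$. Hence there is a unique integer $i$ with $\lambda^{i-1}\le m<\lambda^i$. For the non-strict form required by the algorithm, take $i=\lceil\log_\lambda m\rceil$. Then $\lambda^i\ge m$, and $\lambda^{i}<\lambda^{\log_\lambda m+1}=\lambda m$. Setting $\widetilde m=\lambda^i$ gives $m\le\widetilde m<\lambda m$. This is exactly \thmref{thm:power:approx}, and it is how $\widetilde m$ is reconstructed from $i$: any party that knows the public constant $\lambda$ computes $\lambda^i$.

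The remaining step, and the only nontrivial one, is the encoding length. From $1\le m\le n^c$ we get
\[
0\le i=\lceil\log_\lambda m\rceil\le \lceil c\log n/\log\lambda\rceil=\O{\log n},
\]
because $\lambda$ is a constant greater than $1$. A nonnegative integer of size $\O{\log n}$ can be written in $\O{\log\log n}$ bits. To keep messages self-delimiting, as the model requires, I would prefix the binary representation with its length in unary or with an Elias-gamma code. This still costs $\O{\log\log n}$ bits.

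If $m$ may be as small as $1/\poly(n)$, for example when costs are normalized, then $i$ ranges over $[-\O{\log n},\O{\log n}]$. One extra sign bit handles this, and the bound is unchanged.

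I expect no real obstacle here. The only care needed is that both $\lambda$ and the polynomial range of $m$ are fixed in advance and known to all sites, so that the receiver can decode $i$ and compute $\lambda^i$ without any further communication.
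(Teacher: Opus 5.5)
Your proposal is correct and follows the same argument the paper sketches: $\PowerApprox$ returns the smallest integer $i$ with $m\le\lambda^i$ (your $i=\lceil\log_\lambda m\rceil$), which gives $\lambda^i<\lambda m$, and since $|i|=\O{\log n}$ it can be written in $\O{\log\log n}$ bits. The paper leaves several points implicit that you handle correctly: the lower bound on $m$ needed for $|i|=\O{\log n}$, the degenerate case $m=0$ (where the stated inequality cannot hold), and self-delimiting encoding.
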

Given $m$ and a base $\lambda>1$, $\PowerApprox$ computes the smallest integer $i$ such that $\lambda^i$ approximates $m$ from above, i.e., $m \le \lambda^i < \lambda m$, and returns $i$. 
This allows each site to communicate a concise representation of its cost using only $\O{\log \log n}$ bits, which can then be decoded to obtain a constant-factor approximation of the original value; we give the full details in \algref{alg:lambda:approx} in \appref{app:prelims}.  
Combined with $\LazySampling$ and $\LOneSampling$, this ensures accuracy while keeping communication near-linear. 
The protocol proceeds by sampling points with $\LazySampling$, verifying accuracy with $\LOneSampling$, and refreshing estimates via $\PowerApprox$ only when necessary. 
To limit round complexity, sites synchronize updates by reporting only when the global sum $\sum_j D_j$ decreases by a constant factor, detected whenever a round fails to sample a new point. 
This reduces the number of rounds from $\O{s\log n}$ to $\O{\log n \log k}$. 
\figref{fig:blackboard:large:k} informally summarizes the procedure, with the formal procedure appearing in \appref{app:round:reduction}. 
Altogether, this yields the first $(\O{1},\O{1})$-bicriteria approximation for $(k,z)$-clustering in the blackboard model with near-linear communication and polylogarithmic rounds: 

\begin{figure}[!htb]
\begin{mdframed}
\textbf{Algorithm: Bicriteria Approximation for $(k,z)$-Clustering (Simplified)}

\begin{enumerate}[leftmargin=*, label=\arabic*.]
    \item \textbf{Input:} Dataset $X_i$ for each site $i \in [s]$.
    \item \textbf{Output:} Set $S$ that is an $(\O{1}, \O{1})$-bicriteria approximation.
    
    \item Initialize:
    \begin{itemize}[leftmargin=1em]
        \item Sample one point into $S$ and compute approximate distances $\{\widetilde{D_j}\}$ for each site.
        \item Set counters: $N = \O{k}$, $M = 0$.
    \end{itemize}
    
    \item \textbf{While $M < N$ (sample roughly $k$ points):}
    \begin{itemize}[leftmargin=1em]
        \item Sample new points into $S$ using $\LazySampling$.
        \item Check accuracy of approximate distances $\{\widetilde{D_j}\}$ with $\LOneSampling$.
        \item If distances are accurate, sample more aggressively.
        \item Otherwise, refine approximate distances $\{\widetilde{D_j}\}$ using $\PowerApprox$.
        \item Update $M$ with the number of points successfully added.
    \end{itemize}
    
    \item \textbf{Return} $S$.
\end{enumerate}
\end{mdframed}
\caption{Informal version of bicriteria approximation through adaptive sampling.}
\figlab{fig:blackboard:large:k}
\end{figure}

\begin{lemma}
There exists an algorithm that outputs a set $S$ such that $|S| = \O{k}$ and $\Cost(S,X)\le\O{1}\cdot\Cost(C_{\OPT},X)$ with probability at least $0.98$, where $C_{\OPT}$ is the optimal $(k,z)$-clustering of $X$. 
The algorithm uses $\tO{s \log n + k d \log n}$ bits of communication and $\O{\log n\log k}$ rounds of communication with probability at least $0.99$.
\end{lemma}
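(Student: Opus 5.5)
The argument has two parts. First we show that the output of the lazy protocol has the same law as an execution of \AdaptiveSampling, so that \thmref{thm:adaptive:sampling:k-means} supplies the quality guarantee. Then we bound the number of rounds and the bits written to the blackboard by potential arguments on the global weight $D=\sum_j D_j$.

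\textbf{Correctness.} The invariant is that whenever $\LazySampling$ is called, $D_j\le\widetilde{D_j}$ for all $j$ and $\sum_j\widetilde{D_j}<\lambda\sum_j D_j$. The lower bound always holds: adding centers to $S$ only decreases each $D_j$, and a refresh via $\PowerApprox$ restores $D_j\le\widetilde{D_j}<\lambda D_j$ by \thmref{thm:power:approx}. The upper bound is exactly what the $\LOneSampling$ test certifies. After each batch we run $\LOneSampling$ with parameter $\mu$ and $\lambda=\mu^2$; if it reports that $\mu^2 D\le\widetilde{D}$ may hold, all sites refresh.

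While the invariant holds, \lemref{lem:lazy:sampling} shows that each non-$\bot$ output is an exact draw from $d_y/D$. Within a batch, however, $D$ is the weight with respect to $S$ at the start of the batch, so the $2^i$ points of one batch are sampled with slightly stale weights.

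To handle this, I would prove a robustness version of \thmref{thm:adaptive:sampling:k-means}. It suffices that each sampled point is drawn with probability at least $c\cdot\Cost(x,S_{\mathrm{cur}})/\Cost(X,S_{\mathrm{cur}})$ for a constant $c$. Staleness only makes $d_x\ge\Cost(x,S_{\mathrm{cur}})$. Since the batch was not interrupted, the total has dropped by at most a constant factor, so $D_{\mathrm{stale}}\le O(1)\cdot\Cost(X,S_{\mathrm{cur}})$. This gives the needed domination. The standard analysis goes through unchanged: each cluster of $C_{\OPT}$ that is still costly is ``hit'' with constant probability per sample. This yields $O(k)$ centers and an $O(1)$-approximation with probability $0.99$. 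A union bound with the $\LOneSampling$ failures, using $\delta=1/\poly(\log n\log k)$, gives $0.98$.

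\textbf{Rounds and communication.} Batches double in size from $1$ up to $k$, so there are $O(\log k)$ batches between consecutive refreshes. A refresh is triggered only when $\widetilde{D}\ge\mu D$, that is, when the true weight has dropped by a constant factor relative to the last synchronization. Since $D\in[1/\poly(n),\poly(n)\cdot n]$ on the grid, or $D=0$ at termination, there are at most $O(\log n)$ refreshes. This gives $O(\log n\log k)$ rounds.

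The bit count has four pieces:
\begin{itemize}
\item each refresh costs $s\cdot O(\log\log n)$ bits, for $\tO{s\log n}$ in total;
\item each successful sample writes a point in $d\log(n\Delta)$ bits, and there are $O(k)$ of them by \lemref{lem:lazy:sampling} with Markov, for $O(kd\log n)$;
\item each $\LazySampling$ call also costs $O(\log s)$ bits to name the site, plus $O(1)$ bits for $\bot$;
\item each $\LOneSampling$ call costs $\tO{\log s+\log\log n}$ bits per batch, over $O(\log n\log k)$ batches.
\end{itemize}
The $\log s$ overheads total $O(k\log s+\log n\log k\log s)$, which is absorbed into the $\tO{\cdot}$.

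\textbf{Main obstacle.} The hardest step is the stale-distribution robustness argument inside a batch. The constant-factor domination has to hold simultaneously for every sample in the batch, even though the test only certifies the weights at the end of the batch. I would handle this by monotonicity: $D$ is nonincreasing during the batch, so the end-of-batch certificate $\widetilde{D}<\mu^2D_{\mathrm{end}}$ implies $\widetilde{D}<\mu^2 D_{\mathrm{cur}}$ at every intermediate point.

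A batch that fails the test must not be discarded, since that would bias the sample. Instead, the points sampled before the violation remain valid as long as the invariant held when they were drawn. Because monotonicity is only one-directional, I would keep all of them but charge the failed batch to the refresh potential. Its samples are still dominated with constant $1/\mu^2$ relative to the weights at the previous verified check, and this suffices for the hitting argument.
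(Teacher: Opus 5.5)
Your plan follows the paper's route in its main structure. You certify the draws of a batch that passes the \textsc{L1Sampling} test through monotonicity of $D$, which is the paper's notion of a valid sample. You bound rounds by $O(\log k)$ doubling batches per refresh times $O(\log n)$ refreshes, and you sum the same communication terms. Your one-sided domination criterion, that each draw hits $x$ with probability at least $c\,\Cost(x,S_{\mathrm{cur}})/\Cost(X,S_{\mathrm{cur}})$, is actually the right form of the robustness statement. The paper instead cites its bad-cluster lemma, which is proved under the two-sided condition $\frac{1}{\gamma}d_i\le\Cost(x_i,S)\le\gamma d_i$. Stale weights need not satisfy that condition; for example, a point drawn earlier in the batch has current cost $0$.

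The gap is in how you handle a batch that fails the test. The hitting argument for the $l$-th draw needs domination relative to $S_{l-1}=S_{\mathrm{start}}\cup\{x_1,\ldots,x_{l-1}\}$. Domination relative to the weights at the previous verified check holds trivially, since the draws come from exactly that distribution, and it buys nothing. Partway through a failing batch, $\Cost(X,S_{l-1})$ can be arbitrarily smaller than $D_{\mathrm{start}}$. For instance, heavy clusters get hit early in the batch, and the light clusters that remain bad then carry most of the current cost but only a tiny share of the stale distribution. So the later draws make no constant progress, and counting them toward $N$ does not yield $N$ useful samples. You also never say how the counter is updated in such a batch, and you never bound $|S|$ once these draws are kept.

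The paper fixes both issues. It credits all $L$ draws of a passing batch but only the first draw of a failing batch. That first draw is certified because the batch began with $\widetilde{D}<\mu^2 D$. The remaining draws stay in $S$, which is harmless because costs only decrease and good clusters stay good. It then uses the doubling schedule to get $|S|=O(k)$: a failing batch of size $2^i$ has roughly as many draws as all the earlier, fully credited batches of its phase combined. Hence uncredited points are at most a constant times the credited ones. You would need this same count even if you credited everything, to show that a constant fraction of the draws are genuinely valid (taking $N$ a large enough multiple of $k$). ``Charging the failed batch to the refresh potential'' only bounds how many failing batches occur. It does not address their effect on the hitting argument.
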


\subsection{\texorpdfstring{$(1+\eps)$}{(1+eps)}-Coreset Construction}
\seclab{sec:blackboard:eps}
To achieve a $(1+\eps)$-coreset for $(k,z$)-clustering on an input dataset $X$ given an $(\O{1}, \O{1})$-bicriteria approximation $S$, we use the following notion of sensitivity sampling. 
For each center $s_j\in S$, let $C_j \subset X$ be the cluster centered at $s_j$. For a point $x \in C_j$, let $\Delta_p := \Cost (C_j, S) / |C_j|$ denote the average cost of $C_j$. 
For $x \in C_j$, we define
\[ \mu(x) :=  \frac{1}{4} \cdot \left ( \frac{1}{k |C_j|}  + \frac{\Cost(x, S)}{k \Cost(C_j, S)} + \frac{\Cost(x, S)}{\Cost(X, S)} + \frac{\Delta_x}{\Cost(X,S)} \right ). \]
We define sensitivity sampling to be the process where each point $x$ is sampled with probability proportional to a constant-factor approximation to $\mu(x)$.
Then we have the following guarantees:

\begin{theorem}
\cite{BansalCPSS24}
Sampling $\tO{\frac{k}{\min (\eps^4, \eps^{2+z})}}$ points from constant-factor approximations to the sensitivity sampling probability distribution and then reweighting provides a $(1+\eps)$-coreset for Euclidean $(k,z)$-clustering with probability at least $0.99$. 
\end{theorem}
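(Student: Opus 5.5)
This statement is the sensitivity-sampling guarantee of \cite{BansalCPSS24}, i.e., \thmref{thm:sens:sampling:bounds}, but with the exact distribution $\mu$ replaced by constant-factor approximations to it. The plan is a reduction to the exact-$\mu$ statement, with the constant-factor loss absorbed into the sample size.

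Suppose we sample from a distribution $\tilde\mu$ with $\frac{1}{\alpha}\mu(x)\le\tilde\mu(x)\le\alpha\mu(x)$ for a constant $\alpha$. If we only know unnormalized values $\hat\mu(x)\in[\mu(x)/\alpha,\alpha\mu(x)]$, then normalizing by $\sum_x\hat\mu(x)\in[1/\alpha,\alpha]$ gives $\tilde\mu\in[\mu/\alpha^2,\alpha^2\mu]$, so we may assume the two-sided bound above. Each sampled point $x$ receives weight $\frac{1}{m\tilde\mu(x)}$.

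For a fixed center set $C$, the estimator $\sum_{q\in A}w(q)\Cost(q,C)$ is unbiased for $\Cost(X,C)$, whatever sampling distribution is used. The analysis of \cite{BansalCPSS24} (as in \cite{Cohen-AddadSS21,Cohen-AddadLSS22}) uses properties of $\mu$ only through upper bounds on the importance ratios $\Cost(x,C)/(m\mu(x))$. Its chaining and net arguments depend on these via:
\begin{itemize}
\item the per-point sensitivity bound $\Cost(x,C)\lesssim \mu(x)\cdot(\Cost(X,C)+\Cost(X,S))$, up to $2^{O(z)}$ factors from \factref{fact:triangle};
\item the decomposition of $X$ into ring groups whose points have comparable $\mu$-values;
\item variance bounds for the weighted estimator on each group.
\end{itemize}

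Replacing $\mu$ by $\tilde\mu$ multiplies every importance ratio by at most $\alpha$. Each variance and sub-Gaussian or Bernstein parameter therefore grows by at most $\alpha^{O(1)}$. Ring membership shifts by at most $O(\log\alpha)$ levels, so a point belongs to a group that is a constant-factor coarsening of the original one. Hence the same concentration and union-bound argument goes through after enlarging $m$ by a factor $\alpha^{O(1)}=O(1)$, and the coreset size stays $\tO{\frac{k}{\min(\eps^4,\eps^{2+z})}}$.

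The main obstacle is justifying that the group and ring arguments of \cite{BansalCPSS24} tolerate the perturbation. Their analysis groups points using the exact quantities $\Cost(x,S)$, $|C_j|$ and $\Cost(C_j,S)$, and relates the weights across a group. The approach is to define the groups from the \emph{true} $\mu$ and $S$, so the partition is unchanged, while $\tilde\mu$ enters only through the weights. Within a group, weights then vary by a factor of at most $\alpha^2$ more than before. The key point to check is that their bound on the maximum weight within a group, which controls the sub-Gaussian tails, only degrades by a constant, and I expect that to hold.

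Finally, the ``$\Delta_x$'' term and the cluster-level terms $\frac{1}{k|C_j|}$ and $\frac{\Cost(x,S)}{k\Cost(C_j,S)}$ are each only approximated to within constants, which is covered by the same $\alpha$ slack. The success probability $0.99$ is obtained as in \thmref{thm:sens:sampling:bounds}.
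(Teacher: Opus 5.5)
Your proposal is essentially the paper's argument. The paper does not reprove the cited result. In the proof of \lemref{lem:coreset:blackboard} it simply asserts, as you do with less detail, that the analysis of \cite{BansalCPSS24} only uses a constant-factor lower bound on the sampling probability in terms of the four sensitivity terms, so sampling from $\widetilde{\mu}=\Theta(\mu)$ costs only a constant factor in the sample size.

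The one point the paper records explicitly that you omit is that the argument must also tolerate a bicriteria $S$ with $|S|=\O{k}$ rather than exactly $k$ centers. You should state this too, though it is handled the same way: for instance, $\sum_x\mu(x)$ then becomes a constant rather than $1$, which your normalization step already absorbs.
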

We remark that \cite{BansalCPSS24} obtained optimal bounds for $k$-median, which immediately extend to our framework as well; we omit further discussion as it naturally generalizes.  
To apply the sensitivity sampling framework, we require constant-factor approximations of the cluster sizes $|C_j|$ and costs $\Cost(C_j, S)$, given a bicriteria solution $S$. 
However, directly uploading these quantities from $s$ servers for $\O{k}$ clusters would cost $\O{sk \log \log n}$ bits, which is too high. 
Instead, we adapt Morris counters~\cite{morris1978counting} for the purposes of distributed approximate counting: 
\begin{lemma}
Suppose each server $i$ holds $k$ numbers $n_{i,j}$ and we have $|N_j| = \sum_{i=1}^s n_{i,j} = \poly (n)$. 
There exists an algorithm $\MorrisCountPP$ that outputs $\{\widehat{N_j}\}$ such that $\widetilde{N}_j \in [\frac{3}{4} N_j, \frac{5}{4} N_j]$ for all $j\in[k]$, with probability $0.99$ using $\tO{s + k \log n}$ bits of communication.
\end{lemma}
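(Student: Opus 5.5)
We will design $\MorrisCountPP$ as a distributed Morris counter, one per index $j\in[k]$. The servers are processed sequentially, $i=1,\dots,s$. The blackboard holds the current exponents $r_1,\dots,r_k$, initially $0$. When it is server $i$'s turn, it reads the $r_j$ and, for each $j$, simulates $n_{i,j}$ increments of a Morris counter starting from $r_j$: each increment raises $r_j$ by one with probability $2^{-r_j}$ under the current value. Since the servers run sequentially, the joint process has exactly the law of a single Morris counter fed $N_j=\sum_i n_{i,j}$ increments. The order of increments does not matter, because each increment is an identical Markov transition. At the end, $\widehat{N_j}:=2^{r_j}-1$ is the output estimate.

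To make the estimate accurate, we replace base $2$ by base $1+a$ for a small constant $a$, so the counter increments with probability $(1+a)^{-r}$ and the estimator is $((1+a)^{r}-1)/a$. The standard Morris analysis gives unbiasedness and variance at most $\tfrac{a}{2}N_j^2$. To boost the success probability so that all $k$ counters lie in $[\tfrac34N_j,\tfrac54N_j]$ simultaneously, we run $O(\log k)$ independent copies per $j$ and take the median. Chebyshev with small $a$ gives each copy constant success probability, say $0.9$. A Chernoff bound then shows the median fails with probability at most $1/(100k)$, and a union bound over $j\in[k]$ gives overall success probability $0.99$. Alternatively we can average $O(1/a)$ copies; either choice only affects polylogarithmic factors absorbed in the $\tO{\cdot}$.

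For communication, server $i$ writes only the counters that changed. It writes a single bit meaning ``no change,'' or else a list of pairs (index $j$, new value or increment). Each index costs $O(\log k)$ bits and each new value costs $O(\log\log n)$ bits, since $r_j=O(\log_{1+a} N_j)=O(\log n)$ when $N_j=\poly(n)$. We charge these costs to the increments of $r_j$. Each increment is written at most once, by the server that caused it, and every server that writes at least one pair has caused at least one increment.

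This charging is the main obstacle. The final value of $r_j$ is $O(\log n)$ only with high probability, so we use Markov's inequality on $\mathbb{E}[(1+a)^{r_j}]=aN_j+1$ to show that all $r_j\le O(\log n)$ except with small probability. Given that, the total number of written pairs over all counters and all $O(\log k)$ copies is $O(k\log k\log n)$, each costing $O(\log k+\log\log n)$ bits. Adding the $s$ ``no change'' bits gives a total of $\tO{s+k\log n}$ bits. Combining this failure event with the accuracy failure event keeps the overall success probability at least $0.99$ after adjusting constants.
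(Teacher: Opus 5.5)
Your proposal is correct and follows the paper's argument: the sites run $O(\log k)$ Morris counters per index one after another, a site that changes nothing writes a single $\bot$ symbol, and every written update is charged to at least one increment of a counter whose final value is only $O(\log n+\log k)$. The differences are only in how accuracy and counter values are controlled. The paper averages base-$2$ counters and reads the bound $2^{m_{j,t}}\le 2lN_j$ directly off the accuracy event. You instead take a median of base-$(1+a)$ counters and bound each counter separately by Markov's inequality on $\mathbb{E}[(1+a)^{r_j}]=aN_j+1$, which needs only Chebyshev and Chernoff rather than an exponential tail bound for averages of Morris counters.
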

Our Morris counter protocol collectively maintains a counter $r_j$ and increments it with probability $\frac{1}{2^r}$ for each item in a cluster $C_j$. 
This provides a constant approximation to the cluster size $|C_j|$. 
Crucially, if a site does not change the global counters $\{r_j\}$, then it only needs to send a single bit to signal no update. 
Otherwise, each of the $\O{k}$ counters can only increase at most $\O{\log n}$ times, so the total upload cost for these approximations is $\O{k \log n}$ bits. 
We can perform a similar protocol to approximate the cost of each cluster $\Cost(C_j, S)$, so that overall, the total communication for these approximations is $\tO{s + k \log n}$ bits.
Given these approximations, the servers can then perform sensitivity sampling locally. 
Finally, we require an efficient encoding of each point $x$ sampled by sensitivity sampling. 
Informally, each point $x$ is encoded as $x' = \pi_{S}(x) + y'$, where $\pi_{S}(x)$ is the nearest center in $S$ to $x$, and $y'$ is the offset vector $x - \pi_{S}(x)$ whose coordinates are rounded to the nearest power of $(1+\eps')$,  where $\eps' = \poly\left(\eps, \frac{1}{d}, \frac{1}{\log(n\Delta)}\right)$.
The formal details are given in \appref{app:prelims}. 
Putting everything together, \algref{alg:coreset:blackboard:mainbody} achieves the guarantees in \thmref{thm:blackboard:inf}. 


\begin{algorithm}[!htb]
\caption{$(1+\eps)$-coreset for the blackboard model}
\alglab{alg:coreset:blackboard:mainbody}
\begin{algorithmic}[1]
\REQUIRE{A bicriteria set of centers $S$ with constant-factor approximation and $|S| = \O{k}$}
\ENSURE{A $(1+\eps)$-coreset $A$}
\STATE{Use $\MorrisCountPP$ to get $\O{1}$-approximation for $|C_j|$ and $\Cost (C_j, S)$ for all $j \in [k]$ on the blackboard}
\STATE{$m \gets \tO{\frac{k}{\eps^2} \min \{ \eps^{-2}, \eps^{-z} \}}$}
\COMMENT{Set coreset size}
\FOR{$i \gets 1$ to $s$ \textcolor{purple}{$\backslash\backslash$ Send local approximations to sensitivities}}
    \STATE{$A_i \gets \emptyset$}
    \STATE{Compute $\widetilde{\mu}(x)$ as an $\O{1}$-approximation of $\mu(x)$ locally for all $x \in X_i$}
    \STATE{Upload $\widetilde{\mu} (X_i) = \sum_{x \in X_i} \widetilde{\mu} (x)$ to blackboard}
\ENDFOR
\STATE{Sample site $i$ with probability $\frac{\widetilde{\mu} (X_i)}{\sum_{i=1}^s \widetilde{\mu} (X_i)}$ independently for $m$ times.}
\COMMENT{Sensitivity sampling}
\STATE{Let $m_i$ be the number of times site $i$ is sampled and write $m_i$ on blackboard}
\FOR{$i \gets 1$ to $s$ \textcolor{purple}{$\backslash\backslash$ Iterate through sites to produce samples}}
    \STATE{$A_i \gets \emptyset$}
    \FOR{$j \in [m_i]$ \textcolor{purple}{$\backslash\backslash$ Sample $m_i$ points from site $i$}}
        \STATE{Sample $x$ with probability $p_x = \frac{\widetilde{\mu}(x)}{\widetilde{\mu}(X_i)}$}
        \IF{$x$ is sampled \textcolor{purple}{$\backslash\backslash$ Efficiently encode each sample}}
            \STATE{Let $x'$ be $x$ efficiently encoded by $S$ and accuracy $\eps' = \poly(\eps)$}
            \STATE{$A \gets A_i \cup \{ (x', \frac{1}{m \widehat{\mu} (x)}) \}$, where $\widehat{\mu} (x)$ is a $(1+\frac{\eps}{2})$-approximation of $\widetilde{\mu} (x)$}
        \ENDIF
    \ENDFOR
    \STATE{Upload $A_i$ to the blackboard}
\ENDFOR
\STATE{$A \gets \cup_{i = 1}^s A_i$}
\STATE{\textbf{return}  $A$}
\end{algorithmic}
\end{algorithm}

\section{Distributed Clustering Protocols in the Coordinator Model}
\seclab{sec:dis:coordinator:short}
We now turn to the coordinator (message passing) model, where each server communicates only with the coordinator over private channels. 
A direct simulation of our blackboard protocol would require $\O{dsk\log n}$ bits, since $\O{k}$ rounds of adaptive sampling would need to be executed across $s$ servers. 
To avoid this prohibitive cost, we design a protocol that simulates adaptive sampling without sending points explicitly to all sites.
We first apply a Johnson–Lindenstrauss transformation to reduce the dimension to $d'=\O{\log(sk)}$, preserving pairwise distances up to $(1\pm \eps)$. 
We can then perform adaptive sampling in the projected space, so that when a point $s$ is selected, only its projection $\pi(s)$ is communicated. 
To efficiently approximate such locations, we introduce the subroutine $\EfficientCommunication$, which transmits an approximate version $\widetilde{y}$ of any point $y$ using only $d\log k \polylog\left(\log n, \frac{1}{\eps}, \frac{1}{\delta}\right)$ bits, rather than the $\O{d\log n}$ bits required for exact communication:
\begin{lemma}
Given a point $y$ and a dataset $X$, there exists an algorithm $\EfficientCommunication$ that uses $d \log k \polylog (\log n, \frac{1}{\eps}, \frac{1}{\delta})$ bits of communication and sends $\widetilde{y}$ such that $\|y-\widetilde{y} \|_2 \le\min_{x\in X}\eps \| x-y \|_2$ with probability at least $1-\delta$. 
\end{lemma}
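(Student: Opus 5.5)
We prove the lemma by a coordinate-wise protocol in which the receiver (a site holding $X$, which after the local coreset step has $|X|=\O{k}$ points) learns $\widetilde{y}$ one dimension at a time. The sender holds $y$ exactly. For each $i\in[d]$ we proceed in two phases.

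\textbf{Phase 1: locate the nearest coordinate.} The site sorts the values $\{x^{(i)}:x\in X\}$. The two parties binary search over this sorted list, using $\GreaterThan$ (\thmref{thm:greater:than:cc}) to compare $y^{(i)}$ with the current pivot $x_j^{(i)}$. This finds the coordinate $a_i:=x_{j^*}^{(i)}$ closest to $y^{(i)}$. Because $|X|=\O{k}$, the search takes $\O{\log k}$ comparisons, and each comparison costs $\tO{\log\log n}$ bits.

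\textbf{Phase 2: encode the offset.} Let $r_i=y^{(i)}-a_i$. Its absolute value lies in $\{0\}\cup[1,\poly(n)]$ because we work on a grid. The sender transmits the sign of $r_i$ and the exponent $e_i$ such that $(1+\eps)^{e_i}$ is within a $(1+\eps)$ factor of $|r_i|$. This can be sent directly, or found by binary search over exponents, using $\O{\log\frac{\log n}{\eps}}$ bits. The receiver sets $\widetilde{y}^{(i)}=a_i+\mathrm{sign}(r_i)(1+\eps)^{e_i}$.

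\textbf{Error bound.} The key inequality is $|y^{(i)}-\widetilde{y}^{(i)}|\le\eps|r_i|\le\eps|y^{(i)}-x^{(i)}|$ for every $x\in X$, since $a_i$ minimizes $|y^{(i)}-x^{(i)}|$ over $X$. Summing squares over coordinates gives, for every $x\in X$,
\[
\|y-\widetilde{y}\|_2^2=\sum_i|y^{(i)}-\widetilde{y}^{(i)}|^2\le\eps^2\sum_i|y^{(i)}-x^{(i)}|^2=\eps^2\|y-x\|_2^2 .
\]
Hence $\|y-\widetilde{y}\|_2\le\eps\min_{x\in X}\|x-y\|_2$. This is the point of choosing the nearest coordinate separately in each dimension: a single fixed $x$ serves as a uniform comparator in every coordinate. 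If $y\in X$, every $r_i=0$ and the encoding is exact.

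\textbf{Failure probability.} Each $\GreaterThan$ call errs with constant probability. We amplify it by $\O{\log\frac{d\log k}{\delta}}$ repetitions with a majority vote, then union bound over all $\O{d\log k}$ comparisons. This gives total communication
\[
d\log k\cdot\tO{\log\log n}\cdot\O{\log\tfrac{d\log k}{\delta}}+d\cdot\O{\log\tfrac{\log n}{\eps}} .
\]

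\textbf{Main obstacle.} The hardest part is the bookkeeping needed to keep the bound of the form $d\log k\,\polylog(\log n,\frac1\eps,\frac1\delta)$. The amplification carries a $\log d$ factor. In the paper's usage this is harmless, because $d$ is replaced by the JL dimension $d'=\O{\log(sk)}$, or the $\log d$ is absorbed into the polylog factors. A second concern is that a wrong comparison in the binary search could return a non-nearest coordinate. We avoid this by requiring all comparisons to succeed under the union bound, rather than attempting a noisy binary search.
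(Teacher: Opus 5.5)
Your proposal is correct and follows the paper's proof essentially step by step. The shared ingredients are: a coordinate-wise binary search with amplified $\GreaterThan$ to find the nearest local coordinate; a $(1+\eps)$-power offset from that coordinate; the per-coordinate bound $|y^{(i)}-\widetilde{y}^{(i)}|\le\eps|y^{(i)}-x^{(i)}|$, squared and summed for a single fixed $x$; and a union bound over all comparisons.

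One fix is needed. The offset cannot be ``sent directly,'' because the sender does not hold $X$ and so knows neither $a_i$ nor $r_i$. You must use your stated alternative: the interactive search comparing $y^{(i)}$ with $a_i\pm(1+\eps)^t$, exactly as the paper does. Its $\O{\log\log n+\log\frac{1}{\eps}}$ comparisons per coordinate then have to be amplified and included in your union bound, which your accounting currently omits. The total still stays within $d\log k\,\polylog(\log n,\frac{1}{\eps},\frac{1}{\delta})$, up to the same $\log d$ factor the paper also hides.
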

Intuitively, $\EfficientCommunication$ allows a site to locate an approximate version of the coordinator’s point $y$ using very little communication. 
For each coordinate $i$, the site first identifies the closest local value $x_{i_s}^{(i)}$ to $y^{(i)}$ via a binary search using the $\GreaterThanPP$ protocol. 
If $y^{(i)}$ does not exactly match, an exponential search determines a small offset $\Delta y^{(i)}$ so that $x_{i_s}^{(i)} + \Delta y^{(i)}$ approximates $y^{(i)}$ within a factor of $(1+\eps)$. 
By doing this coordinate-wise, the site can efficiently reconstruct a point $\widetilde{y}$ that is close to $y$, guaranteeing $|y - \widetilde{y}|_2 \le \eps |x - y|_2$ for any local point $x \in X$, while sending only a small number of bits. 
This approach combines the intuition of searching for the ``nearest neighbor'' along each coordinate with controlled, approximate refinement, as illustrated in \figref{fig:efficient:communication}.

\begin{figure}[!htb]
\begin{mdframed}
\textbf{Algorithm: $\EfficientCommunication(X, y, \eps, \delta)$ (informal)}
\begin{enumerate}[leftmargin=*, label=\arabic*.]
    \item \textbf{Input:} 
    \begin{itemize}[leftmargin=1em]
        \item A set of points $X = \{x_1, \dots, x_l\}$ owned by one site.
        \item A point $y$ from the coordinator.
        \item Accuracy parameter $\eps\in(0,1)$ and failure probability $\delta$.
    \end{itemize}
    \item \textbf{Goal:} Send an approximate location $\widetilde{y}$ to the site such that $\|y - \widetilde{y}\|_2 \le \eps \|x - y\|_2$ for any $x \in X$ with probability $\ge 1-\delta$.
    
    \item For each coordinate $i = 1$ to $d$:
    \begin{itemize}[leftmargin=1em]
        \item Sort points in $X$ by their $i$-th coordinate.
        \item Use a local binary search via $\GreaterThanPP$ to find the closest point $x_{i_s}^{(i)}$ to $y^{(i)}$.
        \item If $y^{(i)}$ equals $x_{i_s}^{(i)}$, set $\Delta y^{(i)} = 0$.
        \item Otherwise:
        \begin{itemize}[leftmargin=1em]
            \item Determine the direction $\gamma = \text{sign}(y^{(i)} - x_{i_s}^{(i)})$.
            \item Use exponential search with $\GreaterThanPP$ to find a value $\Delta y^{(i)}$ so that $x_{i_s}^{(i)} + \Delta y^{(i)}$ approximates $y^{(i)}$ within factor $(1+\eps)$.
        \end{itemize}
        \item Set $\widetilde{y}^{(i)} = x_{i_s}^{(i)} + \Delta y^{(i)}$.
    \end{itemize}
    \item \textbf{Return} $\widetilde{y} = (\widetilde{y}^{(1)}, \ldots, \widetilde{y}^{(d)})$.
\end{enumerate}
\end{mdframed}
\caption{Informal version of efficient communication in the message-passing algorithm. For full algorithm, see \algref{alg:efficient:communication}.}
\figlab{fig:efficient:communication}
\end{figure}

Given a bicriteria approximation $S$ obtained from the above efficient implementation of adaptive sampling, we next perform sensitivity sampling to construct a $(1+\eps)$-coreset. 
Unfortunately, each server can now assign a point $x$ to another center $s'$ instead of the closest center $s$ if $\cost (x, s')$ is very close to $\cost (x,s)$. 
Consequently, the sizes $|C_j|$ and costs $\cost(C_j,S)$ for the purposes of sensitivity sampling may be incorrectly computed by the servers. 
However, this does not compromise the correctness: the sensitivity analysis of~\cite{BansalCPSS24} only requires that each point be assigned to a center whose clustering cost is within a constant factor of the optimal assignment. 
Thus, our procedure still achieves a $(1+\frac{\eps}{4})$-coreset by sensitivity sampling. 
Once points are sampled, each site encodes the coordinates of its sampled points using the same efficient encoding scheme as in the blackboard model, c.f., \lemref{lem:x:to:xprime:cluster}, ensuring that only a compact representation is sent back to the coordinator. 
This combination of approximate center assignments and efficient encoding preserves both accuracy and communication efficiency.
We give the algorithm in full in \figref{fig:coreset:coordinator:mainbody}, which achieves the guarantees of \thmref{thm:coordinator:inf}, deferring full details to \appref{app:dis:message}. 

\begin{figure}[!htb]
\begin{mdframed}
\textbf{Algorithm: $(1+\eps)$-coreset for the coordinator model (informal)}
\begin{enumerate}
    \item Each site computes a local $(1+\eps/2)$-coreset $P_i$.
    \item Coordinator broadcasts a JL transform $\pi$ to all sites. 
    \item Initialize solution set $S=\{s_0\}$ with a random point.
    \item For $i=1$ to $i=\O{k}$ iterations (for bicriteria solution):
    \begin{itemize}[leftmargin=0em]
        \item Using $\EfficientCommunication$, send approximation $\widetilde{s}_{i-1}^{(j)}$ of center $s_i$ to site $j$.
        \item Sites compute approximate costs $\widetilde{D_j}$ and send to coordinator.
        \item Coordinator selects next center $s_i$ into $S$ using $\LazySampling$.
    \end{itemize}
    \item Sites compute cluster sizes and costs, send constant approximations to coordinator.
    \item Coordinator computes total approximations and broadcasts to sites.
    \item Sites compute approximate sensitivities $\widetilde{\mu}(x)$, send total to coordinator.
    \item Coordinator samples $m=\tO{\frac{k}{\min (\eps^4, \eps^{2+z})}}$ points across sites based on sensitivities; sites sample points and encode with $\EfficientCommunication$.
    \item Merge sampled points into final coreset $A'$ and return.
\end{enumerate}
\end{mdframed}
\caption{Informal version of message-passing algorithm. For full algorithm, see \algref{alg:coreset:coordinator}.}
\figlab{fig:coreset:coordinator:mainbody}
\end{figure}

\section{Empirical Evaluations}
\seclab{sec:exp}
In this section, we present a number of simple experimental results on both synthetic and real-world datasets that complement our theoretical guarantees. 
We consider $k$-means clustering in the blackboard setting, using the previous algorithm of \cite{ChenSWZ16} based on Mettu-Plaxton, denoted \texttt{MP}, as a baseline. 
We also implement two versions of our distributed protocol, with varying complexity. 
We first implement our constant-factor approximation algorithm based on adaptive sampling, denoted \texttt{AS}. 
Additionally, we implement our constant-factor approximation algorithm based on our compact encoding after adaptive sampling, denoted \texttt{EAS}. 
All experiments were conducted on a Dell OptiPlex 7010 Tower desktop equipped with an Intel Core i7-3770 3.40 GHz quad-core processor and 16 GB of RAM. 
We provide all code at \url{https://github.com/samsonzhou/distributed-cluster}. 

\begin{figure}[!htb]
    \centering
    \begin{subfigure}[b]{0.45\textwidth}
        \includegraphics[scale=0.4]{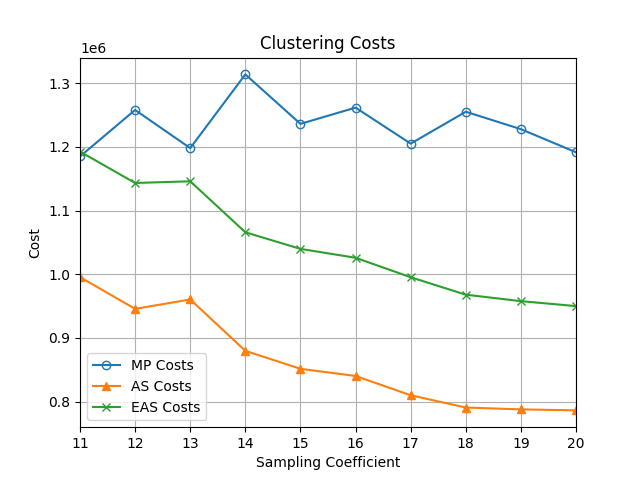}
        \caption{Clustering costs}
        \figlab{fig:digits:costs}
    \end{subfigure}
    \begin{subfigure}[b]{0.45\textwidth}
        \includegraphics[scale=0.4]{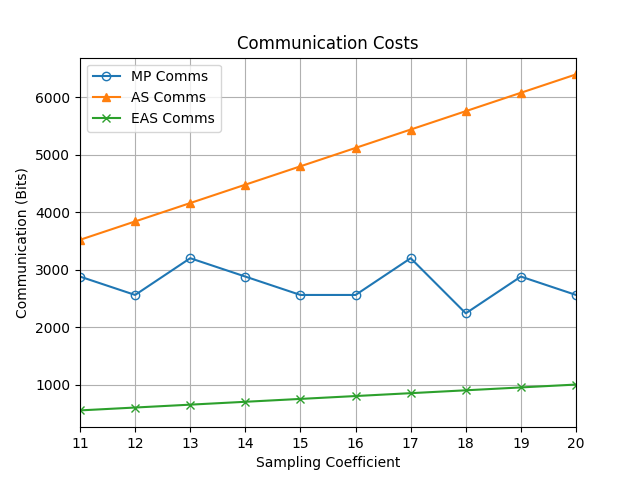}
        \caption{Communication Costs}
        \figlab{fig:digits:comms}
    \end{subfigure}
    \caption{Experiments for clustering costs and communication costs on DIGITS dataset}
    \figlab{fig:digits}
\end{figure}

\subsection{Real-World Dataset}
To evaluate our algorithms, we conducted our $k$-means clustering algorithms on the DIGITS dataset~\cite{digits98}, which consists of 1,797 images of handwritten digits (0-9) and thus naturally associates with $k=10$. 
Each image has dimension $8\times 8$, represented by 64 features, corresponding to the pixel intensities. 
This dataset is available both through \texttt{scikit-learn} and the UCI repository, and is a popular choice for clustering tasks due to its moderate size and well-defined classes, allowing for a clear evaluation of the clustering performance.

For a parameter $c$, our baseline \texttt{MP} uniformly selects $k$ initial points and then iteratively discards candidates over $c\log_2 n$ rounds. 
Our \texttt{AS} algorithm iteratively samples a single center in each of $ck$ rounds, with probability proportional to its distance from the previously sampled centers. 
Finally, \texttt{EAS} quantizes the coordinates of each chosen center of \texttt{AS} to the nearest power of two, mimicking low-precision communication.
We compare the clustering costs of the algorithms in \figref{fig:digits:costs} and their total communication costs in \figref{fig:digits:comms}, both across $c\in\{11,12,\ldots,20\}$. 

Our results indicate that although adaptive sampling (\texttt{AS}) always outperforms Mettu-Plaxton (\texttt{MP}), when the number of samples is small, the gap is relatively small, so that the rounding error incurred by our efficient encoding in \texttt{EAS} has similar clustering costs for $c=11$ in \figref{fig:digits:costs}. 
Surprisingly, the communication cost of \texttt{MP} did not seem to increase with $c$, indicating that all possible points have already been removed and no further samples are possible. 
Nevertheless, for all $c>11$, our algorithm in \texttt{EAS} clearly outperforms \texttt{MP} and therefore the previous work of \cite{ChenSWZ16} for both clustering cost and communication cost, c.f., \figref{fig:digits:comms}. 
Our algorithm \texttt{EAS} also exhibits clear tradeoffs in the clustering cost and the communication cost compared to our algorithm \texttt{AS}, as the former is simply a rounding of the latter. 

\subsection{Synthetic Dataset}
To facilitate visualization, we generated synthetic datasets consisting of two-dimensional Gaussian mixtures, where the low dimensionality (2D) was chosen to enable visualization of the resulting clusters.  
Specifically, we created $k = 5$ Gaussian clusters, each containing $n = 100 \times 2^{10}$ points, for a total of $512,000$ data points. 
Each cluster was sampled from a distinct Gaussian distribution with a randomly selected mean in the range $[-10, 10]^2$ and a randomly generated positive-definite covariance matrix to ensure diverse cluster shapes.

We implemented the baseline \texttt{MP} by sampling $k$ points uniformly and pruning candidates based on a distance threshold that doubles each round, across $c\log_2 n$ rounds, where $c$ is a hyperparameter. 
We implemented our algorithm \texttt{AS} by iteratively sampling one center per round across $ck$ rounds, selecting points with probability proportional to their distance from existing centers. 
\texttt{EAS} then modifies \texttt{AS} by rounding each selected center’s coordinates to the nearest power of $q = 2^{0.25}$, simulating low-precision communication.

\begin{figure}[!htb]
    \centering
    \begin{subfigure}[b]{0.32\textwidth}
        \includegraphics[scale=0.3]{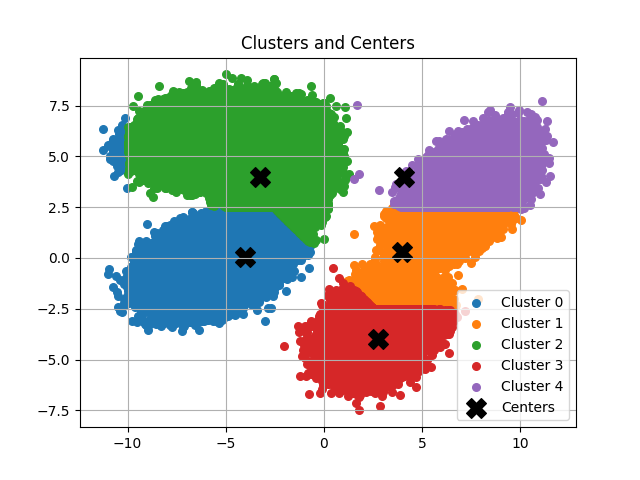}
        \caption{}
        \figlab{fig:synth:clusters}
    \end{subfigure}
    \begin{subfigure}[b]{0.32\textwidth}
        \includegraphics[scale=0.3]{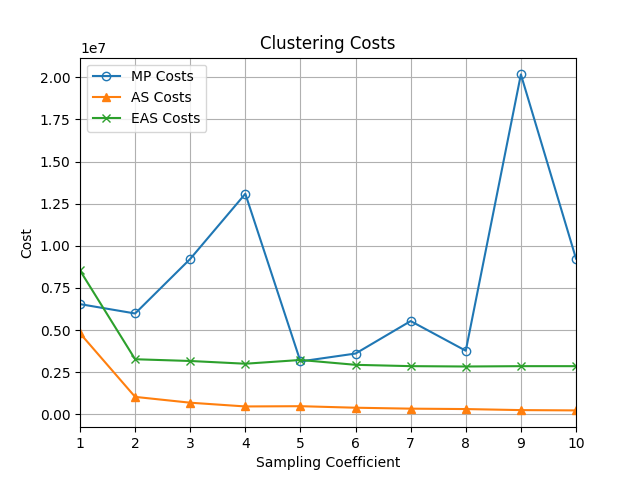}
        \caption{}
        \figlab{fig:synth:costs}
    \end{subfigure}
    \begin{subfigure}[b]{0.32\textwidth}
        \includegraphics[scale=0.3]{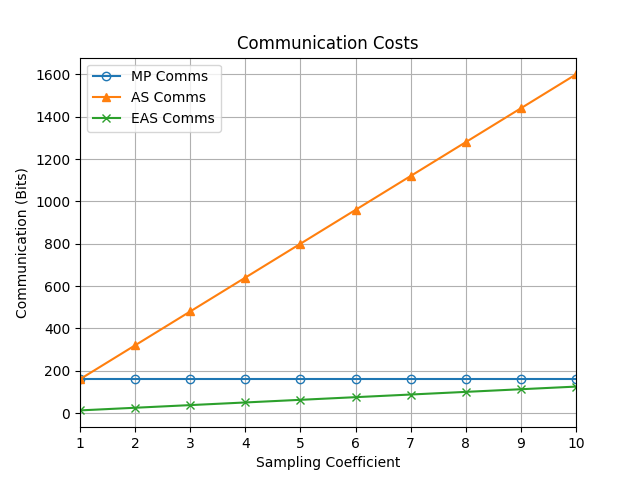}
        \caption{}
        \figlab{fig:synth:comms}
    \end{subfigure}
    \caption{Experiments for clustering costs and communication costs on synthetic dataset}
    \figlab{fig:synth}
\end{figure}

We varied the sampling coefficient $c \in \{11, \dots, 20\}$ and measured the $k$-means clustering cost (i.e., the total squared distance from each point to its assigned center). 
The results, shown in \figref{fig:synth}, highlight the performance trade-offs between sampling strategies as $c$ increases. 
Communication costs were also computed (though not plotted), assuming 32 bits per coordinate for \texttt{MP} and \texttt{AS}, and 5 bits per coordinate for \texttt{EAS} due to quantization.

Our results for synthetic data echo the trends for the DIGITS dataset. 
Namely, for all $c>2$, our algorithm in \texttt{EAS} clearly outperforms \texttt{MP} for both clustering cost and communication cost, c.f., \figref{fig:synth:comms}, while also demonstrating clear tradeoffs in the clustering cost and the communication cost compared to our algorithm \texttt{AS}. 
We plot the resulting clustering by \texttt{EAS} in \figref{fig:synth:clusters}.

\section*{Acknowledgments}
David P. Woodruff is supported in part Office of Naval Research award number N000142112647, and a Simons Investigator Award. 
Samson Zhou is supported in part by NSF CCF-2335411. 
Samson Zhou gratefully acknowledges funding provided by the Oak Ridge Associated Universities (ORAU) Ralph E. Powe Junior Faculty Enhancement Award.

\def\shortbib{0}
\bibliographystyle{alpha}
\bibliography{references}


\appendix

\section{Adaptive Sampling for \texorpdfstring{$(k,z)$}{(k,z)}-Clustering}
We first give a high-level overview of why $\AdaptiveSampling$ returns a bicriteria approximation. 
Let $\{A_j\}_{j=1}^k$ be the clusters that correspond to an optimal $(k,z)$-clustering. 
Then we can achieve an $\O{1}$-approximation to the optimal cost if the cost for every cluster $A_j$ induced by $S$ is already an $\O{1}$-approximation for $\Cost(A_j, C_\OPT)$, where $C_\OPT$ is the set of centers for the optimal solution. 
We define a cluster $A_j$ as a good cluster if its cost is an $\O{1}$-approximation to $\Cost (A_j, C_\OPT)$, and otherwise we call $A_j$ a bad cluster. 
We can show that with constant probability, $\AdaptiveSampling$ samples a center that can transform a bad cluster to a good one. 
Then, by Markov's inequality, it follows that we can eliminate all bad clusters in $\O{k}$ rounds of sampling, resulting in a constant-factor approximation. 

We use the following definition of good and bad clusters.
\begin{definition}
\deflab{def:good:bad:clusters}
Let $S_i$ be the sampled set $S$ in the $i$-th round in \algref{alg:adaptive:(k, z):sampling} and let $C_{\OPT} = \{ c_1, c_2, \cdots, c_k \}$ be an optimal solution for $(k, z)$-clustering.

For $j\in[k]$, let $A_j$ be the points in $X$ assigned to $c_j$ in the optimal clustering, breaking ties arbitrarily, i.e., $A_j = \{ x \in X: \dist (x, c_j) \leq \dist (x,  c_\ell), \forall\ell \in [k] \}$. 
We define 
\begin{align*}
    \textbf{Good}_i &= \{ A_j : \Cost(A_j, S_i) \leq \gamma_z \cdot\Cost(A_j, C_{\OPT})  \} \\
    \textbf{Bad}_i &= \{ A_1, A_2, \cdots, A_k \} \setminus \textbf{Good}_i
\end{align*}
where $\gamma_z = 2 + (3+6z)^z$.
\end{definition}

By definition, if every cluster is a good cluster, we can guarantee that $S$ is  an $(\O{1}, \O{1})$-bicriteria approximation. 
We will prove that either $S$ is already an $(\O{1}, \O{1})$-bicriteria approximation, or else we will reduce the number of bad clusters after every sampling with some constant probability, which would imply that we can achieve an $\O{1}$-approximation after sampling $\O{k}$ points. 

\begin{restatable}{lemma}{thmsamplebadclusters}
    \lemlab{lem:sample:bad:clusters}
    Suppose $\Cost (X, S_i) > 2 \gamma^2 \cdot \gamma_z \Cost (X, C_{OPT})$, then
    \begin{equation*}
        \PPr{|\textbf{Bad}_{i+1}| < |\textbf{Bad}_i|} \ge \delta
    \end{equation*}
    for some constant $\delta > 0$.
\end{restatable}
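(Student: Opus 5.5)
The plan is the standard Aggarwal–Deshpande–Kannan argument, adapted to general $z$ and to $\gamma$-approximate sampling weights. There are two steps: show the next sample lands in a bad cluster with constant probability, then show that, conditioned on this, it lands near that cluster's optimal center with constant probability.

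\textbf{Step 1: the sample hits a bad cluster.} Good clusters satisfy $\Cost(A_j,S_i)\le\gamma_z\Cost(A_j,C_{\OPT})$, so
\[\sum_{A_j\in\textbf{Good}_i}\Cost(A_j,S_i)\le\gamma_z\Cost(X,C_{\OPT})<\frac{1}{2\gamma^2}\Cost(X,S_i).\]
Hence bad clusters carry at least a $1-\frac{1}{2\gamma^2}\ge\frac12$ fraction of the cost. The sampling probabilities $d_x/\sum d_y$ are within a $\gamma^2$ factor of $\Cost(x,S_i)/\Cost(X,S_i)$. Therefore the sampled point $s$ lies in some bad cluster with probability at least $\frac{1}{2\gamma^2}$.

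\textbf{Step 2: within a fixed bad cluster, the sample is close to $c_j$ with constant probability.} Fix a bad $A=A_j$ with center $c=c_j$. Let $R^z=\Cost(A,C_{\OPT})/|A|$ be the average optimal cost, and call $x\in A$ close if $\dist(x,c)^z\le\beta R^z$ for a constant $\beta$. By Markov, at least $(1-1/\beta)|A|$ points are close. If the sample $s$ is close, then by \factref{fact:gen:tri}, for every $y\in A$,
\[\dist(y,s)^z\le(1+\eps)\dist(y,c)^z+(1+2z/\eps)^z\beta R^z.\]
Summing over $A$ (taking $\eps$ and $\beta$ constant) gives $\Cost(A,S_i\cup\{s\})\le\gamma_z\Cost(A,C_{\OPT})$. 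The constant $\gamma_z=2+(3+6z)^z$ is tuned so that this works with, say, $\eps=1$ and $\beta$ a small constant such as $\beta=1$ with slack. Thus $A$ becomes good and $|\textbf{Bad}_{i+1}|<|\textbf{Bad}_i|$, since adding centers never turns a good cluster bad.

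The main obstacle is showing that the close points carry a constant fraction of $\Cost(A,S_i)$ under the sampling distribution. Let $t=\dist(c,S_i)$. For close $x$, the triangle inequality gives $\dist(x,S_i)\ge t-\beta^{1/z}R$. For any $y\in A$, \factref{fact:triangle} gives $\dist(y,S_i)^z\le2^{z-1}(\dist(y,c)^z+t^z)$, so
\[\Cost(A,S_i)\le2^{z-1}(|A|R^z+|A|t^z).\]
Badness means $\Cost(A,S_i)>\gamma_z|A|R^z$, which forces $t^z\ge(\gamma_z/2^{z-1}-1)R^z$, i.e.\ $t$ is a large multiple of $R$. Then each close point has $\dist(x,S_i)\ge t/2$, say, so
\[\Cost(\text{close},S_i)\ge(1-1/\beta)|A|(t/2)^z,\]
which is a constant fraction, depending only on $z$, of $2^{z-1}|A|(R^z+t^z)\ge\Cost(A,S_i)$. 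The delicate part is checking that the constants in $\gamma_z$ make $t\gg R$ hold simultaneously with the Step 2 bound. I would try $\beta=2$, use \factref{fact:gen:tri} with $\eps=1$ so that its second coefficient is $(1+2z)^z$, and verify the slack against $(3+6z)^z$.

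Combining the steps, conditioned on landing in a bad cluster the sample is close with probability at least a constant $c_z\gamma^{-2}$. Summing over bad clusters, $\delta\ge\frac{1}{2\gamma^2}\cdot\frac{c_z}{\gamma^2}>0$.
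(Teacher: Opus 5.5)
Your proposal is correct and follows essentially the same Aggarwal--Deshpande--Kannan route as the paper. You show the sample lands in a bad cluster with constant probability; the threshold $2\gamma^2\gamma_z$ in fact gives probability at least $\frac12$ directly, not just $\frac{1}{2\gamma^2}$. You then use Markov to show the sample lands within $O(r)$ of that cluster's optimal center with constant conditional probability, and $(a+b)^z\le 2a^z+(1+2z)^z b^z$ makes the cluster good whenever $\beta\le 3^z$.

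The remaining differences are bookkeeping. You bound $\Cost(A,S_i)$ with the $2^{z-1}$ triangle inequality and use $t\ge 2\beta^{1/z}R$ to get $\dist(x,S_i)\ge t/2$, whereas the paper uses a monotonicity argument in $\dist(y,\mu)$. One small correction: $\beta=1$ makes the Markov bound vacuous, so your eventual choice $\beta=2$ is the one to use.
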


\lemref{lem:sample:bad:clusters} is the generalization of Lemma 5 in \cite{aggarwal2009adaptive}. 
We will prove \lemref{lem:sample:bad:clusters} by breaking the proof into several lemmas. 
We first consider a specific parameterization of the generalized triangle inequality.
\begin{lemma}
    \lemlab{lem:tri:(k, z):clustering}
    For any $x, y, \mu \in \mathbb{R}^d$,    
    \[
        \dist(x, y)^z \leq 2 \cdot\dist(x, \mu)^z + (1 + 2z)^z \cdot \dist(y, \mu)^z.
    \]
\end{lemma}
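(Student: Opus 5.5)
Apply \factref{fact:gen:tri} with $\eps = 1$, after first using the ordinary triangle inequality. Since $\dist(\cdot,\cdot)$ is the Euclidean metric,
\[
\dist(x,y) \le \dist(x,\mu) + \dist(\mu,y).
\]
Because $t\mapsto t^z$ is nondecreasing on $[0,\infty)$ for $z\ge 1$, raising both sides to the $z$-th power gives
\[
\dist(x,y)^z \le \left(\dist(x,\mu)+\dist(y,\mu)\right)^z.
\]
This reduces the claim to a scalar inequality between the nonnegative reals $a:=\dist(x,\mu)$ and $b:=\dist(y,\mu)$.

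Now invoke \factref{fact:gen:tri} (Claim 5 of \cite{SohlerW18}) with $x\gets a$, $y\gets b$ and the permitted choice $\eps=1\in(0,1]$. It gives
\[
(a+b)^z \le (1+1)\cdot a^z + \left(1+\tfrac{2z}{1}\right)^z b^z = 2\,\dist(x,\mu)^z + (1+2z)^z\,\dist(y,\mu)^z.
\]
Chaining this with the previous display yields exactly the stated bound.

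There is essentially no obstacle. One point to watch is that the roles of the two summands must be assigned correctly: the factor $2$ should multiply $\dist(x,\mu)^z$ and the factor $(1+2z)^z$ should multiply $\dist(y,\mu)^z$. This is arranged by taking the first argument of \factref{fact:gen:tri} to be $\dist(x,\mu)$. The other point is that \factref{fact:gen:tri} requires $z\ge 1$, which holds throughout the paper by assumption.

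If one prefered not to rely on the cited claim, the scalar inequality could be checked directly. For $b\le a/(2z)$, the mean value theorem bounds $(a+b)^z$ by $a^z+zb(a+b)^{z-1}\le 2a^z$. For $b> a/(2z)$, we have $a+b<(1+2z)b$, so $(a+b)^z<(1+2z)^z b^z$. In both cases the right-hand side is at most $2a^z+(1+2z)^z b^z$. Since the fact is stated earlier in the paper, the one-line application above suffices.
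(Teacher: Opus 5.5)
Your proof is correct and is exactly the paper's argument: the ordinary triangle inequality, then monotonicity of $t\mapsto t^z$, then \factref{fact:gen:tri} with $\eps=1$. Your optional direct check of the scalar inequality is also sound (the mean-value case only needs $(1+\tfrac{1}{2z})^{z-1}\le e^{1/2}<2$, a step you left implicit), but the one-line application of the cited fact suffices.
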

\begin{proof}
The claim follows from applying \factref{fact:gen:tri} with $\eps = 1$, so that by the triangle inequality, $\dist(x, y)^z \leq (\dist(x, \mu) + \dist(y, \mu ))^z \leq 2 \cdot\dist(x, \mu)^z + (1 + 2z)^z \cdot\dist(y, \mu)^z$.
\end{proof}

We first show that we will sample a point from a bad cluster with constant probability every round.
This statement is analogous to Lemma 1 in \cite{aggarwal2009adaptive}. 
\begin{lemma}
    \lemlab{lem:sample:bad:clusters:1}
    In the $i$-th round of our algorithm, either $\Cost (X, S_i) \leq 2 \gamma^2 \cdot \gamma_z \Cost (X, C_{OPT})$ or else the probability of picking a point from some cluster in $\textbf{Bad}_i$ is at least $\frac{1}{2}$.
\end{lemma}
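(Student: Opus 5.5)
The plan is to compute the probability of sampling from a bad cluster directly from the sampling rule in \algref{alg:adaptive:(k, z):sampling}, and to compare the cost carried by good clusters with the total cost. In round $i$ a point $x$ is chosen with probability $d_x/\sum_y d_y$, where $\frac{1}{\gamma}d_x\le\Cost(x,S_i)\le\gamma d_x$. Hence the probability that the chosen point lies in $\bigcup_{A_j\in\textbf{Good}_i}A_j$ is
\[
\frac{\sum_{A_j\in\textbf{Good}_i}\sum_{x\in A_j}d_x}{\sum_{x\in X}d_x}\le\frac{\gamma\sum_{A_j\in\textbf{Good}_i}\Cost(A_j,S_i)}{\frac{1}{\gamma}\Cost(X,S_i)}=\gamma^2\cdot\frac{\Cost(\textbf{Good}_i,S_i)}{\Cost(X,S_i)},
\]
where $\Cost(\textbf{Good}_i,S_i)$ denotes the total cost of the points in good clusters.

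Next we bound the numerator. By \defref{def:good:bad:clusters}, each good cluster satisfies $\Cost(A_j,S_i)\le\gamma_z\Cost(A_j,C_{\OPT})$. The optimal clusters partition $X$, so summing over good clusters gives
\[
\Cost(\textbf{Good}_i,S_i)\le\gamma_z\Cost(X,C_{\OPT}).
\]

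Now we split into the two cases of the statement. Suppose the first alternative fails, i.e., $\Cost(X,S_i)>2\gamma^2\gamma_z\Cost(X,C_{\OPT})$. Then the probability of sampling from a good cluster is less than
\[
\gamma^2\cdot\frac{\gamma_z\Cost(X,C_{\OPT})}{2\gamma^2\gamma_z\Cost(X,C_{\OPT})}=\frac{1}{2}.
\]
The complementary probability, that the sampled point lies in some cluster of $\textbf{Bad}_i$, is therefore at least $\frac12$.

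There is no substantive obstacle; the argument is a single averaging step. The one point needing care is the direction of the $\gamma$-approximation: $d_x$ overestimates $\Cost(x,S_i)$ by at most a factor $\gamma$ in the numerator, and the denominator $\sum_x d_x$ is at least $\frac1\gamma\Cost(X,S_i)$. Together these produce exactly the $\gamma^2$ factor that the threshold $2\gamma^2\gamma_z$ in the statement is designed to absorb.

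We also dispose of the degenerate case $\Cost(X,C_{\OPT})=0$. Then the hypothesis forces $\Cost(X,S_i)>0$. Every good cluster has zero cost under $S_i$, so all of the sampling mass lies on bad clusters, and the bound holds trivially.
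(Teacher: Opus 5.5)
Your proposal is correct and follows essentially the same argument as the paper. Both write the bad-cluster probability as one minus the good-cluster mass, bound that mass by $\gamma^2\gamma_z\Cost(X,C_{\OPT})/\Cost(X,S_i)$ using the two-sided $\gamma$-approximation of $d_x$ and the definition of good clusters, and conclude from the threshold $2\gamma^2\gamma_z$; your consistent use of $S_i$ (the paper's proof mixes $S_{i-1}$ and $S_i$) and your separate handling of $\Cost(X,C_{\OPT})=0$ are minor gains in rigor, not a different route.
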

\begin{proof}
    Suppose $\Cost (X, S_{i-1}) > 2 \gamma^2 \cdot \gamma_z \Cost (X, C_{OPT})$. 
    Then the probability of picking $x$ from some bad cluster is
    \begin{equation*}
         \PPr{x \in \textbf{Bad}_i} = \frac{\sum_{A_j \in \textbf{Bad}_i} \sum_{x_l \in A_j} d_l }{\sum_{x_l \in X} d_l} = 1 - \frac{\sum_{A_j \in \textbf{Good}_i} \sum_{x_l \in A_j} d_l }{\sum_{x_l \in X} d_l}.
    \end{equation*}
    Here, we remark that $d_l$ is the distance from $x_l$ to $S_{i-1}$ as defined in \algref{alg:adaptive:(k, z):sampling}. 
    Since $\Cost (A_j, S_{i-1}) \le \gamma_z \Cost (A_j, C_{\OPT})$ if $A_j \in \textbf{Good}_i$, and $\Cost (X, S_{i-1}) > 2 \gamma^2 \cdot \gamma_z \Cost (X, C_{OPT})$ by the condition we hold,
    
    \[ \PPr{x \in \textbf{Bad}_i} \ge 1 - \frac{\gamma^2 \cdot \gamma_z \sum_{A_j \in \textbf{Good}_i} \Cost (A_j, C_{\OPT})}{2 \gamma^2 \cdot \gamma_z \Cost (X, C_{\OPT})} \ge 1 - \frac{1}{2} = \frac{1}{2}. \]

\end{proof}

Consider a fixed bad cluster $A \in \textbf{Bad}_i$.
Let $m = w(A) = \sum_{x \in A} w(x)$, where $w(x)$ is the weight of $x$, and define $r = \left ( \frac{\Cost (A, C_{\OPT})}{m} \right )^{1/z}$. 
Denote $\mu$ as the center in $C_{\OPT}$ corresponding to $A$. 
Let $y$ be the point closest to $\mu$ in $S_{i-1}$.
We will show that $y$ (and thus every sampled point in $S_{i-1}$) is far from $\mu$. 
\begin{lemma}
    \lemlab{lem:sample:bad:clusters:2}
    $\dist (y, \mu)\ge 3r$.
\end{lemma}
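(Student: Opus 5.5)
Argue by contradiction, using only that $A$ is bad. Suppose $\dist(y,\mu)<3r$. For every $x\in A$ we have $\dist(x,S_{i-1})\le \dist(x,y)$. Apply \lemref{lem:tri:(k, z):clustering} with the pair $(x,y)$ and intermediate point $\mu$:
\[
\dist(x,y)^z \le 2\dist(x,\mu)^z + (1+2z)^z \dist(y,\mu)^z .
\]
Multiply by $w(x)$ and sum over $x\in A$. Since $\mu$ is the optimal center of $A$ (every point of $A$ is assigned to $\mu$), the first term sums to $2\Cost(A,C_{\OPT})$. The second term sums to $m(1+2z)^z\dist(y,\mu)^z < m(1+2z)^z(3r)^z$.

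By the definition of $r$, $m r^z=\Cost(A,C_{\OPT})$, so the second sum is less than $(3+6z)^z\Cost(A,C_{\OPT})$. Hence
\[
\Cost(A,S_{i-1})\le \Cost(A,\{y\}) < \bigl(2+(3+6z)^z\bigr)\Cost(A,C_{\OPT}) = \gamma_z\Cost(A,C_{\OPT}),
\]
so $A$ would be good. This contradicts $A\in\textbf{Bad}_i$, and therefore $\dist(y,\mu)\ge 3r$.

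The constant $\gamma_z=2+(3+6z)^z$ in \defref{def:good:bad:clusters} was evidently chosen so that this calculation closes exactly. The only points needing care are bookkeeping. Costs here are weighted, so $m=w(A)$ plays the role of $|A|$. The indexing of good/bad clusters must match: \textbf{Bad}$_i$ refers to costs measured against the set $S_{i-1}$ that contains $y$. Also, $y$ is the closest point to $\mu$ in $S_{i-1}$, so any sampled point being within $3r$ of $\mu$ implies $y$ is too; this gives the parenthetical claim that every point of $S_{i-1}$ is far from $\mu$. I expect no step to be a genuine obstacle; the main care is keeping the nonstrict and strict inequalities consistent when passing to the contradiction.
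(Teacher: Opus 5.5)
Your proposal is correct and is essentially the paper's argument. The paper runs the same computation directly: it chains $\gamma_z\Cost(A,C_{\OPT})\le\Cost(A,S_{i-1})\le 2\Cost(A,C_{\OPT})+(1+2z)^z m\,\dist(y,\mu)^z$ and solves for $\dist(y,\mu)\ge 3r$ using $\gamma_z-2=3^z(1+2z)^z$, whereas you state it in contrapositive form; your handling of the indexing, with badness measured against $S_{i-1}$, matches how the paper's proof uses $\textbf{Bad}_i$.
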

\begin{proof}
    Because $A$ is a fixed bad cluster,
    \[ \gamma_z \Cost (A, C_{\OPT}) \leq \Cost (A, S_{i-1}) = \sum_{x \in A} w(x) \min_{c \in S_{i-1}} \dist (x, c)^z \leq \sum_{x \in A} w(x) \dist (x, y)^z. \]
    By \lemref{lem:tri:(k, z):clustering},
    \begin{align*}
        \sum_{x \in A} w(x) \dist (x, y)^z &\leq \sum_{x \in A} \left ( 2 \cdot w(x) \cdot \dist (x, \mu)^z + (1+2z)^z \cdot w(x) \cdot \dist (y, \mu)^z \right ) \\
		&= 2 \Cost (A, C_{\OPT}) + (1+2z)^z \cdot m \cdot \dist (y, \mu)^z.
    \end{align*}
    Hence
    \begin{equation*}
        (1+2z)^z \cdot m \cdot \dist (y, \mu)^z \ge (\gamma_z - 2) \Cost (A, C_{\OPT}).
    \end{equation*}
	Since $\gamma_z = 3^z (1+2z)^z + 2$,
    \[ 3r = \left ( \frac{\gamma_z - 2}{(1+2z)^z} \cdot \frac{\Cost (A, C_{\OPT})}{m} \right )^{1/z} \le \dist (y, \mu). \]
\end{proof}

Define $B(\alpha) = \{ x \in A: \dist (x, \mu) \le \alpha r \}$ as the points in the fixed bad cluster $A$ that are within distance $\alpha r$ from $\mu$. 
We will show that the bad cluster $A$ can be transformed into a good cluster if we sample some point $x$ close to the center $\mu$. 
This statement is analogous to Lemma 2 in \cite{aggarwal2009adaptive}. 
\begin{lemma}
    \lemlab{lem:sample:bad:clusters:3}
    Let $A$ be any fixed bad cluster defined by $C_{\OPT}$ and let $b \in B(\alpha)$, for $0 \leq \alpha \leq 3$. 
    Then
	\begin{equation*}
		\Cost (A, S_{i-1} \cup \{ b \}) \leq \gamma_z \cdot \Cost (A, C_{\OPT}).
	\end{equation*}
\end{lemma}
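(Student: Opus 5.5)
Since adding $b$ to $S_{i-1}$ can only decrease every point's distance to the center set, it suffices to bound $\Cost(A,\{b\})$. The plan is to compare $b$ to the optimal center $\mu$ of $A$ via \lemref{lem:tri:(k, z):clustering} and then use that $b$ is close to $\mu$ relative to the average radius $r$.

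\emph{Step 1.} For every $x\in A$, apply \lemref{lem:tri:(k, z):clustering} with the intermediate point $\mu$:
\[
\dist(x,b)^z \le 2\,\dist(x,\mu)^z + (1+2z)^z\,\dist(b,\mu)^z .
\]
Multiply by $w(x)$ and sum over $x\in A$. Recall that $\sum_{x\in A} w(x)\dist(x,\mu)^z=\Cost(A,C_{\OPT})$, because every point of $A$ is assigned to $\mu$ in the optimal clustering, and that $m=w(A)$. This gives
\[
\Cost(A,S_{i-1}\cup\{b\}) \le \Cost(A,\{b\}) \le 2\Cost(A,C_{\OPT}) + (1+2z)^z\, m\, \dist(b,\mu)^z .
\]

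\emph{Step 2.} Since $b\in B(\alpha)$ with $\alpha\le 3$, we have $\dist(b,\mu)\le \alpha r\le 3r$. Hence
\[
m\,\dist(b,\mu)^z \le 3^z m r^z = 3^z\Cost(A,C_{\OPT}),
\]
using the definition $r^z=\Cost(A,C_{\OPT})/m$. Substituting,
\[
\Cost(A,S_{i-1}\cup\{b\}) \le \bigl(2+3^z(1+2z)^z\bigr)\Cost(A,C_{\OPT}) = \bigl(2+(3+6z)^z\bigr)\Cost(A,C_{\OPT}) = \gamma_z\Cost(A,C_{\OPT}),
\]
which matches $\gamma_z$ from \defref{def:good:bad:clusters} exactly.

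There is no real obstacle. The only care needed is to check that the weighted definitions of $m$ and $r$ are used consistently, and that the constant from \lemref{lem:tri:(k, z):clustering}, namely $(1+2z)^z$, combines with $3^z$ to give precisely $(3+6z)^z$. Note that the hypothesis that $A$ is bad is not needed for this inequality; it matters only in the surrounding lemmas.
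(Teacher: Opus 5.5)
Your proof is correct and matches the paper's argument: bound $\Cost(A, S_{i-1}\cup\{b\})$ by $\sum_{x\in A} w(x)\dist(x,b)^z$, apply the generalized triangle inequality through $\mu$, and use $\dist(b,\mu)\le 3r$ with $m r^z=\Cost(A,C_{\OPT})$ to get $2+3^z(1+2z)^z=\gamma_z$. Your remark that the badness of $A$ is never used is also accurate, since the paper's proof does not use it either.
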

\begin{proof}
    \[ \Cost (A, S_{i-1} \cup \{ b \}) = \sum_{x \in A} w(x) \min_{c \in S_{i-1} \cup \{ b \} } \dist (x, c)^z \leq \sum_{x \in A} w(x) \cdot \dist (x, b)^z. \]
    By \lemref{lem:tri:(k, z):clustering},
    \begin{equation*}
        \sum_{x \in A} w(x) \cdot \dist (x, b)^z \leq \sum_{x \in A} \left ( 2 \cdot w(x) \cdot \dist (x, \mu)^z + (1+2z)^z \cdot w(x) \cdot \dist (b, \mu)^z \right ).
    \end{equation*}
    Since $b \in B(\alpha)$ and $0 \le \alpha \le 3$, $\dist (b, \mu) \le \alpha r \le 3r$. Hence
    \begin{align*}
        \Cost (A, S_{i-1} \cup \{ b \}) &\leq 2 \Cost (A, C_{\OPT}) + m (1+2z)^z (3 r)^z \\
        &= 2 \Cost (A, C_{\OPT}) + 3^z (1+2z)^z \Cost (A, C_{\OPT}) \\
		&= \gamma_z \Cost (A, C_{\OPT}).
    \end{align*}
\end{proof}

We next show that most of the weights of the points of the fixed bad cluster $A$ fall into $B(\alpha)$. 
Recall that we define $m=w(A) = \sum_{x \in A} w(x)$. Let $w(B(\alpha)) = \sum_{x \in B(\alpha)} w(x)$. 
This statement is analogous to Lemma 3 in \cite{aggarwal2009adaptive}. 
\begin{lemma}
    \lemlab{lem:sample:bad:clusters:4}
    \begin{equation*}
		w(B(\alpha)) \geq m \bigg ( 1 - \frac{1}{\alpha^z} \bigg ), \text{ for } 1 \leq \alpha \leq 3.
	\end{equation*}
\end{lemma}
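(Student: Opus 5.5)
This is a Markov-type argument applied to the weighted cost of the bad cluster $A$ around its optimal center $\mu$. By definition of $r$ we have
\[
\sum_{x\in A} w(x)\,\dist(x,\mu)^z = \Cost(A,C_{\OPT}) = m r^z .
\]
(Here we use that every $x\in A$ is assigned to $\mu$ in the optimal clustering, so $\dist(x,C_{\OPT})=\dist(x,\mu)$.) The complement $A\setminus B(\alpha)$ consists of points with $\dist(x,\mu)>\alpha r$. Each such point contributes more than $w(x)\,\alpha^z r^z$ to the sum above, and all terms are nonnegative.

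Restricting the sum to the complement therefore gives
\[
\alpha^z r^z \, w(A\setminus B(\alpha)) \le \sum_{x\in A\setminus B(\alpha)} w(x)\dist(x,\mu)^z \le m r^z .
\]
When $r>0$ we divide by $\alpha^z r^z$, which is positive because $\alpha\ge 1$, and obtain $w(A\setminus B(\alpha))\le m/\alpha^z$. Then $w(B(\alpha)) = m - w(A\setminus B(\alpha)) \ge m(1-\alpha^{-z})$.

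The only degenerate case is $r=0$, i.e.\ $\Cost(A,C_{\OPT})=0$. Then every point of $A$ with positive weight lies exactly at $\mu$, so it belongs to $B(\alpha)$ and $w(B(\alpha))=m$, which satisfies the bound. Note that the range $1\le\alpha\le 3$ is used only to make $\alpha^z$ positive and the bound nontrivial; the upper limit $3$ is not needed here and matters only for combining with \lemref{lem:sample:bad:clusters:3}. Nothing in this argument is a real obstacle; the only care needed is to state the weighted Markov step for $\dist(x,\mu)^z$ correctly and to handle $r=0$.
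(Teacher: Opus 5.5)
Your proposal is correct and matches the paper's proof step for step. Both restrict $\Cost(A,C_{\OPT})=mr^z$ to $A\setminus B(\alpha)$, bound each remaining term below by $w(x)(\alpha r)^z$, and rearrange. Your separate handling of $r=0$ is a small improvement, since the paper's last step implicitly divides by $\Cost(A,C_{\OPT})$, and that quantity can be zero even for a bad cluster.
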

\begin{proof}
    \[ \Cost (A, C_{\OPT}) \geq \Cost (A \backslash B(\alpha), C_{\OPT}) = \sum_{x \in A \backslash B(\alpha)} w(x) \min_{c \in C_\OPT } \dist (x, c)^z. \]
    Since for any $x \in A$, $\mu$ is the nearest center to $x$ in $C_\OPT$, then
    \begin{equation*}
        \sum_{x \in A \backslash B(\alpha)} w(x) \min_{c \in C_\OPT } \dist (x, c)^z = \sum_{x \in A \backslash B(\alpha)} w(x) \dist (x, \mu)^z.
    \end{equation*}
    Since for any $x \in A \backslash B(\alpha)$, $\dist (x, \mu) \ge \alpha r$, then
    \[ \sum_{x \in A \backslash B(\alpha)} w(x) \dist (x, \mu)^z \geq w(A \backslash B(\alpha)) \cdot (\alpha r)^z = \bigg ( 1 - \frac{w(B(\alpha))}{m} \bigg )m(\alpha r)^z. \]
    Since $r = \left ( \frac{\Cost (A, C_{\OPT})}{m} \right )^{1/z}$,
    \[ \Cost (A, C_{\OPT}) \ge \bigg ( 1 - \frac{w(B(\alpha))}{m} \bigg )m(\alpha r)^z = \bigg ( 1 - \frac{w(B(\alpha))}{m} \bigg ) \alpha^z \Cost (A, C_{\OPT}). \]
	Therefore,
	\begin{equation*}
		w(B(\alpha)) \geq m \left ( 1 - \frac{1}{\alpha^z} \right ).
	\end{equation*}
\end{proof}

We next show that the cost of points in $B(\alpha)$ is at least a constant fraction of the cost of the bad cluster $A$ and thus we will sample a point near $\mu$ with constant probability if we sample a point from $A$.
This statement is analogous to Lemma 4 in \cite{aggarwal2009adaptive}. 
\begin{lemma}
    \lemlab{lem:sample:bad:clusters:5}
    \begin{equation*}
		\PPr{x \in B(\alpha) | x \in A \text{ and } A \in \textbf{Bad}_i} \geq \frac{1}{\gamma^2} \bigg ( 1 - \frac{1}{\alpha^z} \bigg ) \frac{(3-\alpha)^z}{\gamma_z}.
	\end{equation*}
\end{lemma}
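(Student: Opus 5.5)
The plan is to write the conditional probability as a ratio of sampling weights and bound the numerator and denominator separately. Recall that in round $i$ the algorithm samples $x_l$ with probability proportional to $d_l$, where $\frac{1}{\gamma}d_l \le \Cost(x_l,S_{i-1}) \le \gamma d_l$. For a fixed bad cluster $A$,
\[
\PPr{x\in B(\alpha)\mid x\in A} = \frac{\sum_{x_l\in B(\alpha)} d_l}{\sum_{x_l\in A} d_l}.
\]
Replacing $d_l$ by the true costs loses a factor $\gamma$ in the numerator and a factor $\gamma$ in the denominator. This gives
\[
\PPr{x\in B(\alpha)\mid x\in A} \ge \frac{1}{\gamma^2}\cdot\frac{\Cost(B(\alpha),S_{i-1})}{\Cost(A,S_{i-1})},
\]
where costs are weighted by $w$. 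It then suffices to lower bound this ratio by $(1-\alpha^{-z})(3-\alpha)^z/\gamma_z$.

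For the numerator, I will use \lemref{lem:sample:bad:clusters:2}, which says every point of $S_{i-1}$ is at distance at least $3r$ from $\mu$. Here I use that $y$ is the point of $S_{i-1}$ closest to $\mu$, so $\dist(c,\mu)\ge\dist(y,\mu)\ge 3r$ for all $c\in S_{i-1}$. For $x\in B(\alpha)$ the triangle inequality then gives
\[
\dist(x,c)\ge \dist(c,\mu)-\dist(x,\mu)\ge 3r-\alpha r=(3-\alpha)r
\]
for every $c\in S_{i-1}$. Hence $\Cost(x,S_{i-1})\ge (3-\alpha)^z r^z$. Summing with weights and applying \lemref{lem:sample:bad:clusters:4}, I get
\[
\Cost(B(\alpha),S_{i-1})\ge w(B(\alpha))(3-\alpha)^z r^z \ge m\left(1-\tfrac{1}{\alpha^z}\right)(3-\alpha)^z r^z = \left(1-\tfrac{1}{\alpha^z}\right)(3-\alpha)^z\Cost(A,C_{\OPT}),
\]
using $mr^z=\Cost(A,C_{\OPT})$.

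For the denominator, I need an upper bound on $\Cost(A,S_{i-1})$ of the form $\gamma_z\Cost(A,C_{\OPT})$. This is the main obstacle, because $A$ being bad means exactly the opposite inequality. The fix is to take the conditioning event as "$x\in A$ and $A\in\textbf{Bad}_i$" in the sense used by \cite{aggarwal2009adaptive}. There, the good/bad status relevant to the round-$i$ sample is determined with respect to the set after sampling, so the ratio is compared against the cost attained once a point of $B(\alpha)$ is added (\lemref{lem:sample:bad:clusters:3}).

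Concretely, I would first try to prove the bound with denominator $\Cost(A,S_{i-1})$ by a cruder route. Apply \lemref{lem:tri:(k, z):clustering} with $y$ to get $\Cost(A,S_{i-1})\le 2\Cost(A,C_{\OPT})+(1+2z)^z m\,\dist(y,\mu)^z$. Then compare this with the numerator using the stronger estimate $\dist(x,c)\ge \dist(y,\mu)-\alpha r$ in place of $(3-\alpha)r$. The ratio is then at least $(1-\alpha^{-z})\frac{(D-\alpha r)^z}{2r^z+(1+2z)^zD^z}$, where $D=\dist(y,\mu)\ge 3r$. This expression is increasing in $D$, so it is minimized at $D=3r$. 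At $D=3r$ it becomes $(1-\alpha^{-z})(3-\alpha)^z/(2+3^z(1+2z)^z)=(1-\alpha^{-z})(3-\alpha)^z/\gamma_z$, which is exactly the claimed constant.

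So the final proof uses the $D$-parametrized bound. I expect the step requiring care to be checking the monotonicity in $D$. The ratio behaves like $(D-\alpha r)^z/(a+bD^z)$, and both terms push it upward as $D$ grows: $(1-\alpha r/D)^z$ increases with $D$, and $D^z/(a+bD^z)$ increases with $D$ for positive $a,b$.
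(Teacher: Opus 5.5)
Your final argument matches the paper's proof exactly and is correct: you bound $\Cost(A,S_{i-1})\le 2mr^z+(1+2z)^z m D^z$ through the nearest sampled point $y$ (with $D=\dist(y,\mu)$), bound $\Cost(B(\alpha),S_{i-1})\ge m(1-\alpha^{-z})(D-\alpha r)^z$, and minimize the resulting ratio over $D\ge 3r$; your factorization $(1-\alpha r/D)^z\cdot D^z/(2r^z+(1+2z)^zD^z)$ is a clean way to justify the monotonicity that the paper only asserts ``by computing the derivative.'' Delete the intermediate paragraph that reinterprets $\textbf{Bad}_i$ relative to the post-sampling set: it is not needed, and as stated it is not a valid reading of the conditioning event.
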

\begin{proof}
Recall that $y$ is the point closest to $\mu$ in $S_{i-1}$.
We have
    \[ \Cost (A, S_{i-1}) = \sum_{x \in A} w(x) \min_{c \in S_{i-1} } \dist (x, c)^z \leq \sum_{x \in A} w(x) \dist (x, y)^z. \]
    Let $d = \dist (y, \mu)$. 
    By \lemref{lem:tri:(k, z):clustering},
    \[ \Cost (A, S_{i-1}) \leq \sum_{x \in A} \bigg ( 2 w(x) \dist (x, \mu)^z + (1+2z)^z w(x) \dist (y, \mu)^z \bigg ) = 2 \Cost (A, C_{\OPT}) + m (1+2z)^z d^z. \]
    Since $r = \left ( \frac{\Cost (A, C_{\OPT})}{m} \right )^{1/z}$, then $\Cost (A, S_{i-1}) \le 2 m r^z + m(1+2z)^z d^z$.
    
    On the other hand, $\Cost (B(\alpha), S_{i-1}) =  \sum_{x \in B(\alpha)} w(x) \min_{c \in S_{i-1} } \dist (x, c)^z$. 
    By triangle inequality, $\dist (x, c) \ge \dist (c, \mu) - \dist (x, \mu)\ge \dist (y, \mu) - \dist (x, \mu)$. 
    Since $x\in A$ and $A$ is a bad cluster, then $\dist (y, \mu) - \dist (x, \mu)\ge 0$, so that
    \[ \Cost (B(\alpha), S_{i-1}) \ge \sum_{x \in B(\alpha)} w(x) (\dist (c, \mu) - \dist (x, \mu))^z \ge \sum_{x \in B(\alpha)} w(x) (\dist (y, \mu) - \dist (x, \mu))^z. \]
    For $x \in B(\alpha)$, we have $\dist (x, \mu) \le \alpha r$, so that $\Cost (B(\alpha), S_{i-1}) \ge \sum_{x \in B(\alpha)} w(x) (\dist (y, \mu) - \alpha r)^z$.
    By \lemref{lem:sample:bad:clusters:4}, $w(B(\alpha)) \geq m \left ( 1 - \frac{1}{\alpha^z} \right )$, so $\Cost (B(\alpha), S_{i-1}) \ge m \bigg ( 1 - \frac{1}{\alpha^z} \bigg )(\dist (y, \mu)-\alpha r)^z$.
    
	Since $\frac{1}{\gamma}\cdot d_i\le\Cost(x_i,S)\le \gamma\cdot d_i$, cf. \algref{alg:adaptive:(k, z):sampling}, therefore
	\begin{align*}
	    \PPr{x \in B(\alpha) | x \in A \text{ and } A \in \textbf{Bad}_i} &= \frac{\sum_{x_i \in B(\alpha)}d_i}{\sum_{x_i \in A} d_i} \ge \frac{\frac{1}{\gamma} \Cost (B(\alpha), S_{i-1})}{\gamma \Cost (A, S_{i-1})} \\
     &\ge \frac{m}{\gamma^2} \bigg ( 1 - \frac{1}{\alpha^z} \bigg ) \cdot \frac{(\dist (y, \mu)-\alpha r)^z}{m (2r^z + (1+2z)^z \dist (y, \mu)^z)}.
	\end{align*}
    
	By computing the derivative, we can observe that
 \begin{equation*}
		\frac{(\dist (y, \mu)-\alpha r)^z}{2r^z + m (1+2z)^z \dist (y, \mu)^z}
	\end{equation*}
	is an increasing function of $\dist (y, \mu)$ for $\dist (y, \mu)\geq 3r \geq \alpha r$. Hence
	\begin{align*}
	    \PPr{x \in B(\alpha) | x \in A \text{ and } A \in \textbf{Bad}_i} &\geq \frac{m}{\gamma^2} \bigg ( 1 - \frac{1}{\alpha^z} \bigg ) \cdot \frac{(3r-\alpha r)^z}{m (2r^z + (1+2z)^z (3r)^z)} \\
		&= \frac{1}{\gamma^2} \bigg ( 1 - \frac{1}{\alpha^z} \bigg ) \frac{(3-\alpha)^z}{\gamma_z}.
	\end{align*}
\end{proof}

Therefore, we will transform a bad cluster $A$ into a good cluster if we sample a point in $B(\alpha) \subset A$ for $a \le 3$, and we will sample such a point in $A$ with constant probability, which means that we will reduce the number of bad clusters with constant probability.

\begin{lemma}
    \lemlab{lem:sample:bad:clusters:6}
    Suppose the point $x$ picked by our algorithm in the $i$-th round is from $A \in \textbf{Bad}_i$ and $S_i = S_{i-1} \cup \{ x \}$. Then
	\begin{equation*}
		\PPr{\Cost (A, S_i) \le \gamma_z \cdot \Cost (A, C_{\OPT})  | x \in A \text{ and } A \in \textbf{Bad}_i)} \geq \delta,
	\end{equation*}
	where
	\begin{equation*}
		\delta = \max \frac{1}{\gamma^2} \bigg ( 1 - \frac{1}{\alpha^z} \bigg ) \frac{(3-\alpha)^z}{\gamma_z}.
	\end{equation*}
\end{lemma}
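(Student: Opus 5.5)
The plan is to combine \lemref{lem:sample:bad:clusters:3} with \lemref{lem:sample:bad:clusters:5}, treating $\alpha$ as a free parameter in $[1,3]$ and choosing it at the end. Fix $A \in \textbf{Bad}_i$ and condition on the event that the point $x$ sampled in round $i$ lies in $A$.

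The first step is to establish a containment of events. If in addition $x \in B(\alpha)$ for some $\alpha \in [1,3]$, then \lemref{lem:sample:bad:clusters:3} applied with $b = x$ gives
\[
\Cost(A, S_{i-1}\cup\{x\}) \le \gamma_z \cdot \Cost(A, C_{\OPT}).
\]
Since $S_i = S_{i-1}\cup\{x\}$, this is exactly the event whose probability we want to lower bound. So
\[
\{x \in B(\alpha)\} \subseteq \{\Cost(A,S_i)\le \gamma_z\Cost(A,C_{\OPT})\}
\]
within the conditioned space. Monotonicity of probability then reduces the claim to lower bounding $\PPr{x\in B(\alpha)\mid x\in A,\ A\in\textbf{Bad}_i}$.

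The second step supplies that bound from \lemref{lem:sample:bad:clusters:5}:
\[
\PPr{x\in B(\alpha)\mid x\in A,\ A\in\textbf{Bad}_i} \ge \frac{1}{\gamma^2}\left(1-\frac{1}{\alpha^z}\right)\frac{(3-\alpha)^z}{\gamma_z}.
\]
The left-hand side of our target does not depend on $\alpha$, while this bound holds for every $\alpha\in[1,3]$. We may therefore take the supremum over $\alpha\in[1,3]$, which is the $\delta$ in the statement.

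The last step is to check that this maximum is a positive constant. The map $\alpha\mapsto (1-\alpha^{-z})(3-\alpha)^z$ is continuous on the compact interval $[1,3]$, so the maximum is attained. It vanishes only at the endpoints and is strictly positive on the interior; for instance $\alpha=2$ gives $(1-2^{-z})/(\gamma^2\gamma_z) > 0$. Hence $\delta$ depends only on $z$ and $\gamma$, as required.

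The only point needing care is the range of $\alpha$. \lemref{lem:sample:bad:clusters:3} holds for $0\le\alpha\le 3$, whereas \lemref{lem:sample:bad:clusters:4} and \lemref{lem:sample:bad:clusters:5} require $1\le\alpha\le 3$. I would therefore restrict the maximum in the definition of $\delta$ to $\alpha\in[1,3]$. Once that is done, the argument is a direct composition of the two lemmas and I expect no real obstacle.
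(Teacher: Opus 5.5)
Your proposal is correct and follows the paper's proof, which consists of combining \lemref{lem:sample:bad:clusters:3} (any sampled point in $B(\alpha)$ makes $A$ good) with the lower bound of \lemref{lem:sample:bad:clusters:5} and taking the maximum over $\alpha$. Two things you make explicit are left implicit in the paper: the maximum should be taken over $\alpha\in[1,3]$, the range where \lemref{lem:sample:bad:clusters:4} and \lemref{lem:sample:bad:clusters:5} apply, and $\delta$ is positive, as the value at $\alpha=2$ shows.
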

\begin{proof}
The claim follows from \lemref{lem:sample:bad:clusters:3} and \lemref{lem:sample:bad:clusters:5}.
\end{proof}

If $\Cost (X, S_i) > 2 \gamma^2 \cdot \gamma_z \Cost (X, C_{OPT})$, we will sample a point from a bad cluster with constant probability, and a point sampled from a bad cluster will transform it to a good one with probability. Hence, we will reduce the number of bad clusters with probability.

We have proven that we will sample a point from a bad cluster with constant probability. 
We also proved that a sampled center from a bad cluster will transfer that bad cluster to a good one with constant probability. 
These two facts lead us to the conclusion that we will transfer a bad cluster to a good one with constant probability for every sampling, which is what \lemref{lem:sample:bad:clusters} claimed.
\thmsamplebadclusters*
\begin{proof}
    If $\Cost (X, S_i) > 2 \gamma^2 \cdot \gamma_z \Cost (X, C_{OPT})$, by \lemref{lem:sample:bad:clusters:1}, we will sample a point $x$ from some bad cluster with probability at least $\frac{1}{2}$. 
    By \lemref{lem:sample:bad:clusters:6}, if $x \in A$ is from some bad cluster, $\Cost (A, S_i) \le \gamma_z \cdot \Cost (A, C_{\OPT})$ with probability at least $\delta$. 
    Hence $A$ will become a good cluster in $\textbf{Good}_{i+1}$ with probability at least $\frac{\delta}{2}$. 
    Therefore, $\PPr{|\textbf{Bad}_{i+1}| < |\textbf{Bad}_i|} \ge \frac{\delta}{2}$.
\end{proof}

Now we justify the correctness of adaptive sampling for general $z \ge 1$. 
\thmadaptivesamplingkmeans*
\begin{proof}
By \lemref{lem:tri:(k, z):clustering}, the number of bad clusters will decrease with constant probability every round, unless $\Cost (X, S_i) \le 2 \gamma^2 \cdot \gamma_z \Cost (X, C_{OPT})$. 
Thus, by Markov's inequality, with probability $0.99$, the number of bad clusters will be reduced to $0$ after $\O{k}$ rounds unless $\Cost (X, S_i) \le 2 \gamma^2 \cdot \gamma_z \Cost (X, C_{OPT})$. 
Consequently, either there are no bad clusters or $\Cost (X, S_i) \le 2 \gamma^2 \cdot \gamma_z \Cost (X, C_{OPT})$. 
Both of these cases mean that we have $\Cost(A_j,S) \le 2 \gamma^2 \cdot \gamma_z\cdot\Cost(A_j,C_{\OPT})$ for all $j \in [k]$, so $S$ becomes a constant approximation to the optimal $(k, z)$-clustering.
\end{proof}


\section{Blackboard Model of Communication}
\seclab{sec:dis:blackboard}
\applab{app:dis:blackboard}
\subsection{Bicriteria Approximation in the Blackboard Model}
We first give the initialization algorithm for $(k, z)$-clustering of blackboard model in \algref{alg:initial:blackboard}, which generates a point $x$ and uploads an $2$-approximation $\widetilde{D_i}$ of the total cost $D_i$ of every site on the blackboard.

\begin{algorithm}[!htb]
\caption{$\Initialization$}
\alglab{alg:initial:blackboard}
\begin{algorithmic}[1]
\REQUIRE{Dataset $X_i$ given to each site $i\in[s]$}
\ENSURE{A set $S$ with one sampled point $x$, approximate cost $\widetilde{D_i}$ for every site on blackboard}
\STATE{The coordinator samples a point $x$ and upload it on blackboard, $S \gets \{ x \}$}
\FOR{$i \gets 1$ to $s$}
    \STATE{$r_i \gets\PowerApprox(\Cost(X_i, S), 2)$}
    \STATE{Write $r_i$ on the blackboard}
\ENDFOR
\STATE{$\widetilde{D_i} \gets 2^{r_i}$}
\STATE{\textbf{return}  $S, \widetilde{D_i}$}
\end{algorithmic}
\end{algorithm}

After $\Initialization$, we use $\LazySampling$ to sample $\O{k}$ points to get an $(\O{1},\O{1})$-bicriteria approximation for the $(k,z)$-clustering.

\begin{algorithm}[!htb]
\caption{Lazy Adaptive $(k, z)$-Sampling for $k = \O{\log n}$}
\alglab{alg:blackboard:small:k}
\begin{algorithmic}[1]
\REQUIRE{The dataset $X_i$ every site $i$ owns, $i \in [s]$}
\ENSURE{A set $S$ that is an $(\O{1}, \O{1})$-bicriteria approximation for the $(k,z)$-clustering}
\STATE{Run $\Initialization$ to get $S$ and $\widetilde{D_i}$}
\STATE{$N \gets \O{k}$}
\FOR{$i \gets 1$ to $N$}
    \STATE{$x \gets\LazySampling(\{ \Cost(x, S) \}, \{\widetilde{D_j}\})$, and upload $x$ on blackboard}
    \STATE{$S \gets S \cup \{ x \}$}
    \STATE{Every site $j$ computes $D_j = \Cost (X_j, S)$}
    \STATE{$r_j \gets \PowerApprox ( D_j, 2)$, every site $j$ update $r_j$ on the blackboard if $r_j \ne \log_2 \widetilde{D_j}$}
    \STATE{$\widetilde{D_j} \gets 2^{r_j}$}
\ENDFOR
\STATE{\textbf{return}  $S$}
\end{algorithmic}
\end{algorithm}

We first show that \algref{alg:blackboard:small:k} returns an $(\O{1}, \O{1})$-bicriteria approximation for the $(k, z)$-clustering with $\O{s k \log \log n + k d \log n}$ bits of communication and $\O{k}$ rounds of communication.
\begin{lemma}
\lemlab{lem:blackboard:small:k}
\algref{alg:blackboard:small:k} outputs a set $S$ with size $\O{k}$ using $\tO{s k + k d \log n}$ bits of communication and $\O{k}$ rounds of communication, such that with probability at least $0.98$, $\Cost(S,X)\le\O{1}\cdot\Cost(C_{\OPT},X)$, where $C_{\OPT}$ is the optimal $(k,z)$-clustering of $X$.
\end{lemma}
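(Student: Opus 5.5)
We prove \lemref{lem:blackboard:small:k} by checking three things separately: correctness via a reduction to \thmref{thm:adaptive:sampling:k-means} and \lemref{lem:lazy:sampling}, the communication cost, and the round count.

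\textbf{Correctness.} Throughout the loop we maintain the invariant $D_j\le\widetilde{D_j}<2D_j$ for every site $j$. After \Initialization{} this holds by \thmref{thm:power:approx} with $\lambda=2$. After each iteration, every site recomputes $D_j=\Cost(X_j,S)$ and refreshes $r_j$ whenever the rounded exponent has changed, so the invariant is restored. Summing over sites gives $\sum_j D_j\le\sum_j\widetilde{D_j}<2\sum_j D_j$. Hence \lemref{lem:lazy:sampling} applies with $\lambda=2$: each call to \LazySampling{} either returns $\bot$ or returns a point distributed exactly according to the adaptive sampling distribution $\Cost(x,S)/\Cost(X,S)$ for the current $S$.

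A $\bot$ outcome does not change $S$, so the non-$\bot$ outcomes form exactly a run of \AdaptiveSampling{} (\algref{alg:adaptive:(k, z):sampling}) with $\gamma=1$. We set $N=\gamma' N_0$, where $N_0=\O{k}$ is the number of samples required by \thmref{thm:adaptive:sampling:k-means}. By \lemref{lem:lazy:sampling}, with probability $0.99$ at least $N_0$ of the calls succeed. Conditioned on this, \thmref{thm:adaptive:sampling:k-means} gives $\Cost(S,X)\le\O{1}\cdot\Cost(C_{\OPT},X)$ with probability $0.99$. A union bound yields overall success probability $0.98$, and $|S|\le N+1=\O{k}$.

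One subtlety needs care. The analysis in \lemref{lem:sample:bad:clusters} is round-by-round and only needs each successful sample to be drawn from the correct conditional distribution given the past. The $\bot$ rounds are independent of which point would be chosen, so the successful samples have this property and the argument goes through.

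\textbf{Communication.} Each iteration uploads one sampled point, which costs $\O{d\log n}$ bits since coordinates lie in $[\Delta]^d$ with $\Delta=\poly(n)$. Choosing the site index is done by the coordinator using the public $\widetilde{D_j}$, which costs $\O{\log s}$ bits to announce. Each site posts at most one update $r_j$ of $\O{\log\log n}$ bits per iteration, since $D_j\le\poly(n)$ and so $r_j=\O{\log n}$. Here lies the main bookkeeping point: sites must self-delimit and skip silent turns, so we charge $\O{1}$ bits per site per round for the "no update" signal. This gives $\O{sk\log\log n}$ bits in total for updates, plus $\O{s\log\log n}$ for initialization. Summing, the total is $\O{sk\log\log n+k\log s+kd\log n}=\tO{sk+kd\log n}$.

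\textbf{Rounds.} Each iteration uses a constant number of rounds: one for sampling and upload, and one for the parallel updates. With $N=\O{k}$ iterations, the protocol uses $\O{k}$ rounds.

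The main obstacle is the coupling between the lazy estimates and the adaptive sampling analysis. We must justify that the sequence of successful samples has the same law as the centralized procedure, even though $\bot$ outcomes consume iterations. The direct approach is to condition on the history and apply \lemref{lem:lazy:sampling} in each iteration. The remaining Markov-type counting, that $\O{k}$ iterations yield $N_0$ successes, is then routine.
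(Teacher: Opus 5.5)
Your proposal is correct and follows essentially the same argument as the paper: you maintain the invariant $D_j\le\widetilde{D_j}<2D_j$ so that \lemref{lem:lazy:sampling} yields $\O{k}$ correctly distributed non-$\bot$ samples, then invoke \thmref{thm:adaptive:sampling:k-means} with a union bound for the $0.98$ guarantee, and charge $\O{d\log n + s\log\log n}$ bits and $\O{1}$ rounds per iteration over $N=\O{k}$ iterations. Your extra remarks, namely the $\O{\log s}$ bits for announcing the sampled site and the point about conditioning on the history across $\bot$ outcomes, only make explicit details that the paper's proof leaves implicit.
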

\begin{proof}
We show the correctness of the algorithm, i.e., that the clustering cost induced by $S$ is a constant-factor approximation of the optimal $(k,z)$-clustering cost. 
We also upper bound the number of points in $S$. 

\paragraph{Bicriteria approximation guarantee.}
Since we update $\widetilde{D_j}$ every round, we meet the condition $D_j \leq \widetilde{D_j} < 2 D_i$ every time we run $\LazySampling(\{ \Cost(x, S) \}, \{\widetilde{D_j}\})$ in the algorithm. 
Thus, by \lemref{lem:lazy:sampling}, we sample a point $x$ with probability $\frac{\Cost (x,S)}{\Cost(X,S)}$ if $\LazySampling$ does not return $\bot$, and we can get $\O{k}$ points which are not $\bot$ with probability at least $0.99$ after $N = \O{k}$ rounds of lazy sampling.
Therefore, we will get $\O{k}$ sampled points that satisfy the required conditions for adaptive sampling in $\AdaptiveSampling$ (\algref{alg:adaptive:(k, z):sampling}). 
Therefore, by \lemref{lem:sample:bad:clusters}, we reduce the size of $\textbf{Bad}_i$ with some constant probability $p$ every round. 
It follows from \thmref{thm:adaptive:sampling:k-means} that \algref{alg:blackboard:small:k} produces a constant approximation for $(k, z)$-clustering with probability at least $0.99$.

\paragraph{Communication complexity of \algref{alg:blackboard:small:k}.}
$\Initialization$ uses $\O{s \log \log n + d \log n}$ bits of communication. 
Uploading the location of a sampled point $x$ requires $\O{d \log n}$ bits because there are $d$ coordinates that need to be uploaded, and each needs $\O{\log n}$ bits. 
Every site needs $\O{1}$ bits to return $\bot$ to the coordinator. 
Then, all $s$ sites need to write $r_i$ on the blackboard. Since $r_i = \O{\log n}$, and encoding $r_i$ requires $\O{\log\log n}$ bits, it leads to $\O{s \log \log n}$ bits of communication. 
Since we will repeat the iteration for $\O{k}$ times, the total communication cost would be $\O{s k \log \log n + k d \log n}$ bits. 
Finally, we remark that since we will use two rounds of communication for each iteration of the $N=\O{k}$ rounds of sampling, the total number of rounds of communication is $\O{k}$.
\end{proof}

\subsection{\texorpdfstring{$L_1$}{L1} Sampling Subroutine}
\seclab{sec:lone}
\applab{app:lone}
In this section, we introduce an $L_1$ sampling algorithm in \algref{alg:l1:sampling}. 
The main purpose of the $L_1$ sampling algorithm is to ultimately improve the communication complexity and the round complexity of the adaptive sampling approach for $k=\Omega(\log n)$, c.f., \appref{app:round:reduction}. 

The sampling algorithm $\LOneSampling$ can detect whether $\widetilde{D}$ is a good approximation of $D$. When $\mu^2 D \leq \widetilde{D}$, which means that $\widetilde{D}$ is far from a good approximation, $\LOneSampling$ will return True to notify us to update the value of $\widetilde{D}$. 
On the other hand, when $\mu D > \widetilde{D}$, which means $\widetilde{D}$ is a good enough approximation of $D$, $\LOneSampling$ will return False, so we can save communication cost by avoiding unnecessary updates.

\begin{algorithm}[!htb]
\caption{$L_1$-Sampling: $\LOneSampling (\{ \widetilde{D_j} \}, \{ D_j \}, \mu, \delta)$}
\alglab{alg:l1:sampling}
\begin{algorithmic}[1]
\REQUIRE{$D_i = \Cost (X_i, S)$, the cost of points in site $i$; total cost $D = \sum_{i=1}^s D_i$; $\{ \widetilde{D_i} \}_{i=1}^s$ and $\widetilde{D}$ written on blackboard, which are approximations to $\{ D_i \}_{i=1}^s$ and $D$; $\mu > 1$, the distortion parameter; $\delta > 0$, the failure probability}
\ENSURE{A boolean indicator for whether $\widetilde{D}$ is a $\mu^2$-approximation of $D$}
\STATE{$N \gets \O{\log \frac{1}{\delta}}, T \gets 0, \lambda \gets \frac{\mu}{4}, \alpha \gets 2 \mu$}
\FOR{$i \gets 1$ to $N$}
    \STATE{Sample a site $j$ with probability $\PPr{j} = \frac{\widetilde{D_j}}{\widetilde{D}}$, update $r_j \gets \PowerApprox(D_j, \lambda)$ on blackboard}
    \STATE{$T \gets T + \frac{1}{\PPr{j}} \cdot \lambda^{r_j}$}
\ENDFOR
\IF{$\alpha \cdot T \leq N \cdot \widetilde{D}$}
    \STATE{\textbf{return}  True}
\ELSE
    \STATE{\textbf{return}  False}
\ENDIF
\end{algorithmic}
\end{algorithm}

We now justify the correctness of \algref{alg:l1:sampling}, i.e., we show that if the total mass has decreased significantly, then the algorithm will return True with probability $1-\delta$ and similarly if the total mass has not decreased significantly, then the algorithm will return False with probability $1-\delta$. 
\begin{lemma} 
\lemlab{lem:l1:sampling}
Suppose $\widetilde{D_j} \ge D_j$ for all $j \in [s]$, and $\widetilde{D} \ge D$. 
If $\mu^2 D \leq \widetilde{D}$, \algref{alg:l1:sampling} $\LOneSampling (\{ \widetilde{D_j} \}, \{ D_j \}, \mu, \delta)$  will return True with probability at least $1 - \delta$. 
If $\mu D > \widetilde{D}$, $\LOneSampling (\{ \widetilde{D_j} \}, \{ D_j \}, \mu, \delta)$ will return False with probability at least $1 - \delta$.
\end{lemma}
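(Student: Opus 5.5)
The plan is to analyze the single-sample estimator inside \algref{alg:l1:sampling} and then apply Hoeffding's inequality (\thmref{thm:hoeffding's:inequality}) to the $N$ independent repetitions.

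\textbf{The single-sample estimator.} In one iteration a site $j$ is drawn with probability $\widetilde{D_j}/\widetilde{D}$. By \thmref{thm:power:approx}, $\widehat{D_j}:=\lambda^{r_j}$ satisfies $D_j\le \widehat{D_j}<\lambda D_j$. The estimator is therefore
\[
Y=\widehat{D_j}\cdot\frac{\widetilde{D}}{\widetilde{D_j}},\qquad \mathbb{E}[Y]=\sum_{j\in[s]}\widehat{D_j}\in[D,\lambda D).
\]
The samples are also bounded. Using $\widehat{D_j}<\lambda D_j\le\lambda\widetilde{D_j}$, which relies on the hypothesis $\widetilde{D_j}\ge D_j$, we get $0\le Y<\lambda\widetilde{D}$. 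Write $T=\sum_{i=1}^N Y_i$ with the $Y_i$ i.i.d. copies of $Y$. The algorithm returns \texttt{True} exactly when $T/N\le \widetilde{D}/\alpha=\widetilde{D}/(2\mu)$.

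\textbf{Case $\mu^2D\le\widetilde{D}$.} Here
\[
\mathbb{E}[T/N]<\lambda D=\frac{\mu D}{4}\le\frac{\widetilde{D}}{4\mu}.
\]
A wrong answer (\texttt{False}) requires $T/N$ to exceed its mean by at least $\widetilde{D}/(4\mu)$, i.e. $T-\mathbb{E}[T]\ge N\widetilde{D}/(4\mu)$. Each $Y_i$ lies in an interval of length $\lambda\widetilde{D}$. Hoeffding's inequality then bounds this probability by
\[
2\exp\!\left(-\frac{N^2\widetilde{D}^2/(16\mu^2)}{N\lambda^2\widetilde{D}^2}\right)=2\exp\!\left(-\frac{N}{\mu^4}\right).
\]

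\textbf{Case $\mu D>\widetilde{D}$.} Here
\[
\mathbb{E}[T/N]\ge D>\frac{\widetilde{D}}{\mu}.
\]
A wrong answer (\texttt{True}) requires $T/N\le\widetilde{D}/(2\mu)$, a downward deviation of at least $\widetilde{D}/(2\mu)$. The same Hoeffding computation bounds this probability by $2\exp(-\Omega(N/\mu^4))$.

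Since $\mu$ is a fixed constant, taking $N=\O{\mu^4\log\frac1\delta}=\O{\log\frac1\delta}$ makes both failure probabilities at most $\delta$.

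\textbf{Main obstacle.} The delicate point is that the concentration must be measured relative to $\widetilde{D}$ rather than $D$. A variance bound in terms of $D$ alone fails, since $Y$ can be as large as order $\widetilde{D}$. This is handled by the almost-sure bound $Y<\lambda\widetilde{D}$ together with the fact that both decision gaps are of order $\widetilde{D}/\mu$; this is exactly why the thresholds are $\alpha=2\mu$ and $\lambda=\mu/4$. I also need $\lambda>1$ for $\PowerApprox$ to be meaningful, i.e. $\mu>4$. If the statement is meant for smaller $\mu$, I would instead take $\lambda$ to be a constant slightly above $1$ and redo the first case, which needs $\lambda<\mu/2$.
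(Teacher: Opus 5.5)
Your proof is correct. It has the same skeleton as the paper's: bound the range and mean of the importance-sampling estimator $Z_i=\widehat{D_j}\,\widetilde D/\widetilde{D_j}$, then apply Hoeffding once per case. It differs in one substantive point: you measure the deviation on the scale of $\widetilde D$, while the paper measures it on the scale of $D$.

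The paper bounds $\Pr[|T-\mathbb{E}[T]|>\lambda ND]$ when $\mu^2D\le\widetilde D$, and $\Pr[|T-\mathbb{E}[T]|>\tfrac12 ND]$ when $\mu D>\widetilde D$. In both cases it plugs the summand range $[0,\lambda D]$ into Hoeffding, even though it had only derived $Z_i\in[0,\lambda\widetilde D]$.

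\begin{itemize}
\item In the second case this costs only a constant, since there $\widetilde D<\mu D$. The corrected bound is $2\exp(-N/(4\lambda^2\mu^2))$, which is the same as yours.
\item In the first case, however, $\widetilde D$ may be arbitrarily larger than $D$. With the correct range, the deviation $\lambda ND$ only gives $2\exp(-ND^2/\widetilde D^2)$, which is vacuous. Concretely, take $\widetilde{D_1}=D_1=D$, with the remaining mass of $\widetilde D=3\lambda ND$ on sites with $D_j\approx 0$. A single draw of site $1$ then puts $T$ more than $\lambda ND$ above its mean, and this happens with probability at least $1-e^{-1/(3\lambda)}$, not $2e^{-N}$.
\end{itemize}

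Your observation that $\mathbb{E}[T/N]<\lambda D\le\widetilde D/(4\mu)$ leaves a gap of $\widetilde D/(4\mu)$ below the threshold $\widetilde D/(2\mu)$ is what makes the bound $2\exp(-N/\mu^4)$ hold uniformly in $\widetilde D/D$. So your route is the one that actually proves the first case.

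Your caveat that $\lambda=\mu/4>1$ requires $\mu>4$ is also right; the paper only uses $\mu=8$. Note that your fallback $1<\lambda<\mu/2$ still needs $\mu>2$. For $\mu\in(1,2]$ you would also have to move $\alpha$ into $(\mu,\mu^2/\lambda)$ with $1<\lambda<\mu$.
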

\begin{proof}
    Let $\widehat{D_j} = \lambda^{r_j}$. 
    Since $r_j$ is returned by $\PowerApprox(D_j, \lambda)$, then by \thmref{thm:power:approx}, we have $D_j \le \lambda^{r_j} = \widehat{D_j} < \lambda D_j$.
    
    Define the random variable
    \begin{equation*}
            Z_i = \frac{1}{p_j} \widehat{D_j}, \text{ with probability } p_j = \frac{\widetilde{D_j}}{\widetilde{D}}.
    \end{equation*}
    
    We will evaluate the range of $Z_i$ and $\mathbb{E}[Z_i]$ so that we can apply Hoeffding's inequality later. 
    We have $\widetilde{D_j} \ge D_j$ and $\widehat{D_j} < \lambda D_j$. 
    Hence,
    \begin{equation*}
        Z_i = \frac{1}{p_j} \widehat{D_j} = \frac{\widetilde{D}}{\widetilde{D_j}} \widehat{D_j} \leq \frac{\widetilde{D}}{D_j} \lambda D_j = \lambda \widetilde{D}.
    \end{equation*}
    On the other hand, $Z_i$ must be non-negative by definition, so $Z_i \in [0, \lambda \widetilde{D}]$.
    
    For the range of $\mathbb{E}[Z_i]$, we have
    \begin{equation*}
        \mathbb{E} [Z_i] = \sum_{j = 1}^s \frac{1}{p_j} \widehat{D_j} \cdot p_j = \sum_{j = 1}^s \widehat{D_j}.
    \end{equation*}
    Since $\widehat{D_j} \in [D_j, \lambda D_j]$, then we have $\mathbb{E} [Z_i] \in [D, \lambda D]$. 
    Similarly by linearity of expectation, since $T = \sum_{i=1}^N Z_i$, then we have $\mathbb{E} [T] = N \cdot \mathbb{E} [Z_i] \in [ND, N\lambda D]$.
    We perform casework on whether $\mu^2 D \le \widetilde{D}$ or $\mu^2 D > \widetilde{D}$, corresponding to each of the two cases in the stated guarantee. 
    
    \paragraph{Case $1$: $\mu^2 D \le \widetilde{D}$.} 
    We first analyze the case $\mu^2 D \le \widetilde{D}$. 
    We define $\calN$ to be the event such that $\LOneSampling (\{ \widetilde{D_j} \}, \{ D_j \}, \mu, \delta)$ returns False when $\mu^2 D \le \widetilde{D}$ as the false negative event. 
    Observe that this occurs in the algorithm if and only if $\alpha T > N \widetilde{D}$ when $\mu^2 D \le \widetilde{D}$.
    
    Notice that $\alpha T > N \widetilde{D}$ is impossible if $\left | T - \mathbb{E} [T] \right | \le \lambda ND$ since $\mu^2 D \le \widetilde{D}$ and $\left | T - \mathbb{E} [T] \right | \le \lambda ND$ imply
    \begin{equation*}
        \alpha T \le \alpha (\mathbb{E} [T] + \left | T - \mathbb{E} [T] \right |) \le \alpha (\lambda ND + \lambda ND) \le \frac{2 \alpha \lambda}{\mu^2} N \widetilde{D} = \frac{1}{\mu} N\widetilde{D} < N\widetilde{D}.
    \end{equation*}
    The first inequality is due to the triangle inequality. 
    The second inequality comes from the fact that $\mathbb{E} [T] \in [ND, N \lambda D]$ and $\left | T - \mathbb{E} [T] \right | \le \lambda ND$. The third inequality is due to the condition $\mu^2 D \le \widetilde{D}$. The equality is because of $\alpha = 2 \mu$ and $\lambda = \frac{\mu}{2}$. The last inequality is due to $\mu > 1$.
    
    Hence under the condition that  $\mu^2 D \le \widetilde{D}$, the false negative $\calN$ can only occur if $\left | T - \mathbb{E} [T] \right | > \lambda ND$. Therefore,
    \begin{equation*}
        \PPr{\calN} = \PPr{\alpha T > N \widetilde{D} |  \mu^2 D \le \widetilde{D}} \le \PPr{\left | T - \mathbb{E} [T] \right | > \lambda ND |  \mu^2 D \le \widetilde{D}}.
    \end{equation*}
    Then by Hoeffding's inequality, c.f., \thmref{thm:hoeffding's:inequality}, 
    \begin{equation*}
         \PPr{\left | T - \mathbb{E} [T] \right | > \lambda ND |  \mu^2 D \le \widetilde{D}}  \le 2 \exp \left ( -\frac{(\lambda ND)^2}{\sum_{i=1}^N (\lambda D)^2} \right ) = 2 \exp \left ( -N \right ) \le \delta,
    \end{equation*}
    because we have $N_i \in [0,\lambda D]$ and $N = \O{\log \frac{1}{\delta}}$.
    Since we have shown the probability of the event $\calN$ of getting a false negative is no more than $\delta$, then $\LOneSampling (\{ \widetilde{D_j} \}, \{ D_j \}, \mu, \delta)$ will return True when $\mu^2 D \le \widetilde{D}$ with probability at least $1-\delta$. 

    We next analyze the case $\mu D > \widetilde{D}$.
    \paragraph{Case $2$: $\mu D > \widetilde{D}$.}
     We define the event $\calP$ that $\LOneSampling (\{ \widetilde{D_j} \}, \{ D_j \}, \mu, \delta)$ returns True when $\mu D > \widetilde{D}$, i.e., a false positive event. 
     Observe that this can happen if and only if $\alpha T \le N \widetilde{D}$ when $\mu D > \widetilde{D}$.
    
    Notice that $\alpha T \le N \widetilde{D}$ is impossible if $\left | T - \mathbb{E} [T] \right | \le \frac{1}{2} ND$. In fact, if $\mu D > \widetilde{D}$ and $\left | T - \mathbb{E} [T] \right | \le \frac{1}{2} ND$, then
    \begin{equation*}
        \alpha T \ge \alpha (\mathbb{E} [T] - \left | T - \mathbb{E} [T] \right |) \ge \alpha ( ND - \frac{1}{2} ND) = \frac{\alpha}{2} N D > \frac{\alpha}{2\mu} N\widetilde{D} = N\widetilde{D}.
    \end{equation*}
    The first inequality is due to the triangle inequality. The second inequality comes from the fact that $\mathbb{E} [T] \in [ND, N \lambda D]$ and $\left | T - \mathbb{E} [T] \right | \le \frac{1}{2} ND$. The third inequality is due to the condition $\mu D > \widetilde{D}$. The second equality is because of $\alpha = 2 \mu$.
    
    Hence under the condition that  $\mu D > \widetilde{D}$, a false positive $\calP$ can only occur if $\left | T - \mathbb{E} [T] \right | > \frac{1}{2} ND$. Therefore,
    \begin{equation*}
        \PPr{\calP} = \PPr{\alpha T \le N \widetilde{D} |  \mu D > \widetilde{D}} \le \PPr{\left | T - \mathbb{E} [T] \right | > \frac{1}{2} ND |  \mu D > \widetilde{D}}.
    \end{equation*}
    Then by Hoeffding's inequality, c.f., \thmref{thm:hoeffding's:inequality}, 
    \begin{equation*}
         \PPr{\left | T - \mathbb{E} [T] \right | > \frac{1}{2} ND | \mu D > \widetilde{D}}  \le 2 \exp \left ( -\frac{(\frac{1}{2} ND)^2}{\sum_{i=1}^N (\lambda D)^2} \right ) = 2 \exp \left ( -\frac{N}{4 \lambda^2} \right ) \le \delta,
    \end{equation*}
    because $N_i \in [0,\lambda D]$ and $N = \O{\log \frac{1}{\delta}}$.
    Since we have shown the probability of the event $\calP$ is no more than $\delta$, then it follows that $\LOneSampling (\{ \widetilde{D_j} \}, \{ D_j \}, \mu, \delta)$ will return False if  $\mu D > \widetilde{D}$ with probability at least $1-\delta$. 
\end{proof}

\subsection{Bicriteria Approximation with Communication/Round Reduction}
\applab{app:round:reduction}
\seclab{sec:round:reduction}
In this section, we will introduce another protocol that uses $\O{\log n \log k}$ rounds of communication and $\tO{s \log n + k d \log n}$ bits of communication cost.
The protocol will use fewer rounds of communication and total communication than \algref{alg:blackboard:small:k} when $k=\Omega(\log n)$.

We first apply the $L_1$ sampling subroutine $\LOneSampling$ to estimate $D$, which uses a low communication cost. 
We then update $\widetilde{D}$ if the estimation of $D$ from the $L_1$ sampling procedure has decreased significantly. Using such a strategy, we only need to update $\widetilde{D_j}$ in very few rounds, saving the communication cost of updating $\widetilde{D_j}$.

However, the communication rounds are still $\O{k}$, which would be too expensive in some settings. 
To further decrease the rounds of communication, we use the batch sampling strategy. 
We double the number of points to be sampled in every round until the total cost $D$ drops significantly. 
$D$ can only decrease significantly for at most $\O{\log n}$ rounds, so we can reduce the rounds of communication to $\O{\log n \log k}$.

Another issue that must be addressed is that we may sample some invalid points because the total cost $D$ may decrease significantly during a round of batch sampling. 
A sampled point is valid only if it is sampled under the condition that $\widetilde{D}$ is a $\O{1}$-approximation of $D$, so some sampled points may be invalid if $D$ drops significantly during that sampling round. 
To ensure that we sample enough valid points, we need to count the number of valid samples during the sampling procedure. 
Rather than setting a fixed number $N = \O{k}$ of points to sample, we count the number of valid samples and only terminate sampling after we get at least $N$ valid samples. 
Fortunately, although we may sample more than $N$ points in such a strategy, we can prove that the total number of points we sample is still $\O{k}$.

The algorithm appears in full in \algref{alg:blackboard:large:k}. 
\begin{algorithm}[!htb]
\caption{Bicriteria approximation algorithm for $(k,z)$-clustering}
\alglab{alg:blackboard:large:k}
\begin{algorithmic}[1]
\REQUIRE{Dataset $X_i$ for each site $i\in[s]$}
\ENSURE{A set $S$ that is an $(\O{1}, \O{1})$-bicriteria approximation for the $(k,z)$-clustering}
\STATE{Uniformly sample a point into $S$ and compute constant-factor approximations $\{\widetilde{D_j}\}$ to $D_j = \Cost (X_j, S)$}
\STATE{$N \gets \O{k}$, $M \gets 0, \mu \gets 8, \delta \gets \O{\frac{1}{\log^2 n}}$}
\WHILE{$M < N$ \textcolor{purple}{$\backslash\backslash$ Sample roughly $k$ points}}
    \STATE{$bool \gets \text{False}, i \gets 1$}
    \WHILE{$bool = \text{False}$ \textcolor{purple}{$\backslash\backslash$ Approximate distances $\{\widetilde{D_j}\}$ are accurate}}
        \FOR{$l \gets 1$ to $2^i$ \textcolor{purple}{$\backslash\backslash$ Attempt to sample points into $S$}}
            \STATE{$x_l \gets \LazySampling (\{ \Cost(x, S) \}, \{ \widetilde{D_j} \})$, upload $x_l$ on blackboard}
        \ENDFOR
        \STATE{$S \gets S \cup ( \cup_{l = 1}^{2^i} \{ x_l \}), L \gets \# \{ x_l \ne \bot, 1 \le l \le 2^i \}$}
        \COMMENT{Add successful samples to $S$}
        \STATE{$M\gets M+L$}
        \STATE{Every site $j$ computes $D_j = \Cost (X_j, S)$}
        \COMMENT{Update distances to closest center}
        \STATE{$bool \gets \LOneSampling (\{ \widetilde{D_j} \}, \{ D_j \}, \mu, \delta)$}
        \COMMENT{Check approximate distances $\{\widetilde{D_j}\}$}
        \IF{$bool = \text{False}$}
            \STATE{$i \gets i + 1$}
            \COMMENT{More aggressive number of samples to reduce number of rounds}
        \ELSE
            \STATE{$r_j \gets \PowerApprox ( D_j, 2)$, every site $j$ update $r_j$ on the blackboard if $r_j \ne \log_2 \widetilde{D_j}$}
            \STATE{$\widetilde{D_j} \gets 2^{r_j}$}
            \COMMENT{All sites update approximate distances}
        \ENDIF
    \ENDWHILE    
\ENDWHILE
\STATE{\textbf{return}  $S$}
\end{algorithmic}
\end{algorithm}

\algref{alg:blackboard:large:k} uses $\LOneSampling$ to detect whether $D$ decreases significantly and needs to be updated.
Under the condition that $\widetilde{D}$ is a $\O{1}$-approximation of $D$, it uses batch sampling to sample points, and verify whether $\widetilde{D}$ is still a $\O{1}$-approximation of $D$ after batch sampling.
If $\widetilde{D}$ is still a $\O{1}$-approximation of $D$, it means that all the sampled points are valid, and we count them.
If $\widetilde{D}$ is no longer a $\O{1}$-approximation of $D$, it means the $\O{1}$-approximation condition breaks during the batch sampling, which leads some sampled points invalid. 
However, the first sampled point must be valid, since we sample it under the condition that $\widetilde{D}$ is an $\O{1}$-approximation.
Hence, we can add $1$ to our valid sampled points count.
We repeat the sampling until we have counted at least $N = \O{k}$ valid sampled points. 
Then the returned set $S$ will have $\O{k}$ points and is an $\O{1}$-approximation. 
Furthermore, by using $\LOneSampling$ and batch sampling, we can guarantee a low round complexity and total communication for \algref{alg:blackboard:large:k}.

We split the proof into three parts. First we will prove that $S$ is an $\O{1}$-approximation and $|S| = \O{k}$. 
Secondly, we will prove that the algorithm uses $\O{\log n\log k}$ rounds of communication. 
Lastly, we will prove that the algorithm uses $\tO{s \log n + k d \log n}$ bits of communication with probability at least $0.99$.

For the purposes of analysis, we define the concept of valid sample. 
Suppose $x$ is a point sampled by $\LazySampling (\{ \Cost(x, S) \}, \{ \widetilde{D_i} \})$. 
Then we define $x$ to be a valid sample if it is sampled under the condition $D \leq \widetilde{D} < \gamma \cdot D_i$ for some constant $\gamma$. 
\begin{lemma}
\lemlab{lem:blackboard:large:k:bicrit}
\algref{alg:blackboard:large:k} will return $S$ such that $|S| = \O{k}$ and $\Cost(S,X)\le\O{1}\cdot\Cost(C_{\OPT},X)$ with probability at least $0.98$, where $C_{\OPT}$ is the optimal $(k,z)$-clustering of $X$. 
\end{lemma}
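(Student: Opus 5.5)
We reduce the lemma to \thmref{thm:adaptive:sampling:k-means} and \lemref{lem:sample:bad:clusters}. The key observation is that every valid sample is an exact draw from the adaptive distribution. It is conditioned on the whole history, and the approximation factor is constant. So the analysis of \AdaptiveSampling applies to the subsequence of valid samples, while invalid samples only add extra centers. Extra centers never increase $\Cost(X,S)$ and never turn a good cluster into a bad one, since $\Cost(A_j,S)$ is monotone non-increasing in $S$.

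\emph{Step 1 (high-probability correctness of checks).} The whole run makes at most $\O{\log n\log k}$ calls to \LOneSampling. Each call has failure probability $\delta=\O{1/\log^2 n}$, so a union bound shows that with probability $0.995$ every call answers correctly.

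\emph{Step 2 (validity invariant).} We must show that whenever a batch is started, $D\le\widetilde{D}<\mu^2 D$ holds, i.e.\ the batch begins from a valid state. Right after an update via \PowerApprox with base $2$, we have $D_j\le\widetilde{D_j}<2D_j$. If \LOneSampling returned False, then (conditioned on Step 1) $\widetilde D<\mu^2 D$. Moreover $\widetilde D\ge D$ always, because costs only decrease as $S$ grows.

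\emph{Step 3 (valid samples).} Within a batch, draws are made sequentially from \LazySampling, and $D$ only decreases. Hence validity is prefix-closed: the samples drawn while $\widetilde D<\mu^2 D$ (with respect to the current $S$) form a prefix that contains at least the first draw. On that prefix, \lemref{lem:lazy:sampling} with $\lambda=\mu^2$ shows that each non-$\bot$ output is distributed exactly as $\Cost(x,S)/\Cost(X,S)$ and occurs with probability at least $1/\mu^2$. We then redefine the counter: $M$ counts valid non-$\bot$ samples. This is the quantity we analyse even if the pseudocode counts $L$; when a check returns True, we credit only the guaranteed first sample.

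Applying \lemref{lem:sample:bad:clusters} along the valid samples, each valid sample decreases $|\textbf{Bad}|$ with probability at least $\delta'/2$ (the paper's constant $\delta/2$, renamed to avoid a clash with the failure parameter), unless the cost is already $\O{1}\cdot\Cost(X,C_\OPT)$. Invalid samples do not hurt, by the monotonicity noted above. A Markov or Chernoff argument as in \thmref{thm:adaptive:sampling:k-means} then shows that $N=\O{k}$ valid samples suffice with probability $0.99$.

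\emph{Step 4 (size bound, main obstacle).} We need $|S|=\O{k}$, which means bounding the samples wasted in batches where a check returned True. Our plan is to charge each batch of size $2^i$ to the preceding batches in the same phase. Batch sizes within a phase double, so a phase with final batch $2^{i^\ast}$ draws at most $2^{i^\ast+1}$ samples. The earlier batches of that phase all passed their checks (returned False) and were therefore entirely valid. By Step 3, they contributed $\Omega(2^{i^\ast})$ valid non-$\bot$ samples in expectation, using the $1/\mu^2$ success rate and concentration for large batches. Thus every phase draws only $\O{1}$ times as many samples as it contributes valid ones, apart from an additive $\O{1}$ per phase.

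The remaining difficulty is the number of phases. Each True answer certifies that $D$ dropped by a factor $\mu$. Since $D$ ranges over $\poly(n)$ values, there are at most $\O{\log n}$ phases, which gives additive overhead $\O{\log n}$. This additive term is the delicate point. I would handle it either by noting that the lemma is intended for $k=\Omega(\log n)$, or by arguing that short phases (final batch size $\O{1}$) contribute only $\O{1}$ samples each, and that the number of phases is itself at most $\O{k}$ plus the number of valid samples. The latter holds because a drop of $D$ by a factor $\mu$ requires sampling new centers. Finally, the outer loop stops once $M\ge N$, and the last phase overshoots by at most a factor $2$. Combining with Markov's inequality, we get $|S|=\O{k}$ with probability $0.995$. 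A union bound over Steps 1, 3 and 4 gives overall success probability $0.98$.
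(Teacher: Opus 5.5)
Your proposal follows essentially the same route as the paper's proof. Correctness of \LOneSampling makes every batch start in a valid state, so its first draw is valid; the counter is credited $L$ after a False check and a single sample after a True check; \thmref{thm:adaptive:sampling:k-means} is applied to the credited valid samples; and $|S|=\O{k}$ comes from the doubling charge $p_i\le 2q_i$, which charges uncounted samples to earlier batches of the same phase.

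The additive $\O{1}$-per-phase term you flag does not appear in the paper only because it credits one sample to every True phase unconditionally, even when that draw is $\bot$. In your more careful accounting, option (b) closes it without assuming $k=\Omega(\log n)$, provided you credit the first non-$\bot$ draw of the True batch rather than the first draw. That draw is valid, since $S$ and hence $D$ are unchanged before it. Within each correctly terminated phase, $D$ falls by more than a factor $\mu/2$, so every phase contains at least one credited sample. Hence the number of phases is at most the final counter value, which is $\O{k}$.
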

\begin{proof}
Suppose $x$ is a point sampled by $\LazySampling (\{ \Cost(x, S) \}, \{ \widetilde{D_i} \})$. 
If $x$ is a valid sample, it will reduce the size of $\textbf{Bad}_i$ with constant probability by \lemref{lem:sample:bad:clusters}. 
If we sample $N = \O{k}$ valid points by our algorithm, by \thmref{thm:adaptive:sampling:k-means}, we will get a constant approximation $S$ with probability at least $0.99$. 
We will show that our algorithm indeed samples at least $N$ valid samples for $\gamma = \mu^2$.

According to the algorithm, $\widetilde{D_j}$ is always a $2$-approximation of $D_j$ that $D_j \le \widetilde{D_j} < 2 D_j$ after we update $\widetilde{D_j}$. Between two updates of $\widetilde{D_j}$, $\widetilde{D_j}$ is fixed, but $D_j$ may only decrease. This means that $D_j \le \widetilde{D_j}$ always holds in \algref{alg:blackboard:large:k}. Since $D_j \le \widetilde{D_j}$ always holds, $D \le \widetilde{D}$ always holds, too.

If $\widetilde{D} < \mu^2 D$ after we sample $L$ points, it means that all these $L$ points are valid. We will either add $L$ to the counter $M$ if $\LOneSampling$ returns False, or add $1$ to the counter $M$ if $\LOneSampling$ returns True. In either case, the number of valid points we sample is at least the number we add to the counter $M$.

If $\widetilde{D} \ge \mu^2 D$ after we sample $L$ points, it means that some points we sample are invalid. However, the first point we sampled must be valid, as it is sampled under the condition $\widetilde{D} < \mu^2 D$.
If $\LOneSampling$ returns True, we will only add $1$ to the counter $M$. Then the number of new valid samples in this round is at least the number we add to the counter $M$.

Therefore, the number of valid points we sample is at least the number we add to the counter $M$ if $\LOneSampling$ returns True every time $\widetilde{D} \ge \mu^2 D$. Since $D = \poly (n)$ when we have only one point in $S$, and every time we update $\widetilde{D}$ in \algref{alg:blackboard:large:k}, it holds that $D \le \widetilde{D} < 2D$, $\widetilde{D} \ge \mu^2 D$ can only occur at most $\O{\log n}$ times. 
We know $\LOneSampling$ will return True if $\widetilde{D} \ge \mu^2 D$ with probability at least $1-\delta$. 
Since $\delta = \O{\frac{1}{\log^2 n}}$, then by a union bound, $\LOneSampling$ returns True every time $\widetilde{D} \ge \mu^2 D$ with probability at least $0.99$.

Therefore, with probability at least $0.98$, we will sample at least $\O{k}$ valid points and they form an $\O{1}$-approximation for the optimal solution.

\paragraph{Upper bound on the size of $S$.}
We will evaluate the number of sampled points that are not counted as valid samples, and show the total number of such points is $\O{k}$. 
Suppose that we update $\widetilde{D_j}$ for $m$ times for total.

Let $N_p$ be the total number of sampled points that are not counted as valid samples, and $p_i$ as the number of points that are not counted as valid samples between the $(i-1)$-th and $i$-th update. Then $N_p = \sum_{i=1}^m p_i$. 
Let $N_q$ be the total number of sampled points that are counted as valid samples, and $q_i$ as the number of points that are counted as valid samples between the $(i-1)$-th and $i$-th update. Then $N_q = \sum_{i=1}^m q_i$. Since we will terminate the algorithm if we count for $N=\O{k}$ valid samples, $N_q = \O{k}$. 
Since the number of points sampled before the $i$-th update is $p_i+1$, and the number of point sampled before these $p_i+1$ points is $\frac{p_i+1}{2}$, therefore, $p_i \le 2 q_i$. Hence $N_p \le 2 N_q = \O{k}$.
The points in $S$ are either counted as valid samples or not, so $|S| = N_p + N_q = \O{k}$. 

\end{proof}
We next upper bound the round complexity of our algorithm. 
\begin{lemma}
    \lemlab{lem:blackboard:large:k:rounds}
    \algref{alg:blackboard:large:k} uses $\O{\log n\log k}$ rounds of communication with probability at least $0.99$.
\end{lemma}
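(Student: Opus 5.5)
The approach is to charge every round of \algref{alg:blackboard:large:k} to one of the two nested while loops. Each pass of the inner loop uses only $\O{1}$ rounds of communication. It consists of one batch of $2^i$ parallel calls to $\LazySampling$, then one round in which sites recompute $D_j$, then $\LOneSampling$. The $N=\O{\log\frac{1}{\delta}}$ site samples inside $\LOneSampling$ can all be drawn in a single round, since they depend only on the fixed $\widetilde{D_j}$. Finally there is possibly one update round for $\PowerApprox$. So it suffices to bound the total number of inner-loop iterations by $\O{\log n\log k}$. We split these into \emph{update epochs}: maximal stretches between consecutive executions of the branch where $\LOneSampling$ returns True and all sites refresh $\widetilde{D_j}$.

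First we bound the number of epochs by $\O{\log n}$. Right after a refresh we have $D\le\widetilde D<2D$. By \lemref{lem:l1:sampling} with $\mu=8$, whenever $\LOneSampling$ returns True correctly we have $\widetilde D\ge\mu D$, so $D$ has fallen by a constant factor (at least $4$) since the previous refresh. A False positive occurs, i.e.\ True is returned while $\widetilde D<\mu D$, only with probability $\delta=\O{1/\log^2 n}$. The cost $D$ starts at $\poly(n)$ after initialization. It is either $0$ or at least $1/\poly(n)$, given integer grid points and $z\ge1$; once $D=0$, every $\LazySampling$ call returns $\bot$ and the algorithm can be viewed as finished. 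Hence there are at most $\O{\log n}$ correct True outcomes. We will also need the bound on erroneous outcomes below.

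Within one epoch, the inner counter $i$ increases by one after each False, and the batch size $2^i$ doubles. The epoch ends as soon as $\LOneSampling$ returns True or the outer loop terminates. In the no-refresh case, \lemref{lem:l1:sampling} gives $\widetilde D<\mu^2 D$ with probability $1-\delta$, so all sampled points are valid and $\LazySampling$ succeeds with constant probability each time (\lemref{lem:lazy:sampling}). By a Chernoff/Markov bound, a batch of $2^i$ attempts therefore adds $\Omega(2^i)$ to $M$. Once $2^i\ge cN=\Theta(k)$, one more batch pushes $M$ past $N$ and the outer loop exits. Hence each epoch has $\O{\log k}$ inner iterations, apart from an unlikely event where a large batch yields too few successes.

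The main obstacle is the union bound over the $\O{\log n\log k}$ calls to $\LOneSampling$ and batches. The fix is to choose $\delta=\O{1/\log^2 n}$, noting $\log k\le\log n$. Then all $\LOneSampling$ calls are correct with probability $0.995$, which also rules out spurious True outcomes. We amplify the batch-success claim so that it fails with probability at most $0.005$ overall, using Markov's inequality on the expected number of wasted doublings as in \lemref{lem:lazy:sampling}. Multiplying the $\O{\log n}$ epochs by the $\O{\log k}$ inner iterations per epoch and the $\O{1}$ rounds per iteration gives $\O{\log n\log k}$ rounds with probability at least $0.99$.
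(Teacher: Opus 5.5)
Your proposal is correct and follows essentially the same argument as the paper. It bounds the rounds as $\O{\log n}$ refreshes (each genuine one shrinks $D$ by a factor of at least $4$, and spurious ones are excluded by a union bound with $\delta=\O{1/\log^2 n}$), times $\O{\log k}$ doublings per epoch, times $\O{1}$ communication rounds per inner iteration. The only difference is that you justify the $\O{\log k}$ doublings per epoch directly, from the constant success probability of $\LazySampling$ when $\widetilde{D}<\mu^2 D$ plus concentration, whereas the paper appeals to the total number of samples being $\O{k}$; both readings treat the algorithm as halting once $M\ge N$, not only after $\LOneSampling$ returns True.
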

\begin{proof}
We need $\O{1}$ communication rounds in the initial stage \algref{alg:initial:blackboard}. 
For the communication rounds in the remaining part of \algref{alg:blackboard:large:k}, we evaluate how many rounds of batch sampling occurs between two update, and evaluate the times of update. 
Then the product is just the rounds of total sampling. 

For the rounds of batch sampling that occurs between two update, it must be at most $\O{\log k}$ rounds since we double the number of points to be sampled for every other round if we do not update $\widetilde{D}$ and the total points to be sampled is $\O{k}$. 
For the times we update $\widetilde{D}$, there are two cases: we update an `unnecessary update' under the condition that $\widetilde{D} < \mu D$, and we update a `necessary update' under the condition that $\widetilde{D} \ge \mu D$. 
By \lemref{lem:l1:sampling}, $\LOneSampling$ will return True and we will have an `unnecessary update' under the condition that $\widetilde{D} < \mu D$ with probability at most $\delta$. Since there are at most $\O{\log k}$ successive batch sampling without an update, we will make an `unnecessary update' between two `necessary update' with probability at most $\O{\delta \log k}$. 
For the `necessary update', we will update $\widetilde{D}$ so that $D \le \widetilde{D} < 2D$. Since $\mu = 8$, it means every time we make a `necessary update' under the condition that $\widetilde{D} \ge \mu D$, $\widetilde{D}$ decreases by at least a factor of 4. Hence, such update can occur at most $\O{\log n}$ times.

Since the probability that we will make an `unnecessary update' between two `necessary updates' is at most $\O{\delta \log k}$, and the times of `necessary update' is at most $\O{\log n}$, we will make an `unnecessary update' in \algref{alg:blackboard:large:k} with probability at most $\O{\delta \log n \log k}$. Since $\delta = \O{\frac{1}{\log^2 n}}$, we will have no `unnecessary update' in \algref{alg:blackboard:large:k} with probability at least $0.99$. Since `necessary update' will occur at most $\O{\log n}$ times, it means we will have at most $\O{\log n}$ updates of $\widetilde{D}$ with probability at least $0.99$.

Therefore, with probability at least $0.99$, \algref{alg:blackboard:large:k} has $\O{\log n \log k}$ rounds of batch sampling. Then we need to evaluate how many communication rounds we need for a round of batch sampling.
For each round of batch sampling, the coordinator can send the request of $2^i$ lazy sampling at the same time, and every site can respond afterwards. 
Hence there will be two rounds of communication for each iteration of sampling. 
To get the result returned by $\LOneSampling$, the $\O{\log \frac{1}{\delta}} = \O{\log \log n}$ requests can be made simultaneously, so that there are $\O{1}$ rounds of communication. 
Hence, the total rounds of communication in \algref{alg:blackboard:large:k} is $\O{\log n \log k}$.

Finally, we upload $r_j$ when we update $\widetilde{D_j}$. 
All the sites can update the values of $r_j$ in the same round. 
Since we will update at most $\O{\log n}$ times, we needs $\O{\log n}$ rounds of communication for total.
Summing the rounds of communication across each part of our algorithm, it follows that the total rounds of communication is $\O {\log n \log k}$ with probability at least $0.99$.
\end{proof}
Finally, we analyze the total communication of our algorithm. 
\begin{lemma}
    \lemlab{lem:blackboard:large:k:cc}
    \algref{alg:blackboard:large:k} uses $\tO{s \log n + k d \log n}$ bits of communication with probability at least $0.99$.
\end{lemma}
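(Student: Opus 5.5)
The plan is to charge the communication of \algref{alg:blackboard:large:k} to its four components: initialization, the uploads of sampled points, the $\LOneSampling$ calls, and the $\PowerApprox$ updates of $\widetilde{D_j}$. Each is bounded using the high-probability events already established in \lemref{lem:blackboard:large:k:bicrit} and \lemref{lem:blackboard:large:k:rounds}. We condition on the event of probability at least $0.99$ from \lemref{lem:blackboard:large:k:rounds}: there are no unnecessary updates, at most $\O{\log n}$ updates of the $\widetilde{D_j}$, and $\O{\log n\log k}$ batch-sampling rounds. On the same event, $|S|=\O{k}$ as shown in \lemref{lem:blackboard:large:k:bicrit}.

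\textbf{Initialization.} \algref{alg:initial:blackboard} writes one point, costing $\O{d\log n}$ bits. Each site then writes one exponent $r_i=\O{\log n}$, costing $\O{\log\log n}$ bits, so this step totals $\O{d\log n+s\log\log n}$.

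\textbf{Lazy sampling.} Each call to $\LazySampling$ costs $\O{\log s}$ bits for the coordinator to name the chosen site. It then costs either $\O{1}$ bits for a $\bot$ or $\O{d\log n}$ bits for an uploaded point.

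\begin{itemize}
\item The number of successful samples is $|S|=\O{k}$.
\item The number of attempts is also $\O{k}$. In each epoch between updates, the batch sizes $2^1,2^2,\ldots$ double, so an epoch's total attempts are at most twice its last batch. Every batch except the last in an epoch is followed by a False answer, so its samples are counted in $M$. Hence the total attempts are $\O{N}$ plus the final batches, which the $p_i\le 2q_i$ argument of \lemref{lem:blackboard:large:k:bicrit} bounds by $\O{k}$.
\end{itemize}

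This step therefore totals $\O{k(\log s+d\log n)}$.

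\textbf{$L_1$ checks.} Each $\LOneSampling$ call makes $\O{\log(1/\delta)}=\O{\log\log n}$ site selections. Each selection costs $\O{\log s}$ bits to name the site and $\O{\log\log n}$ bits for the $\PowerApprox$ exponent. There is one call per batch round, hence $\O{\log n\log k}$ calls and $\O{\log n\log k\,\log\log n\,(\log s+\log\log n)}$ bits. This is $\tO{s\log n+k\log n}$, since $\log s\le s$ and $\log k\le k$.

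\textbf{Updates.} This is the step I expect to be the main obstacle. The update phase triggers at most $\O{\log n}$ times, and in each one every site writes, naively $s\log\log n$ bits. That gives $\O{s\log n\log\log n}=\tO{s\log n}$, which is acceptable under $\tO{\cdot}$. More carefully, a site writes only when its exponent changes. Exponents are monotone nonincreasing and lie in a range of size $\O{\log n}$, so each site changes at most $\O{\log n}$ times. The non-writing sites only need $\O{1}$-bit acknowledgements per update for self-delimitation, giving $\O{s\log n}$ bits.

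Summing the four components gives $\tO{s\log n+kd\log n}$ bits with probability at least $0.99$.
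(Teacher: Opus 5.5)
Your proposal is correct and follows essentially the same route as the paper. It uses the same four-part accounting: initialization, $\O{kd\log n}$ bits of point uploads, $\O{\log n\log k}$ calls to $\LOneSampling$ at $\O{\log\log n\,(\log s+\log\log n)}$ bits each, and $\O{\log n}$ update phases at $\O{s\log\log n}$ bits each, all on the probability-$0.99$ event of \lemref{lem:blackboard:large:k:rounds}.

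Your extra charging of $\bot$ replies and site indices in $\LazySampling$ goes beyond what the paper writes. To turn counted successes into a bound on attempts, you implicitly use that in a batch followed by a False answer each attempt succeeds with probability at least $1/\mu^2$. Since each such attempt costs only $\O{\log s}$ bits, the final bound is unaffected.
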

\begin{proof}
The \algref{alg:initial:blackboard} subroutine $\Initialization$ induces $\O{d \log n + s \log \log n}$ bits of communication because we only write the sample point $x$ and $r_i$ on the blackboard, and $r_i = \O{\log\log \Cost (X_i, S)} = \O{\log\log n}$. 

We will update $\O{k}$ samples, which will use $\O{k d \log n}$ total bits of communication.

Moreover, we will run $\LOneSampling$ at most $\O{\log n \log k}$ times. 
For each time it runs, the coordinator will choose $\O{\log \frac{1}{\delta}} = \O{\log \log n}$ sites. It uses $\O{\log s}$ bits to represent a site on the blackboard, so that site knows it is chosen. Each chosen site needs $\O{\log \log n}$ bits to reply. 
Hence the total communication cost for running $\LOneSampling$ in \algref{alg:blackboard:large:k} is $\O{\log n \log k (\log s + \log \log n)}$ bits.

Updating the values $\{r_j\}$ costs $\O{s \log \log n}$ bits for each iteration. 
Since the value will be updated $\O{\log n}$ times with probability at least $0.99$, the total cost for updating the values $\{r_j\}$ is $\O{s \log n \log \log n}$ bits. 

Therefore, by adding the communication cost for all parts of \algref{alg:blackboard:large:k} and ignore all the poly-logarithm terms for $s, k, \log n$, the total communication is just $\tO{s \log n + kd \log n}$.
\end{proof}

We have the following full guarantees for \algref{alg:blackboard:large:k}. 
\begin{lemma}
    \lemlab{lem:blackboard:large:k}
    \algref{alg:blackboard:large:k} will return $S$ such that $|S| = \O{k}$ and $\Cost(S,X)\le\O{1}\cdot\Cost(C_{\OPT},X)$ with probability at least $0.98$, where $C_{\OPT}$ is the optimal $(k,z)$-clustering of $X$. 
The algorithm uses $\tO{s \log n + k d \log n}$ bits of communication and $\O{\log n\log k}$ rounds of communication with probability at least $0.99$.
\end{lemma}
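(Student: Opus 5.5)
This lemma combines the three guarantees just proved for \algref{alg:blackboard:large:k}, so the plan is to invoke them and reconcile their failure events.

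\textbf{Correctness and size.} These come directly from \lemref{lem:blackboard:large:k:bicrit}. With probability at least $0.98$ it gives both $|S|=\O{k}$ and $\Cost(S,X)\le\O{1}\cdot\Cost(C_{\OPT},X)$. Recall what that lemma rests on. The counter $M$ never overcounts valid samples, because within a batch the first sample is always valid. Each valid sample reduces $|\textbf{Bad}_i|$ with constant probability by \lemref{lem:sample:bad:clusters}, so \thmref{thm:adaptive:sampling:k-means} applies after $N=\O{k}$ valid samples. The doubling argument $p_i\le 2q_i$ bounds the number of uncounted samples.

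\textbf{Rounds and communication.} These come from \lemref{lem:blackboard:large:k:rounds} and \lemref{lem:blackboard:large:k:cc}. Both are driven by one event $\calE$: every call to $\LOneSampling$ answers correctly throughout the run. Under $\calE$ there are no ``unnecessary updates.'' Each necessary update shrinks $\widetilde{D}$ by at least a factor of $4$, and $D=\poly(n)$ initially, so there are $\O{\log n}$ updates. Between consecutive updates there are $\O{\log k}$ doubling batches, since the batch size cannot exceed $\O{k}$. This gives $\O{\log n\log k}$ rounds. The communication then splits as follows:
\begin{itemize}
\item $\O{kd\log n}$ bits to post the $\O{k}$ sampled points;
\item $\O{s\log n\log\log n}$ bits for the $\O{\log n}$ rounds of $\PowerApprox$ updates;
\item $\O{\log n\log k(\log s+\log\log n)\log\log n}$ bits for the $\LOneSampling$ calls;
\item $\O{s\log\log n+d\log n}$ bits for initialization.
\end{itemize}
The total is $\tO{s\log n+kd\log n}$.

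\textbf{Main obstacle: the failure events.} The accuracy guarantee and the cost guarantee are stated separately, one with probability $0.98$ and one with $0.99$. I intend to present the lemma as two separate probabilistic guarantees, which is exactly what the statement asserts, rather than claim they hold jointly. The one point to check is that the $0.99$ cost event is genuinely just $\calE$. This requires a union bound with $\delta=\O{1/\log^2 n}$ over all $\O{\log n\log k}$ calls to $\LOneSampling$, not just the $\O{\log n}$ necessary ones. If $\log k$ is too large for that bound, I would shrink $\delta$ to $\O{1/(\log^2 n\log k)}$. This adds only an $\O{\log\log n+\log\log k}$ factor to each $\LOneSampling$ call, which the $\tO{\cdot}$ notation absorbs.
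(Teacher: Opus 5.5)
Your proposal is correct and matches the paper, whose proof just combines \lemref{lem:blackboard:large:k:bicrit}, \lemref{lem:blackboard:large:k:rounds}, and \lemref{lem:blackboard:large:k:cc} as two separate probabilistic guarantees, exactly as you do. Your union-bound worry is already handled in \lemref{lem:blackboard:large:k:rounds}, which bounds the false-positive probability over all $\O{\log n\log k}$ calls by $\O{\delta\log n\log k}$; since $k\le n$ and the constant in $\delta=\O{1/\log^2 n}$ can be taken small, this is already a small constant, so shrinking $\delta$ is harmless but unnecessary.
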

\begin{proof}
The proof follows immediately from \lemref{lem:blackboard:large:k:bicrit}, \lemref{lem:blackboard:large:k:rounds}, and \lemref{lem:blackboard:large:k:cc}. 
\end{proof}

\subsection{\texorpdfstring{$(1+\eps)$}{(1+eps)}-Coreset via Sensitivity Sampling in the Blackboard Model}
To achieve a $(1+\eps)$-coreset for $X$ with the $(\O{1}, \O{1})$-bicriteria approximation $S$, we use sensitivity sampling. 
The challenge lies in determining $C_j$ and $\Cost(C_j, S)$ to calculate $\mu(x)$, as communicating the exact values of $C_j$ and $\Cost(C_j, S)$ requires $\O{sk\log n}$ bits, which is impractical for our purposes. However, a constant approximation of $C_j$ and $\Cost(C_j, S)$ suffices to approximate $\mu(x)$ for the sensitivity sampling, and this can be achieved efficiently with minimal communication through Morris counters.
\begin{algorithm}[!htb]
\caption{$\MorrisCount(r, n)$}
\alglab{alg:morris:count}
\begin{algorithmic}[1]
\REQUIRE{Initial count index $r$, and elements number $n$}
\ENSURE{New count index $m$}
\STATE{$m \gets r$}
\FOR{$i \gets 1$ to $n$}
    \STATE{With probability $\frac{1}{2^m}$, $m \gets m+1$}
\ENDFOR
\STATE{\textbf{return}  $r$}
\end{algorithmic}
\end{algorithm}

\begin{theorem}
\cite{morris1978counting}
\thmlab{thm:morris:count}
Let $X = \MorrisCount(0, n)$. 
Then $\Ex{2^X-1} = n$. 
Moreover, there exists a constant $\gamma > 0$ such that if $l \ge \frac{\gamma}{\eps^2}\log \frac{1}{\delta}$ and $Y = \frac{1}{l} \sum_{i=1}^l \left( 2^{X_i} - 1 \right)$, where $X_1, X_2, \ldots, X_l$ are $l$ independent outputs of $\MorrisCount(0, n)$, then 
\[Y\in[n - \eps n, n + \eps n],\]
with probability at least $1-\delta$.
\end{theorem}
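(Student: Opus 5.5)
We prove the two claims in turn: first the unbiasedness $\Ex{2^X-1}=n$ by induction on the number of increments, then a second-moment bound, and finally Chebyshev's inequality applied directly to the plain average $Y=\frac{1}{l}\sum_{i=1}^l(2^{X_i}-1)$ as named in the statement.

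\textbf{Unbiasedness.} Let $X_t$ denote the counter after $t$ increments, with $X_0=0$. Conditioned on $X_{t}=m$, the next step sets $X_{t+1}=m+1$ with probability $2^{-m}$ and otherwise leaves it unchanged. Hence
\[
\Ex{2^{X_{t+1}}\mid X_t=m}=2^m+2^{-m}(2^{m+1}-2^m)=2^m+1 .
\]
Taking expectations gives $\Ex{2^{X_{t+1}}}=\Ex{2^{X_t}}+1$. Since $\Ex{2^{X_0}}=1$, it follows that $\Ex{2^{X_n}-1}=n$.

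\textbf{Second moment.} In the same way, conditioned on $X_t=m$,
\[
\Ex{2^{2X_{t+1}}\mid X_t=m}=2^{2m}+2^{-m}(2^{2m+2}-2^{2m})=2^{2m}+3\cdot 2^m .
\]
Therefore $\Ex{2^{2X_{t+1}}}=\Ex{2^{2X_t}}+3\Ex{2^{X_t}}=\Ex{2^{2X_t}}+3(t+1)$. Summing from $t=0$ gives $\Ex{2^{2X_n}}=1+\frac{3n(n+1)}{2}$. Consequently
\[
\mathrm{Var}(2^{X_n}-1)=\Ex{2^{2X_n}}-(n+1)^2=\frac{n^2-n}{2}\le \frac{n^2}{2}.
\]

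\textbf{Concentration of $Y$.} The $X_i$ are independent, so $\Ex{Y}=n$ and $\mathrm{Var}(Y)\le \frac{n^2}{2l}$. By Chebyshev's inequality,
\[
\PPr{|Y-n|>\eps n}\le \frac{1}{2l\eps^2}.
\]
With $l\ge \frac{\gamma}{\eps^2}\log\frac1\delta$, this bound is at most $\frac{1}{2\gamma\log(1/\delta)}$.

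This is the main obstacle. Chebyshev alone gives failure probability $O(1/(\eps^2 l))$, which is only $\delta$ when $l=\Omega(1/(\eps^2\delta))$, not the stated $\log\frac1\delta$ dependence. Hoeffding's inequality (\thmref{thm:hoeffding's:inequality}) does not apply directly because $2^{X_i}$ is unbounded.

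To obtain the logarithmic dependence for the plain mean, I would first try a Bernstein/Chernoff-type bound on the moment generating function of $2^{X_n}$, controlling its upper tail. The counter exceeds $\log_2(cn)$ only if $\Omega(\cdot)$ increments of probability $\le 1/(cn)$ occur among $n$ trials, so $\PPr{2^{X_n}>cn}$ decays exponentially in $c$. This makes $2^{X_n}/n$ sub-exponential with $O(1)$ parameter, and Bernstein's inequality for sums of independent sub-exponential variables then gives $\PPr{|Y-n|>\eps n}\le 2\exp(-\Omega(l\eps^2))$ for $\eps<1$. That is at most $\delta$ for a suitable constant $\gamma$.

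If this tail estimate proves delicate, the fallback is to truncate each $2^{X_i}$ at $Cn\log(l/\delta)$. By the exponential tail, truncation changes no sample with probability $1-\delta/2$ and shifts the mean by only $o(\eps n)$. Bernstein's inequality, using the variance bound $n^2/2$ together with the truncation bound, then yields the claim, possibly with an extra $\log$ factor absorbed into $\gamma$.

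The applications in the paper only need constant $\eps$ and $\delta$ (a constant-factor approximation with probability $0.99$). In that regime the Chebyshev bound above already suffices with $l=O(1)$, which I would note as the guarantee actually used downstream.
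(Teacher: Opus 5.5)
The parts you complete are correct: $\Ex{2^{X_n}-1}=n$, the second-moment recursion giving $\mathrm{Var}(2^{X_n}-1)=\frac{n^2-n}{2}$, and the Chebyshev bound $\PPr{|Y-n|>\eps n}\le\frac{1}{2l\eps^2}$. The paper gives no proof of this statement, only a citation to Morris. The gap is the step meant to upgrade Chebyshev to a $\log\frac1\delta$ dependence.

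\textbf{The tail claim is false.} It is not true that $\PPr{2^{X_n}>cn}$ decays exponentially in $c$. Note that $X_n\ge L$ exactly when $G_0+\cdots+G_{L-1}\le n$, where the $G_m$ are independent geometric variables with parameter $2^{-m}$. Taking $L=\log_2 n+t$ gives $\PPr{2^{X_n}\ge 2^t n}\le\prod_{i=1}^{t}\PPr{G_{L-i}\le n}\le\prod_{i=1}^{t}2^{i-t}=2^{-t(t-1)/2}$. A matching lower bound $2^{-O(t^2)}$ also holds: first let the counter reach level $\log_2 n-2$ within $n/2$ steps, which happens with probability at least $\frac12$ by Markov. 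Then give each of the next $t+2$ levels its own block of $n/(2(t+2))$ steps.

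Exceeding $cn$ needs only $\log_2 c$ increments above level $\log_2 n$, not $\Omega(c)$ rare increments. So the tail is $c^{-\Theta(\log c)}$, and $2^{X_n}/n$ is not sub-exponential with a parameter independent of $n$. Bernstein's inequality for sub-exponential sums is therefore unavailable. The truncation fallback fails for the same reason. For all $l$ copies to lie below $Tn$ with probability $1-\delta/2$, you need $T=\exp(\Theta(\sqrt{\log(l/\delta)}))$. Bernstein with range $Tn$ then requires $l\gtrsim\frac{T}{\eps}\log\frac1\delta$, which is superpolylogarithmic in $\frac1\delta$. In any case, an extra $\log$ factor could not be absorbed into a constant $\gamma$.

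\textbf{No argument can work for the plain average.} The statement is false for $Y$ over the full range of $\delta$. If a single copy has $2^{X_1}-1>(1+\eps)ln$, then $Y>(1+\eps)n$. By the lower bound above, this happens with probability $2^{-O(\log^2 l)}$. For fixed $\eps$ and $l=\Theta(\log\frac1\delta)$, that is $\exp(-O((\log\log\frac1\delta)^2))$, which exceeds $\delta$ once $\delta$ is small and $n$ is large.

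\textbf{What does hold.} The $\log\frac1\delta$ dependence holds for the median-of-means estimator. Split the $l$ counters into $\O{\log\frac1\delta}$ groups of $\O{1/\eps^2}$ each. Your Chebyshev bound puts each group mean in $[(1-\eps)n,(1+\eps)n]$ with probability at least $\frac34$. Hoeffding's inequality on the group indicators then makes the median good with probability $1-\delta$. For the plain mean, your argument gives only $l=\O{\frac{1}{\eps^2\delta}}$.

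\textbf{Your closing remark is inaccurate.} \lemref{lem:morris:count:pp} uses the theorem with $\delta=\frac{1}{100k}$ for each of the $k$ clusters, so that it can union bound. Chebyshev alone would force $\O{k}$ counters per cluster and inflate the communication to roughly $k^2\log n$. The median-of-means version, with $\O{\log k}$ counters per cluster, is what that lemma actually needs.
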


We can adapt $\MorrisCount$ to a distributed version $\MorrisCountPP$ to reduce the communication cost. 

\begin{algorithm}[!htb]
\caption{$\MorrisCountPP(\eps)$: }
\alglab{alg:morris:count:pp}
\begin{algorithmic}[1]
\REQUIRE{Precision parameter $\eps$, every site $S_i$ owns $k$ numbers $n_{i,j}$, $j \in [k]$}
\ENSURE{Approximation $(\widetilde{N}_1, \widetilde{N}_2, \cdots, \widetilde{N}_k)$ for sums $N_j = \sum_{i=1}^s n_{i,j}$, $j \in [k]$}
\STATE{$l \gets \O{\frac{1}{\eps^2}\log (100k)}$}
\STATE{$m_{j,t} \gets 0$ for all $j \in [k]$ and $t \in [l]$}
\FOR{$i \gets 1$ to $s$}
    \STATE{$m'_{j,t} \gets \MorrisCount(m_{j,t}, n_{i,j})$ for all $j \in [k]$ and $t \in [l]$}
    \IF{$m'_{j,t} = m_{j,t}$ for all $j \in [k]$ and $t \in [l]$}
        \STATE{Site $i$ uploads $\bot$ on blackboard}
    \ELSE
        \STATE{Site $i$ uploads $(m'_{j,t} - m_{j,t}, j, t)$ for all $j \in [k]$ and $t \in [l]$ that $m'_{j,t} \ne m_{j,t}$}
    \ENDIF
    \STATE{$m_{j,t} \gets m'_j$}
\ENDFOR
\STATE{$\widetilde{N}_j \gets \frac{1}{l} \sum_{t=1}^l \left( 2^{m_{j,t}} - 1 \right)$ for all $j \in [k]$}
\STATE{\textbf{return}  $(\widetilde{N}_1, \widetilde{N}_2, \cdots, \widetilde{N}_k)$}
\end{algorithmic}
\end{algorithm}

\begin{lemma}
\lemlab{lem:morris:count:pp}
Let every site $i$ own $k$ numbers $n_{i,j}$, and $|N_j| = \sum_{i=1}^s n_{i,j} = \poly (n)$. With probability at least $0.99$, $\MorrisCountPP(\frac{1}{4})$ returns constant approximations $\widetilde{N}_j$ that $\widetilde{N}_j \in [\frac{3}{4} N_j, \frac{5}{4} N_j]$ for all $j \in [k]$ using $\tO{s + k \log n}$ bits of communication.
\end{lemma}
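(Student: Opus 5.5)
The proof has two parts: correctness, by reduction to the centralized Morris counter (\thmref{thm:morris:count}), and a communication bound from the monotonicity of the counters. \textbf{Correctness.} Fix $j\in[k]$ and $t\in[l]$. In $\MorrisCountPP$, site $1$ runs $\MorrisCount$ starting from $m_{j,t}=0$ on its $n_{1,j}$ increments. Each subsequent site $i$ resumes $\MorrisCount$ from the current value posted on the blackboard and processes its $n_{i,j}$ increments. Each increment only looks at the current counter value and fresh private randomness. So the final value $m_{j,t}$ has exactly the distribution of $\MorrisCount(0,N_j)$ run on a single machine over the concatenated sequence of $N_j=\sum_i n_{i,j}$ increments. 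Moreover, the $l$ copies are independent across $t$, since each site uses independent randomness per $(j,t)$.

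Apply \thmref{thm:morris:count} with $\eps=\frac14$ and $\delta=\frac{1}{100k}$, which is consistent with the choice $l=\O{\frac{1}{\eps^2}\log(100k)}$ for a large enough constant. It gives $\widetilde N_j\in[\frac34N_j,\frac54N_j]$ with probability at least $1-\frac{1}{100k}$. A union bound over $j\in[k]$ then gives simultaneous success with probability $0.99$.

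\textbf{Communication.} Each site either writes $\bot$, costing $\O{1}$ bits and $\O{s}$ bits in total, or writes a list of triples $(m'_{j,t}-m_{j,t},j,t)$. Charge each triple to the counter $(j,t)$ whose value strictly increases. A triple costs $\O{\log k+\log l+\log\log n}$ bits, since the increment is at most the final counter value, which we bound next. The key point is that the number of triples touching counter $(j,t)$ is at most its final value $m_{j,t}$, because each triple raises the value by at least one. So the total number of triples is at most $\sum_{j,t} m_{j,t}$.

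\textbf{Bounding the final counter values.} This is the main obstacle. Since $N_j=\poly(n)$, I would show that with high probability every $m_{j,t}=\O{\log n}$. From $\Ex{2^{m_{j,t}}}=N_j+1$ and Markov's inequality, $\PPr{2^{m_{j,t}}>100kl(N_j+1)}\le\frac{1}{100kl}$. A union bound over all $kl$ counters then gives $m_{j,t}\le\log(100kl\,\poly(n))=\O{\log n+\log k}$ simultaneously with probability $0.99$. This event can be folded into the overall failure probability, for example by rescaling constants so that the two failure events together are at most $0.01$, or by reading the stated guarantee as holding with constant probability.

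\textbf{Total.} The total number of triples is therefore $\O{kl\log n}=\O{k\log k\log n}$. Each triple costs $\polylog(k,\log n)$ bits. Adding the $\O{s}$ bits for the $\bot$ messages gives a total of $\tO{s+k\log n}$ bits, as claimed.
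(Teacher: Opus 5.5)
Your proposal is correct and follows the paper's proof. Correctness comes from noting that the sequential hand-off simulates a single Morris counter and applying \thmref{thm:morris:count} with a union bound over $j\in[k]$; communication comes from charging each posted triple to its counter $(j,t)$ and bounding the number of triples by the final value $m_{j,t}$. The one difference is how you bound $m_{j,t}$: the paper reads it off the correctness event itself, since $\widetilde N_j=\frac1l\sum_t(2^{m_{j,t}}-1)\le\frac54N_j$ forces $2^{m_{j,t}}\le 2lN_j$ for every $t$, so it needs no separate Markov/union-bound step and no rescaling of the failure probability.
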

\begin{proof}
Although we split the process of $\MorrisCount$ into a distributed version, its accuracy does not matter. This is because every step of $\MorrisCount$ only needs to know the status of the previous step, and every time a site uses a step of $\MorrisCount$, it knows the status of the previous step either because both of these steps occur at this site, or because the status of previous step is accurately uploaded by the previous site. Hence, the statement of the original $\MorrisCount$ in \thmref{thm:morris:count} still holds for our distributed version.

By \thmref{thm:morris:count}, $\widetilde{N_j} \in [\frac{3}{4} N_j, \frac{5}{4} N_j]$ with probability at least $1-\frac{1}{100k}$ for any $j \in [k]$ if we set $l = \O{\log (100k)}$. Then, by a union bound,  $\widetilde{N}_j \in [\frac{3}{4} N_j, \frac{5}{4} N_j]$ for all $j \in [k]$ at the same time with probability at least $0.99$.

    To evaluate the total communication cost, we introduce $M_i$ and $M_{i,j}$ as follows to facilitate the analysis. 
    We set $M_i$ as $0$ if site $i$ uploads $\bot$ on the blackboard and $M_i$ as the cost of communication to update all $(m'_{j,t} - m_{j,t}, j, t)$ for all $j \in [k]$ and $t \in [l]$ that $m'_{j,t} \ne m_{j,t}$ otherwise. 
    We set $M_{i, j,t}$ to be the communication cost required to update $(m'_{j,t} - m_{j,t}, j, t)$.

    Every site with $M_i = 0$ only needs $\O{1}$ bits to upload $\bot$. 
    Since there are $s$ sites in total, the number of sites with $M_i=0$ is at most $s$. 
    Therefore, the total communication for these sites is $\O{s}$.

    For the sites with $M_i \ne 0$, the communication cost used is
    \begin{equation*}
        \sum_{M_i \ne 0} M_i = \sum_{i=1}^s M_i = \sum_{j=1}^k \sum_{t=1}^l \sum_{i=1} M_{i,j,t}.
    \end{equation*}
    Note that the first equality holds because we only add more terms that all have value zero, whereas the second equality is obtained by dividing $M_i$ into $M_{i,j,t}$.

Since $\widetilde{N_j} = \frac{1}{l} \sum_{t=1}^l \left( 2^{m_{j,t}} - 1 \right) \in [\frac{3}{4} N_j, \frac{5}{4} N_j]$, then $2^{m_{j,t}}$ cannot be greater than $2l \cdot N_j = \O{\log k \poly (n)}$. 
Therefore $m_{j, t} \le \O{\log n \log \log k}$. 
Since $M_{i,j,t}$ must be at least $1$ if it is nonzero, there are at most $m_{j, t}$ nonzero $M_{i,j,t}$ for given $j, t$. 
For every non-zero $M_{i,j,t}$, since the site uploads $m'_{j,t} - m_{j,t}$ which is at most the final $m_{j,t}$, then the site will use no more than $\O{\log m_{j,t}}$ bits to update $m'_{j,t} - m_{j,t}$, and $\O{\log k + \log l}$ bits to express $j$ and $t$. 
Thus,
\begin{equation*}
\sum_{i=1} M_{i,j,t} =  \sum_{i\in [s], M_{i,j,t} \ne 0} M_{i,j,t} \le m_{j, t} \cdot \O{\log m_{j, t} + \log k + \log l}.
\end{equation*}
Since $m_{j, t} \le \O{\log n \log \log k}$ and $l = \O{\log k}$, then $\sum_{i=1} M_{i,j,t} \le \tO{\log n}$. 
Therefore,
\begin{equation*}
\sum_{M_i \ne 0} M_i = \sum_{j=1}^k \sum_{t=1}^l \sum_{i=1} M_{i,j,t} \le kl \cdot \tO{\log n} = \tO{k \log n}.
\end{equation*}
By summing the communication cost of the sites that upload $\bot$ and the communication cost of the other sites, the total communication will be no more than $\tO{s+k \log n}$.
\end{proof}

Since $|C_j| = \sum_{i=1}^s |C_j \cap X_i|$ and $\Cost (C_j, S) = \sum_{i=1}^s \Cost (C_j \cap X_i, S)$ for any $j \in [k]$, we can use $\MorrisCountPP$ to approximately evaluate these terms using low communication cost. 
Since both $|C_j|$ and $\Cost (C_j, S)$ are at most $\poly(n)$, the total communication is at most $\tO{s+k\log n}$ bits. 
Every site $i$ can compute $\Cost(x, S)$ for $x \in X_i$ locally. 
Then with the constant-factor approximations for $|C_j|$ and $\Cost (C_j, S)$, each site can evaluate a constant-factor approximation of the sensitivity $\mu(x)$ locally. 

Finally, we give an algorithm producing a $(1+\eps)$-coreset for $X$ with $\O{\frac{dk}{\eps^4} (\log k + \log \frac{1}{\eps} \log \log n)}$ bits of communication, given our $\O{1}$-approximation with $\O{k}$ points and sensitivity $\mu(x)$.

\begin{algorithm}[!htb]
\caption{$(1+\eps)$-coreset for the blackboard model}
\alglab{alg:coreset:blackboard}
\begin{algorithmic}[1]
\REQUIRE{A constant approximation $S$, $|S| = \O{k}$}
\ENSURE{A $(1+\eps)$-coreset $A$}
\STATE{Use $\MorrisCountPP$ to get $\O{1}$-approximation for $|C_j|$ and $\Cost (C_j, S)$ for all $j \in [k]$ on the blackboard}
\STATE{$m \gets \tO{\frac{k}{\eps^2} \min \{ \eps^{-2}, \eps^{-z} \}}$}
\FOR{$i \gets 1$ to $s$}
    \STATE{$A_i \gets \emptyset$}
    \STATE{Compute $\widetilde{\mu}(x)$ as an $\O{1}$-approximation of $\mu(x)$ locally for all $x \in X_i$}
    \STATE{Upload $\widetilde{\mu} (X_i) = \sum_{x \in X_i} \widetilde{\mu} (x)$}
\ENDFOR
\STATE{Samples site $i$ with probability $\frac{\widetilde{\mu} (X_i)}{\sum_{i=1}^s \widetilde{\mu} (X_i)}$ independently for $m$ times. Let $m_i$ be the time site $i$ are sampled. Write $m_i$ on blackboard}
\FOR{$i \gets 1$ to $s$}
    \STATE{$A_i \gets \emptyset$}
    \FOR{$j \in [m_i]$}
        \STATE{Sample $x$ with probability $p_x = \frac{\widetilde{\mu}(x)}{\widetilde{\mu}(X_i)}$}
        \IF{$x$ is sampled}
            \STATE{Let $x'$ be $x$ efficiently encoded by $S$ and accuracy $\eps' = \poly(\eps)$}
            \STATE{$A \gets A_i \cup \{ (x', \frac{1}{m \widehat{\mu} (x)}) \}$, where $\widehat{\mu} (x)$ is a $(1+\frac{\eps}{2})$-approximation of $\widetilde{\mu} (x)$}
        \ENDIF
    \ENDFOR
    \STATE{Upload $A_i$ to the blackboard}
\ENDFOR
\STATE{$A \gets \cup_{i = 1}^s A_i$}
\STATE{\textbf{return}  $A$}
\end{algorithmic}
\end{algorithm}

We now justify the correctness and complexity of the communication of \algref{alg:coreset:blackboard}.

\begin{lemma}
\lemlab{lem:coreset:blackboard}
\algref{alg:coreset:blackboard} returns $A$ such that $A$ is a $(1+\eps)$-coreset for $X$ with probability at least $0.98$ if we already have  an $(\O{1}, \O{1})$-bicriteria approximation $S$. 
The algorithm uses $\tO{s+ k \log n + \frac{dk}{\min(\eps^4,\eps^{2+z})}}$ bits of communication.
\end{lemma}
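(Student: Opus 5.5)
The proof splits into correctness and communication. For correctness, the plan is a chain of three approximations. First, by \lemref{lem:morris:count:pp} applied twice (once with $n_{i,j}=|C_j\cap X_i|$ and once with $n_{i,j}=\Cost(C_j\cap X_i,S)$), with probability at least $0.99$ the blackboard holds values within $[\frac34,\frac54]$ of each $|C_j|$ and each $\Cost(C_j,S)$. We also need $\Cost(X,S)$; it follows by summing the cluster-cost approximations, which inherits the same factor. Each site then computes $\Cost(x,S)$ exactly and substitutes these estimates into each of the four terms of $\mu(x)$. Every term is distorted by at most a constant factor, so $\widetilde{\mu}(x)$ is an $\O{1}$-approximation of $\mu(x)$.

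Next we check that the two-stage sampling draws exactly from $\widetilde{\mu}(\cdot)/\sum_{y}\widetilde{\mu}(y)$. A site is picked with probability $\widetilde{\mu}(X_i)/\sum_i\widetilde{\mu}(X_i)$ and then a point within it with probability $\widetilde{\mu}(x)/\widetilde{\mu}(X_i)$, and the product telescopes. Hence by \thmref{thm:sens:sampling:bounds} (the version of \cite{BansalCPSS24} that allows constant-factor approximate sensitivities), the $m$ samples, reweighted by $1/(m\widetilde{\mu}(x))$ with the total normalized, form a $(1+\eps/4)$-coreset with probability $0.99$. Replacing $\widetilde{\mu}(x)$ by the $(1+\eps/2)$-approximation $\widehat{\mu}(x)$ only perturbs each weight by a $(1\pm\O{\eps})$ factor, so after rescaling $\eps$ by a constant the cost guarantee survives.

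Finally, each sample is replaced by its encoding $x'$ relative to $S$. By \lemref{lem:x:to:xprime:cluster}, applied to the weighted sample set with the constant-factor solution $S$, this preserves every $\Cost(C,\cdot)$ up to $(1\pm\eps/4)$. Composing the three distortions gives a $(1+\eps)$-coreset. A union bound over the Morris-counter failure and the sensitivity-sampling failure gives success probability $0.98$.

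For communication, $\MorrisCountPP$ costs $\tO{s+k\log n}$ by \lemref{lem:morris:count:pp}. Uploading $\widetilde{\mu}(X_i)$ is the step I expect to be the main obstacle, since sending exact reals would cost $s\log n$ bits. I would send it via $\PowerApprox$ with base $2$, costing $\O{\log\log n}$ bits per site, which is $\tO{s}$ in total. The factor-$2$ rounding only changes site-selection probabilities by a constant factor, which can be absorbed either through a rejection step as in $\LazySampling$ or by noting that the composite distribution is still a constant approximation of $\mu$. Writing the counts $m_i$ costs $\O{s\log m}$ bits, or less if only the nonzero ones are listed with their indices, so it is $\tO{s}$ either way.

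Each sample needs its weight $\widehat{\mu}(x)$, which is a power of $(1+\eps/2)$ within a $\poly(n)$ range and so costs $\O{\log\frac1\eps+\log\log n}$ bits. Each encoded point costs $\O{\log k+d\log(\frac1\eps,d,\log n)}$ bits by \lemref{lem:efficient:coreset:cluster}. Over $m=\tO{k/\min(\eps^4,\eps^{2+z})}$ samples this totals $\tO{dk/\min(\eps^4,\eps^{2+z})}$. The centers $S$ are already on the blackboard from the bicriteria stage, so they are not charged again. Summing the pieces gives the claimed bound.
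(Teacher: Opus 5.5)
Your proposal is correct and follows essentially the same route as the paper's proof. Both use $\MorrisCountPP$ to get constant-factor estimates of $|C_j|$ and $\Cost(C_j,S)$, which give $\O{1}$-approximate sensitivities; then the sensitivity-sampling guarantee of \cite{BansalCPSS24}; then the compact encoding of \lemref{lem:x:to:xprime:cluster}; and finally communication is summed step by step.

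You are in fact more careful than the paper on two points. The paper never charges the upload of the totals $\widetilde{\mu}(X_i)$, which would cost $\O{s\log n}$ bits if sent exactly. It also glosses over the difference between the $\widehat{\mu}$ and $\widetilde{\mu}$ weights. Your $\PowerApprox$ fix works provided each sampled point is weighted by $1/(m\,p(x))$, where $p(x)$ is the actual composite sampling probability; its site can compute $p(x)$ exactly from the blackboard.
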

\begin{proof}
 By \lemref{lem:morris:count:pp}, we can get an $\O{1}$-approximation for $|C_j|$ and $\Cost(C_j, S)$ with probability at least $0.99$. Every site can also get the precise value of $\Cost(x,S)$ locally. Since $\mu(x) :=  \frac{1}{4} \cdot \left ( \frac{1}{k |C_j|}  + \frac{\Cost(x, S)}{k \Cost(Cj, S)} + \frac{\Cost(x, S)}{\Cost(X, S)} + \frac{\Delta_x}{\Cost(X,S)} \right )$ and $\Delta_p := \Cost (C_j, S) / |C_j|$, we can evaluate an $\O{1}$-approximation $\widetilde{\mu}(x)$ of $\mu(x)$. 
 Although \cite{BansalCPSS24} proves \thmref{thm:sens:sampling:bounds} in the setting that $|S|=k$ and sample points with probability $\mu (x)$, they only need $|S| = \O{k}$ and $\frac{1}{\mu(x)} \le \O{1} \cdot \max \{ \frac{1}{k |C_j|}, \frac{\Cost(x, S)}{k \Cost(Cj, S)}, \frac{\Cost(x, S)}{\Cost(X, S)}, \frac{\Delta_x}{\Cost(X,S)} \}$ in their proof. Therefore, \thmref{thm:sens:sampling:bounds} is still valid under the condition that $|S| = \O{k}$ and $\widetilde{\mu}(x) = \O{1} \cdot \mu(x)$. Let $B = \{ (x, \frac{1}{m \widetilde{\mu} (x)})) \}$, then $B$ is an $(1+\frac{\eps}{2})$-coreset for $X$ with probability at least $0.99$ if we have $\O{1}$-approximation for $|C_j|$ and $\Cost(C_j, S)$.

We have $A = \{ (x', \frac{1}{m \widetilde{\mu} (x)}) \}$. Since $B$ is a $(1+\frac{\eps}{2})$-coreset for $X$, by \lemref{lem:x:to:xprime:cluster}, $A$ is a $(1+\eps)$-coreset for $X$. Therefore, \algref{alg:coreset:blackboard} returns an $(1+\eps)$-coreset for $X$ with probability at least $0.98$.

We need $\O{s+k \log n}$ bits to get the $\O{1}$-approximation for $|C_j|$ and $\Cost(C_j, S)$.
We need $\O{s \log m} = \tO{s}$ bits to write $m_i$ on the blackboard.
We need $\O{\log k + d \log \log n}$ bits to upload $x'$, because we need to point out which center in $S$ is closest to $x$, which needs $\O{\log k}$ bits, and we need to upload the power index, which needs $\O{\log \frac{1}{\eps} \log \log n}$ bits. 
We need $\O{\frac{1}{\eps} \log \log n}$ bits to upload $\frac{1}{m \widehat{\mu} (x)}$ because we only need to upload $\widehat{\mu} (x)$, which is a $(1+\frac{\eps}{2})$-approximation of $\widetilde{\mu} (x)$. 
Since $|A| = \O{\frac{k}{\min(\eps^4,\eps^{2+z})}}$, we need $\tO{s+ k \log n + \frac{dk}{\min(\eps^4,\eps^{2+z})}}$ bits in total.
\end{proof}

We complete this subsection by claiming that we can get a $(1+\eps)$-strong coreset with communication cost no more than $\tO{s\log(n\Delta)+dk\log(n\Delta)+\frac{dk}{\min(\eps^4,\eps^{2+z})}}$ bits.

\begin{theorem}
There exists a protocol on $n$ points distributed across $s$ sites that produces a $(1+\eps)$-strong coreset for $(k,z)$-clustering with probability at least $0.97$ that uses \[ \tO{s\log(n)+dk\log(n)+\frac{dk}{\min(\eps^4,\eps^{2+z})}} \] total bits of communication in the blackboard model. 
\end{theorem}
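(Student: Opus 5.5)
The protocol runs the bicriteria stage and then the coreset stage. First run \algref{alg:blackboard:large:k} to put an $(\O{1},\O{1})$-bicriteria solution $S$ on the blackboard, with $|S|=\O{k}$. Then feed $S$ into \algref{alg:coreset:blackboard}. Every site can read $S$ from the blackboard, so it can compute $\Cost(x,S)$ and the nearest center $\pi_S(x)$ for each of its local points. This is all the second stage needs as input.

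For correctness, \lemref{lem:blackboard:large:k} says the first stage returns a valid $S$ except with probability at most $0.02$. Conditioned on a valid $S$, \lemref{lem:coreset:blackboard} says the second stage outputs a $(1+\eps)$-coreset except with probability at most $0.02$. That lemma already combines the Morris-counter estimates from \lemref{lem:morris:count:pp}, the robustness of \thmref{thm:sens:sampling:bounds} to $\O{k}$ centers and $\O{1}$-approximate sensitivities, and the encoding guarantee of \lemref{lem:x:to:xprime:cluster}. The two stages use independent randomness, so a union bound gives overall failure probability at most about $0.03$, i.e., success with probability at least $0.97$. The step I would handle most carefully is this probability accounting. \lemref{lem:blackboard:large:k} states its communication bound only with probability $0.99$, separately from its $0.98$ correctness guarantee. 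To get a worst-case bit bound, I would have the protocol abort once it exceeds the bound. This converts the communication failure into an output failure, and I would then fold that event into the union bound as well, so that all failure events together still sum to at most $0.03$.

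For communication, the first stage costs $\tO{s\log n+kd\log n}$ bits by \lemref{lem:blackboard:large:k:cc}. The second stage costs $\tO{s+k\log n+\frac{dk}{\min(\eps^4,\eps^{2+z})}}$ bits by \lemref{lem:coreset:blackboard}. That bound has two ingredients. The first is the Morris-counter approximations of $|C_j|$ and $\Cost(C_j,S)$. The second is the uploads of the $\tO{k/\min(\eps^4,\eps^{2+z})}$ sampled points, each encoded relative to $S$ with $\O{\log k+d\,\polylog(\tfrac1\eps,\log n)}$ bits plus an approximate weight. Summing the two stages gives the claimed total. Because $\Delta=\poly(n)$, the $\log n$ and $\log(n\Delta)$ forms of the bound agree up to constant factors.
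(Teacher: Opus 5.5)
This is the paper's proof: run \algref{alg:blackboard:large:k}, pass the resulting $S$ to \algref{alg:coreset:blackboard}, and add the guarantees of \lemref{lem:blackboard:large:k} and \lemref{lem:coreset:blackboard}. Your abort rule is a sensible refinement the paper glosses over: every site sees the blackboard's running bit count, so it is enforceable, and it turns the probability-$0.99$ communication bound into a worst-case one. The one concrete slip is in the accounting you singled out. The stated guarantees give failure probability $0.02+0.02=0.04$, or $0.05$ with your abort event, not $0.03$. To certify $0.97$ you must tighten the tunable constants inside the components: the constant in $N=\O{k}$, $\delta$ in $\LOneSampling$, the repetition count $l$ in $\MorrisCountPP$, and the sensitivity-sampling sample size. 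Each can push its failure probability below any fixed constant at a constant-factor cost in communication. The paper's own proof, which asserts $0.97$ from two $0.98$ guarantees, has the same slack.
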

\begin{proof}
    We can use \algref{alg:blackboard:large:k} to generate  an $(\O{1}, \O{1})$-bicriteria approximation $S$, and then use \algref{alg:coreset:blackboard} to get an $(1+\eps)$-strong coreset. By \lemref{lem:blackboard:large:k} and \lemref{lem:coreset:blackboard}, the returned set $A$ is a $(1+\eps)$-strong coreset with probability at least $0.97$ and uses a communication cost of $\tO{s\log(n)+dk\log(n)+\frac{dk}{\min(\eps^4,\eps^{2+z})}}$ bits for total.
\end{proof}


\section{Coordinator Model of Communication}
\seclab{sec:dis:message}
\applab{app:dis:message}
In this section, we discuss our distributed algorithms for the coordinator model of communication. 
Note that if all servers implement the distributed algorithms from the blackboard setting, the resulting communication would have $\O{dsk \log n}$ terms simply from the $\O{k}$ rounds of adaptive sampling across the $s$ servers. 

\subsection{Efficient Sampling in the Coordinator Model}
\seclab{sec:adaptive:coordinator}
\applab{app:adaptive:coordinator}
To avoid these $\O{dsk \log n}$ terms in communication cost, we need a more communication efficient method so that we can apply adaptive sampling and upload coreset with low communication cost.
We propose an algorithm called $\EfficientCommunication$, which can send the location of a point with high accuracy and low cost.
Suppose that a fixed site has a set of points $X = \{ x_{1}, \cdots, x_{l} \}$, and another site has a point $y$. 
To simulate adaptive sampling while avoiding sending each point explicitly to all sites, our goal is to send a highly accurate approximate location of $y$ to the first site with low cost. 
To that end, $\EfficientCommunication$ approximates and sends the coordinate of a point in every dimension. 
For the $i$-th dimension, we use a subroutine $\GreaterThanPP$, which can detect whether $y^{(i)}$ is greater than $x_{j}^{(i)}$ with high probability, where $y^{(i)}$ is the coordinate of the $i$-th dimension of $y$ and $x_{j}^{(i)}$ is the coordinate of the $i$-th dimension of $x_j \in X$.

\begin{algorithm}[!htb]
\caption{$\GreaterThanPP (x, y, \delta)$}
\alglab{alg:greater:than++}
\begin{algorithmic}[1]
\REQUIRE{Integer $x, y$, failure parameter $\delta$}
\ENSURE{A boolean $bool$ that shows whether $x$ is greater than $y$}
\STATE{$N \gets \O{\log \frac{1}{\delta}}, r \gets 0$}
\FOR{$i \gets 1$ to $N$}
    \STATE{$r_i \gets 0$ if $\GreaterThan (x, y)$ tells us $x \leq y$, and $r_i \gets 1$ otherwise}
    \STATE{$r \gets r + r_i$}
\ENDFOR
\IF{$\frac{r}{N} \leq \frac{1}{2}$}
\STATE{\textbf{return}  False}
\ELSE
\STATE{\textbf{return}  True}
\ENDIF
\end{algorithmic}
\end{algorithm}

The subroutine $\GreaterThanPP$ is an adaptation of $\GreaterThan$ by \cite{nisan1993communication}. 
The main purpose of the subroutine is that by using a binary search, we can find the relative location of $y^{(i)}$ to $\{ x_{1}^{(i)}, \cdots, x_{l}^{(i)} \}$ within $\log l$ comparisons. 
Then, comparing $y^{(i)}$ with $x_{j}^{(i)} + (1+\eps)^m$, we can find the best $x_{j}^{(i)}$ and $m$ to approximate $y^{(i)}$. 
Our algorithm for the coordinator model appears in full in \algref{alg:efficient:communication}. 

\begin{algorithm}[!htb]
\caption{$\EfficientCommunication (X, y, \eps, \delta)$}
\alglab{alg:efficient:communication}
\begin{algorithmic}[1]
\REQUIRE{A set $X = \{ x_{1}, \cdots, x_{l} \}$ owned by one site; a point $y$ owned by another site; $\eps$, accuracy parameter; $\delta$, the failure probability}
\ENSURE{The second site sends $\widetilde{y}$ to the first site, which will be an approximate location of $y$ that $\| y-\widetilde{y} \|_2 \le \eps \| x-y \|_2$ for any $x \in X$, with probability at least $1-\delta$}
\FOR{$i \gets 1$ to d}
	\STATE{Sort $X = \{ x_{i_1}, \cdots, x_{i_l} \}$ such that $x_{i_1}^{(i)} \le \cdots \le x_{i_l}^{(i)}$, where $x_{i_j}^{(i)}$ is the $i$-th coordinate of $x_{i_j}$}
	\STATE{$x_{i_0}^{(i)} \gets -\Delta, x_{i_{l+1}}^{(i)} \gets \Delta, \delta' \gets \O{\frac{\delta}{d \left (\log l + \log \log n + \log \frac{1}{\eps} \right )}}$}
	\STATE{Use binary search and $\GreaterThanPP (y^{(i)}, x_{i_j}^{(i)}, \delta')$ to find $x_{i_s}^{(i)}$ that $| x_{i_s}^{(i)} - y^{(i)} | \le | x_{i_j}^{i} - y^{(i)} |$ for any $i_j \in \{ 0, 1, 2, \cdots, l, l+1 \}$}
	\STATE{Use $\GreaterThanPP$ test whether $y^{(i)} = x_{i_s}^{(i)}$}
	\IF{$y^{(i)} = x_{i_s}^{(i)}$}
		\STATE{$\Delta y^{(i)} \gets 0$}
	\ELSE
	\STATE{$\gamma \gets \text{sign} (y^{(i)} - x_{i_s}^{(i)})$}
		\STATE{Use binary search and $\GreaterThanPP (y^{(i)}, x_{i_s}^{(i)} + \gamma \cdot (1+\eps)^t, \delta')$ to find $m$ that $| x_{i_s}^{(i)} + \gamma \cdot (1+\eps)^m - y^{(i)} | \le | x_{i_s}^{(i)} + \gamma \cdot (1+\eps)^t - y^{(i)} |$ for any $t \in \mathbb{N}$}
		\STATE{$\Delta y^{(i)} \gets \gamma \cdot (1+\eps)^m$}
	\ENDIF
	\STATE{$\widetilde{y}^{(i)} \gets x_{i_s}^{(i)} + \Delta y^{(i)}$}
\ENDFOR
\textbf{return} {$\widetilde{y} = (\widetilde{y}^{(1)}, \widetilde{y}^{(2)}, \cdots, \widetilde{y}^{(d)})$}

\end{algorithmic}
\end{algorithm}

We first show correctness of $\GreaterThanPP$, running multiple times and taking the majority vote if necessary to boost the probability of correctness. 

\begin{lemma}
    \lemlab{lem:greater:than++}
    If $\GreaterThanPP (x, y, \delta)$ returns False, then $x \leq y$ with probability at least $1 - \delta$. 
    If $\GreaterThanPP (x, y, \delta)$ returns True, then $x > y$ with probability at least $1 - \delta$. 
    Furthermore, the protocol uses $\O{\log \log n \log \frac{1}{\delta}}$ bits of communication provided that $x,y\in[-\poly(n),\poly(n)]$.
\end{lemma}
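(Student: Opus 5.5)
This is standard amplification by majority vote, so the plan has three short steps: check the base protocol's success probability, apply a Chernoff--Hoeffding bound to the vote, and count bits. I take \thmref{thm:greater:than:cc} in its usual randomized form. On two $\O{\log n}$-bit integers, a single run of $\GreaterThan$ reports the correct relation among $x<y$, $x=y$, $x>y$ with probability at least $2/3$. This form needs $\tO{\log\log n}$ bits, and more precisely $\O{\log\log n}$ bits for constant error, as in \cite{nisan1993communication}. Because $x,y\in[-\poly(n),\poly(n)]$, adding an offset of $\poly(n)$ makes both numbers nonnegative integers of $\O{\log n}$ bits, so the theorem applies directly.

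The proof runs as follows.
\begin{enumerate}
\item In \algref{alg:greater:than++}, each indicator $r_i$ is an independent Bernoulli variable. The runs are independent because every invocation uses fresh private randomness.
\item Suppose $x>y$. Then $\Ex{r_i}\ge 2/3$.
\item Suppose $x\le y$. Then $r_i=1$ only if the base protocol errs, so $\Ex{r_i}\le 1/3$.
\item The algorithm answers wrongly only if the empirical mean $r/N$ lands on the wrong side of $1/2$. That requires a deviation of at least $1/6$ from its expectation.
\item Apply \thmref{thm:hoeffding's:inequality} with $X_i=r_i\in[0,1]$ and $t=N/6$. The failure probability is at most $2\exp(-N/36)$.
\item Take $N=\O{\log\frac1\delta}$ with a large enough constant. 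Then $2\exp(-N/36)\le\delta$, which gives both claims of the lemma.
\end{enumerate}

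There is one point of interpretation. The statement is phrased as ``if it returns False then $x\le y$ with probability $1-\delta$.'' I read this, as intended, as a claim that the returned boolean agrees with the truth of $x>y$ with probability at least $1-\delta$. I will state this reading explicitly so the conditioning is not misread as a posterior probability.

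For communication, each of the $N$ independent calls costs $\O{\log\log n}$ bits. Together with $\O{1}$ bits per round to aggregate the answer, the total is $\O{\log\log n\log\frac1\delta}$ bits.

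The only delicate point is the base cost. If \thmref{thm:greater:than:cc} is read literally as $\tO{\log\log n}$ per call, the bound carries an extra $\polylog\log\log n$ factor. I would handle this by citing the $\O{\log\log n}$ bound of \cite{nisan1993communication} for constant error directly, or else by stating the conclusion up to the same $\tO{\cdot}$ convention.
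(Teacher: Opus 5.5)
Your proposal is correct and follows the same route as the paper's proof. The paper also amplifies $\GreaterThan$ by majority vote and applies Hoeffding's inequality, keeping a generic error rate $p<\frac{1}{2}$ and deviation $\frac{N}{2}-Np$ where you fix $p=\frac{1}{3}$ and deviation $\frac{N}{6}$; it then multiplies the per-call cost by $N=\O{\log\frac{1}{\delta}}$. Your caveat about the per-call cost ($\tO{\log\log n}$ in the cited theorem versus $\O{\log\log n}$) is justified, since the paper's proof simply assumes the latter.
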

\begin{proof}
    We define random variable $E_i = r_i$. 
    By \thmref{thm:greater:than:cc}, $\GreaterThan$ will give a wrong answer with probability at most $p < \frac{1}{2}$, so
    \begin{align*}
        \mathbb{E} [E_i | x \leq y] &= 0 \cdot \PPr{E_i = 0 | x \leq y} + 1 \cdot \PPr{E_i = 1 | x \leq y} \leq p, \\
        \mathbb{E} [E_i | x > y] &= 0 \cdot \PPr{E_i = 0 | x > y} + 1 \cdot \PPr{E_i = 1 | x > y} \geq 1 - p.
    \end{align*}
    Since $E_i \in [0,1]$, by Hoeffding's inequality, c.f., \thmref{thm:hoeffding's:inequality}, 
    \begin{align*}
        \PPr{\sum_{i=1}^N E_i > \frac{N}{2} \bigg | x \leq y} &= \PPr{\sum_{i=1}^N E_i > Np +  \frac{N}{2} - Np \bigg | x \leq y} \\
        &\leq \PPr{\bigg | \sum_{i=1}^N E_i - N\mathbb{E}[E_i] \bigg | > \frac{N}{2} - Np \bigg | x \leq y} \\
        &\leq 2 \exp (-\frac{\left ( \frac{N}{2} - Np \right )^2}{N \cdot 1^2}) \\
        &= \delta,
    \end{align*}
    and
    \begin{align*}
        \PPr{\sum_{i=1}^N E_i \leq \frac{N}{2} \bigg | x > y} &= \PPr{\sum_{i=1}^N E_i \leq N(1 - p) + \frac{N}{2} - N(1 - p) \bigg | x > y} \\
        &\leq \PPr{\bigg | \sum_{i=1}^N E_i - N \cdot \mathbb{E}[E_i] \bigg | > \frac{N}{2} - Np \bigg | x > y} \\
        &\leq 2 \exp (-\frac{\left ( \frac{N}{2} - Np \right )^2}{N \cdot 1^2}) \\
        &= \delta.
    \end{align*}
    Hence $\GreaterThanPP (x, y, \delta)$ is correct with probability at least $1 - \delta$.
    
    For the communication cost, since $\GreaterThan$ uses $\O{\log \log n}$ bits of communication, provided that $x,y\in[-\poly(n),\poly(n)]$, and we run $\GreaterThan$ for $N = \O{\log \frac{1}{\delta}}$ times, then the total communication cost is $\O{\log \log n \log \frac{1}{\delta}}$ bits in total.
\end{proof}

$\EfficientCommunication (X, y, \eps, \delta)$ in \algref{alg:efficient:communication} will send $\widetilde{y}$, a good approximation location of $y$ with probability at least $1-\delta$. 
In addition, it only uses $d \log l \polylog (\log n, \log l, \frac{1}{\eps}, \frac{1}{\delta})$ bits.
To formally prove these guarantees, we first show $\| y-\widetilde{y} \|_2 \le \eps \| x-y \|_2$ with probability $1-\delta$ and then upper bound the total communication of the protocol. 
\begin{lemma}
\lemlab{lem:efficient:communication:correctness}
    $\EfficientCommunication (X, y, \eps, \delta)$ will send $\widetilde{y}$ to the first site such that $\| y-\widetilde{y} \|_2 \le \eps \| x-y \|_2$ with probability at least $1-\delta$.
\end{lemma}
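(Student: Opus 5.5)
The plan has two parts: a union bound showing that every call to $\GreaterThanPP$ answers correctly, and a deterministic per-coordinate error bound that is then summed over the coordinates.

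\emph{Step 1 (correctness of all comparisons).} For each coordinate $i\in[d]$, the first binary search over the sorted list $x_{i_0}^{(i)},\ldots,x_{i_{l+1}}^{(i)}$ uses $\O{\log l}$ calls to $\GreaterThanPP$. The equality test uses $\O{1}$ calls. The second binary search over exponents $t$ uses $\O{\log\log_{1+\eps}(n\Delta)}=\O{\log\log n+\log\frac1\eps}$ calls, since $|y^{(i)}-x_{i_s}^{(i)}|\le 2\Delta=\poly(n)$. Each call is run with failure probability $\delta'$, so by \lemref{lem:greater:than++} each one errs with probability at most $\delta'$. Summing over all $d$ coordinates, the total number of calls is $\O{d(\log l+\log\log n+\log\frac1\eps)}$. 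With the stated choice of $\delta'$, a union bound shows that all calls answer correctly with probability at least $1-\delta$. I condition on this event for the rest of the argument.

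\emph{Step 2 (per-coordinate error).} Conditioned on correct comparisons, the first binary search returns $x_{i_s}^{(i)}$ minimizing $|x_{i_j}^{(i)}-y^{(i)}|$ over all $j$. In particular, for every $x\in X$,
\[a_i:=|x_{i_s}^{(i)}-y^{(i)}|\le |x^{(i)}-y^{(i)}|.\]
If $a_i=0$, then $\widetilde y^{(i)}=y^{(i)}$. Otherwise $\gamma$ has the correct sign, and the second search returns the power $(1+\eps)^m$ closest to $a_i$. Since $a_i\ge 1$ by integrality of the grid, $a_i$ lies between two consecutive powers $(1+\eps)^t\le a_i<(1+\eps)^{t+1}$. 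Hence
\[|\widetilde y^{(i)}-y^{(i)}|=|(1+\eps)^m-a_i|\le (1+\eps)^{t+1}-(1+\eps)^t=\eps(1+\eps)^t\le \eps a_i.\]
In either case, $|\widetilde y^{(i)}-y^{(i)}|\le \eps|x^{(i)}-y^{(i)}|$ for every $x\in X$.

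\emph{Step 3 (summing).} Squaring and summing over $i$ gives
\[\|y-\widetilde y\|_2^2\le \eps^2\sum_i |x^{(i)}-y^{(i)}|^2=\eps^2\|x-y\|_2^2\]
for every $x\in X$, which is the claim.

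The main obstacle is making Step 2 precise. Binary search driven by comparison oracles returns the correct neighbouring pair only if every oracle answer along the path is right, which is exactly what the union bound in Step 1 supplies. The offset search needs the monotonicity of $t\mapsto (1+\eps)^t$ and the condition $a_i\ge 1$, so that the search range is finite and the rounding error is relative rather than additive.
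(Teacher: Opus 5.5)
Your proposal is correct and follows essentially the same route as the paper. Both arguments use a union bound over the $\O{d(\log l+\log\log n+\log\frac1\eps)}$ calls to \GreaterThanPP to condition on correct comparisons, then bound each coordinate's error by $\eps$ times the distance to the nearest local coordinate, then sum the squares. Your bracketing $(1+\eps)^t\le a_i<(1+\eps)^{t+1}$, using $a_i\ge 1$ from integrality, is a slightly more careful version of the paper's step that asserts $|m'|<1$. That assertion silently relies on the same fact, since it fails when $0<a_i<1$.
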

\begin{proof}
We condition on the correctness of $\GreaterThanPP$. 
We will prove that $\| y-\widetilde{y} \|_2 \le \eps \| x-y \|_2$. First, we will prove $|y^{(i)} - \widetilde{y}^{(i)}| \le \eps |x_j^{(i)} - y^{(i)}|$ for any $j \in [d]$ and $x_j \in X$.
	
	Let $X^{(i)} = \{ -\Delta, x_1^{(i)}, x_2^{(i)}, \cdots, x_l^{(i)}, \Delta \}$. Assume $x_s^{(i)} \in X^{(i)}$ that $|x_s^{(i)} - y^{(i)}| \le |x_j^{(i)} - y^{(i)}|$ for any $x_j^{(i)} \in X^{(i)}$. If $y^{(i)} = x_s^{(i)}$, $\widetilde{y}^{(i)}$ is just $y^{(i)}$, which means $|y^{(i)} - \widetilde{y}^{(i)}| = |x_s^{(i)} - y^{(i)}| = 0$.
	
	If $y^{(i)} \ne x_s^{(i)}$, we have $\widetilde{y}^{(i)} = x_s^{(i)} + \Delta y^{(i)}$ and $\Delta y^{(i)} = \gamma \cdot (1+\eps)^m$, where $\gamma = \text{sign} (y^{(i)} - x_{s}^{(i)})$. Assume $y^{(i)} - x_{s}^{(i)} = \gamma \cdot (1+\eps)^{m+m'}$. Then
	\begin{equation*}
		|y^{(i)} - \widetilde{y}^{(i)}| = | (y^{(i)} - x_{s}^{(i)}) - (\widetilde{y}^{(i)} - x_{s}^{(i)})| = |(1+\eps)^{m+m'} - (1+\eps)^m|.
	\end{equation*}
	Since $| x_{s}^{(i)} + \gamma \cdot (1+\eps)^m - y^{(i)} | \le | x_{s}^{(i)} + \gamma \cdot (1+\eps)^t - y^{(i)} |$ for any $t \in \mathbb{N}$, $|m'|$ must be less than $1$. Hence
	\begin{equation*}
		|y^{(i)} - \widetilde{y}^{(i)}| = (1+\eps)^{m+m'} \cdot | 1 - (1+\eps)^{-m'}| \le \eps (1+\eps)^{m+m'} = \eps |x_s^{(i)} - y^{(i)}|.
	\end{equation*}
	
	Since $|x_s^{(i)} - y^{(i)}| \le |x_j^{(i)} - y^{(i)}|$ for any $j$, and $|y^{(i)} - \widetilde{y}^{(i)}| \le \eps |x_s^{(i)} - y^{(i)}|$, thus $|y^{(i)} - \widetilde{y}^{(i)}| \le \eps |x_j^{(i)} - y^{(i)}|$. Therefore, for any $x_j \in X$,
	\begin{equation*}
	    \| y-\widetilde{y} \|_2^2 = \sum_{i=1}^d |y^{(i)} - \widetilde{y}^{(i)}|^2 \le \sum_{i=1}^d \eps^2 |x_j^{(i)} - y^{(i)}|^2 = \eps^2 \cdot \| x_j-y \|_2.
	\end{equation*}
	Hence $\| y-\widetilde{y} \|_2 \le \eps \| x-y \|_2$ for any $x \in X$.
		
	\paragraph{Analysis of the failure probability.} It remains to upper bound the failure probability by counting how many times $\GreaterThanPP(\cdot, \cdot, \delta')$ is run in the algorithm.
	For every dimension $i$, we run $\GreaterThanPP(\cdot, \cdot, \delta')$ to find the closest $x_{i_s}^{(i)}$ to $y^{(i)}$ and $m$ to approximate $y^{(i)} - x_{i_s}^{(i)}$.
	To find the closest $x_{i_s}^{(i)}$ to $y^{(i)}$, we can use binary search to find $x_{i_p}^{(i)} \le y^{(i)} \le x_{i_{p+1}}^{(i)}$, and then compare their midpoint $\frac{x_{i_p}^{(i)} + x_{i_{p+1}}^{(i)}}{2}$ with $y^{(i)}$ to determine which one is closer to $y^{(i)}$. Since there are $l+2$ elements in $\{ x_{i_0}^{i}, x_{i_1}^{i}, \cdots, x_{i_l}^{i}, x_{i_{l+1}}^{i} \}$, we need to run $\GreaterThanPP(\cdot, \cdot, \delta')$ for $\log (l+2) + 1$ times in this stage.
	
	To find $m$ that $| x_{i_s}^{(i)} + \gamma \cdot (1+\eps)^m - y^{(i)} | \le | x_{i_s}^{(i)} + \gamma \cdot (1+\eps)^t - y^{(i)} |$ for any $t \in \mathbb{N}$, we can use binary search to find $(1+\eps)^q \le |y^{(i)} - x_{i_s}^{(i)}| \le (1+\eps)^{q+1}$, and then compare their midpoint with $|y^{(i)} - x_{i_s}^{(i)}|$ to determine which one is closer to $|y^{(i)} - x_{i_s}^{(i)}|$. Since we already have $x_{i_p}^{(i)} \le y^{(i)} \le x_{i_{p+1}}^{(i)}$ and $|x_{i_{j}}^{(i)}| \le \Delta$ for all $j \in \{ 0, 1, \cdots, l+1 \}$, we only need to search $q$ among $\{ 0, 1, \cdots, \log_{1+\eps} \Delta \}$. Since $\Delta = \poly (n)$, we can use binary search to find $q$ in at most $\O{\log \frac{\log n}{\eps}}$ rounds. Thus we need to run $\GreaterThanPP(\cdot, \cdot, \delta')$ for $\O{\log \log n + \log \frac{1}{\eps}}$ times in this stage.
	Therefore, we will apply $\GreaterThanPP$ for at most $\O{d \left (\log l + \log \log n + \log \frac{1}{\eps} \right )}$ times. Since $\GreaterThanPP(\cdot, \cdot, \delta')$ will return correct result with failure probability at most $\delta'$ and $\delta' = \O{\frac{\delta}{d \left (\log l + \log \log n + \log \frac{1}{\eps} \right )}}$, $\EfficientCommunication(X, y, \eps, \delta)$ has a failure probability at most $\delta$.
\end{proof}
Next, we analyze the communication complexity of our algorithm. 
\begin{lemma}
    \lemlab{lem:efficient:communication:cc}
    $\EfficientCommunication (X, y, \eps, \delta)$ uses $d \log \ell \polylog (\log n, \log \ell, \frac{1}{\eps}, \frac{1}{\delta})$ bits of communication for total, where $\ell = |X|$ is the number of points owned by the first site.
\end{lemma}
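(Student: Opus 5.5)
The plan is to count the invocations of $\GreaterThanPP$ made by \algref{alg:efficient:communication}, multiply by the per-call cost from \lemref{lem:greater:than++}, and then add any lower-order overhead. Everything is done one coordinate at a time, and the $d$ coordinates are handled independently, so it suffices to bound the cost of a single coordinate $i$ and multiply by $d$.

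For a fixed coordinate $i$, the first step is the binary search over the sorted list $x_{i_0}^{(i)} \le \cdots \le x_{i_{\ell+1}}^{(i)}$. As in the failure-probability analysis in the proof of \lemref{lem:efficient:communication:correctness}, this takes $\O{\log(\ell+2)}+1$ calls to $\GreaterThanPP$. The midpoint comparison and the equality test add $\O{1}$ further calls. The second step is the binary search for the exponent $m$. The relevant range is $\{0,\ldots,\log_{1+\eps}\Delta\}$, which has size $\O{\frac{\log n}{\eps}}$ because $\Delta=\poly(n)$, so this search costs $\O{\log\log n+\log\frac1\eps}$ calls.

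Each call uses $\delta'=\Theta\left(\frac{\delta}{d(\log \ell+\log\log n+\log\frac1\eps)}\right)$. By \lemref{lem:greater:than++}, each call then costs $\O{\log\log n\cdot\log\frac1{\delta'}}$ bits. All compared quantities lie in $[-\poly(n),\poly(n)]$, and this includes $x_{i_s}^{(i)}+\gamma(1+\eps)^t$, since $t$ is at most $\log_{1+\eps}\Delta$. Multiplying gives a per-coordinate cost of
\[
\O{\left(\log \ell+\log\log n+\log\tfrac1\eps\right)\cdot\log\log n\cdot\log\tfrac{1}{\delta'}}.
\]
Summing over the $d$ coordinates gives $d\log\ell\,\polylog(\log n,\log\ell,\frac1\eps,\frac1\delta)$, where the $\log\ell$ factor is pulled out and the additive $\log\log n+\log\frac1\eps$ is absorbed into the polylog. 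There is also $\O{1}$ bits per step to announce the outcome of each comparison so that both parties can follow the search; this is dominated by the bound above.

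The main obstacle is the factor $\log\frac1{\delta'}$, which equals $\log\frac1\delta+\log d+\log(\log\ell+\log\log n+\log\frac1\eps)$. It contains a $\log d$ term that the stated bound does not list explicitly. I would handle it in one of two ways. The first is to note that $d$ will be the projected dimension $\O{\log(sk)}$ in every application, so $\log d$ is polylogarithmic in the other parameters. The second, which I would try first, is to fold $\log d$ into $\polylog(\frac1\delta)$ by taking $\delta\le 1/d$, which costs nothing because the later union bounds already demand $\delta\le 1/\poly(d)$. I also need to take a little care that $\log\ell\ge 1$ (for instance by treating $\log\ell$ as $\log(\ell+2)$), so that the additive terms really are dominated by the product form.
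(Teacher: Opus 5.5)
Your argument is correct and is the paper's: count the $\O{d(\log\ell+\log\log n+\log\frac{1}{\eps})}$ calls to $\GreaterThanPP$ and multiply by the $\O{\log\log n\log\frac{1}{\delta'}}$ per-call cost. The only extra cost is transmitting $\widetilde{y}$, which is lower order: the paper charges it explicitly as $\O{\log\ell+\log\log n+\log\frac{1}{\eps}}$ bits per coordinate for $(i_s,\gamma,m)$, and your announced comparison outcomes already convey this. You are right that $\log\frac{1}{\delta'}$ hides a $\log d$ that the paper absorbs silently, but both of your remedies overreach. The first fails because the paper also runs $\EfficientCommunication(S,x,\eps',\delta)$ in the full dimension $d$. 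The second relies on a claim the paper does not support, since its choice $\delta=\O{1/(sk+m)}$ does not by itself force $\delta\le 1/\poly(d)$. The clean fix is simply to restrict to $\delta\le 1/d$, or to let $\log d$ be absorbed into the final $\tO{\cdot}$ bounds.
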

\begin{proof}
We need to run $\GreaterThanPP(\cdot, \cdot, \delta')$ for at most $\O{d \left (\log \ell  + \log \log n + \log \frac{1}{\eps} \right )}$ times. Since $\GreaterThanPP(\cdot, \cdot, \delta')$ cost $\O{\log \log n \log \frac{1}{\delta'}}$ bits for a single running, it takes $d \log \ell  \polylog (\log n, \log \ell , \frac{1}{\eps}, \frac{1}{\delta})$ bits to run $\GreaterThanPP(\cdot, \cdot, \delta')$ for total.
	
	We also need communication to send $\widetilde{y} = (\widetilde{y}^{(1)}, \widetilde{y}^{(2)}, \cdots, \widetilde{y}^{(d)})$ to the first site. However, we only need to send $x_{i_s}^{(i)}$ and $\Delta y^{(i)}$ to the first site. Since the first site already has the location of all $x \in X$, we only need to send $i_s$ to identify $x_{i_s}^{(i)}$, which takes $\O{\log \ell }$ bits. Since $\Delta y^{(i)} = \gamma \cdot (1+\eps)^m$, we only need to send $\gamma$ and $m$. Sending $\gamma$ requires $\O{1}$ bits since $\gamma \in \{ -1, 0, 1 \}$. Sending $m$ requires $\O{\log \log n + \log \frac{1}{\eps}}$ bits since $(1+\eps)^m = \poly(n)$. Therefore, it takes at most $d \log \ell  \polylog (\log n, \frac{1}{\eps})$ bits to send $\widetilde{y} = (\widetilde{y}^{(1)}, \widetilde{y}^{(2)}, \cdots, \widetilde{y}^{(d)})$.
	
	Hence \algref{alg:efficient:communication} takes at most $d \log \ell  \polylog (\log n, \log \ell , \frac{1}{\eps}, \frac{1}{\delta})$ to send $\widetilde{y}$ to the first site.
\end{proof}

Putting together \lemref{lem:efficient:communication:correctness} and \lemref{lem:efficient:communication:cc}, we have:
\begin{lemma}
    \lemlab{lem:efficient:communication}
    $\EfficientCommunication (X, y, \eps, \delta)$ will send $\widetilde{y}$ to the first site such that $\| y-\widetilde{y} \|_2 \le \eps \| x-y \|_2$ with probability at least $1-\delta$. Furthermore, it only uses $d \log \ell \polylog (\log n, \log \ell , \frac{1}{\eps}, \frac{1}{\delta})$ bits of communication for total, where $\ell  = |X|$ is the number of points owned by the first site.
\end{lemma}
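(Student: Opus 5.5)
The lemma combines the two preceding lemmas, so the plan is to invoke them one after the other and check that their hypotheses and parameters match.

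\textbf{Correctness.} For the approximation guarantee, I would appeal directly to \lemref{lem:efficient:communication:correctness}. Conditioned on every call to $\GreaterThanPP(\cdot,\cdot,\delta')$ answering correctly, each coordinate satisfies $|y^{(i)}-\widetilde{y}^{(i)}|\le\eps|x^{(i)}-y^{(i)}|$ for every $x\in X$. Summing squares over the $d$ coordinates then gives $\|y-\widetilde{y}\|_2\le\eps\|x-y\|_2$. The only probabilistic ingredient is a union bound over the calls to $\GreaterThanPP$:
\begin{itemize}
\item each call fails with probability at most $\delta'$ by \lemref{lem:greater:than++};
\item there are $\O{d(\log\ell+\log\log n+\log\frac1\eps)}$ calls in total, coming from the binary search over the $\ell+2$ sorted coordinate values and the exponential/binary search over $\O{\log_{1+\eps}\Delta}$ exponents;
\item hence the choice of $\delta'$ in \algref{alg:efficient:communication} makes the total failure probability at most $\delta$.
\end{itemize}

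\textbf{Communication.} I would then invoke \lemref{lem:efficient:communication:cc}. Each call to $\GreaterThanPP(\cdot,\cdot,\delta')$ costs $\O{\log\log n\log\frac1{\delta'}}$ bits, again by \lemref{lem:greater:than++}, which applies because all compared values lie in $[-\poly(n),\poly(n)]$ under $\Delta=\poly(n)$. Multiplying by the number of calls and noting that $\log\frac1{\delta'}=\polylog(d,\log\ell,\log n,\frac1\eps,\frac1\delta)$ gives a cost of
\[
d\log\ell\,\polylog\left(\log n,\log\ell,\tfrac1\eps,\tfrac1\delta\right)
\]
bits, treating the $\log d$ factor inside $\log\frac1{\delta'}$ as absorbed, as the paper does. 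To this I add the cost of transmitting, for each coordinate, the index $i_s$ ($\O{\log\ell}$ bits), the sign $\gamma$ ($\O{1}$ bits), and the exponent $m$ ($\O{\log\log n+\log\frac1\eps}$ bits). These contribute $d\log\ell\,\polylog(\log n,\frac1\eps)$ more bits, which fits within the same bound.

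\textbf{Main obstacle.} The delicate step is the correctness argument in the case where the closest coordinate value is one of the sentinels $\pm\Delta$, or where $y^{(i)}$ lies between two points. One has to confirm that the exponent search range $\{0,\ldots,\log_{1+\eps}\Delta\}$ suffices. It does, because every offset is an integer bounded by $2\Delta$; when the offset is nonzero it is at least $1$, so the exponent is nonnegative. Given that, the rounding error is at most an $\eps$ fraction of the nearest-coordinate gap, and that gap is at most the gap to any $x\in X$. With this checked, the two cited lemmas combine directly to give the statement.
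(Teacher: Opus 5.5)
Your proposal is correct and follows the paper, whose proof of this lemma is exactly the combination of \lemref{lem:efficient:communication:correctness} and \lemref{lem:efficient:communication:cc}. Your added check, that a nonzero integer offset is at least $1$ so restricting the exponent search to $t\in\mathbb{N}$ loses nothing, is a detail the paper leaves implicit. One small slip in that check: the range $\{0,\ldots,\log_{1+\eps}\Delta\}$ suffices because, thanks to the sentinels $\pm\Delta$, the offset to the \emph{nearest} value in $X^{(i)}$ is at most $\Delta$; a bound of $2\Delta$ would not by itself justify that range, though it would change the search cost only negligibly.
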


\subsection{\texorpdfstring{$(1+\eps)$}{(1+eps)}-Coreset via Sensitivity Sampling in the Coordinator Model}
Given the analysis in \appref{app:adaptive:coordinator}, it follows that we can effectively perform adaptive sampling to achieve a bicriteria approximation at the coordinator. 
It remains to produce a $(1+\eps)$-coreset, for which we again use sensitivity sampling.

$\EfficientCommunication(X, y, \eps, \delta)$ can send the location of $y$ using low communication cost. 
However, its communication cost is $d \log\ell \polylog (\log n, \log \ell, \frac{1}{\eps}, \frac{1}{\delta})$, where $\ell = |X|$.
For the coordinator model, suppose that each site $i\in[s]$ has a dataset $X_i$. 
Since $|X_i| = \O{n}$, and the coordinator needs to send the approximate location of all $\O{k}$ samples to each site to apply adaptive sampling, which would still require $dsk \log n \polylog (\log n, \frac{1}{\eps}, \frac{1}{\delta})$ bits to send the location of samples using $\EfficientCommunication(X_i, y, \eps, \delta)$.
Fortunately, we can generate a $(1+\frac{\eps}{2})$-coreset $P_i$ for every $X_i$, which has a size of $\tO{\frac{k}{\min \{ \eps^{4}, \eps^{2+z} \}}}$. Since adaptive sampling also works for the weighted case, it is enough to generate an $(\O{1}, \O{1})$-bicriteria approximation for the weighted coreset $P = \cup_{i \in [s]} P_i$. 
Therefore, only $dsk \polylog (\log n, k, \frac{1}{\eps}, \frac{1}{\delta})$ bits are necessary to send the location of the samples to each site.

To further eliminate the multiple dependency of $d$, we notice that only the distance between the data point $x \in P_i$ and the center generated by adaptive sampling $s \in S$ is necessary to apply adaptive sampling and sensitivity sampling. 
Hence, we can use the Johnson-Lindenstrauss transformation to map $P$ to $\pi(P) \subset \mathbb{R}^{d'}$, where $d' = \O{\log (sk)}$. 
As a result of the JL transformation, $\pi(P)$ is located in a lower-dimensional space, but the pairwise distance is still preserved by the mapping. 
Therefore, we can further reduce the communication cost needed to send the location of the samples to each site, which is now $sk \polylog (\log n, s, k, \frac{1}{\eps}, \frac{1}{\delta})$ bits.
Hence, we can apply adaptive sampling and send the exact location of the sampled point $s$ to the coordinator. 
The coordinator can then use $\EfficientCommunication(\pi(P_i), \pi(s), \eps, \delta)$ to send the approximate location of $\pi(s)$ to every site, which is accurate enough for every site to update a constant approximation for the cost of points. 
Thus, we can repeat adaptive sampling using a low cost of communication and get  an $(\O{1}, \O{1})$-bicriteria approximation $S$ for the optimal $(k, z)$-clustering.

Since every site has an approximate copy of $\pi(S)$, they can send a constant approximation of $|C_j \cap P_i|$ and $\cost (C_j \cap P_i, S)$ to the coordinator. 
However, the site may assign a point to a center in $S$ that is not nearest to it. 
This is because the site only has an approximate location of $\pi(S)$, and therefore can assign $x$ to another center $s'$ if $\cost (x, s')$ is very close to $\cost (x, s)$, where $s$ is the center closest to $x$. 
Fortunately, the proof of \thmref{thm:sens:sampling:bounds} does not require that all points $x$ be assigned to its closet point. \thmref{thm:sens:sampling:bounds} is still valid if $\cost (x, s') \le \O{1}\cdot \cost (x, s)$. 
Hence, we can generate a $(1+\frac{\eps}{4})$-coreset for $P$ by sensitivity sampling.

After applying sensitivity sampling to sample the points, each site can send the sampled points to the coordinator using $\EfficientCommunication(S, x, \eps', \delta)$. 
By a similar discussion of efficient encoding in \lemref{lem:x:to:xprime:cluster}, we can prove that the sampled points form a $(1+\frac{\eps}{2})$-coreset $A'$ for $P = \cup_{i \in [s]} P_i$. 
Therefore, $A'$ would be a $(1+\eps)$-coreset for $X = \cup_{i \in [s]} X_i$, and the coordinator can solve the $(k, z)$-clustering based on the coreset $A'$. Since $|P_i| = \tO{\frac{k}{\min (\eps^4, \eps^{2+z})}}$, we can run $\EfficientCommunication(P_i, y, \eps, \delta)$ using $\polylog (\log n, k, \frac{1}{\eps}, \frac{1}{\delta})$ bits of communication to send each sampled point.
We give the algorithm in full in \algref{alg:coreset:coordinator}. 

\begin{algorithm}[!htb]
\caption{$(1+\eps)$-coreset for the coordinator model}
\alglab{alg:coreset:coordinator}
\begin{algorithmic}[1]
\REQUIRE{The dataset $X_i$ every site $i$ owns, $i \in [s]$}
\ENSURE{A $(1+\eps)$-coreset $A'$ sent to the coordinator}
\STATE{Every site $i$ generates a $(1+\frac{\eps}{2})$-coreset $P_i$ of $X_i$}
\STATE{The coordinator send random seed to every site that generate Johnson-Lindenstrauss $\pi$, such that $\| \pi(x) - \pi(y) \|_2 \in [\frac{1}{2} \| x-y \|_2, \frac{3}{2} \| x-y \|_2 ]$ for any $x, y \in P = \cup_{i \in [s] P_i}$ and $\pi(x) \in R^{d'}$ where $d' = \O{\log (sk)}$}
\STATE{$S \gets \{ s_0 \}$, where $s_0$ is a point sampled by the coordinator}
\STATE{$S_j \gets \emptyset$ for $j \in [s]$}
\STATE{$N \gets \O{k}, \eps' \gets \O{\eps^z}, m \gets \tO{\frac{k}{\min (\eps^4, \eps^{2+z})}}, \delta \gets \O{\frac{1}{sk + m}}$}\
\FOR{$i \gets 1$ to $N$}
	\STATE{Use $\EfficientCommunication(\pi(P_j), \pi(s_{i-1}), \eps', \delta)$ to send $\widetilde{s}_{i-1}^{(j)}$, an approximation of $\pi(s_{i-1})$ to every site $j$. $S_j \gets S_j \cup \{ \widetilde{s}_{i-1}^{(j)} \}$ }
	\STATE{Every site updates the cost $\cost (\pi(P_j), S_j)$ and sends $\widetilde{D_j}$, a constant approximation of $\cost (\pi(P_j), S_j)$ to the coordinator}
	\STATE{$s_i \gets \LazySampling (\{ \Cost(\pi(x), S_j) \}, \{ \widetilde{D_j} \})$}	
\ENDFOR
\STATE{Every site $j$ computes $|C_l^{(j)}|$ and $\cost(\pi(C_l^{(j)}), S_j)$ for $l \le |S|$, where $C_l^{(j)} = \{ x \in P_j: \dist (\pi(x), \widetilde{s}_l^{(j)} ) \le \dist (\pi(x), \widetilde{s}_p^{(j)}), \forall p \le |S| \}$}
\STATE{Every site $j$ sends a constant approximation of $|C_l^{(j)}|$ and $\cost \left (\pi \left (C_l^{(j)} \right ), S_j \right )$ to the coordinator}
\STATE{The coordinator computes $\sum_{l \le |S|} |C_l^{(j)}|$ and $\sum_{l \le |S|} \cost \left (\pi \left (C_l^{(j)} \right ), S_j \right )$, and sends constant approximation of them to every site}
\STATE{Every site computes $\widetilde{\mu}(x)$ as an $\O{1}$-approximation of $\mu(x)$ locally for all $x \in P_j$, and send $\widehat{\mu} (P_i)$, a constant approximation of $\widetilde{\mu} (P_i) = \sum_{x \in P_i} \widetilde{\mu} (x)$ to the coordinator}
\STATE{The coordinator samples site $j$ with probability $\frac{\widehat{\mu} (P_j)}{\sum_{i=1}^s \widehat{\mu} (P_i)}$ independently for $m$ times. Let $m_j$ be the time site $j$ are sampled. Sends $m_j$ to site $j$}
\FOR{$i \gets 1$ to $s$}
    \STATE{$A_i \gets \emptyset, A_i' \gets \emptyset$}
    \FOR{$j \in [m_i]$}
        \STATE{Sample $x$ with probability $p_x = \frac{\widetilde{\mu}(x)}{\widetilde{\mu}(P_i)}$}
        \IF{$x$ is sampled}
            \STATE{$A_i \gets A_i \cup \{ x, \frac{1}{m \widetilde{\mu} (x)} \}, \widetilde{x} \gets \EfficientCommunication(S, x, \eps', \delta), A_i' \gets A_i' \cup \{ \widetilde{x}, \frac{1}{m \widehat{\mu} (x)} \}$, where $\widehat{\mu} (x)$ is a $(1+\frac{\eps}{2})$-approximation of $\widetilde{\mu} (x)$}
        \ENDIF
    \ENDFOR
\ENDFOR
\STATE{$A' \gets \cup_{i = 1}^s A_i'$}
\STATE{\textbf{return}  $A'$}

\end{algorithmic}
\end{algorithm}

We now show that \algref{alg:coreset:coordinator} will return a $(1+\eps)$-coreset of $X$ with constant probability and uses low communication cost.
We will first show that $A'$ is a $(1+\eps)$-coreset of $X$. 
\begin{lemma}
	\lemlab{lem:coreset:coordinator:correctness}
    \algref{alg:coreset:coordinator} returns a $(1+\eps)$ coreset of $X$ with probability at least $0.96$ in the coordinator model.
\end{lemma}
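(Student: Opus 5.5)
We would prove \lemref{lem:coreset:coordinator:correctness} by composing coresets and union-bounding a constant number of failure events. The chain has four links. First, each $P_i$ is a $(1+\frac{\eps}{2})$-coreset of $X_i$, so $P=\cup_i P_i$ is a $(1+\frac{\eps}{2})$-coreset of $X$, since costs add across sites for every fixed $C$. Second, the adaptive-sampling phase yields a bicriteria solution $S$ for $P$. Third, sensitivity sampling on $P$ with respect to $S$ gives a $(1+\frac{\eps}{4})$-coreset $A$ of $P$. Fourth, the approximately communicated version $A'$ remains a $(1+\O{\eps})$-coreset of $A$. Multiplying the distortions and rescaling $\eps$ by a constant gives a $(1+\eps)$-coreset of $X$. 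The failure events are the local coresets, the JL map, all invocations of $\EfficientCommunication$ (with $\delta=\O{1/(sk+m)}$, a union bound over the $\O{sk+m}$ calls), adaptive sampling, and sensitivity sampling. Each is assigned failure probability at most $0.01$, and the total stays within the $0.04$ budget.

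For the bicriteria step, we condition on JL distorting all pairwise distances of $\pi(P)\cup\pi(S)$ by at most a factor in $[\frac12,\frac32]$. The $d'=\O{\log(sk)}$ dimensions suffice up to adjusting for the sizes $|P_i|$; if needed we take $d'=\O{\log(sk/\eps)}$. We then need to show that site $j$'s cost $\Cost(\pi(x),S_j)$ is within a constant factor of $\Cost(x,S)$. By \lemref{lem:efficient:communication}, $\|\pi(s)-\widetilde{s}^{(j)}\|_2\le\eps'\|\pi(x)-\pi(s)\|_2$ for every $x\in P_j$, so the triangle inequality gives $\dist(\pi(x),\widetilde{s}^{(j)})\in[(1-\eps'),(1+\eps')]\cdot\dist(\pi(x),\pi(s))$. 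Combined with JL, every $d_x$ used in $\LazySampling$ is within a constant factor $\gamma$ of $\Cost(x,S)$. The sampled point itself is a genuine point of $P$, since the site holds $x$ and the coordinator learns it. By \lemref{lem:lazy:sampling}, the samples follow the distribution in \algref{alg:adaptive:(k, z):sampling} with the $\gamma$-slack that algorithm allows. \thmref{thm:adaptive:sampling:k-means}, in its weighted form, then gives $|S|=\O{k}$ and $\Cost(P,S)=\O{1}\cdot\OPT(P)$ with probability $0.99$.

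For the sensitivity step, each site assigns $x$ to the approximately nearest $\widetilde{s}_l^{(j)}$. By the same distortion bounds, the chosen center $s'$ satisfies $\Cost(x,s')\le\O{1}\cdot\Cost(x,S)$. The cluster sizes and costs the coordinator aggregates are constant-factor approximations. Hence $\widetilde{\mu}(x)\ge\Omega(1)\cdot\mu(x)$, where $\mu$ is defined with respect to this valid $\O{1}$-approximate assignment, with $|S|=\O{k}$. As argued in the proof of \lemref{lem:coreset:blackboard}, this is all the analysis of \thmref{thm:sens:sampling:bounds} needs. So $A$, with weights $\frac{1}{m\widetilde{\mu}(x)}$, is a $(1+\frac{\eps}{4})$-coreset of $P$, and replacing $\widetilde{\mu}$ by its $(1+\frac{\eps}{2})$-approximation $\widehat{\mu}$ changes each weight by a $(1\pm\O{\eps})$ factor.

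The main obstacle is the last link: $\widetilde{x}$ is produced by $\EfficientCommunication$ relative to $S$, not by the offset encoding of \lemref{lem:x:to:xprime:cluster}, so we need a version of that lemma. By \lemref{lem:efficient:communication}, $\|x-\widetilde{x}\|_2\le\eps'\dist(x,S)$. For any $C$, \factref{fact:gen:tri} gives $\dist(\widetilde{x},C)^z\le(1+\eps)\dist(x,C)^z+(1+\frac{2z}{\eps})^z\eps'^z\dist(x,S)^z$, and symmetrically for the lower bound. Summing with weights, the additive error is $\O{\eps}\cdot\Cost_w(A,S)$ once $\eps'=\O{\eps^{z}}$, with $\eps'$ shrunk by a further $z$-dependent constant if necessary. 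By the coreset property of $A$ applied to $S$, together with a constant-factor bicriteria comparison between $\Cost(P,S)$ and $\Cost(P,C)$ for $|C|=k$, this is at most $\O{\eps}\cdot\Cost_w(A,C)$. This requires relating $\OPT(P)$ to $\Cost(P,C)$, which holds because $S$ is an $\O{1}$-approximation and $C$ is any $k$-set. Rescaling $\eps$ then gives a $(1+\frac{\eps}{2})$-coreset $A'$ of $P$, and hence a $(1+\eps)$-coreset of $X$.
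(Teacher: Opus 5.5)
Your chain of reductions and your union bound match the paper's proof. The chain is: $P$ is a $(1+\frac{\eps}{2})$-coreset of $X$; the bicriteria $S$ comes from JL plus $\EfficientCommunication$ plus $\LazySampling$; sensitivity sampling tolerates the approximate assignment; the samples are re-encoded. Your treatment of the JL dimension ($|P|$ is $\tO{sk/\min(\eps^4,\eps^{2+z})}$, not $\O{sk}$) and of the local-coreset failure probability is more careful than the paper's.

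The trouble is where you go beyond the paper: your explicit proof of the last link, which the paper only attributes to ``the proof of \lemref{lem:x:to:xprime:cluster}''. Two steps there fail as written.

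\textbf{Bounding $\Cost_w(A,S)$.} You bound it ``by the coreset property of $A$ applied to $S$''. But $A$ is only a coreset for sets of $k$ centers, and $|S|=\O{k}$ is in general larger than $k$. You only need $\Cost_w(A,S)=\O{\Cost(P,S)}$, and this follows from the sampling distribution itself. The term $\Cost(x,S)/\Cost(P,S)$ in $\mu$ gives $\widetilde{\mu}(x)=\Omega(\Cost(x,S)/\Cost(P,S))$. Hence each of the $m$ samples contributes at most $\O{\Cost(P,S)/m}$ to $\Cost_w(A,S)$, deterministically. Alternatively, apply Markov to $\Ex{\Cost_w(A,S)}=\Cost(P,S)$, at the price of one more failure event.

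\textbf{The choice of $\eps'$.} With \factref{fact:gen:tri} at parameter $\eps$, the additive coefficient is $(1+\frac{2z}{\eps})^z(\eps')^z\approx(2z\eps'/\eps)^z$. This is $\O{\eps}$ only when $\eps'\lesssim\eps^{1+1/z}/z$. For $\eps'=\Theta(\eps^z)$ it equals $\Theta(\eps^{z(z-1)})$, which is not $\O{\eps}$ once $z<(1+\sqrt{5})/2$, and is $\Theta(1)$ for $k$-median. Shrinking $\eps'$ by a $z$-dependent constant cannot repair an exponent. There are two fixes:
\begin{itemize}
\item Take $\eps'=\Theta(\eps^2/z)$. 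This is harmless, since the cost of $\EfficientCommunication$ is only polylogarithmic in $1/\eps'$.
\item Replace the Fact by $|\dist(\widetilde{x},C)^z-\dist(x,C)^z|\le z2^{z-1}(\dist(x,C)^{z-1}b+b^z)$ with $b=\eps'\dist(x,S)$. Combine it with H\"older's bound $\sum_x w(x)\dist(x,C)^{z-1}\dist(x,S)\le(\Cost_w(A,C))^{1-1/z}(\Cost_w(A,S))^{1/z}$. This shows $\eps'=\Theta_z(\eps)$ suffices, so the paper's choice $\eps'=\O{\eps^z}$ is fine.
\end{itemize}

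\textbf{Failure probabilities.} Your five failure events at $0.01$ each sum to $0.05$, which exceeds the $0.04$ budget. Allot, e.g., $0.008$ to each.
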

\begin{proof}
Since we use $\EfficientCommunication(\pi(P_j), \pi(s_{i-1}), \eps', \delta)$ to send $\widetilde{s}_{i-1}^{(j)}$, by \lemref{lem:efficient:communication}, $\| \pi(s_{i-1}) - \widetilde{s}_{i-1}^{(j)} \|_2 \le \eps' \| \pi(x) - \pi(s_{i-1}) \|_2$ for any $x \in P_j$. Hence $\| \pi(x) - \widetilde{s}_{i-1}^{(j)} \|_2$ would be a $(1+\eps')$-approximation of $\| \pi(x) - \pi(s_{i-1}) \|_2$.

    Since $\pi$ is a Johnson-Lindenstrauss mapping, by \thmref{thm:johnson:lindenstrauss}, $\| \pi(x) - \pi(y) \|_2 \in [\frac{1}{2} \| x-y \|_2, \frac{3}{2} \| x-y \|_2 ]$ with probability at least $1-\frac{1}{\O{sk}}$. Therefore, $\| \pi(x) - \pi(y) \|_2 \in [\frac{1}{2} \| x-y \|_2, \frac{3}{2} \| x-y \|_2 ]$ holds for any $x, y \in P =\cup_{i \in [s]} P_i$ with probability at least $0.99$. Since $\| \pi(x) - \pi(y) \|_2$ is a constant approximation of any $x, y \in P$, and $\| \pi(x) - \widetilde{s}_{i-1}^{(j)} \|_2$ is a $(1+\eps')$-approximation of $\| \pi(x) - \pi(s_{i-1}) \|_2$, thus $\| \pi(x) - \widetilde{s}_{i-1}^{(j)} \|_2$ would be a $4$-approximation of $\| x-s_{i-1} \|_2$.
    
    By sending $\widetilde{D_j}$, an $\O{1}$-approximation of $\cost(\pi(P_j), S_j)$, the coordinator owns a $\O{1}$-approximation of $\cost(P_j, S)$. Thus, we can apply adaptive sampling, and by \thmref{thm:adaptive:sampling:k-means}, $S$ would be an $(\O{1}, \O{1})$-bicriteria approximation of the optimal solution for $(k, z)$-clustering of $P = \cup_{j=1}^s P_j$ with probability at least $0.99$.
	
	Since $C_l^{(j)} = \{ x \in P_j: \dist (\pi(x), \widetilde{s}_l^{(j)} ) \le \dist (\pi(x), \widetilde{s}_p^{(j)}), \forall p \le |S| \}$ and $\frac{1}{4} \cdot \dist (x, s_p) \le \dist (\pi(x), \widetilde{s}_p^{(j)}) \le 4 \cdot \dist (x, s_p)$, therefore, $\frac{1}{16} \cdot \dist (x, S) \le \dist (\pi(x), \widetilde{s}_l^{(j)} ) \le 16 \cdot \dist (x, S)$ if $x \in C_l^{(j)}$. Hence, if we apply sensitivity sampling with
    \[ \mu(x) =  \frac{1}{4} \cdot \left ( \frac{1}{k \left |\bigcup_{j = 1}^s C_l^{(j)} \right |}  + \frac{\Cost(\pi(x), S_j)}{k \sum_{l \le |S|} D_{l}^{(j)}} + \frac{\Cost(\pi(x), S_j)}{\sum_{l \le |S|, j \in [s]} D_{l}^{(j)}} + \frac{\Delta_x}{\sum_{l \le |S|, j \in [s]} D_{l}^{(j)}} \right ), \]
	where $D_{l}^{(j)} = \cost \left (\pi \left (C_l^{(j)} \right ), S_j \right )$ and $\Delta_p = \frac{\sum_{l \le |S|} D_{l}^{(j)}}{\left |\bigcup_{j = 1}^s C_l^{(j)} \right |}$, it would return a $(1+\frac{\eps}{4})$-coreset for $P = \cup_{j=1}^s P_j$ with probability at least $0.99$.
	
	Since we send $\widetilde{x}$ to the coordinator by $\EfficientCommunication(S, x, \eps', \delta)$, thus $\| \widetilde{x} - x \|_2 \le \eps' \| s - x \|_2$, where $s$ is the point in $S$ closest to $x$. Since $\| \widetilde{x} - x \|_2 \le \eps' \| s - x \|_2$, by the proof of \lemref{lem:x:to:xprime:cluster}, $(1-\frac{\eps}{4})\cdot\Cost(C,A)\le\Cost(C,A')\le(1+\frac{\eps}{4})\cdot\Cost(C,A)$ for any solution $|C| = k$.
	
	Since $P$ is a $(1+\frac{\eps}{2})$-coreset of $X$, $A$ is a $(1+\frac{\eps}{4})$-coreset of $P$, and $A'$ is a $(1+\frac{\eps}{4})$-coreset of $A$, therefore, $A'$ is a $(1+\eps)$-coreset for $X$.
	
	\paragraph{Evaluation of the success probability.}
    To apply adaptive sampling, we need to apply $\EfficientCommunication(\pi(P_j), \pi(s_{i-1}), \eps', \delta)$ for every $P_j$ and $s_{i-1}$, which would be $\O{sk}$ times of running in total. Since we need to apply $\EfficientCommunication(S, x, \eps', \delta)$ for every point sampled by sensitivity sampling, it would be $m$ times of running in total. Since $\delta \gets \O{\frac{1}{sk + m}}$, the probability that all instances of $\EfficientCommunication$ returns an approximate location accurate enough is at least $0.99$.
	
	Since the Johnson-Lindenstrauss would preserve the pairwise distance up to a multiple constant factor with probability at least $0.99$, the probability that adaptive sampling returns an $(\O{1}, \O{1})$-bicriteria approximation is at least $0.99$, and sensitivity sampling returns a coreset with probability at least $0.99$, therefore, the total probability that \algref{alg:coreset:coordinator} returns a coreset for $X$ is at least $0.96$.
\end{proof}

\begin{lemma}
	\lemlab{lem:coreset:coordinator:cc}
     \algref{alg:coreset:coordinator} uses $\tO{sk + \frac{dk}{\min ( \eps^4, \eps^{2+z} )} + dk \log n}$ bits of communication.
\end{lemma}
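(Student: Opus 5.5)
The plan is to go through \algref{alg:coreset:coordinator} line by line, bound the bits spent in each phase, and check that every term falls under $\tO{sk + \frac{dk}{\min(\eps^4,\eps^{2+z})} + dk\log n}$. Throughout, $|P_j| = \tO{\frac{k}{\min(\eps^4,\eps^{2+z})}}$ by \thmref{thm:sens:sampling:bounds}. Hence $\log|P_j| = \polylog(k,\frac{1}{\eps})$, and the failure parameter $\delta = \O{\frac{1}{sk+m}}$ contributes only $\polylog(s,k,\frac{1}{\eps})$ factors. The local coreset construction on line 1 needs no communication. The Johnson--Lindenstrauss seed is broadcast to all $s$ sites; by \thmref{thm:johnson:lindenstrauss} it has $\O{\log(sk)\log d}$ bits, so this step costs $\tO{s}$ in total. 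Sending the initial point $s_0$ in the clear, if needed, costs $\O{d\log n}$.

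The adaptive sampling loop runs for $N=\O{k}$ iterations, and I would bound a single iteration. First, $\EfficientCommunication(\pi(P_j),\pi(s_{i-1}),\eps',\delta)$ is run for each of the $s$ sites in dimension $d'=\O{\log(sk)}$. By \lemref{lem:efficient:communication} each call costs $d'\log|P_j|\polylog(\log n,\log|P_j|,\frac{1}{\eps'},\frac{1}{\delta})$. Since $\eps'=\O{\eps^z}$, this is $\polylog(\log n,s,k,\frac{1}{\eps})$ bits per site. Second, each site reports $\widetilde{D_j}$ through $\PowerApprox$, which costs $\O{\log\log n}$ bits. Third, $\LazySampling$ costs $\O{\log s}$ bits to name the chosen site plus $\O{1}$ bits for a possible $\bot$. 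The new center $s_i$ then has to reach the coordinator, and here I would only send $\pi(s_i)$, which takes $\O{d'\log n}$ bits. Summing over $\O{k}$ iterations and $s$ sites gives $sk\cdot\polylog(\log n,s,k,\frac{1}{\eps})=\tO{sk}$, plus $\O{k\log(sk)\log n}$, which lies within $\tO{dk\log n}$. If the original coordinates of the $\O{k}$ centers must be held at the coordinator for the later encoding step, that costs $\O{dk\log n}$ bits, which is exactly the $dk\log n$ term in the statement.

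In the sensitivity phase, each site sends constant approximations of $|C_l^{(j)}|$ and $\cost(\pi(C_l^{(j)}),S_j)$ for $l\le |S|=\O{k}$, at $\O{\log\log n}$ bits each, for $\tO{sk}$ in total. The coordinator sends aggregate approximations back to every site, and this broadcast is again $\tO{sk}$. Each site then sends $\widehat{\mu}(P_j)$ ($\tO{s}$ total), and the coordinator sends each $m_j$ ($\O{s\log m}=\tO{s}$). For the final samples, each of the $m$ sampled points is sent with $\EfficientCommunication(S,x,\eps',\delta)$ against the $\O{k}$ centers, costing $d\log k\,\polylog(\log n,\frac{1}{\eps},\frac{1}{\delta})$ bits, plus an $\tO{1}$-bit weight. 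This gives $\tO{\frac{dk}{\min(\eps^4,\eps^{2+z})}}$.

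The main obstacle is making sure no term of the form $sd$ or $sk\log n$ appears. Two places need care. The first is the bookkeeping of how $s_i$ reaches the coordinator: with the ordering above it passes through the coordinator only once, in the projected space, and only then is rebroadcast compactly through $\EfficientCommunication$. The second is the broadcast of aggregate cluster statistics back to each site, which must use $\O{\log\log n}$ bits per value, not $\log n$. Once both are checked, summing the phases gives the claimed bound.
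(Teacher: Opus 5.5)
Your proposal is correct and uses the same phase-by-phase accounting as the paper's proof: $\tO{s}$ for the JL seed, and $\O{sk}$ calls of $\EfficientCommunication$ in dimension $d'=\O{\log(sk)}$ at $\polylog(\log n,s,k,\frac{1}{\eps})$ bits each. The remaining charges also match: $\O{dk\log n}$ for the exact centers held by the coordinator, $\O{sk\log\log n}$ for the cluster statistics in both directions, and $m$ calls of $\EfficientCommunication$ against $S$ for the final samples. You are slightly more careful than the paper in two places: you explicitly charge the per-iteration $\widetilde{D_j}$ reports and the $\LazySampling$ messages, and you charge $\O{\log\log n+\log\frac{1}{\eps}}$ bits per $(1+\frac{\eps}{2})$-approximate weight, whereas the paper writes $\O{\frac{1}{\eps}\log\log n}$, which multiplied by $m$ would not obviously fit the stated bound.
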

\begin{proof}
 Since we apply Johnson-Lindenstrauss to map $P$ to $\mathbb{R}^{d'}$ where $d' = \O{\log |P|} = \O{\log (sk)}$, by \thmref{thm:johnson:lindenstrauss}, the coordinator needs to send an $\O{\log \left ( sk \log (sk) \right ) \cdot \log d}$ bits random seed to each site so that every site can generate the map $\pi$. It uses $s \polylog (s, k, d)$ bits.

    By \lemref{lem:efficient:communication}, since $|P_j| = \O{k}$ and $d' = \O{\log (sk)}$, we need to use $\polylog(\log n, s, k, \frac{1}{\eps'}, \frac{1}{\delta})$ bits to apply $\EfficientCommunication(\pi(P_j), \pi(s_{i-1}), \eps', \delta)$. Since $|S| = \O{k}$, we need to use $d\polylog(\log n, s, k, \frac{1}{\eps'}, \frac{1}{\delta})$ bits to apply $\EfficientCommunication(S, x, \eps', \delta)$. Since we need to apply $\EfficientCommunication(\pi(P_j), \pi(s_{i-1}), \eps', \delta)$ for $\O{sk}$ times to send location of $\pi(s_{i-1})$ to each site, and apply $\EfficientCommunication(S, x, \eps', \delta)$ for $\O{\frac{k}{\min ( \eps^4, \eps^{2+z} )}}$ times to send location of points sampled by sensitivity sampling to the coordinator, then the total cost to apply $\EfficientCommunication$, in bits, is
	\begin{equation*}
		sk \polylog\left(\log n, s, k, \frac{1}{\eps}\right) + \frac{dk}{\min ( \eps^4, \eps^{2+z} )} \polylog\left (\log n, s, k, \frac{1}{\eps}\right).
	\end{equation*}
	
	We need to send the exact location of $s_{i-1}$ to the coordinator, which takes $\O{dk \log n}$ bits.
	Every site $j$ sends a constant approximation of $|C_l^{(j)}|$ and $\cost \left (\pi \left (C_l^{(j)} \right ), S_j \right )$ to the coordinator, and the coordinator needs to send constant approximation of $\sum_{l \le |S|} |C_l^{(j)}|$ and $\sum_{l \le |S|} \cost \left (\pi \left (C_l^{(j)} \right ), S_j \right )$ to every site, which costs $\O{sk \log \log n}$ bits for total. Every site needs to send $\widehat{\mu} (P_i)$ to the coordinator, which costs $\O{s \log \log n}$ bits. The coordinator needs to send $m_i$, the number of points to be sampled to every site, which costs $\O{s \log k}$ bits. We need $\O{\frac{1}{\eps} \log \log n}$ bits to send the weight $\widehat{\mu} (x)$ to the coordinator.
	
	Therefore, the total communication cost for \algref{alg:coreset:coordinator} is
	\begin{equation*}
		\tO{sk + \frac{dk}{\min ( \eps^4, \eps^{2+z} )} + dk \log n}
	\end{equation*}
	bits.   
\end{proof}
We now give our full guarantees of our algorithm for the coordinator model. 
\begin{theorem}
	\thmlab{thm:coreset:coordinator}
    There exists an algorithm that returns a $(1+\eps)$ coreset of $X$ with probability at least $0.96$ in the coordinator model and uses $\tO{sk + \frac{dk}{\min ( \eps^4, \eps^{2+z} )} + dk \log n}$ bits of communication.
\end{theorem}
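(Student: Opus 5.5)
The plan is to take \algref{alg:coreset:coordinator} as the witnessing algorithm and obtain \thmref{thm:coreset:coordinator} by combining the two lemmas just proved about it. \lemref{lem:coreset:coordinator:correctness} says the output $A'$ is a $(1+\eps)$-coreset of $X$ with probability at least $0.96$. \lemref{lem:coreset:coordinator:cc} bounds the total communication by $\tO{sk + \frac{dk}{\min(\eps^4,\eps^{2+z})} + dk\log n}$ bits.

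The argument has three steps.

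First, we confirm that the failure events in the correctness lemma have already been union-bounded inside it. These events are the Johnson--Lindenstrauss failure, the failures of the $\O{sk+m}$ invocations of $\EfficientCommunication$ (each with $\delta=\O{1/(sk+m)}$), the adaptive sampling failure from \thmref{thm:adaptive:sampling:k-means}, and the sensitivity sampling failure from \thmref{thm:sens:sampling:bounds}. So the success probability $0.96$ carries over directly.

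Second, we check that the communication bound is deterministic, or at least holds on the same success event. Each term in \lemref{lem:coreset:coordinator:cc} (the JL seed, the $\O{sk}$ calls to $\EfficientCommunication$ in dimension $d'=\O{\log(sk)}$ on sets of size $|P_j|$, the $\O{k}$ exact centers, and the encoded samples) is a worst-case count once $N=\O{k}$ and $m$ are fixed. The bound therefore holds unconditionally, apart from the $\bot$ outcomes of $\LazySampling$. For these we rely on \lemref{lem:lazy:sampling}: $\O{N}$ repetitions suffice with probability $0.99$, and we can absorb this into the constants by fixing the iteration count at $\gamma N$ in advance.

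Third, we convert $\log n$ into $\log(n\Delta)$ using the standing assumption $\Delta=\poly(n)$. This recovers the informal statement \thmref{thm:coordinator:inf}.

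The only delicate point is the term $|P_j|$ in the cost of $\EfficientCommunication$. The local coreset has size $\tO{k/\min(\eps^4,\eps^{2+z})}$ rather than $\O{k}$, so the cost contributes $\log|P_j|$, which is only polylogarithmic in $k$ and $1/\eps$. We must verify that this does not multiply $s$ by a $\poly(1/\eps)$ factor. It does not, since it enters only through $\log\ell$, and that factor is hidden in the $\tO{\cdot}$.
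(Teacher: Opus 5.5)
Your proposal is correct and follows the paper's proof, which likewise takes \algref{alg:coreset:coordinator} and combines \lemref{lem:coreset:coordinator:correctness} with \lemref{lem:coreset:coordinator:cc}. Your extra checks are sound refinements rather than a different route. This includes the observation that $|P_j|$ is really $\tO{k/\min(\eps^4,\eps^{2+z})}$ rather than the $\O{k}$ quoted in the communication lemma, yet enters only through a $\log|P_j|$ factor, and the choice to fix the $\LazySampling$ iteration count in advance with its failure probability folded into the constants.
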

\begin{proof}
The proof follows immediately from \lemref{lem:coreset:coordinator:correctness} and \lemref{lem:coreset:coordinator:cc}. 
\end{proof}

\subsection{Clustering on General Topologies}
We provide a formal definition of distributed clustering under general topologies. 
Let $V = \{v_1, \ldots, v_n\}$ be a set of $n$ nodes connected via an undirected graph $G=(V,E)$, where each edge $(v_i, v_j) \in E$ represents a direct communication link between sites $v_i$ and $v_j$. 
Each node $v_i$ holds a local dataset $X_i$, and the global dataset is $X = \bigcup_{i=1}^{n} X_i$. Communication is allowed only along the edges of $G$.

The objective is to compute a set of $k$ centers $\mathcal{C} = \{c_1, \ldots, c_k\}$ that minimizes the global clustering cost for a given norm parameter $z \ge 1$:
\[\Cost(X,\mathcal{C}) = \sum_{x \in X} d(x, \mathcal{C})^z,\]
where $d(x,\mathcal{C}) = \min_{c_j \in\mathcal{C}} d(x, c_j)$ is the distance from point $x$ to its closest center in $\mathcal{C}$. 
The goal is to compute an approximate set of centers $\hat{\mathcal{C}}$ while minimizing communication over the edges of $G$.

In \cite{BalcanEL13}, the clustering problem is studied under this general-topology setting. 
They assume $s$ sites are connected by an undirected graph $G=(V,E)$ with $m = |E|$ edges. 
Their algorithm constructs a $(1+\eps)$-coreset, with communication $\O{m \left(\tfrac{dk}{\eps^4} + sk \log sk \right)}$ words for $k$-means and $\O{m \left(\tfrac{dk}{\eps^2} + sk \right)}$ words for $k$-median. 
Since a word requires $\O{d \log n}$ bits in their model, the corresponding costs are $\O{mdk \log n \left(\tfrac{d}{\eps^4} + s \log sk \right)}$ bits for $k$-means and $\O{mdk \log n \left(\tfrac{d}{\eps^2} + s \right)}$ bits for $k$-median. 

Our coordinator-model algorithm can be naturally extended to general topologies: each message to the coordinator traverses at most $m$ edges, multiplying total communication by $m$. 
The resulting communication cost, in bits, becomes
\[\tO{m \left(sk + \frac{dk}{\min(\eps^4, \eps^{2+z})} + dk \log n \right)}.\] 
Compared to \cite{BalcanEL13}, our approach improves the dependency on $n$ by replacing a multiplicative $\log n$ factor with an additive term, while preserving the same approximation guarantees for general $(k,z)$-clustering.

\section{Additional Empirical Evaluations}
In this section, we perform additional empirical evaluations on both synthetic and real-world datasets to further support our theoretical guarantees. 
Unlike the experiments in the blackboard model presented in \secref{sec:exp}, we now focus on assessing our algorithms in the \emph{coordinator model}. 
As a baseline, we use the constant-factor approximation algorithm based on adaptive sampling, denoted \texttt{AS}. 
The second applies a standard dimensionality reduction approach to each of the points, using shared randomness, which can be acquired from public randomness, corresponding with the more communication-efficient variant \texttt{AS-JL}. 
In particular, for a dataset with $d$ features, we generate a random matrix of size $d'\times d$, where each entry is drawn from the scaled normal distribution $\frac{1}{\sqrt{d'}}\cdot\calN(0,1)$ and $d'$ is set to be a constant factor smaller than $d$, i.e., $d'=\frac{d}{2}$ or $d'=\frac{d}{4}$. 
Finally, we apply our compact encoding scheme to the sampled points, centers produced by adaptive sampling, representing our coordinator model algorithm, denoted by \texttt{EAS-JL}. 

\begin{figure}[!htb]
    \centering
    \begin{subfigure}[b]{0.45\textwidth}
        \includegraphics[scale=0.4]{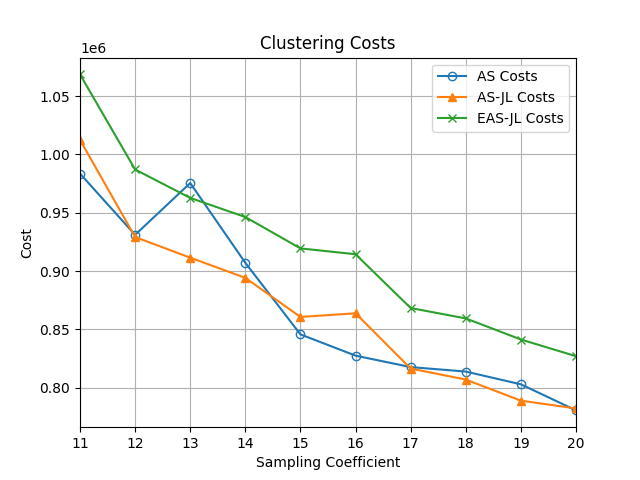}
        \caption{Clustering costs}
        \figlab{fig:digits:costs:coord}
    \end{subfigure}
    \begin{subfigure}[b]{0.45\textwidth}
        \includegraphics[scale=0.4]{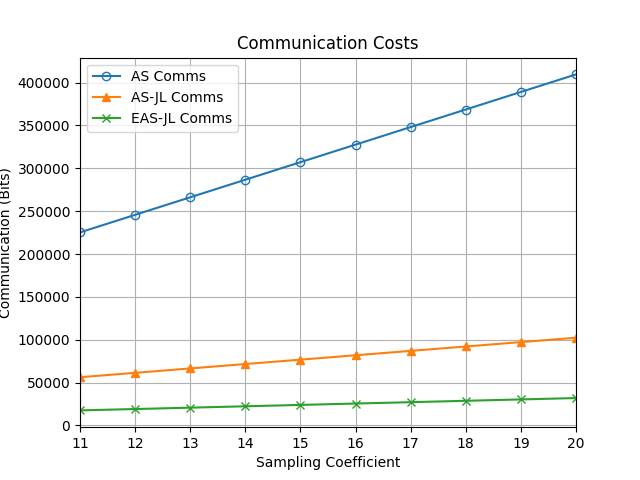}
        \caption{Communication Costs}
        \figlab{fig:digits:comms:coord}
    \end{subfigure}
    \caption{Experiments for clustering costs and communication costs on DIGITS dataset}
    \figlab{fig:digits:coord}
\end{figure}

\subsection{Real-World Dataset}
We first evaluated our algorithms on the DIGITS dataset~\cite{digits98}, as previously described in \secref{sec:exp}. 
As before, we use the parameters $n=1797$, $d=64$, and $k=10$. 
We set $d'=\frac{d}{4}=16$ for \texttt{AS-JL} and \texttt{EAS-JL}. 
We compare the clustering costs in \figref{fig:digits:costs} and communication costs in \figref{fig:digits:comms} for $c \in \{11,12,\ldots,20\}$. 
Our results show that the clustering costs of the centers returned by our algorithm \texttt{EAS-JL} is consistently competitive with the centers returned by the other algorithms \texttt{AS} and \texttt{AS-JL}, while using significantly less communication across all settings of $c \in \{11,12,\ldots,20\}$. 
For example \figref{fig:digits:coord} shows that at $c=11$, the clustering cost of \texttt{EAS-JL} is roughly $1.07$ times the clustering cost of \texttt{AS} while using $9\times$ less communication. 
This trend seems to continue throughout the range of the sampling coefficient $c$, e.g., at $c=20$, the clustering cost of \texttt{EAS-JL} remains roughly $1.07$ times the clustering cost of \texttt{AS} while still using roughly $9\times$ less communication. 

\subsection{Synthetic Dataset}
We next evaluated our algorithms on synthetic datasets consisting of Gaussian mixtures, as in \secref{sec:exp}. 
Specifically, we generated $k = 3$ clusters, each containing $640$ points for a total of $1920$ points, each with $8$ features, for a total dataset size of $15360$. 
Each cluster was drawn from a distinct Gaussian distribution whose mean was selected uniformly at random from the range $[-10,10]^8$, and whose covariance matrix was generated as a random positive-definite matrix, producing clusters of varying orientations and shapes.

We compare clustering costs and communication costs across a sampling coefficient of $c \in \{1,2,\ldots,10\}$, and find that \texttt{EAS-JL} consistently achieves clustering performance competitive with both \texttt{AS} and \texttt{AS-JL} while using substantially less communication. 
For instance, at $c=5$, the clustering costs of \texttt{EAS-JL} and \texttt{AS} are almost equal, while the communication cost of \texttt{EAS-JL} is more than a factor of $4\times$ better.
Similarly, at $c=10$, the clustering costs of \texttt{EAS-JL} and \texttt{AS} are almost equal, while the communication cost of \texttt{EAS-JL} is a factor of almost $8\times$ better.
These results parallel our findings on real-world data, demonstrating that our algorithm effectively preserves the clustering quality while significantly reducing communication in the coordinator model.

\begin{figure}[!htb]
    \centering
    \begin{subfigure}[b]{0.45\textwidth}
        \includegraphics[scale=0.4]{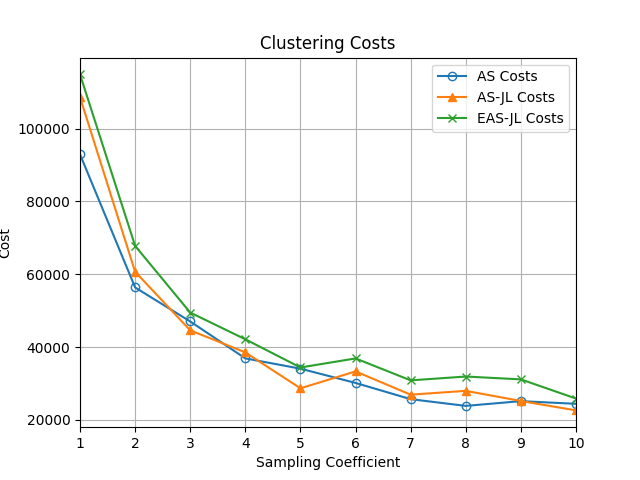}
        \caption{}
        \figlab{fig:synth:costs:coord}
    \end{subfigure}
    \begin{subfigure}[b]{0.45\textwidth}
        \includegraphics[scale=0.4]{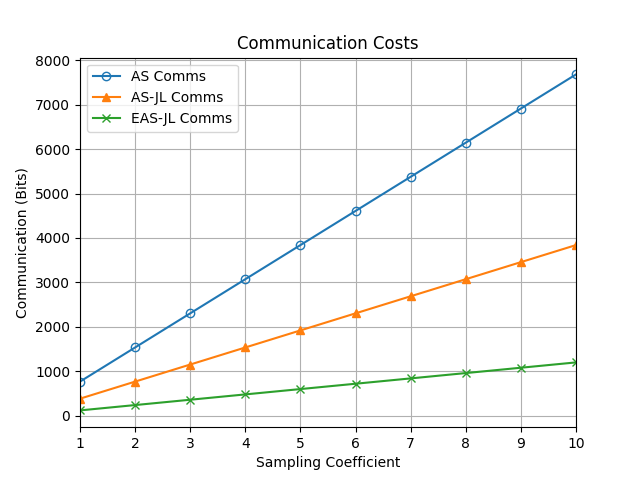}
        \caption{}
        \figlab{fig:synth:comms:cord}
    \end{subfigure}
    \caption{Experiments for clustering costs and communication costs on synthetic dataset}
    \figlab{fig:synth:coord}
\end{figure}

\end{document}

%% file: preamble.tex
\pdfoutput=1
\usepackage[T1]{fontenc}
\usepackage{lmodern}
\usepackage[protrusion=true,expansion=true]{microtype}
\usepackage{amsmath,amssymb,amsfonts,amsthm}
\usepackage{subcaption}
\usepackage{graphicx}
\usepackage{setspace}
\usepackage{fullpage}
\usepackage[backref=page]{hyperref}
\usepackage{color}
\usepackage{wrapfig}
\usepackage{tikz}
\usetikzlibrary{decorations.pathreplacing}
\usepackage{algorithm}
\usepackage{algorithmic}
\usepackage[framemethod=tikz]{mdframed}
\usepackage{xspace}
\usepackage{pgfplots}
\usepackage{framed}
\usepackage{thmtools}
\usepackage{enumitem}
\usepackage{thm-restate}
\usepackage{tabu}
\usepackage{fancyhdr}
\usepackage{comment}
\pgfplotsset{compat=1.5}

\newtheorem{theorem}{Theorem}[section]

\newtheorem{lemma}[theorem]{Lemma}

\newtheorem{definition}[theorem]{Definition}

\newtheorem{fact}[theorem]{Fact}

\newenvironment{proofof}[1]{\begin{trivlist} \item {\bf Proof
#1:~~}}
  {\qed\end{trivlist}}

\newcommand{\namedref}[2]{\hyperref[#2]{#1~\ref*{#2}}}
\newcommand{\thmlab}[1]{\label{thm:#1}}
\newcommand{\thmref}[1]{\namedref{Theorem}{thm:#1}}
\newcommand{\lemlab}[1]{\label{lem:#1}}
\newcommand{\lemref}[1]{\namedref{Lemma}{lem:#1}}

\newcommand{\seclab}[1]{\label{sec:#1}}
\newcommand{\secref}[1]{\namedref{Section}{sec:#1}}
\newcommand{\applab}[1]{\label{app:#1}}
\newcommand{\appref}[1]{\namedref{Appendix}{app:#1}}
\newcommand{\factlab}[1]{\label{fact:#1}}
\newcommand{\factref}[1]{\namedref{Fact}{fact:#1}}

\newcommand{\figlab}[1]{\label{fig:#1}}
\newcommand{\figref}[1]{\namedref{Figure}{fig:#1}}
\newcommand{\alglab}[1]{\label{alg:#1}}
\newcommand{\algref}[1]{\namedref{Algorithm}{alg:#1}}

\newcommand{\deflab}[1]{\label{def:#1}}

\newcommand{\PPr}[1]{\ensuremath{\mathbf{Pr}\left[#1\right]}}
\newcommand{\PPPr}[2]{\ensuremath{\underset{#1}{\mathbf{Pr}}\left[#2\right]}}
\newcommand{\Ex}[1]{\ensuremath{\mathbb{E}\left[#1\right]}}

\renewcommand{\O}[1]{\ensuremath{\mathcal{O}\left(#1\right)}}
\newcommand{\tO}[1]{\ensuremath{\tilde{\mathcal{O}}\left(#1\right)}}
\newcommand{\eps}{\varepsilon}

\def \GreaterThan {\mdef{\textsc{GreaterThan}}}
\def \GreaterThanPP {\mdef{\textsc{HighProbGreaterThan}}}

\def \LOneSampling    {\mdef{\textsc{L1Sampling}}}
\def \SensitivitySampling    {\mdef{\textsc{SensitivitySampling}}}
\def \AdaptiveSampling    {\mdef{\textsc{AdaptiveSampling}}}
\def \EfficientCommunication {\mdef{\textsc{EfficientCommunication}}}

\def \LazySampling    {\mdef{\textsc{LazySampling}}}
\def \PowerApprox    {\mdef{\textsc{PowerApprox}}}
\def \Initialization    {\mdef{\textsc{Initialization}}}

\def \MorrisCount {\mdef{\textsc{Morris}}}
\def \MorrisCountPP {\mdef{\textsc{DistMorris}}}
\def \dist    {\mdef{\text{dist}}}
\def \cost    {\mdef{\text{cost}}}
\def \OPT    {\mdef{\text{OPT}}}

\def \calA    {\mdef{\mathcal{A}}}

\def \calC    {\mdef{\mathcal{C}}}

\def \calE    {\mdef{\mathcal{E}}}
\def \calF    {\mdef{\mathcal{F}}}

\def \calN    {\mdef{\mathcal{N}}}

\def \calP    {\mdef{\mathcal{P}}}

\newcommand{\mdef}[1]{{\ensuremath{#1}}\xspace}  

\DeclareMathOperator*{\polylog}{polylog}
\DeclareMathOperator*{\poly}{poly}

\DeclareMathOperator*{\Cost}{cost}

\newcommand{\ignore}[1]{}

\newif\ifnotes\notestrue 
\ifnotes
\newcommand{\david}[1]{\textcolor{blue}{{\bf (David:} {#1}{\bf ) }} \marginpar{\tiny\bf
             \begin{minipage}[t]{0.5in}
               \raggedright S:
            \end{minipage}}}
\newcommand{\samson}[1]{\textcolor{red}{{\bf (Samson:} {#1}{\bf ) }} \marginpar{\tiny\bf
             \begin{minipage}[t]{0.5in}
               \raggedright S:
            \end{minipage}}}
\newcommand{\vincent}[1]{\textcolor{purple}{{\bf (Vincent:} {#1}{\bf ) }} \marginpar{\tiny\bf
             \begin{minipage}[t]{0.5in}
               \raggedright V:
            \end{minipage}}} 
\newcommand{\liudeng}[1]{\textcolor{magenta}{{\bf (Liudeng:} {#1}{\bf ) }} \marginpar{\tiny\bf
             \begin{minipage}[t]{0.5in}
               \raggedright L:
            \end{minipage}}}
\else
\newcommand{\samson}[1]{}
\newcommand{\liudeng}[1]{}
\newcommand{\david}[1]{}
\newcommand{\vincent}[1]{}
\fi

\makeatletter
\renewcommand*{\@fnsymbol}[1]{\textcolor{mahogany}{\ensuremath{\ifcase#1\or *\or \dagger\or \ddagger\or
 \mathsection\or \triangledown\or \mathparagraph\or \|\or **\or \dagger\dagger
   \or \ddagger\ddagger \else\@ctrerr\fi}}}
\makeatother

\providecommand{\email}[1]{\href{mailto:#1}{\nolinkurl{#1}\xspace}}

\definecolor{mahogany}{rgb}{0.75, 0.25, 0.0}
\definecolor{darkblue}{rgb}{0.0, 0.0, 0.55}
\definecolor{darkpastelgreen}{rgb}{0.01, 0.75, 0.24}
\definecolor{darkgreen}{rgb}{0.0, 0.2, 0.13}
\definecolor{darkgoldenrod}{rgb}{0.72, 0.53, 0.04}
\definecolor{darkred}{rgb}{0.55, 0.0, 0.0}
\definecolor{forestgreenweb}{rgb}{0.13, 0.55, 0.13}
\definecolor{greencss}{rgb}{0.0, 0.5, 0.0}
\definecolor{bleudefrance}{rgb}{0.19, 0.55, 0.91}

\hypersetup{
     colorlinks   = true,
     citecolor    = mahogany,
	 linkcolor	  = mahogany
}

\AtBeginDocument{%
  \DeclareFontShape{T1}{lmr}{m}{scit}{<->ssub*lmr/m/scsl}{}%
}